\documentclass{article}

\pdfoutput=1

\usepackage{amssymb,amsmath,amsfonts}
\usepackage{graphicx}
\usepackage{cite}

\usepackage{amsthm}



\newtheorem{theorem}{Theorem}[section]
\newtheorem{proposition}[theorem]{Proposition}
\newtheorem{corollary}[theorem]{Corollary}
\newtheorem{lemma}[theorem]{Lemma}

\newcommand{\ep}{\varepsilon}

\begin{document}

\title{\bf Droplet phases in non-local Ginzburg-Landau models with
  Coulomb repulsion in two dimensions}


\author{Cyrill B. Muratov
\thanks{ Department of Mathematical Sciences, New Jersey Institute
    of Technology, Newark, NJ 07102, USA} 
}

\date{\today}

\numberwithin{equation}{section}

\maketitle

\begin{abstract}
  We establish the behavior of the energy of minimizers of non-local
  Ginzburg-Landau energies with Coulomb repulsion in two space
  dimensions near the onset of multi-droplet patterns. Under suitable
  scaling of the background charge density with vanishing surface
  tension the non-local Ginzburg-Landau energy becomes asymptotically
  equivalent to a sharp interface energy with screened Coulomb
  interaction. Near the onset the minimizers of the sharp interface
  energy consist of nearly identical circular droplets of small size
  separated by large distances. In the limit the droplets become
  uniformly distributed throughout the domain. The precise asymptotic
  limits of the bifurcation threshold, the minimal energy, the droplet
  radii, and the droplet density are obtained.
\end{abstract}

\section{Introduction}
\label{sec:intro}

Spatial patterns are often a result of the competition between
thermodynamic forces operating on different length scales. When
short-range attractive interactions are present in a system, phase
separation phenomena can be observed, resulting in aggregation of
particles or formation of droplets of new phase, which evolve into
macroscopically large domains via coarsening or nucleation and growth
(see e.g. \cite{bray94}). This process, however, can be frustrated in
the presence of long-range repulsive forces. As the droplets grow, the
contribution of the long-range interaction may overcome the
short-range forces, whereby suppressing further growth. This mechanism
was identified in many energy-driven pattern forming systems of
different physical nature, such as various types of ferromagnetic
systems, type-I superconductors, Langmuir layers, multiple polymer
systems, etc., just to name a few
\cite{landau8,grosberg,ko:book,vedmedenko,muthukumar97,desimone00,%
  choksi08,choksi01,seul95,yu07}. Remarkably, these systems often
exhibit very similar pattern formation behaviors
\cite{seul95,kohn07iciam}.

One important class of systems with competing interactions are systems
in which the long-range repulsive forces are of Coulomb type (for an
overview, see \cite{m:phd,m:pre02} and references therein). The nature
of the Coulombic forces may be very different from system to
system. For example, these forces may arise when particles undergoing
phase separation carry net electric charge
\cite{care75,emery93,chen93,nyrkova94}, or they may be a consequence
of entropic effects associated with chain conformations in polymer
systems
\cite{ohta86,bates99,matsen02,degennes79,stillinger83}. Coulomb
interactions may also arise indirectly as a result of
diffusion-mediated processes \cite{ko:book,ohta90,glotzer95}. All this
makes systems with repulsive Coulombic interactions a ubiquitous
example of pattern forming systems.

Studies of systems with competing short-range attractive interactions
and long-range repulsive Coulomb interactions go back to the work of
Ohta and Kawasaki, who proposed a non-local extension of the
Ginzburg-Landau energy in the context of diblock copolymer systems
\cite{ohta86}. Even though its validity for diblock copolymer systems
may be questioned \cite{matsen96,matsen02,choksi03,mnog09}, the
Ohta-Kawasaki model is applicable to a great number of physical
problems of different origin \cite{m:pre02}. On the other hand,
mathematically Ohta-Kawasaki model presents a paradigm of
energy-driven pattern forming systems which has been receiving a
growing degree of attention
\cite{muller93,ren00,choksi01,ren03,ren07jns,ren07rmp,ren06,roger08,alberti09}.

The Ohta-Kawasaki energy is a functional of the form
\cite{ohta86,m:phd,m:pre02,nishiura95,ohta90}:
\begin{eqnarray}
  \label{eq:GL}
  \mathcal E[u] & = & \int_\Omega \left( \frac{\ep^2}{2} |\nabla u|^2
    + W(u) \right) dx \nonumber \\ 
  & + & \frac12 \int_\Omega \int_\Omega (u(x) - \bar u) G_0(x, y) (u(y)
  - \bar u) dx \, dy. 
\end{eqnarray}
Here, $u: \Omega \to \mathbb R$ is a scalar quantity denoting the
``order parameter'' in a bounded domain $\Omega \subset \mathbb
R^d$. Different terms of the energy are as follows: the first term
penalizes spatial variations of $u$ on the scales shorter than $\ep$,
the second term, in which $W$ is a symmetric double-well potential
drives local phase separation towards the minima of $W$ at $u = \pm
1$, and the last term is the long-range interaction, whose Coulombic
nature comes from the fact that the kernel $G_0$ solves the Neumann
problem for
\begin{eqnarray}
  \label{eq:G0}
  -\Delta G_0(x,y) = \delta(x-y) - {1 \over |\Omega|}, \qquad \int_\Omega
  G_0(x,y) dx = 0, 
\end{eqnarray}
where $\Delta$ is the Laplacian in $x$ and $\delta(x)$ is the Dirac
delta-function. The parameter $\bar u$ denotes the prescribed uniform
background charge, and the overall ``charge neutrality'' is ensured
via the constraint
\begin{eqnarray}
  \label{eq:solvab}
  {1 \over |\Omega|} \int u \, dx = \bar u. 
\end{eqnarray}
It is important to note that the kernel $G_0$ solves (\ref{eq:G0}) in
the space of the same dimensionality as the order parameter $u$ (not
to be confused with the case in which the kernel solves the Laplace's
equation in the space of higher spatial dimensionality, as is common
in many other systems with competing interactions, see e.g.
\cite{desimone00,emery93}).

\begin{figure}
  \centering
  \includegraphics[width=8cm]{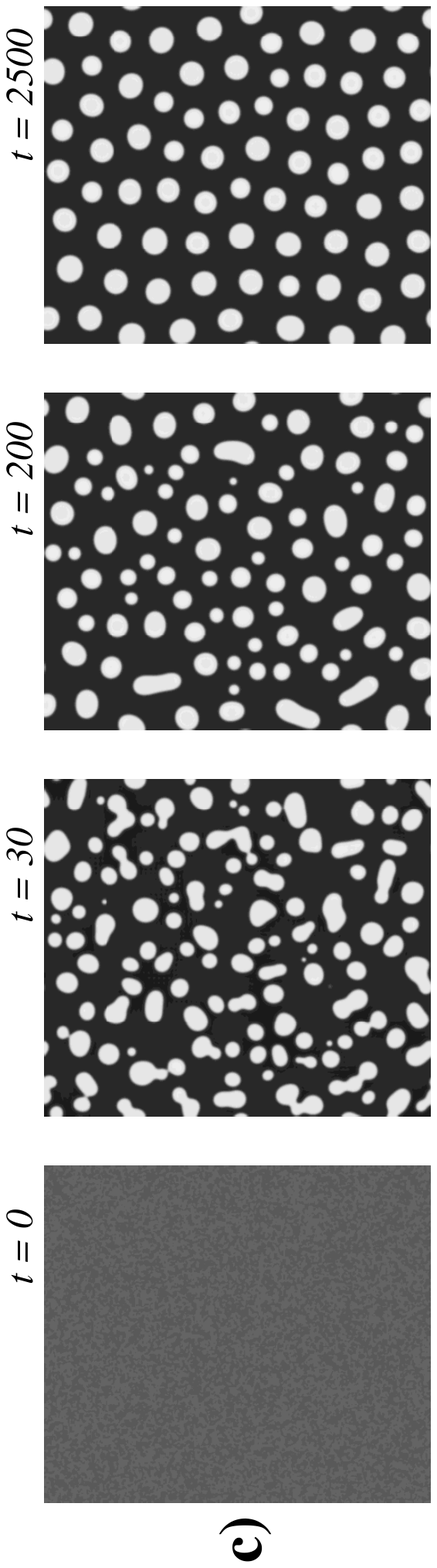}
  \caption{A multi-droplet pattern: density plot of $u$ in a local
    minimizer of $\mathcal E[u]$ with $W(u) = \tfrac14 (1 - u^2)^2$
    obtained numerically for $\bar u = -0.5$, $\ep = 0.025$, and
    $\Omega = [0, 11.5) \times [0, 10)$, with periodic boundary
    conditions. Dark regions correspond to $u \approx -1$, and light
    regions correspond to $u \approx 1$ (from \cite{m:pre02}).}
  \label{fig:pattern}
\end{figure}

The parameter $\ep > 0$ in (\ref{eq:GL}) determines both the scale of
the short-range interaction and the magnitude of the interfacial
energy between the regions with different values of $u$ when $\ep$ is
sufficiently small. In fact, it is known that no patterns can form in
the system if $\ep$ is sufficiently large
\cite{m:phd,m:pre02,choksi09}. On the other hand, when $\ep \ll 1$,
the first term in the functional $\mathcal E$ becomes a singular
perturbation, giving rise to ``domain structures'' (see
Fig.~\ref{fig:pattern}), which are of particular physical
interest. These patterns consist of extended regions in which $u$ is
close to one of the minima of the potential $W$, separated by narrow
domain walls. In this situation one can reduce the energy functional
appearing in (\ref{eq:GL}) to an expression in terms of the interfaces
alone. In \cite{m:phd,m:pre02}, such a reduction was performed for
$\mathcal E$ using formal asymptotic techniques (see also
\cite{petrich94,goldstein96,ren00,roger08}) and leads to the following
reduced energy (for simplicity of notation, we choose the
normalizations in such a way that the parameter $\ep$ is, in fact, the
domain wall energy, see Sec. \ref{sec:conn} for details):
\begin{eqnarray}
  \label{eq:E}
  E[u] = {\ep \over 2} \int_\Omega |\nabla u| \, dx + \frac12
  \int_\Omega \int_\Omega (u(x) - \bar u) G(x, y) (u(y) - \bar u) \,
  dx \, dy.  
\end{eqnarray}
Here the function $u$ takes on values $\pm 1$ throughout $\Omega$, and
the kernel $G$ is the {\em screened} Coulomb kernel, i.e., it solves
the Neumann problem for
\begin{eqnarray}
  \label{eq:Gk}
  -\Delta G(x,y) + \kappa^2 G(x,y) = \delta(x-y), 
\end{eqnarray}
with some $\kappa > 0$.  The constant $\kappa$ has the physical
meaning of the inverse of the Debye screening length
\cite{m:phd,m:pre02}.  Note that the sharp interface energy $E$ with
the unscreened Coulomb kernel (i.e. with $\kappa = 0$) was derived by
Ren and Wei as the $\Gamma$-limit of the diffuse interface energy
$\mathcal E$ under assumptions of weak non-local coupling (i.e., with
an extra factor of $\ep$ in front of the Coulomb kernel) and $\bar u
\in (-1,1)$ independent of $\ep$, as $\ep \to 0$ \cite{ren00} (see
also \cite{roger08}; note that this case is also equivalent to
considering $\mathcal E$ on the domain of size $O(\ep^{1/3})$). At the
same time, screening becomes important near the transition between the
uniform and the patterned states which occurs near $|\bar u| = 1$, the
case of interest in the present paper \cite{m:phd,m:pre02}. Note that
in the presence of screening the neutrality condition in
(\ref{eq:solvab}) is relaxed.

In this paper, we rigorously establish the relation between the sharp
interface energy $E$ and the diffuse interface energy $\mathcal E$,
and analyze the precise behavior of minimizers of the sharp interface
energy $E$ for $\ep \ll 1$ in the vicinity of the transition from the
trivial minimizer to patterned states occurring near $|\bar u| =
1$. We note that despite the apparent simplicity of the expression for
$E$, the minimizers of $E$ exhibit quite an intricate dependence on
the parameters for $\ep \ll 1$ and $|\bar u| \simeq 1$. Our analysis
in this paper will be restricted to the case $d = 2$. While a number
of our results can be extended to arbitrary space dimensions, our
methods to obtain sharp estimates for the energy of minimizers rely
critically on the properties of minimal curves in two dimensions and
the logarithmic behavior of the Green's function of the
two-dimensional Laplacian near the singularity. Therefore, they cannot
be readily extended to other spatial dimensionalities, and, indeed,
one would expect certain important differences between these cases and
the case of two space dimensions. At the same time, we will show that
in the case $d = 2$ it is possible to obtain rather detailed
information about the structure of the transition near $|\bar u| = 1$
in terms of energy. Let us note that, since the case $d = 1$ is now
well-understood \cite{muller93,ren00,ren03,yip06}, the remaining open
case of physical interest is that of $d = 3$.

Before turning to the analysis, let us briefly mention a perfect
example of an experimental system in which the regimes studied by us
could be easily realized, which is inspired by the beautiful Nobel
Lecture of Prof. G. Ertl \cite{ertl-nobel,ertl08}. Consider molecules
which undergo adsorption and desorption to and from a crystalline
surface. On the surface, the atoms may hop around and reversibly stick
to each other to form monolayer aggregates \cite{wintterlin97}. Then,
within the framework of phase field models, this process may be
described by the following evolution equation for the adsorbate
density fraction $\phi$ \cite{glotzer95}:
\begin{eqnarray}
  \label{eq:catal}
  \phi_t = M \Delta (W'(\phi) - g \Delta \phi) + k_+ (1 - \phi) - k_-
  \phi, 
\end{eqnarray}
where $W$ is a double-well potential with two minima between $\phi =
0$ and $\phi = 1$, $g$ is the short-range coupling constant, $M$ is a
kinetic coefficient, and $k_\pm$ are the adsorption and desorption
rates, respectively. Note that this equation can be rewritten as
\begin{eqnarray}
  \label{eq:32}
  \phi_t = M \Delta \{W'(\phi) - g \Delta \phi + k G_0 * (\phi
  - \bar \phi) \},
\end{eqnarray}
where $k = (k_+ + k_-)/M$, $\bar \phi = k_+ /(k_+ + k_-)$, and ``$*$''
denotes convolution in space, with $G_0$ given by (\ref{eq:G0}),
provided the spatial average of the initial data is $\bar \phi$. Upon
suitable rescaling, this is precisely the $H^{-1}$ gradient flow for
the energy $\mathcal E$, i.e., we have $u_t = \Delta (\delta \mathcal
E / \delta u)$, where $u$ is a rescaling of $\phi$. In particular,
minimizers of $\mathcal E$ are ground states of the considered system
in equilibrium in the mean-field limit. We note that the adsorption
and desorption rates $k_\pm$ can be quite small compared to the
hopping rate, resulting in very small values of $\ep \sim
k^{1/2}$. Therefore, one can achieve a very good scale separation
between the interfacial thickness (atomic scales) and the size of
adsorbate clusters (micro-scale) in this experimental setup.

Our paper is organized as follows. In Sec. \ref{sec:heur}, we present
heuristic arguments and give the statements of main results, in
Sec. \ref{sec:sharp}, we perform a detailed analysis of the sharp
interface energy $E$, in Sec. \ref{sec:conn} we establish a connection
between the sharp interface energy $E$ and the diffuse interface
energy $\mathcal E$. Finally, in Sec. \ref{sec:proofs} we conclude the
proofs of the theorems.

Throughout the paper, the symbols $L^p$, $H^k$, $W^{k,p}$,
$C^{k,\alpha}$, $BV$ denote the usual function spaces, $|\cdot|$
denotes the $d$-dimensional Lebesgue measure or the
$(d-1)$-dimensional Hausdorff measure of a set, depending on the
context, and $C$, $c$, etc., denote generic positive constants that
can change from line to line. The symbols $O(1)$ and $o(1)$ denote, as
usual, uniformly bounded and uniformly small quantities, respectively,
in the limit $\ep \to 0$, etc. Finally, we will say that a statement
holds for $\ep \ll 1$, etc., if there exists $\ep_0 > 0$ such that
that statement is true for all $0 < \ep \leq \ep_0$. For simplicity of
notation, the subscript $\ep$ is omitted for all quantities depending
on $\ep$.

\section{Heuristics and main results}
\label{sec:heur}

Let us begin our investigation by setting $d = 2$ and making a
simplifying assumption that the domain $\Omega$ is a torus: $\Omega =
[0,1)^2$. Let us also specify the domains of definition for the
functionals $\mathcal E$ and $E$. Formally, the diffuse interface
energy $\mathcal E[u]$ will be defined for all $u \in H^1(\Omega)$
subject to $\int_\Omega u\, dx = \bar u$, whereas the sharp interface
energy $E[u]$ will be defined for all $u \in BV(\Omega; \{-1, 1\})$.

The assumption that $\Omega$ is a torus, which is common in the
considered class of problems, eliminates the need to deal with the
boundary effects and, even more importantly, restores the
translational invariance inherent in the problem on the whole of
$\mathbb R^d$ (note that the choice of the size of $\Omega$ is
inconsequential, the obtained energy of the minimizers scales linearly
with $|\Omega|$). As a result, the kernel of the non-local part of the
energy becomes a function of $x - y$ only. With a slight abuse of
notation, in the following we will, therefore, replace $G(x, y)$ with
$G(x - y)$ everywhere below.

On heuristic grounds one would expect that the minimizers of $E$ at
$\ep \ll 1$ would be periodic with period $R \sim \ep^{1/3}$, whenever
$|\bar u| < 1$ and $|\bar u|$ is not too close to 1
\cite{ohta86,m:phd,m:pre02,choksi01}. A simple scaling analysis shows
that in this case $E \sim \ep^{2/3}$ as $\ep \to 0$ with $\bar u$
fixed. Our first result gives a justification for this energy scaling
without any assumptions about the minimizers (for statements about
existence and regularity of minimizers, see the following sections).

\begin{theorem}
  \label{t:cC}
  Let $W$ satisfy the assumptions (i)--(iv) at the beginning of
  Sec. \ref{sec:conn}, and let $\bar u \in (-1, 1)$ be fixed. Then
  there exist $\ep_0 > 0$ and $C > c > 0$, such that
  \begin{eqnarray}
    \label{eq:cC}
    c \ep^{2/3} \leq \min E, \, \min \mathcal E \leq C \ep^{2/3}
  \end{eqnarray}
  for all $\ep \leq \ep_0$.
\end{theorem}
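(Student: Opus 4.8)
The plan is to prove the energy scaling $\min E, \min \mathcal E \sim \ep^{2/3}$ by constructing an explicit test configuration for the upper bound and by establishing a matching lower bound via interpolation-type inequalities that separately estimate the interfacial and the nonlocal terms. For the upper bound on $\min E$, I would take $u$ equal to a periodic array of identical small disks of radius $r$ arranged on a square lattice of period $R$ in $\Omega = [0,1)^2$, choosing $R \sim \ep^{1/3}$ and $r \sim \ep^{1/3}$ so that the area fraction matches the constraint-related quantity $(1+\bar u)/2$. The perimeter term then contributes $O(\ep \cdot r/R^2) = O(\ep^{2/3})$. For the nonlocal term, I would split $u - \bar u$ into its contribution from each droplet; the self-interaction of one droplet against the screened kernel $G$ is $O(r^4)$ (since $G$ is bounded away from its singularity on scale $r \ll 1$, and $r^2$ is the charge), there are $O(R^{-2})$ droplets, giving $O(r^4/R^2) = O(\ep^{2/3})$; the cross-interactions, because $G$ decays and because the background has been subtracted, also sum to $O(\ep^{2/3})$ — here one uses that $\int_\Omega (u - \bar u)\,dx$ is small, or exploits the sign structure and decay of $G$. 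This gives $\min E \le C\ep^{2/3}$. To transfer the upper bound to $\mathcal E$, I would invoke the comparison between $\mathcal E$ and $E$ established in Section~\ref{sec:conn}: mollify the sharp-interface test function across a layer of width $O(\ep)$ to obtain an admissible $H^1$ competitor whose diffuse energy differs from $E$ by a lower-order correction, using assumptions (i)--(iv) on $W$.

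For the lower bound, the key point is that neither term alone can be too small. If the perimeter $P := \tfrac12 \int_\Omega |\nabla u|$ is large, say $P \ge \ep^{-1/3}$ (up to constants), then the interfacial term $\ep P$ already exceeds $c\ep^{2/3}$ and we are done. So assume $P$ is small. The nonlocal term is a squared $H^{-1}$-type norm of $u - \bar u$; to bound it below I would use an interpolation inequality of the form
\begin{eqnarray}
  \label{eq:interp}
  \| f \|_{L^2(\Omega)}^2 \le C \left( \| f \|_{BV(\Omega)} + \| f \|_{L^1(\Omega)} \right) \| f \|_{(H^1)^*},
\end{eqnarray}
or an equivalent statement phrased directly in terms of $G$, applied to $f = u - \bar u$. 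Here $\| f \|_{BV}$ is controlled by $P + O(1)$, $\|f\|_{L^1} = O(1)$, and $\|f\|_{L^2}^2$ is bounded below by a positive constant (since $u \in \{-1,1\}$ with $|\bar u|$ bounded away from $1$, the function $u - \bar u$ cannot be small in $L^2$). Therefore the nonlocal term $N$ satisfies $N \ge c/(P + C)$. Combining, $E[u] \ge \ep P + c/(P+C)$; minimizing the right-hand side over $P > 0$ yields $E[u] \ge c' \ep^{1/2}$... but we want $\ep^{2/3}$, so the interpolation inequality must be sharper — the correct exponents come from using the scaling-critical inequality in two dimensions, namely $\|f\|_{L^2}^2 \le C \|f\|_{BV} \|f\|_{(H^1)^*}$ combined with the observation that improving this requires tracking that the relevant quantity is really $\|f\|_{\dot H^{-1}}$ and that in 2D the optimal trade-off between an $O(1)$ $L^2$ mass, perimeter $P$, and $\dot H^{-1}$ norm forces $N \gtrsim P^{-1}$ only after accounting correctly for the number and size of level-set components. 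The honest route is: $N \ge c\, (\text{number of droplets})^{-1} \cdot (\text{typical charge})^2 / (\text{typical length})^?$ — I would instead bound $N$ below by a constant times $\sum_i (\text{diam of }i\text{-th component})^3$-type quantity via the 2D Green's function logarithm, then optimize jointly with $\ep P \ge \ep \sum_i \text{diam}_i$, which produces $\ep^{2/3}$.

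The lower bound on $\min \mathcal E$ then follows from the lower bound on $\min E$ together with the comparison estimate of Section~\ref{sec:conn}, which shows $\mathcal E[u] \ge E[\tilde u] - o(\ep^{2/3})$ for a suitable sharp-interface surrogate $\tilde u$ of any diffuse profile $u$; alternatively one argues directly that the diffuse energy controls its own interfacial part (Modica--Mortola lower bound) and a smoothed version of the nonlocal part. The main obstacle I anticipate is the lower bound on the nonlocal term: getting the exponent $2/3$ rather than $1/2$ requires exploiting the two-dimensional structure — specifically the logarithmic Green's function and an isoperimetric-type argument controlling the $\dot H^{-1}$ norm of a $\{-1,1\}$-valued function in terms of the sizes of its super-level-set components. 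This is precisely the place where the paper's restriction to $d=2$ is used, and it is where I would expect to spend most of the effort; the upper bound and the $\mathcal E \leftrightarrow E$ transfer are comparatively routine.
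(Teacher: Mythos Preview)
Your upper-bound strategy and the $\mathcal E \leftrightarrow E$ transfer are essentially the paper's. The gap is in the lower bound, and it is a concrete one: your interpolation inequality has the wrong exponent, and your attempted repair points in the wrong direction.

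From $\|f\|_{L^2}^2 \le C\|f\|_{BV}\|f\|_{(H^1)^*}$ you get $N \gtrsim 1/P$, hence only $\ep^{1/2}$, as you noticed. You then conjecture that the missing factor must come from two-dimensional features (the logarithmic Green's function, component counting, etc.). It does not. The paper's interpolation lemma (Lemma~\ref{l:interp}) holds in \emph{every} dimension and states
\[
\int_\Omega\!\int_\Omega u(x)\,G(x-y)\,u(y)\,dx\,dy \;\ge\; C\Bigl(\int_\Omega|\nabla u|\,dx\Bigr)^{-2},
\]
i.e.\ $N \gtrsim 1/P^2$, once one knows $m \le |u| \le M$ with $m>0$ and $P$ bounded below. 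Then $E \ge \ep P + C/P^2$ optimizes immediately to $c\,\ep^{2/3}$ by Young's inequality. The paper in fact remarks that the lower bound for $E$ works in any $d$; two-dimensionality enters only in Theorem~\ref{t:main}, not here.

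The idea you are missing is how to extract $1/P^2$ rather than $1/P$. A generic $BV$--$H^{-1}$ interpolation does not see the pointwise lower bound $|u-\bar u| \ge m>0$, and that is the crucial structural input. The paper mollifies $u$ at scale $\delta$ to get $\int u\,(\chi_\delta * u) \ge m^2 - CM\delta P$; choosing $\delta \sim 1/P$ makes this $\ge m^2/2$. One then writes this in Fourier and applies Cauchy--Schwarz against the nonlocal quadratic form; the mollifier's Fourier decay gives $\sup_q \hat\chi_\delta(q)^2(\kappa^2+|q|^2) = O(\delta^{-2})$, whence $N \ge c\,\delta^2 \sim c/P^2$. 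There is also a preliminary step you glossed over: one first combines the upper bound with the zero-mode estimate $N \ge \tfrac{2}{\kappa^2}\bigl(|\Omega^+| - \tfrac{1+\bar u}{2}\bigr)^2$ to force $|\Omega^+|$ away from $0$ and $1$, and hence $P \ge p > 0$ by the isoperimetric inequality; this is what makes the constants uniform.

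In short: drop the component-by-component route and look for an interpolation inequality that exploits both the $L^\infty$ bound and the pointwise lower bound $|u-\bar u|\ge m$. That is what yields the quadratic dependence on $1/P$ and hence the correct $\ep^{2/3}$.
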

Observe also that for $E$ this result still holds when $\Omega = [0,
1)^d$ for any $d$, while for $\mathcal E$ it holds at least for $d <
6$ (see Sec. \ref{sec:conn}). We note that for $\bar u = 0$ and $|u|
\leq 1$ such a result was obtained by Choksi, using somewhat different
techniques \cite{choksi01}. On the level of $E$ (with $\kappa = 0$),
Alberti, Choksi and Otto recently proved, among many other interesting
results, a stronger statement that in the limit $\ep \to 0$, the
constants in the upper and lower bounds in (\ref{eq:cC}) can be chosen
to be arbitrarily close to each other \cite{alberti09}. We note that
the case $\kappa = 0$ and $\bar u \in (-1,1)$ fixed can be treated as
the limit of energy $E$ considered by us as $\kappa \to 0$, when the
constraint $\int_\Omega u \, dx = \bar u$ gets automatically enforced
(see (\ref{eq:44})). 

Thus, when $\bar u \in (-1,1)$ is fixed, the energy $E$ admits a
non-trivial minimizer, whose energy scales as in (\ref{eq:cC}) when
$\ep \ll 1$. What about the case $|\bar u| > 1$? Here, in fact, it is
easy to see that the only minimizers admitted by $E$ are the trivial
ones. Consider, for example, the case $\bar u < -1$, the other case is
equivalent by symmetry. In this case the problem admits the unique
global minimizer $u = -1$. To see this, let us introduce the
characteristic function $\chi_{\Omega^+}$ of the set $\Omega^+ = \{ u
= +1 \}$ for a given $u \in BV(\Omega; \{-1, 1\})$. Then $u = 2
\chi_{\Omega^+} - 1$, and by a straightforward computation
\begin{eqnarray}
  \label{eq:5}
  E[u] & \geq & \frac12 \int_\Omega \int_\Omega (2 \chi_{\Omega^+}(x)
  - 1 - \bar u) G(x - y) (2 \chi_{\Omega^+}(y) - 1
  - \bar u) dx d y \nonumber \\ 
  & \geq & {(1 + \bar u)^2 \over 2 \kappa^2} - {2 (1 + \bar u)
    \over \kappa^2} |\Omega^+|.
\end{eqnarray}
Thus, when $\bar u < -1$, the second term in the last inequality in
(\ref{eq:5}) is positive, hence, is minimized by $|\Omega^+| = 0$. But
in this case $u = -1$ attains equality in (\ref{eq:5}), so $u = -1$ is
the minimizer. Thus, when $|\bar u| > 1$, non-trivial minimizers of
$E$ do not exist, and, therefore, at $|\bar u| = 1$ a bifurcation
occurs in the limit $\ep \to 0$.

The main purpose of this paper is to investigate the transition
between the trivial and the non-trivial minimizers of $E$ and
$\mathcal E$ that occurs in the neighborhood of $|\bar u| = 1$ for
$\ep \ll 1$. The energy $E$ captures most of the difficulty associated
with the considered problem. Therefore, we will spend most of our
effort in this paper to the studies of $E$ (see
Sec. \ref{sec:sharp}). At the same time, as we show later (see
Sec. \ref{sec:conn}), the statements about the behavior of $\min E$
also extend to that of $\min \mathcal E$ for $\ep \ll 1$ (the
correspondence of minimizers of the two energies will be a subject of
future study).

When $\Omega = [0, 1)^2$, the kernel $G$ has an explicit
representation
\begin{eqnarray}
  \label{eq:G}
  G(x) = \frac{1}{2 \pi} \sum_{\mathbf n \in \mathbb Z^2}
  K_0(\kappa |x  - \mathbf n|),
\end{eqnarray}
where $K_0$ is the modified Bessel function of the first kind. In
particular, $G > 0$ and we have the following asymptotic expansion
from the power series representation of $K_0$ \cite{abramowitz}:
\begin{eqnarray}
  \label{eq:Gest}
  G(x) = -\frac{1}{2 \pi} \ln (\bar \kappa |x|) + 
  O(|x|),  \qquad |x|  \ll   1, 
\end{eqnarray}
where 
\begin{eqnarray}
  \label{eq:kbar}
  \bar \kappa = \tfrac12 \kappa \exp \left( \gamma - \sum_{\mathbf
      n \in \mathbb Z^2 \backslash \{0\}} K_0(\kappa |\mathbf n|)
  \right),  
\end{eqnarray}
and $\gamma \approx 0.5772$ is the Euler's constant. We also have
$G(x)$ bounded whenever $|x| > \delta$, for any $\delta > 0$, and
(\ref{eq:Gest}) can be used to estimate derivatives of $G$ to $O(|x|
\ln |x|)$ as well.

Consider the case in which the value of $\bar u$ approaches $\bar u =
-1$ from above, with $\ep \ll 1$ fixed. Clearly, for large enough
deviations there exists a non-trivial minimizer. As can be seen from
the arguments in the proof of Theorem \ref{t:cC}, the size of the set
where $u = 1$ on the minimizer goes to zero as $\bar u \to
-1$. Heuristically, one would, therefore, expect that in this
situation the minimizer will consist of a number of isolated droplets
where $u = +1$ of small size in the background where $u =
-1$. Moreover, since on the scale of a droplet the interfacial energy
will give a dominant contribution, these droplets are expected to be
nearly circular. This motivates an introduction of the following
reduced energy:
\begin{eqnarray}
  \label{eq:EN}
  E_N(\{r_i\}, \{x_i\}) & = & \sum_{i = 1}^N \Bigl( 2 \pi \ep r_i 
    - 2 \pi (1 + \bar u) \kappa^{-2}
    r_i^2 - \pi  r_i^4 (\ln \bar \kappa r_i - \tfrac14) \Bigr)
  \nonumber \\  
  & + & 4 \pi^2 \sum_{i = 1}^{N-1} \sum_{j = i + 1}^N
  G( x_i - x_j) r_i^2 r_j^2.
\end{eqnarray}
which describes the energy of interaction of $N$ well separated
disk-shaped droplets of radius $r_i$ centered at $x_i$, to the leading
order. More precisely, the first term \eqref{eq:EN} stands for the
interfacial energy of all the droplets, the second term is the energy
of interactions between the droplets and the background, the third
term is the self-interaction energy of each droplet, and the last term
is the interaction energy of each droplet pair (for the case of a
single droplet in $\mathbb R^2$, see \cite{m:pre02}).

We can use the reduced energy in \eqref{eq:EN} to obtain the leading
order scaling of various quantities for $\ep \ll 1$ by balancing
different terms. From the balance of interfacial energy and the
self-interaction energy, one should have $r_i = O( \ep^{1/3} |\ln
\ep|^{-1/3})$. Balancing this with the second term leads, in turn, to
\begin{eqnarray}
  \label{eq:dbar}
  \bar \delta = \ep^{-2/3} |\ln \ep|^{-1/3} (1 + \bar u)
\end{eqnarray}
being an $O(1)$ quantity. Similarly, balancing the last term with the
first three leads to $N = O(|\ln \ep|)$, and the expected behavior of
$\min E_N = O(\ep^{4/3} |\ln \ep|^{2/3})$. One would also expect that,
since the droplets repel each other, in a minimum energy configuration
they would become uniformly distributed throughout $\Omega$.

Our main result proves and further quantifies this heuristic picture
on the level of the sharp interface energy $E$.

\begin{theorem}
  \label{t:main}
  Let $\bar u = -1 + \ep^{2/3} |\ln \ep|^{1/3} \bar\delta$, with some
  $\bar\delta > 0$ fixed. Then for any $\sigma > 0$ sufficiently small
  there exists $\ep_0 > 0$ such that for all $\ep \leq \ep_0$:
  \begin{itemize}
  \item[(i)] If $\bar \delta < \tfrac12 \sqrt[3]{9} \, \kappa^2$, then
    $u = -1$ is the unique global minimizer of $E$, with $\ep^{-4/3}
    |\ln \ep|^{-2/3} \min E = \tfrac12 \kappa^{-2} \bar\delta^2$.

  \item[(ii)] If $\bar\delta > \tfrac12 \sqrt[3]{9} \, \kappa^2$,
    there exists a non-trivial minimizer of $E$. The minimizer is
    \begin{eqnarray}
      \label{eq:balls}
      u(x) = - 1 + 2 \sum_{i = 1}^N \chi_{\Omega^+_i}(x),
    \end{eqnarray}
    where $\chi_{\Omega^+_i}$ are characteristic functions of $N$
    disjoint simply connected sets $\Omega^+_i \subset \Omega$ with
    boundary of class $C^3$, and $N = O(|\ln \ep|)$. The boundary of
    each set $\Omega^+_i$ is $O(\ep^{2/3 - \sigma})$-close (in
    Hausdorff sense) to a circle of radius $r_i$ centered at $x_i$.
    Furthermore,
    \begin{eqnarray}
      \min E = \tfrac12 \ep^{4/3} |\ln \ep|^{2/3} \kappa^{-2} \bar
      \delta^2 + E_N(\{ r_i\}, \{ x_i \}) + O(\ep^{5/3 - \sigma}), 
    \end{eqnarray}
    with $E_N = O( \ep^{4/3} |\ln \ep|^{2/3})$, $r_i = O(\ep^{1/3}
    |\ln \ep|^{-1/3})$,
    and
    \begin{eqnarray}
      \label{xijal}
      |x_i - x_j| > \ep^\sigma, \quad \forall j \not= i.
    \end{eqnarray}
  \item[(iii)] If $\bar\delta > \tfrac12 \sqrt[3]{9} \, \kappa^2$, in
    the limit $\ep \to 0$ we have
   \begin{eqnarray}
     \label{eq:lims}
     \ep^{-1/3} |\ln \ep|^{1/3} r_i & \to & \sqrt[3]{3}
   \end{eqnarray}
   uniformly,
   \begin{eqnarray}
     {1 \over |\ln \ep|} \sum_{i = 1}^N \delta(x - x_i) & \to & {1
       \over 2 \pi \sqrt[3]{9}} \left(
     \bar\delta - {\sqrt[3]{9} \over 2} \, \kappa^2 \right),
  \end{eqnarray}
  weakly in the sense of measures, and
   \begin{eqnarray}
     \ep^{-4/3} |\ln \ep|^{-2/3} \min E & \to & {\sqrt[3]{9} \over 2}
     \left( \bar \delta - {\sqrt[3]{9} \over 4} \, \kappa^2 \right).
   \end{eqnarray} 
\end{itemize}
\end{theorem}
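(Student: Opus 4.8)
The plan is to prove all three parts of Theorem~\ref{t:main} by reducing the minimization of $E$ to the finite-dimensional minimization of $E_N$ from \eqref{eq:EN} and analyzing the latter. I would begin with the standard groundwork. Existence of a minimizer follows from the direct method: the perimeter term is lower semicontinuous, the nonlocal term continuous under $L^1$-convergence of characteristic functions, and $E$ is bounded below since $|\Omega^+|\le|\Omega|$. Regularity holds because a minimizer is a quasi-minimizer of perimeter --- the potential $v=G*(u-\bar u)$ lies in $W^{2,p}$ for all $p$, hence in $C^{1,\alpha}$, so $\partial\Omega^+\in C^{1,\alpha}$ by the theory of almost minimal surfaces, and the Euler--Lagrange equation $\ep\kappa+2v=\lambda_i$ on each component $\partial\Omega^+_i$ bootstraps this to $C^3$. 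Testing $E$ against $u\equiv-1$ and using $\int_\Omega G\,dy=\kappa^{-2}$ gives
\[
\ep\,\mathrm{Per}(\Omega^+)+2\int_{\Omega^+}\!\int_{\Omega^+}G\,dx\,dy-2(1+\bar u)\kappa^{-2}|\Omega^+|\le0 ,
\]
which together with $G\ge G_{\min}>0$ forces $|\Omega^+|=O(\ep^{2/3}|\ln\ep|^{1/3})\to0$. Refining this comparison --- deleting or splitting individual components, using the logarithmic self-energy estimate from \eqref{eq:Gest}, the isoperimetric inequality, and $\mathrm{diam}(\Omega^+_i)\le C\,\mathrm{Per}(\Omega^+_i)$ for connected sets --- rules out components that are too small (they carry positive energy, so deletion lowers $E$), too large, or too elongated (excess perimeter beats the gain in self-energy by a factor $|\ln\ep|$), yielding the a priori bounds $r_i:=\sqrt{|\Omega^+_i|/\pi}=O(\ep^{1/3}|\ln\ep|^{-1/3})$, $N=O(|\ln\ep|)$, and bounded eccentricity of each component.

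The core is the reduction to $E_N$. For the lower bound one expands the nonlocal energy as $\tfrac12(1+\bar u)^2\kappa^{-2}$ plus the sum over components of the renormalized self-energies $\ep\,\mathrm{Per}(\Omega^+_i)-2(1+\bar u)\kappa^{-2}|\Omega^+_i|+2\int_{\Omega^+_i}\!\int_{\Omega^+_i}G$ plus the manifestly nonnegative cross terms $2\int_{\Omega^+_i}\!\int_{\Omega^+_j}G$, and bounds each renormalized self-energy below by its single-disk value $f_\ep(r_i)=2\pi\ep r_i-2\pi(1+\bar u)\kappa^{-2}r_i^2-\pi r_i^4(\ln\bar\kappa r_i-\tfrac14)$ using the quantitative isoperimetric inequality for the perimeter and a Jensen/rearrangement estimate for the Coulomb part, with the loss controlled by the isoperimetric deficit. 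For the structure one uses the Euler--Lagrange equation: on a single small component $v$ is nearly constant (oscillation $O(\ep^{2/3}|\ln\ep|^{-2/3})$ versus $v\sim\ep^{2/3}|\ln\ep|^{1/3}$ there), hence its curvature is nearly constant and $\partial\Omega^+_i$ close to a circle; a bootstrap (better shape $\Rightarrow$ better estimate of $v$ $\Rightarrow$ better shape) upgrades this to the $O(\ep^{2/3-\sigma})$ Hausdorff bound, which then permits Taylor-expanding $2\int_{\Omega^+_i}\!\int_{\Omega^+_i}G$ and the cross terms around disks and identifying the leading contributions with the terms of $E_N$ up to $O(\ep^{5/3-\sigma})$. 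The separation $|x_i-x_j|>\ep^\sigma$ follows from the logarithmic blow-up of $G$ at the origin against the energy budget. Matching this with the upper bound from testing $E$ on $N$ disjoint disks of a common radius placed in a near-lattice configuration yields $\min E=\tfrac12\ep^{4/3}|\ln\ep|^{2/3}\kappa^{-2}\bar\delta^2+\min E_N+o(\ep^{4/3}|\ln\ep|^{2/3})$.

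It then remains to analyze $E_N$. With $r_i=\ep^{1/3}|\ln\ep|^{-1/3}\rho_i$ and $N=\nu|\ln\ep|$, \eqref{eq:Gest} gives $E_N=\ep^{4/3}|\ln\ep|^{2/3}\bigl[\tfrac1{|\ln\ep|}\sum_i g(\rho_i)+\tfrac{4\pi^2}{|\ln\ep|^2}\sum_{i<j}G(x_i-x_j)\rho_i^2\rho_j^2\bigr]+(\text{l.o.t.})$ with $g(\rho)=2\pi\rho-2\pi\kappa^{-2}\bar\delta\rho^2+\tfrac{\pi}{3}\rho^4$. Since $G$ is positive-definite with $\int_\Omega G=\kappa^{-2}$, the pair interaction is asymptotically minimized when the $x_i$ equidistribute (the continuum variational characterization of the uniform measure, with quantization error $\min\sum_{i\ne j}G(x_i-x_j)=N^2\kappa^{-2}+O(N\ln N)$) and the radii equalize along the minimizer (the Euler--Lagrange relation forces a common value of a monotone function of $r_i$), reducing matters to minimizing $\Phi(\nu,\rho)=\nu g(\rho)+2\pi^2\kappa^{-2}\rho^4\nu^2$ over $\nu\ge0$, $\rho>0$. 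Optimizing in $\nu$ gives $\Phi=-\kappa^2 g(\rho)^2/(8\pi^2\rho^4)$ when $g(\rho)<0$ and $\Phi=0$ otherwise; optimizing in $\rho$ minimizes $g(\rho)/\rho^2=2\pi/\rho-2\pi\kappa^{-2}\bar\delta+\tfrac{\pi}{3}\rho^2$, whose unique critical point is $\rho^\ast=\sqrt[3]{3}$ with value $2\pi\kappa^{-2}(\tfrac12\sqrt[3]{9}\,\kappa^2-\bar\delta)$ --- negative exactly when $\bar\delta>\tfrac12\sqrt[3]{9}\,\kappa^2$. Substituting back reproduces the threshold, the limiting radius $\sqrt[3]{3}$, the limiting density $\tfrac1{2\pi\sqrt[3]{9}}(\bar\delta-\tfrac12\sqrt[3]{9}\,\kappa^2)$, and the limiting energies of (i)--(iii); in the subcritical case $g\ge0$, so the energy excess over $\tfrac12(1+\bar u)^2\kappa^{-2}$ is strictly positive unless $\Omega^+=\emptyset$, giving uniqueness of $u\equiv-1$.

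I expect the main obstacle to be this reduction step, i.e.\ the two-sided estimate $\min E=\tfrac12\ep^{4/3}|\ln\ep|^{2/3}\kappa^{-2}\bar\delta^2+E_N(\{r_i\},\{x_i\})+O(\ep^{5/3-\sigma})$. Two points are delicate: (a) the lower bound for the renormalized self-energy of a single, possibly non-circular component that loses only lower-order terms, which makes it necessary to quantify, through the isoperimetric deficit, how far from a disk a component can be; and (b) the bootstrap on the Euler--Lagrange equation upgrading the crude $O(|\ln\ep|^{-1})$ relative shape control to the stated $O(\ep^{2/3-\sigma})$ Hausdorff distance, since only the latter makes the Taylor expansion of the Coulomb terms accurate enough for the $O(\ep^{5/3-\sigma})$ remainder. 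The remaining ingredients --- existence, regularity, the a priori scalings, the positive-definiteness argument for equidistribution, and the final two-variable optimization --- are comparatively routine.
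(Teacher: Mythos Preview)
Your overall strategy---reduce $E$ to the finite-dimensional $E_N$ and then optimize---matches the paper's, and your final two-variable optimization of $\Phi(\nu,\rho)$ correctly recovers the threshold, the limiting radius, density, and energy. But there is a genuine gap in the middle of the argument, and it concerns precisely the two points you flag as delicate.

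The separation estimate $|x_i-x_j|>\ep^\sigma$ does \emph{not} follow from ``the logarithmic blow-up of $G$ against the energy budget.'' A single close pair contributes only $O(r_i^2r_j^2|\ln d|)\sim \ep^{4/3}|\ln\ep|^{-4/3}|\ln d|$ to the energy, and balancing this against the total budget $O(\ep^{4/3}|\ln\ep|^{2/3})$ yields only $|\ln d|\lesssim|\ln\ep|^2$, far weaker than $d>\ep^\sigma$. Even a more refined comparison (move one droplet to an empty region) loses a factor $N\sim|\ln\ep|$ from the new interactions and does not close. The paper obtains separation by a completely different mechanism: a \emph{second variation} argument (the ``repumping instability,'' Lemma~\ref{l:al}). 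One perturbs by expanding $\bar\Omega^+_i$ and shrinking $\bar\Omega^+_j$ simultaneously; the cross term in the Hessian is proportional to $G(x_i-x_j)$, and if two droplets are within distance $\ep^{-\alpha}$ (rescaled) for $\alpha$ small, the quadratic form is indefinite, contradicting minimality. This is then bootstrapped to any $\alpha<\tfrac13$ via a cluster-dilation argument (Proposition~\ref{p:beta}).

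This matters for your logical order. You propose: crude disk shape $\Rightarrow$ bootstrap to sharp shape $\Rightarrow$ separation. But the bootstrap ``better shape $\Rightarrow$ better estimate of $v$'' cannot work: the oscillation of $v$ on $\Omega^+_i$ is dominated by the \emph{external} field $v_i$ from the other droplets, whose gradient scales like $r_j^2/d$ and is controlled only by the inter-droplet distance $d$, not by the shape of $\Omega^+_i$. Without separation you are stuck at the crude $O(|\ln\ep|^{-1})$ relative shape control, which is far too weak for the $O(\ep^{5/3-\sigma})$ remainder. The paper's order is: rough shape/convexity $\Rightarrow$ second-variation separation $\Rightarrow$ $|\nabla v_i|\le C\ep^\alpha$ $\Rightarrow$ sharp disk shape via quantitative isoperimetry and an expansion around $B_{\bar r_i}(\bar x_i)$.

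A secondary gap is your claim that ``the Euler--Lagrange relation forces a common value of a monotone function of $r_i$.'' The first-order condition \eqref{eq:bri} ties $r_i$ to $\bar v_i$, the field from the \emph{other} droplets at $x_i$; there is no a~priori reason these are equal. The paper instead shows (Proposition~\ref{p:ri}) that $\min v$ is attained away from all droplets and is bounded below by $\tfrac12\bar\delta\kappa^{-2}-\tfrac14\sqrt[3]{9}-\delta$, because otherwise one could \emph{insert} a new droplet there and lower the energy. This lower bound on $v$ then forces each $\bar r_i$ into a shrinking neighborhood of $\sqrt[3]{3}$.
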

Note that a more detailed result on the structure of the transition
occurring near $\bar \delta = \tfrac12 \sqrt[3]{9} \, \kappa^2$ is
presented in Proposition \ref{p:E33}.

Let us make a few remarks related to the statements of Theorem
\ref{t:main}. For small, but finite values of $\ep$ this theorem
establishes an equivalence between the sharp interface energy $E$ and
the energy of $N$ interacting droplets $E_N$, in the sense that the
minimizers of $E$ are close to ``almost'' minimizers of $E_N$, i.e.,
we have $E_N < \min E_N + O(\ep^{5/3 - \sigma})$. Nevertheless, to
prove closeness of minimizers of $E$ to those of $E_N$ we also need
some coercivity of the energy $E_N$. This problem has to do with the
properties of the minimizers of the pairwise interaction of the
droplets, i.e. the choice of $x_i$ which minimize $E_N$ with fixed
$r_i$. This becomes a difficult problem in the case of interest, since
we generally expect $N \gg 1$ (for a numerical solution at a few
values of $N$ and $\kappa = 2$ see Fig. \ref{fig:minima}).  It would
be natural to conjecture that at small enough $\ep$ the minimizing
droplets will arrange themselves into a periodic lattice close to a
hexagonal (close-packed) lattice. Proving this kind of result,
however, is a major challenge (see \cite{theil06} for a recent proof
for a certain class of pair interactions), which is one of the open
questions also in many other problems, such as the problem of
characterizing Abrikosov vortex lattices, for example
\cite{aftalion07}. Let us mention here a recent result by Chen and
Oshita, who proved that in the case $\kappa = 0$ the hexagonal
arrangement of disks is energetically the best among simple periodic
lattices \cite{chen07arma}. Yet, it is not known if the same result
also holds for more general arrangements of droplets. Here we prove a
weaker result that the number density of droplets becomes
asymptotically uniform as $\ep \to 0$, leading also to uniform
distribution of energy (compare with \cite{alberti09}). Moreover, we
identify the precise asymptotic behavior of the minimal energy and the
size of the minimizing droplets as $\ep \to 0$.

\begin{figure}
  \centering
  \includegraphics[width=12cm]{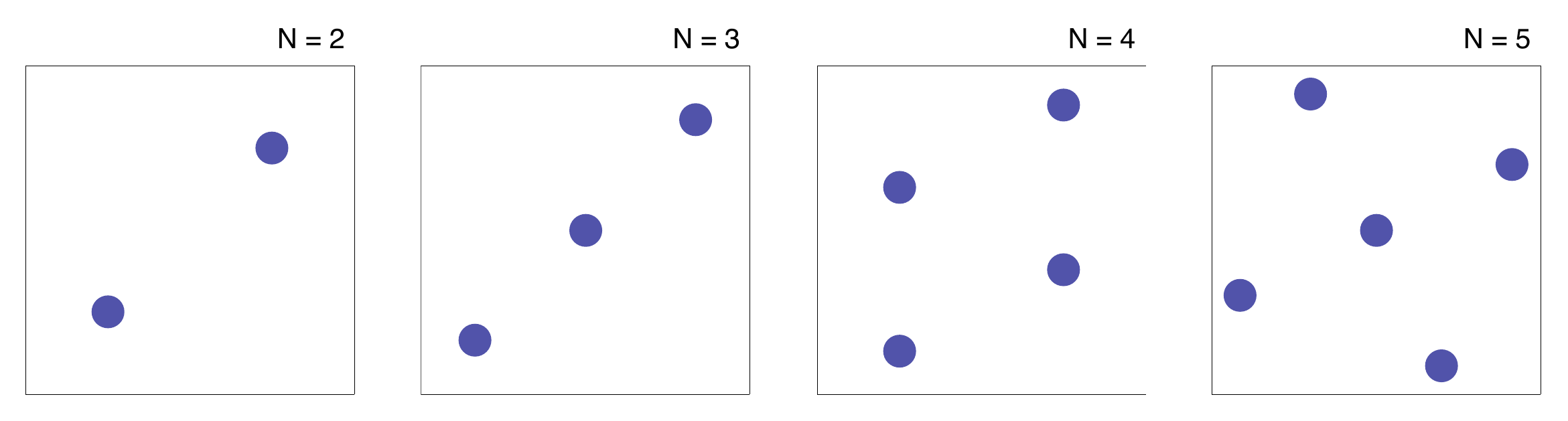}  
  \caption{``Coulombic dice'': Minimizers of $E_N$ with $r_i =
    \sqrt[3]{3} \, \ep^{1/3} |\ln \ep|^{-1/3}$ for $\kappa = 2$ and $N
    = 2, 3, 4, 5$, obtained using random search algorithm.}
  \label{fig:minima}
\end{figure}

Lastly, we establish the asymptotic behavior of the minimal value of
the diffuse interface energy $\mathcal E$ (which, of course, agrees
with the result for the sharp interface energy).

\begin{theorem}
  \label{t:ok}
  Let $W$ satisfy assumptions (i)--(iv) at the beginning of
  Sec. \ref{sec:conn}, let $\bar u = -1 + \ep^{2/3} |\ln \ep|^{1/3}
  \bar\delta$, with some $\bar\delta > 0$ fixed, and let $\kappa$ be
  given by (\ref{eq:kappa}). Then
  \begin{itemize}
  \item[(i)] If $\bar \delta \leq \tfrac12 \sqrt[3]{9} \, \kappa^2$,
    then $\ep^{-4/3} |\ln \ep|^{-2/3} \min \mathcal E \to \tfrac12
    \kappa^{-2} \bar\delta^2$,

  \item[(ii)] If $\bar\delta > \tfrac12 \sqrt[3]{9} \, \kappa^2$, then
    $ \ep^{-4/3} |\ln \ep|^{-2/3} \min \mathcal E \to {\sqrt[3]{9}
      \over 2} \left( \bar \delta - {\sqrt[3]{9} \over 4} \, \kappa^2 
    \right)$, 
  \end{itemize}
  as $\ep \to 0$.
\end{theorem}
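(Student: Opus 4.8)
The plan is to deduce Theorem \ref{t:ok} from Theorem \ref{t:main}: it suffices to show that, in the regime $\bar u = -1 + \ep^{2/3}|\ln\ep|^{1/3}\bar\delta$ with $\kappa$ as in (\ref{eq:kappa}),
\begin{equation*}
  \min\mathcal E = \min E + o\!\left(\ep^{4/3}|\ln\ep|^{2/3}\right)\qquad\text{as }\ep\to 0,
\end{equation*}
since Theorem \ref{t:main} (supplemented by Proposition \ref{p:E33} at the threshold $\bar\delta=\tfrac12\sqrt[3]{9}\,\kappa^2$) identifies $\ep^{-4/3}|\ln\ep|^{-2/3}\min E$ exactly; note that the two branches of the limit in Theorem \ref{t:main} agree at $\bar\delta=\tfrac12\sqrt[3]{9}\,\kappa^2$, which is what makes the statement of Theorem \ref{t:ok}(i) consistent there. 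Thus the whole content is the two-sided comparison of $\min\mathcal E$ with $\min E$, carried out in Section \ref{sec:conn}; below I describe how the two inequalities are obtained.

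\emph{Upper bound.} I would take the minimizer $u^{\mathrm{sh}}$ of $E$ produced by Theorem \ref{t:main} --- either $u^{\mathrm{sh}}\equiv-1$ when $\bar\delta\le\tfrac12\sqrt[3]{9}\,\kappa^2$, or the droplet configuration (\ref{eq:balls}) with $N=O(|\ln\ep|)$ nearly circular $C^3$ droplets of radii $r_i=O(\ep^{1/3}|\ln\ep|^{-1/3})$ pairwise separated by $\ge\ep^\sigma$ --- and build a competitor $u_\ep\in H^1(\Omega)$ for $\mathcal E$ by smoothing each sharp interface $\partial\Omega_i^+$ into the optimal one-dimensional transition layer of $W$ of width $O(\ep|\ln\ep|)$, then adding a small correction (a constant, or a bump placed away from the droplets) to enforce the hard constraint $|\Omega|^{-1}\int_\Omega u_\ep\,dx=\bar u$; since $r_i$ and the separations are $\gg\ep$, the tubular layers are disjoint and this is well defined. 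The Modica--Mortola term of $\mathcal E$ then equals $\tfrac\ep2\int_\Omega|\nabla u^{\mathrm{sh}}|\,dx$ up to an error bounded by $(\text{total perimeter})\times O(\ep|\ln\ep|)=o(\ep^{4/3}|\ln\ep|^{2/3})$, and the nonlocal term of $\mathcal E$ differs from that of $E$ evaluated at $u^{\mathrm{sh}}$ by $o(\ep^{4/3}|\ln\ep|^{2/3})$: replacing $G_0$ by the screened kernel $G$ is precisely the passage to the effective energy near $u=-1$ with screening length fixed by (\ref{eq:kappa}), while the $O(\ep|\ln\ep|)$ smearing of $u^{\mathrm{sh}}$ perturbs a nonlocal energy at order (layer volume)$\times\|G_0\|$, which is negligible here. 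Combined with (\ref{eq:EN}) for the droplet self- and pair-energies this gives $\mathcal E[u_\ep]\le\min E + o(\ep^{4/3}|\ln\ep|^{2/3})$.

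\emph{Lower bound.} Let $u_\ep$ minimize $\mathcal E$ (existence by the direct method); then $\mathcal E[u_\ep]\le\mathcal E[\bar u]=|\Omega|\,W(\bar u)=O(\ep^{4/3}|\ln\ep|^{2/3})$. From the singular-perturbation structure one extracts $v_\ep\in BV(\Omega;\{-1,1\})$ with $\|u_\ep-v_\ep\|_{L^1}$ small and $\tfrac{\ep^2}{2}\int|\nabla u_\ep|^2+\int W(u_\ep)\ge\tfrac\ep2\int|\nabla v_\ep| - o(\ep^{4/3}|\ln\ep|^{2/3})$; the point is that, thanks to the a priori smallness of $\mathcal E[u_\ep]$, a \emph{quantitative} form of the Modica--Mortola/coarea lower bound makes both the $L^1$ distance and this energy defect small at the rate $o(\ep^{4/3}|\ln\ep|^{2/3})$, so that passing from $u_\ep$ to $v_\ep$ in the nonlocal term and in the mass constraint costs only $o(\ep^{4/3}|\ln\ep|^{2/3})$. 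Re-introducing the screening as in (\ref{eq:5}) and (\ref{eq:44}) to pass from the $G_0$-functional at $v_\ep$ to $E[v_\ep]$ --- which is where the quadratic term $\tfrac12\kappa^{-2}(1+\bar u)^2$ appears --- yields $\mathcal E[u_\ep]\ge E[v_\ep]-o(\ep^{4/3}|\ln\ep|^{2/3})\ge\min E - o(\ep^{4/3}|\ln\ep|^{2/3})$.

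\emph{Main obstacle.} The hard part is the lower bound. Because the entire energy lives at the scale $\ep^{4/3}|\ln\ep|^{2/3}$, the usual $\Gamma$-convergence comparison of $\mathcal E$ with its sharp-interface limit is far too crude: every error term must be controlled at the rate $o(\ep^{4/3}|\ln\ep|^{2/3})$, uniformly in $\ep$. This requires showing quantitatively that a diffuse minimizer is already organized into thin-walled droplets of the right size --- so that the coarea slicing loses only lower-order energy and the nonlocal term is insensitive to the $O(\ep|\ln\ep|)$ smearing --- and tracking precisely how the hard neutrality constraint of $\mathcal E$ becomes, upon linearization around $u=-1$, the relaxed constraint together with the extra quadratic term that distinguish the screened energy $E$. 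The threshold value $\bar\delta=\tfrac12\sqrt[3]{9}\,\kappa^2$, which Theorem \ref{t:main}(i) covers only in the strict case, is then handled by monotonicity of $\min E$ in $\bar\delta$ and the finer description in Proposition \ref{p:E33}.
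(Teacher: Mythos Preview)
Your overall architecture is the paper's: reduce to $\min\mathcal E=\min E+o(\ep^{4/3}|\ln\ep|^{2/3})$ and invoke Theorem \ref{t:main}, with the borderline $\bar\delta=\tfrac12\sqrt[3]{9}\,\kappa^2$ handled by monotonicity of $\bar E$ in $\bar\delta$. Your upper bound is close to Proposition \ref{p:upper}; note only that the paper's competitor is not just a mollified $u_0$ plus a constant, but the specific ansatz $u=u_0^\ep-\kappa^2\int G(\cdot-y)(u_0^\ep(y)-\bar u)\,dy$ of (\ref{eq:uuhat}), chosen so that the exact identity (\ref{eq:104}) converts the $G_0$--nonlocal term plus the bulk quadratic term into the $G$--nonlocal term without further error analysis.

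Your lower bound, however, has a real gap. After you use Modica--Mortola to replace $u_\ep$ by a $\{\pm1\}$--valued $v_\ep$ in the local part, you propose to ``re-introduce the screening as in (\ref{eq:5}) and (\ref{eq:44}) to pass from the $G_0$-functional at $v_\ep$ to $E[v_\ep]$''. But (\ref{eq:5}) and (\ref{eq:44}) are not devices for exchanging $G_0$ for $G$; for a $\{\pm1\}$--valued function the two nonlocal terms are genuinely different, and the difference is \emph{not} small on the scale $\ep^{4/3}|\ln\ep|^{2/3}$. The screening in $E$ does not come from a manipulation of the nonlocal term alone: it comes from the \emph{bulk} contribution of $W$, namely $W(u)\approx\tfrac{1}{2\kappa^2}(u-u_0)^2$ on $\Omega_\pm^\delta$, which you have not yet spent (Modica--Mortola only uses $W$ on the transition region $\Omega_0^\delta$). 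This is exactly what the paper's decomposition $\mathcal E\ge\mathcal E_1+\mathcal E_2+\mathcal E_3$ and Lemma \ref{l:E2} isolate: writing $u=u_0+u_1+u_2$ with $u_1=-\kappa^2 G_0*(u-\bar u)$, the Fourier computation (\ref{eq:EE0l}) shows that $\tfrac{1}{2\kappa^2}\int_{\Omega_\pm^\delta}(u-u_0)^2+\tfrac12\langle u-\bar u,G_0*(u-\bar u)\rangle$ dominates $\tfrac12\langle u_0-\bar u,G*(u_0-\bar u)\rangle$ up to errors controlled by $|\Omega_0^\delta|$. Without this mechanism your lower bound does not produce $E$.

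A second missing ingredient is Proposition \ref{p:1del}. The hypotheses of Proposition \ref{p:lower} require the \emph{pointwise} bounds $|u|\le 1+\delta^3$ and $|G_0*(u-\bar u)|\le\delta^3$, which are not consequences of the energy bound $\mathcal E[u_\ep]=O(\ep^{4/3}|\ln\ep|^{2/3})$ alone; they come from the Euler--Lagrange equation and elliptic regularity. Your sketch relies only on the smallness of $\mathcal E$, so the quantitative errors in Lemma \ref{l:E2} (the terms in (\ref{eq:102}) and (\ref{eq:9})) would be uncontrolled. Finally, you do \emph{not} need, as you suggest in your ``main obstacle'' paragraph, to first show that the diffuse minimizer is already organized into droplets; the paper's lower bound works for any admissible $u$ satisfying the pointwise bounds, and the droplet structure is established only at the level of $E$.
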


The proofs of Theorems \ref{t:cC}--\ref{t:ok} are based on a number of
propositions established in Secs. \ref{sec:sharp} and \ref{sec:conn},
and are completed in Sec. \ref{sec:proofs}.

\section{Analysis of  the sharp interface problem}
\label{sec:sharp}

Our plan for the analysis of the sharp interface problem consists of a
number of steps which we list below:

\begin{enumerate}
\item Introduce a suitably rescaled energy $\bar E$ and domain $\bar
  \Omega$.

\item Establish existence and regularity of the minimizers of $\bar E$
  (subsets of $\bar \Omega$ where $u = 1$).

\item Establish some a priori estimates for the geometry of the
  minimizers of $\bar E$ and uniform bounds on the induced long-range
  potential.

\item Establish that different connected components of minimizers of
  $\bar E$ are separated by large distances in $\bar \Omega$.

\item Establish that each connected component of a minimizer of $\bar
  E$ is close to a disk (hence the term ``droplet'').

\item Establish equivalence between $\min \bar E$ and $\min \bar E_N$
  (the suitably rescaled version of $E_N$).

\item Improve the estimate for the separation distance between
  different droplets.

\item Prove uniform convergence of the rescaled droplet radii to a
  universal constant.

\item Prove convergence of $\min \bar E$ to a limit and convergence of
  the normalized droplet density in the original, unscaled domain
  $\Omega$ to a limit, as $\ep \to 0$.
\end{enumerate}
This plan is carried out in the rest of this section via a series of
lemmas and propositions.

\subsection{Scaling}
\label{sec:scaling}

We begin by introducing a suitable rescaling, in which the main
quantities of interest become $O(1)$ quantities in the limit $\ep \to
0$. Motivated by the discussion of Sec. \ref{sec:heur}, we define the
rescaled energy $\bar E$ (with the energy of the uniform state $u =
-1$ subtracted) and a new coordinate $\bar x \in \bar \Omega = [0,
\ep^{1/3} |\ln \ep|^{-1/3})^2$, where $\bar \Omega$ is a
two-dimensional torus with period $\ep^{1/3} |\ln \ep|^{-1/3}$:
\begin{eqnarray}
  \label{eq:10}
  E = \ep^{4/3} |\ln \ep|^{2/3}
  \, (\tfrac12 \kappa^{-2} \bar \delta^2 + \bar E), \qquad x = 
  {\ep^{1/3} \over |\ln \ep|^{1/3}} \, \bar x.
\end{eqnarray}
The energy $\bar E$ can be conveniently expressed in term of the set
$\bar \Omega^+ \subset \bar \Omega$ in which $u = 1$:
\begin{eqnarray}
  \label{eq:Ebar}
  \bar E & = &|\ln \ep|^{-1} \bigl(  |\partial \bar \Omega^+| - 2
  \bar \delta \kappa^{-2} |\bar \Omega^+| \bigr)  \nonumber \\  
  & + & 2 |\ln \ep|^{-2} \int_{\bar \Omega^+} \int_{\bar \Omega^+}  G
  \bigl( \ep^{1/3} |\ln \ep|^{-1/3} (\bar x - \bar y) \bigr) d \bar x
  \, d\bar y.   
\end{eqnarray}

We also need an expression for the rescaled energy $\bar E_N$ of a
system of interacting droplets. With the help of (\ref{eq:10}), we can
write the rescaling of (\ref{eq:EN}) as
\begin{eqnarray}
  \label{eq:Enb}
  \bar E_N & = &  {2 \pi \over |\ln \ep| } \sum_{i = 1}^N \left\{ \bar
    r_i - \bar \delta \kappa^{-2} \, 
    \bar r_i^2 - \tfrac12 |\ln \ep|^{-1} \bar r_i^4 \left(\ln
      (\ep^{1/3} |\ln  \ep|^{-1/3} 
      \bar \kappa \bar r_i) - \tfrac14 \right) \right\} \nonumber \\ 
  & + & {4 \pi^2 \over |\ln \ep|^2 } \sum_{i = 1}^{N-1} \sum_{j =
    i+1}^N G( \ep^{1/3} |\ln \ep|^{-1/3}  (\bar x_i - \bar x_j))
  \bar r_i^2 \bar r_j^2,
\end{eqnarray}
where $\bar r_i$ and $\bar x_i$ are the radii and the centers of the
droplets, respectively.

\subsection{Properties of minimizers}
\label{sec:lb}

Let us begin with the statement of a result on the existence and
regularity of minimizers of $\bar E$ (or, equivalently, of $E$), which
is obtained by straightforwardly adapting the results of
\cite{massari74} for sets of prescribed mean curvature.

\begin{proposition}
  \label{p:exist}
  There exists a set $\bar \Omega^+$ of finite perimeter which
  minimizes $\bar E$ in (\ref{eq:Ebar}). The boundary $\partial
  \bar\Omega^+$ of this set is a curve of class $C^{1,\alpha}$ for
  some $\alpha \in (0, 1)$.
\end{proposition}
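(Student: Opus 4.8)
The plan is to establish existence and regularity of minimizers of $\bar E$ by recognizing that the functional is, up to lower-order terms, a perimeter functional with a bounded volume-type perturbation plus a nonlocal term whose first variation is a bounded function. First I would set up the existence part via the direct method: take a minimizing sequence $\{\bar\Omega_n^+\}$ for $\bar E$ over sets of finite perimeter in the torus $\bar\Omega$. Since the interfacial term $|\ln\ep|^{-1}|\partial\bar\Omega_n^+|$ controls the perimeter and the remaining terms are bounded (the $-2\bar\delta\kappa^{-2}|\ln\ep|^{-1}|\bar\Omega_n^+|$ term is bounded because $|\bar\Omega_n^+|\le|\bar\Omega|$, and the positive Coulomb term only helps from below, while from above it is controlled once the perimeter is bounded via the representation $G>0$ and the asymptotics in \eqref{eq:Gest}), one obtains a uniform perimeter bound. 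By compactness of $BV$ on the torus, a subsequence converges in $L^1$ to some set $\bar\Omega^+$ of finite perimeter. Lower semicontinuity of the perimeter under $L^1$ convergence handles the first term; the volume term is continuous under $L^1$ convergence; and the nonlocal term is continuous since $G(\ep^{1/3}|\ln\ep|^{-1/3}(\bar x-\bar y))\in L^1(\bar\Omega\times\bar\Omega)$ (the logarithmic singularity is integrable in two dimensions) and $\chi_{\bar\Omega_n^+}\otimes\chi_{\bar\Omega_n^+}\to\chi_{\bar\Omega^+}\otimes\chi_{\bar\Omega^+}$ in $L^1(\bar\Omega\times\bar\Omega)$. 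Hence $\bar\Omega^+$ is a minimizer.

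For regularity, I would invoke the theory of sets of prescribed mean curvature as developed by Massari \cite{massari74}. The key observation is that a minimizer $\bar\Omega^+$ of $\bar E$ is a quasi-minimizer of the perimeter: write $\bar E = |\ln\ep|^{-1}|\partial\bar\Omega^+| + \mathcal{R}[\bar\Omega^+]$, where the remainder $\mathcal{R}$ collects the volume and Coulomb terms. The first variation of $\mathcal{R}$ with respect to compactly supported volume-preserving (or general) perturbations is given by an $L^\infty$ function: the volume term contributes the constant $-2\bar\delta\kappa^{-2}|\ln\ep|^{-1}$, and the nonlocal term contributes $4|\ln\ep|^{-2}\int_{\bar\Omega^+}G(\ep^{1/3}|\ln\ep|^{-1/3}(\bar x-\bar y))\,d\bar y$, which is a bounded function of $\bar x$ because $G$ restricted to a bounded domain has only an integrable logarithmic singularity (so the convolution against $\chi_{\bar\Omega^+}$ is bounded, in fact continuous). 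Thus $\bar\Omega^+$ has prescribed mean curvature equal to a bounded function $H\in L^\infty(\bar\Omega)$ in the variational sense, and Massari's regularity theorem \cite{massari74} — which in two dimensions is sharp — yields that $\partial\bar\Omega^+$ is a curve of class $C^{1,\alpha}$ for every $\alpha\in(0,1)$, and in fact one may bootstrap to higher regularity once the curvature function is known to be more regular (as will be needed later in the paper).

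The main obstacle, and the point requiring care, is verifying that the nonlocal potential $\bar v(\bar x):=|\ln\ep|^{-2}\int_{\bar\Omega^+}G(\ep^{1/3}|\ln\ep|^{-1/3}(\bar x-\bar y))\,d\bar y$ is genuinely bounded (indeed continuous) with a bound uniform enough to feed into Massari's framework, and that the competitor class is stable under the truncations used in \cite{massari74}. This is where the precise structure of $G$ matters: on the torus $\bar\Omega$ of side length $\ep^{1/3}|\ln\ep|^{-1/3}\ll 1$ the kernel behaves like $-\frac{1}{2\pi}\ln(\bar\kappa|x|)+O(|x|)$ near the singularity by \eqref{eq:Gest}, and this logarithm is locally integrable in $\mathbb{R}^2$, so $\bar v$ is bounded on $\bar\Omega$; moreover the convolution of an $L^1_{loc}$ function having only a logarithmic singularity with an $L^\infty$ density is continuous. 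The rest is a routine transcription of \cite{massari74}: one checks that adding a bounded "forcing" term to the area functional does not destroy the density estimates and the monotonicity-type arguments underlying the $\epsilon$-regularity theorem, which is standard. I would therefore present the proof as: (1) direct method for existence, using $L^1$-compactness in $BV$ and continuity of the nonlocal term; (2) identification of the Euler–Lagrange condition as prescribed (bounded) mean curvature; (3) citation of \cite{massari74} for the $C^{1,\alpha}$ conclusion.
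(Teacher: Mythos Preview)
Your proposal is correct and follows essentially the same approach as the paper, which simply states that the result is ``obtained by straightforwardly adapting the results of \cite{massari74} for sets of prescribed mean curvature'' without giving further details. Your sketch is a faithful elaboration of that citation: direct method in $BV$ for existence, identification of the minimizer as a set of prescribed (bounded) mean curvature via the bounded potential $v$, and invocation of Massari's regularity theory for the $C^{1,\alpha}$ conclusion.
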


\noindent In view of this, in the following we will always assume that
minimizers $\bar\Omega^+$ of $\bar E$ are closed sets. We also note
that 
\begin{eqnarray}
  \label{eq:v}
  v(\bar x) = |\ln \ep|^{-1} \int_{\bar \Omega^+} G(\ep^{1/3} |\ln
  \ep|^{-1/3} (\bar x - \bar y)) \, d \bar y
\end{eqnarray}
is in $W^{2,p}(\bar \Omega)$, with any $p > 1$, and, hence, in
$C^{1,\alpha}(\bar \Omega)$ for any $\alpha \in (0, 1)$. Indeed, $\bar
v$ solves the equation
\begin{eqnarray}
  \label{eq:vbeq}
  -\Delta v + \ep^{2/3} |\ln \ep|^{-2/3} \kappa^2 v = |\ln \ep|^{-1}
  \chi_{\bar\Omega^+}, 
\end{eqnarray}
where $\chi_{\bar\Omega^+}$ is the characteristic function of
$\bar\Omega^+$, in $\bar \Omega$, and so the result follows by
standard elliptic regularity theory \cite{gilbarg}.  As a consequence,
we have a higher regularity for the boundary of the minimizer $\bar
\Omega^+$ of $\bar E$ (\cite{giusti}, see also \cite{roger08}):
\begin{corollary}
  \label{c:c3}
  The boundary $\partial \bar\Omega^+$ of a minimizer $\bar \Omega^+$
  of $\bar E$ is of class $C^{3,\alpha}$.
\end{corollary}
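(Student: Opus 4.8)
The plan is to bootstrap from the $C^{1,\alpha}$ regularity in Proposition \ref{p:exist} to $C^{3,\alpha}$ via the Euler--Lagrange equation for $\bar E$, using the fact that the nonlocal potential $v$ is a ``lower-order'' term that does not interact with the leading singular geometry of the interface. First I would write out the first variation of $\bar E$ with respect to normal perturbations of $\partial\bar\Omega^+$. Since the perimeter term contributes the scalar mean curvature $H$ of $\partial\bar\Omega^+$, the term $-2\bar\delta\kappa^{-2}|\bar\Omega^+|$ contributes a constant, and the nonlocal term contributes (a multiple of) the trace of $v$ on $\partial\bar\Omega^+$, the minimality of $\bar\Omega^+$ yields, in the weak sense, an equation of the form
\begin{eqnarray*}
  |\ln\ep|^{-1}\, H(\bar x) = c_1 |\ln\ep|^{-1} - c_2 |\ln\ep|^{-2}\, v(\bar x)
  \qquad \text{on } \partial\bar\Omega^+,
\end{eqnarray*}
for explicit constants $c_1,c_2>0$ depending on $\bar\delta,\kappa$. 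In other words, $\bar\Omega^+$ is a set whose boundary has prescribed mean curvature equal to a $C^{1,\alpha}(\bar\Omega)$ function (namely $c_1 - c_2|\ln\ep|^{-1}v$, by the already-established $C^{1,\alpha}$ regularity of $v$).

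Next I would invoke the regularity theory for sets of prescribed mean curvature. Writing $\partial\bar\Omega^+$ locally as a graph $x_2 = \phi(x_1)$ over a tangent line, the prescribed-mean-curvature equation becomes a quasilinear elliptic PDE for $\phi$,
\begin{eqnarray*}
  \operatorname{div}\!\left( \frac{\nabla\phi}{\sqrt{1+|\nabla\phi|^2}} \right) = f,
\end{eqnarray*}
with $f$ the (rescaled) trace expression above. Starting from $\phi \in C^{1,\alpha}$ and $f \in C^{0,\alpha}$, Schauder estimates for quasilinear equations give $\phi \in C^{2,\alpha}$ (see \cite{giusti,gilbarg}). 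One then bootstraps: with $\phi\in C^{2,\alpha}$, the interface $\partial\bar\Omega^+$ is a $C^{2,\alpha}$ manifold, so $\chi_{\bar\Omega^+}$ lies in a slightly better space and elliptic regularity in \eqref{eq:vbeq} improves $v$ accordingly; more simply, $v$ restricted to a $C^{2,\alpha}$ curve has a $C^{2,\alpha}$ trace, hence $f\in C^{1,\alpha}$ locally, and a second application of Schauder estimates pushes $\phi$ to $C^{3,\alpha}$. This is exactly the argument referenced via \cite{giusti} and \cite{roger08}, so I would cite those for the technical details rather than reproduce them.

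The main obstacle, though largely a matter of bookkeeping rather than deep difficulty, is verifying that the trace of $v$ on $\partial\bar\Omega^+$ genuinely inherits the regularity needed at each stage: the claim ``$v\in C^{1,\alpha}(\bar\Omega)$'' from \eqref{eq:vbeq} is only global $C^{1,\alpha}$, which suffices for the first bootstrap, but to reach $C^{3,\alpha}$ of the boundary one needs the coupling — improved boundary regularity feeds back through $\chi_{\bar\Omega^+}$ into the right-hand side of \eqref{eq:vbeq}, and one must check this loop closes at the $C^{3,\alpha}$ level and not merely at $C^{2,\alpha}$. Concretely, once $\partial\bar\Omega^+\in C^{2,\alpha}$, the domain $\bar\Omega^+$ is a $C^{2,\alpha}$ domain, and elliptic regularity up to the boundary (for the transmission-type problem, or just interior plus the fact that $v$ solves a nice equation on either side) gives $v\in C^{2,\alpha}$ near $\partial\bar\Omega^+$ on each side, so its trace is $C^{2,\alpha}$ — more than the $C^{1,\alpha}$ needed to close the final Schauder step. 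Since $G$ itself is smooth away from the origin and only logarithmically singular at it (by \eqref{eq:Gest}), the convolution in \eqref{eq:v} is as regular as the set allows, and no further obstruction arises; I would remark that one cannot in general push beyond $C^{3,\alpha}$ here because $f$ involves $v$ whose regularity is limited by $\chi_{\bar\Omega^+}\in L^\infty$ only, capping $v$ at $W^{2,p}$ globally. This is why the corollary stops at $C^{3,\alpha}$, which is all that is needed for the subsequent geometric estimates.
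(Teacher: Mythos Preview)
Your approach is essentially the paper's: the corollary is stated without proof, simply citing \cite{giusti} and \cite{roger08} for the Schauder bootstrap once $v\in C^{1,\alpha}(\bar\Omega)$ has been established from \eqref{eq:vbeq}. One simplification worth noting: the feedback loop you set up---improving $v$ via improved regularity of $\partial\bar\Omega^+$---is unnecessary, because the \emph{global} $C^{1,\alpha}$ regularity of $v$ (already in hand before the corollary) makes the prescribed-curvature data $C^{1,\alpha}$ from the outset; two Schauder steps on the graph equation (first with $C^{0,\alpha}$ coefficients to reach $\phi\in C^{2,\alpha}$, then with $C^{1,\alpha}$ coefficients and right-hand side to reach $\phi\in C^{3,\alpha}$) close the argument without ever revisiting $v$. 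In particular your intermediate claim that the trace of $v$ on a $C^{2,\alpha}$ curve is $C^{2,\alpha}$ is both unneeded and not quite right as stated ($v$ is only $C^{1,\alpha}$ across $\partial\bar\Omega^+$ since $\chi_{\bar\Omega^+}$ jumps there), but this does not affect your conclusion.
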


\noindent Note that this regularity result also holds more generally
for local minimizers of $\bar E$ in dimensions $d \leq 7$
\cite{massari74}, hence, in particular, the expressions for the first
and second variation of $\bar E$ in $d \leq 3$ obtained in
\cite{m:pre02} are justified (see also \cite{choksi07} for the case of
arbitrary dimensions). If $\rho \in C^1(\partial \bar\Omega)$, $a >
0$, and $\bar\Omega_a$ is the set obtained by displacing $\partial
\bar\Omega^+$ by $a \rho$ in the outward normal direction, then $a
\mapsto \bar E(\bar \Omega^+_a)$ is twice continuously differentiable
at $a = 0$, and we have \cite{m:pre02} (for the reader's convenience,
the computation is reproduced in Appendix \ref{a:vars}):
\begin{eqnarray}
  \label{eq:Ebvar1}
  |\ln \ep| \left. {d \bar E (\bar \Omega^+_a) \over da} \right|_{a =
    0} = \int_{\partial 
    \bar\Omega^+} ( K(\bar x) -2 \bar\delta \kappa^{-2} + 4 v(\bar x))
  \rho(\bar x)  \, d \mathcal H^1(\bar x),   \\ 
  |\ln \ep| \left. {d^2 \bar E  (\bar \Omega^+_a) \over da^2}
  \right|_{a = 0} = \int_{\partial \bar\Omega^+}  \bigl( |\nabla
  \rho(\bar x) |^2 + 4 \nu(\bar x) \cdot
  \nabla v(\bar x) \, \rho^2(\bar x) \bigr) \, d \mathcal H^1(\bar x) 
  \nonumber \\  
  + \int_{\partial \bar\Omega^+} (4 v (\bar x) - 2 \bar
  \delta \kappa^{-2}) K(\bar x)  \rho^2(\bar x)  \, d
  \mathcal  H^1(\bar  x) \nonumber \\   
  + 4 |\ln \ep|^{-1} \int_{\partial \bar\Omega^+} \int_{\partial
    \bar\Omega^+} 
  G(\ep^{1/3} |\ln  \ep|^{-1/3} (\bar x - \bar y)) \rho(\bar x)
  \rho(\bar y) \, d \mathcal
  H^1(\bar x) d  \mathcal H^1(\bar y).
  \label{eq:Ebvar2}
\end{eqnarray}
where $K(\bar x)$ is the curvature at point $\bar x \in \partial
\bar\Omega^+$, with the sign convention that $K > 0$ if $\bar\Omega^+$
is convex, and $\nu(\bar x)$ is the outward unit normal to $\partial
\bar\Omega^+$ at that point. The associated Euler-Lagrange equation
for $\partial \bar\Omega^+$ reads
\begin{eqnarray}
  \label{eq:ELsh}
  K(\bar x) = 2 \bar\delta \kappa^{-2} - 4 v(\bar x),
\end{eqnarray}
which also allows to simplify the expression in (\ref{eq:Ebvar2})
evaluated on a minimizer to
\begin{eqnarray}
  \label{eq:Ebvar3}
  |\ln \ep| \left. {d^2 \bar E  (\bar \Omega^+_a) \over da^2}
  \right|_{a = 0}  = \hspace{6.5cm} \nonumber \\ 
  \int_{\partial \bar\Omega^+}  \bigl( |\nabla
  \rho(\bar x) |^2 + 4 \nu(\bar x) \cdot
  \nabla v(\bar x) \, \rho^2(\bar x) - K^2(\bar x) \rho^2(\bar x) \bigr)
  \, d  \mathcal H^1(\bar x) \nonumber \\ 
  + 4  |\ln \ep|^{-1} \int_{\partial \bar\Omega^+} \int_{\partial
    \bar\Omega^+} 
  G(\ep^{1/3} |\ln  \ep|^{-1/3} (\bar x - \bar y)) \rho(\bar x)
  \rho(\bar y) \, d \mathcal
  H^1(\bar x) d  \mathcal H^1(\bar y). 
\end{eqnarray}

We will use these equations later on to establish some properties of
the minimizers for $\ep \ll 1$.  Meanwhile, let us begin our analysis
with some basic estimates.

\begin{lemma}
  \label{l:bbounds}
  Let $\bar\Omega^+$ be a minimizer of $\bar E$. Then there exists $C
  > 0$ such that 
  \begin{eqnarray}
    |\bar \Omega^+| & \leq & C |\ln \ep|, \label{omp0} \\
    |\partial \bar \Omega^+| & \leq & C |\ln \ep|. \label{domp0} 
  \end{eqnarray}
  for $\ep \ll 1$.
\end{lemma}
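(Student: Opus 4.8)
The cleanest route is to test the minimality of $\bar\Omega^+$ against the trivial state $u\equiv -1$ (i.e.\ $\bar\Omega^+=\emptyset$, for which $\bar E=0$). I would carry this out in the original variables for transparency: writing $u=2\chi_{\Omega^+}-1$, one has $\int_\Omega|\nabla u|=2|\partial\Omega^+|$, and, using \eqref{eq:E} together with $u-\bar u=2\chi_{\Omega^+}-(1+\bar u)$ and the identity $\int_\Omega G(x-y)\,dy=\kappa^{-2}$ (obtained by integrating \eqref{eq:Gk} over the torus),
\[
  E[u]-E[-1]=\ep\,|\partial\Omega^+|-2(1+\bar u)\kappa^{-2}|\Omega^+|+2\int_{\Omega^+}\!\int_{\Omega^+}G(x-y)\,dx\,dy .
\]
Minimality ($E[u]\le E[-1]$) then yields $\ep\,|\partial\Omega^+|+2\int_{\Omega^+}\!\int_{\Omega^+}G\le 2(1+\bar u)\kappa^{-2}|\Omega^+|$.

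The key ingredient is the coercivity estimate $\int_{\Omega^+}\!\int_{\Omega^+}G(x-y)\,dx\,dy\ge \kappa^{-2}|\Omega^+|^2$. To obtain it I would split $\chi_{\Omega^+}=|\Omega^+|+g$ with $\int_\Omega g=0$ (legitimate since $|\Omega|=1$) and recall that $G$ is the kernel of $(-\Delta+\kappa^2)^{-1}$ on the torus, so that $G*1=\kappa^{-2}$ is constant, $G*g$ has zero mean, and $\langle g,G*g\rangle\ge0$ by positive-definiteness; the cross terms then drop out and $\langle\chi_{\Omega^+},G*\chi_{\Omega^+}\rangle=\kappa^{-2}|\Omega^+|^2+\langle g,G*g\rangle\ge\kappa^{-2}|\Omega^+|^2$.

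Inserting this estimate and discarding the non-negative perimeter term gives $2\kappa^{-2}|\Omega^+|^2\le 2(1+\bar u)\kappa^{-2}|\Omega^+|$, i.e.\ $|\Omega^+|\le 1+\bar u=\ep^{2/3}|\ln\ep|^{1/3}\bar\delta$. Since the scaling $x=\ep^{1/3}|\ln\ep|^{-1/3}\bar x$ gives $|\bar\Omega^+|=\ep^{-2/3}|\ln\ep|^{2/3}|\Omega^+|\le\bar\delta|\ln\ep|$, this is exactly \eqref{omp0}. Feeding the bound on $|\Omega^+|$ back into the minimality inequality yields $\ep\,|\partial\Omega^+|\le 2(1+\bar u)^2\kappa^{-2}=2\ep^{4/3}|\ln\ep|^{2/3}\bar\delta^2\kappa^{-2}$, and since $|\partial\bar\Omega^+|=\ep^{-1/3}|\ln\ep|^{1/3}|\partial\Omega^+|$ we get $|\partial\bar\Omega^+|\le 2\bar\delta^2\kappa^{-2}|\ln\ep|$, which is \eqref{domp0}. (One can equally well run this directly on $\bar E$ using \eqref{eq:Ebar} and \eqref{eq:v}--\eqref{eq:vbeq}, via the fact that the average of $\bar z\mapsto G(\ep^{1/3}|\ln\ep|^{-1/3}\bar z)$ over $\bar\Omega$ equals $\kappa^{-2}$.)

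The only genuinely delicate step is the coercivity estimate for the non-local energy. I expect the main pitfall to be the temptation to read it off from the logarithmic blow-up of $G$ near the diagonal via \eqref{eq:Gest}: that route leads to an exact cancellation of the leading $|\ln\ep|$ contributions and, in any case, is unnecessary — the soft bound above, which uses only positive-definiteness of $G$ and the value of $\int_\Omega G$, is all that is needed (and it in fact works for every $\ep\in(0,1)$).
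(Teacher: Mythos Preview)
Your proof is correct and follows the same overall strategy as the paper: compare the minimizer to the empty set (i.e.\ use $\bar E\le 0$) and bound the Coulomb term below quadratically in the area. The one difference is in how that quadratic lower bound is obtained. The paper simply invokes the pointwise bound $G(x-y)\ge c>0$ for all $x,y\in\Omega$, which is immediate from the representation \eqref{eq:G} as a sum of positive $K_0$ terms, and reads off $\int_{\bar\Omega^+}\int_{\bar\Omega^+}G\ge c|\bar\Omega^+|^2$ directly. Your Fourier/positive-definiteness argument is a bit more work but yields the sharp constant $\kappa^{-2}$ (and hence the explicit bounds $|\bar\Omega^+|\le\bar\delta|\ln\ep|$, $|\partial\bar\Omega^+|\le 2\bar\delta^2\kappa^{-2}|\ln\ep|$), and it does not rely on the explicit series for $G$ on the torus---only on $G$ being the kernel of $(-\Delta+\kappa^2)^{-1}$. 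Either route is perfectly adequate here; the paper's pointwise bound is quicker for the purpose of this lemma, while yours is more robust and carries over unchanged to more general domains or kernels where a uniform pointwise lower bound may not be available.
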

\begin{proof}
  First of all, by representation (\ref{eq:G}) we have $G(x - y) \geq
  c > 0$ for all $x, y \in \Omega$. Therefore, in view of the fact
  that $\min \bar E \leq 0$ (since $\bar E = 0$ if $\bar\Omega^+ =
  \varnothing$), from (\ref{eq:Ebar}) we have
  \begin{eqnarray}
    \label{eq:14}
    0 \geq |\ln \ep| \, \bar E & \geq & 2 |\ln \ep|^{-1} \int_{\bar
      \Omega^+} \int_{\bar 
      \Omega^+} G\bigl( \ep^{1/3} |\ln \ep|^{-1/3} (\bar x - \bar y)
    \bigr) d \bar x  - 2  \bar \delta \kappa^{-2} |\bar \Omega^+|
    \nonumber \\  
    & \geq & 2 c |\ln \ep|^{-1} \, |\bar \Omega^+|^2 - 2  \bar \delta 
    \kappa^{-2} |\bar \Omega^+|,  
  \end{eqnarray}
  which gives (\ref{omp0}).  On the other hand, we also have
  \begin{eqnarray}
    \label{eq:12}
    |\partial \bar\Omega^+| \leq 2 
    \bar \delta \kappa^{-2} |\bar \Omega^+|.
  \end{eqnarray}
  Therefore, from (\ref{omp0}) we immediately obtain (\ref{domp0}).
\end{proof}

As a corollary, it follows from (\ref{domp0}) that the diameter of
each connected subset $\bar \Omega^+_i$ of $\bar \Omega^+$ is bounded
by $O(|\ln \ep|)$
\begin{eqnarray}
  \label{eq:diamln}
  \mathrm{diam} (\bar \Omega^+_i) \leq C |\ln \ep|,
\end{eqnarray}
for some $C > 0$ independent of $\ep \ll 1$.

Our next step is to show that the area of each connected component of
$\bar\Omega^+ \not = \varnothing$ is uniformly bounded above and below
independently of $\ep$.

\begin{lemma}
  \label{l:EcC}
  Let $\bar\Omega^+ = \cup_{i=1}^N \bar\Omega^+_i$ be a non-trivial
  minimizer of $\bar E$, where $\bar\Omega^+_i$ are the disjoint
  connected components of $\bar\Omega^+$. Then, there exist  $C > c >
  0$ such that
  \begin{eqnarray}
    \label{eq:OmcC}
    c \leq |\bar\Omega^+_i|,  |\partial \bar\Omega^+_i| \leq C, \qquad
    \mathrm{diam} (\bar\Omega^+_i) \leq C,
  \end{eqnarray}
  for $\ep \ll 1$.
\end{lemma}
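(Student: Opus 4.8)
The plan is to work on one connected component $\bar\Omega^+_i$ at a time and combine three ingredients: the isoperimetric inequality, the short-range estimate $|\partial\bar\Omega^+| \le 2\bar\delta\kappa^{-2}|\bar\Omega^+|$ from (\ref{eq:12}) applied componentwise, and the local competitor obtained by deleting (or shrinking) a single component. First I would observe that, since the diameter of each $\bar\Omega^+_i$ is already $O(|\ln\ep|)$ by (\ref{eq:diamln}), the kernel $G$ evaluated at differences of points inside one component is bounded below by a positive constant (by (\ref{eq:G}), $G \ge c > 0$ everywhere) and above by $C|\ln|\ep^{1/3}|\ln\ep|^{-1/3}\,\mathrm{diam}|$, i.e. by $O(|\ln\ep|)$ on that scale; so the self-interaction term of a single component contributes at most $C|\ln\ep|^{-2}\cdot|\ln\ep|\cdot|\bar\Omega^+_i|^2 = C|\ln\ep|^{-1}|\bar\Omega^+_i|^2$ and at least $c|\ln\ep|^{-1}|\bar\Omega^+_i|^2$ to $|\ln\ep|\bar E$.

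For the \emph{upper bound} on $|\bar\Omega^+_i|$, I would remove the component $\bar\Omega^+_i$ from $\bar\Omega^+$ and compare energies. Deleting $\bar\Omega^+_i$ removes the interfacial term $|\ln\ep|^{-1}|\partial\bar\Omega^+_i| \ge 0$, removes the (negative) area term $-2\bar\delta\kappa^{-2}|\ln\ep|^{-1}|\bar\Omega^+_i|$, and removes the (positive) self-interaction and cross-interaction of $\bar\Omega^+_i$. Minimality forces the gain from dropping the negative area term to be dominated by the loss from dropping the positive self-interaction, which yields $2\bar\delta\kappa^{-2}|\ln\ep|^{-1}|\bar\Omega^+_i| \le C|\ln\ep|^{-1}|\bar\Omega^+_i|^2 + (\text{cross terms})$; since the cross-interaction of $\bar\Omega^+_i$ with the rest is nonnegative it only helps, so $|\bar\Omega^+_i| \ge c$, giving the \emph{lower} bound on the area, not the upper. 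To get the upper bound I instead test against the configuration in which $\bar\Omega^+_i$ is shrunk by a homothety of factor $t \in (0,1)$ about its ``center''; the interfacial energy scales like $t$, the area term like $t^2$, the self-interaction roughly like $t^4$ (with a logarithmic correction that is harmless since the component is already small on the rescaled torus), and the cross-interaction like $t^2$ times bounded kernel values. If $|\bar\Omega^+_i|$ were large, the dominant negative contribution $-2\bar\delta\kappa^{-2}|\ln\ep|^{-1}\,t^2|\bar\Omega^+_i|$ would be beaten by the positive self-interaction $\sim c|\ln\ep|^{-1} t^2 |\bar\Omega^+_i|^2$ already at $t$ close to $1$, so perturbing $t$ slightly below $1$ strictly decreases $\bar E$, contradicting minimality; this forces $|\bar\Omega^+_i| \le C$. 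Once $|\bar\Omega^+_i| \le C$, the bound $|\partial\bar\Omega^+_i| \le 2\bar\delta\kappa^{-2}|\bar\Omega^+_i| \le C$ follows from (\ref{eq:12}) applied to the single component (the same truncation argument that gives (\ref{eq:12}) works componentwise, since replacing one component by nothing is an admissible competitor), and then $\mathrm{diam}(\bar\Omega^+_i) \le C$ follows from connectedness and the perimeter bound, and $|\partial\bar\Omega^+_i| \ge c$ from the isoperimetric inequality $|\partial\bar\Omega^+_i|^2 \ge 4\pi|\bar\Omega^+_i| \ge 4\pi c$.

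The main obstacle is making the shrinking-competitor argument rigorous while keeping track of all the interaction terms: one must verify that shrinking a single component really is an admissible competitor (it is, since $\bar E$ is defined on all finite-perimeter sets, with no volume constraint), and one must control the logarithmic corrections in the self-interaction uniformly — this is where it matters that a nontrivial minimizer has $\min\bar E \le 0$, forcing $\bar\delta > \tfrac12\sqrt[3]{9}\,\kappa^2$ (cf. Theorem \ref{t:main}) and hence $\bar\delta\kappa^{-2}$ of order one, so the negative area coefficient is genuinely $O(1)$ and the balance between the $t^2$ area term and the $t^2$-and-higher interaction terms is non-degenerate. The cross-interaction terms are all nonnegative (as $G > 0$) and therefore never obstruct the lower bound, and for the upper bound they can be crudely bounded using $|\bar\Omega^+| \le C|\ln\ep|$ from Lemma \ref{l:bbounds} together with the boundedness of $G$ away from the diagonal.
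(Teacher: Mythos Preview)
Your proposal has a genuine gap in the self-interaction estimate, and also a sign confusion in the deletion argument.

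\textbf{The self-interaction is a factor $|\ln\ep|$ too small.} You bound $G$ from below only by a positive constant, citing (\ref{eq:G}). With that bound, the self-interaction of $\bar\Omega^+_i$ contributes to $|\ln\ep|\,\bar E$ only $c|\ln\ep|^{-1}|\bar\Omega^+_i|^2$, which cannot compete with the area term $2\bar\delta\kappa^{-2}|\bar\Omega^+_i|$: balancing the two gives $|\bar\Omega^+_i|\le C|\ln\ep|$, which is just Lemma~\ref{l:bbounds} again. The same defect infects your shrinking argument. What the paper uses instead is the logarithmic behaviour (\ref{eq:Gest}): since $\mathrm{diam}(\bar\Omega^+_i)\le C|\ln\ep|$ by (\ref{eq:diamln}), for $\bar x,\bar y\in\bar\Omega^+_i$ one has $\ep^{1/3}|\ln\ep|^{-1/3}|\bar x-\bar y|\to 0$, hence $G\bigl(\ep^{1/3}|\ln\ep|^{-1/3}(\bar x-\bar y)\bigr)\ge\tfrac{1}{6\pi}(1-\delta)|\ln\ep|$. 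This puts the self-interaction in $|\ln\ep|\,\bar E$ at order $\tfrac{1}{3\pi}|\bar\Omega^+_i|^2$ with no $|\ln\ep|^{-1}$, which is exactly what is needed. (Your sentence ``contributes at most $C|\ln\ep|^{-2}\cdot|\ln\ep|\cdot|\bar\Omega^+_i|^2$ \dots to $|\ln\ep|\bar E$'' also has a bookkeeping slip: that quantity is the contribution to $\bar E$, not to $|\ln\ep|\bar E$.)

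\textbf{The deletion competitor already gives both bounds; the shrinking is unnecessary.} Minimality says that removing $\bar\Omega^+_i$ cannot lower $\bar E$, hence (dropping the nonnegative cross term)
\[
|\partial\bar\Omega^+_i|\;+\;\tfrac{1}{3\pi}(1-\delta)|\bar\Omega^+_i|^2\;\le\;2\bar\delta\kappa^{-2}|\bar\Omega^+_i|.
\]
From the self-interaction piece alone one reads off $|\bar\Omega^+_i|\le C$ (the \emph{upper} bound), and from the perimeter piece plus the isoperimetric inequality $|\partial\bar\Omega^+_i|\ge 2\sqrt{\pi}\,|\bar\Omega^+_i|^{1/2}$ one reads off $|\bar\Omega^+_i|\ge c$ (the \emph{lower} bound). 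You have these reversed: your inequality ``$2\bar\delta\kappa^{-2}|\bar\Omega^+_i|\le C|\bar\Omega^+_i|^2$'' is in the wrong direction for what minimality actually forces. The paper packages this as the single observation that the function $A\mapsto 2\sqrt{\pi}A^{1/2}-2\bar\delta\kappa^{-2}A+\tfrac{1}{3\pi}(1-\delta)A^2$ is positive both near $0$ and near $\infty$, so deletion would strictly lower the energy unless $|\bar\Omega^+_i|$ lies in a fixed bounded interval away from zero.
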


\begin{proof}
  First, note that since by Corollary \ref{c:c3} the set $\partial
  \bar\Omega^+$ is of class $C^{3,\alpha}$ we have $N < \infty$. To
  see that (\ref{eq:OmcC}) holds, we first write $\bar E$ as
  \begin{eqnarray}
    \label{eq:Eb}
    |\ln \ep| \, \bar E & = & \sum_{i=1}^N \Biggl(|\partial
    \bar\Omega^+_i| 
    - 2 \bar \delta \kappa^{-2} |\bar\Omega^+_i| \nonumber \\ 
    & + & 2 |\ln \ep|^{-1} \int_{\bar\Omega^+_i} \int_{\bar\Omega^+_i}
    G(\ep^{1/3} |\ln \ep|^{-1/3}  (\bar x - \bar
    y)) \, d \bar x d \bar y \nonumber \\ 
    & + & 2 |\ln \ep|^{-1} \sum_{j \not= i} \int_{\bar\Omega^+_i}
    \int_{\bar\Omega^+_j}
    G(\ep^{1/3} |\ln \ep|^{-1/3}  (\bar x - \bar
    y)) \, d \bar x d \bar y\Biggr).
  \end{eqnarray}
  In view of (\ref{eq:diamln}) and (\ref{eq:Gest}), the integral in
  the second line in (\ref{eq:Eb}) is bounded from below by $
  \tfrac{1}{6 \pi} (1 - \delta) |\ln \ep| \, |\bar\Omega^+_i|^2$ for
  any $\delta > 0$, provided $\ep$ is small enough. Therefore,
  removing the set $\bar\Omega^+_i$ from $\bar\Omega^+$ will result in
  the change of energy $\Delta \bar E$ estimated as
  \begin{eqnarray}
    \label{eq:delE}
    |\ln \ep| \, \Delta \bar E & \leq & - \biggl(
    |\partial\bar\Omega^+_i|  - 2 \bar \delta \kappa^{-2}
    |\bar\Omega^+_i| +  \tfrac{1}{3 \pi} (1 - \delta)
    |\bar\Omega^+_i|^2  \biggr) \nonumber \\ 
    & \leq & -  \biggl(
    2 \sqrt{\pi} \, |\bar\Omega^+_i|^{1/2}  - 2 \bar \delta
    \kappa^{-2}  |\bar\Omega^+_i| +  \tfrac{1}{3 \pi} (1 - \delta) 
    |\bar\Omega^+_i|^2  \biggr),
  \end{eqnarray}
  where in the first line we took into account that $G > 0$ and in the
  second line used the isoperimetric inequality. Then, by direct
  inspection (see also Fig. \ref{fig:fa}) we have $\Delta \bar E < 0$,
  contradicting minimality of $\bar E$ on $\bar \Omega^+$, unless $c
  \leq |\bar\Omega^+_i| \leq C$ for some $C > c > 0$, independently of
  $\ep \ll 1$.  Finally, the lower bound for $ |\partial
  \bar\Omega^+_i|$ follows from the isoperimetric inequality, and the
  upper bound is obtained by applying the previous argument to the
  first line in (\ref{eq:delE}).
\end{proof}

\begin{figure}
  \centering
  \includegraphics[width=8cm]{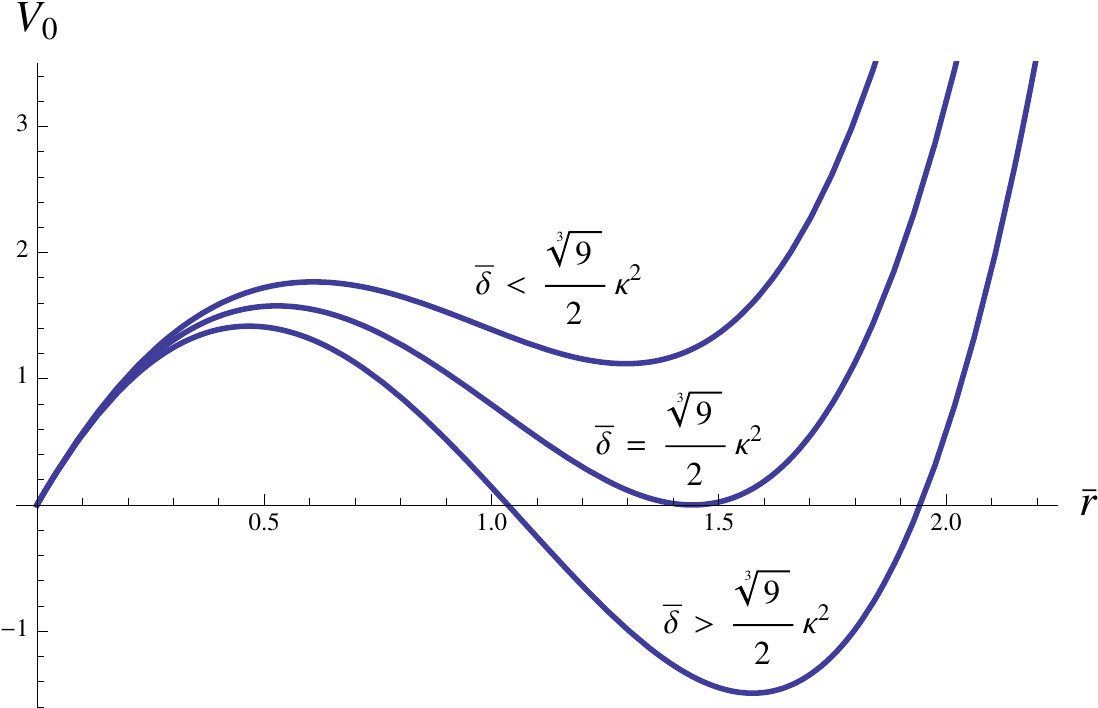}
  \caption{The graph of $V_0(\bar r)$ from (\ref{eq:Vbri}) for
    different values of $\bar \delta$.}
  \label{fig:fa}
\end{figure}

Following the same arguments, we also immediately arrive at the
following non-existence result:
\begin{proposition}
  \label{p:db} 
  Let $\bar \delta < \tfrac12 \sqrt[3]{9} \, \kappa^2$ be fixed. Then
  the unique minimizer of $\bar E$ is $\bar \Omega^+ = \varnothing$
  for $\ep \ll 1$.
\end{proposition}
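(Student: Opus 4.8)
The plan is to show that when $\bar\delta < \tfrac12 \sqrt[3]{9}\,\kappa^2$, the energy of any single connected component $\bar\Omega^+_i$ (ignoring, for a lower bound, its interaction with the other components) is strictly positive, so that removing all components decreases the energy and $\bar\Omega^+ = \varnothing$ is the unique minimizer. First I would recall from Lemma \ref{l:EcC} that for a non-trivial minimizer each component satisfies $c \leq |\bar\Omega^+_i|, |\partial\bar\Omega^+_i| \leq C$ and $\mathrm{diam}(\bar\Omega^+_i) \leq C$ uniformly in $\ep \ll 1$; in particular (\ref{eq:diamln}) together with the short-distance expansion (\ref{eq:Gest}) applies on each component. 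Then, exactly as in the derivation of (\ref{eq:delE}), removing $\bar\Omega^+_i$ from $\bar\Omega^+$ changes the energy by
\begin{eqnarray}
  |\ln\ep|\,\Delta\bar E \;\leq\; -\Bigl( |\partial\bar\Omega^+_i| - 2\bar\delta\kappa^{-2}|\bar\Omega^+_i| + \tfrac{1}{3\pi}(1-\delta)|\bar\Omega^+_i|^2 \Bigr), \nonumber
\end{eqnarray}
where I used $G > 0$ to drop the cross terms with $j \neq i$ and the lower bound $\tfrac{1}{6\pi}(1-\delta)|\ln\ep|\,|\bar\Omega^+_i|^2$ on the self-interaction integral coming from (\ref{eq:Gest}) and (\ref{eq:diamln}).

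The key step is then a sharp minimization of the right-hand side over the admissible geometry. Writing $|\bar\Omega^+_i| = \pi\bar r_i^2$ and using the isoperimetric inequality $|\partial\bar\Omega^+_i| \geq 2\sqrt{\pi}\,|\bar\Omega^+_i|^{1/2} = 2\pi\bar r_i$, the bracket is bounded below by
\begin{eqnarray}
  \label{eq:Vbri}
  V_0(\bar r) \;=\; 2\pi\bar r - 2\pi\bar\delta\kappa^{-2}\bar r^2 + \tfrac{\pi}{3}(1-\delta)\bar r^4,
\end{eqnarray}
and I would show that $V_0(\bar r) > 0$ for all $\bar r > 0$ precisely when $\bar\delta\kappa^{-2} < \tfrac12\sqrt[3]{9}\cdot\tfrac{1}{(1-\delta)^{1/3}}$ — this is a one-variable calculus fact about when the quartic $2 - 2\beta\bar r + \tfrac13(1-\delta)\bar r^3$ (after dividing by $\pi\bar r$) stays positive, with the borderline $\beta = \tfrac12\sqrt[3]{9}$ obtained at $\delta = 0$ by requiring the minimum of $\tfrac{2}{\bar r} + \tfrac13\bar r^3$, attained at $\bar r = \sqrt[3]{3}$, to exceed $2\beta$. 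Since $\bar\delta < \tfrac12\sqrt[3]{9}\,\kappa^2$ is fixed, I may choose $\delta > 0$ small enough (depending only on $\bar\delta$ and $\kappa$) that strict positivity still holds, and this choice is legitimate because (\ref{eq:Gest}) gives the constant $\tfrac{1}{6\pi}(1-\delta)$ for every $\delta > 0$ once $\ep$ is small enough (depending on $\delta$). Hence $\Delta\bar E < 0$, so any non-trivial minimizer can be strictly improved by deleting a component; iterating (finitely many components, by Corollary \ref{c:c3} / Lemma \ref{l:EcC}) forces $\bar\Omega^+ = \varnothing$, which has $\bar E = 0$, and uniqueness follows because $\bar E > 0$ for any non-empty set of finite perimeter by the same strict inequality.

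The only mild obstacle is bookkeeping the order of quantifiers: the constant $\delta$ must be fixed first in terms of the fixed gap $\tfrac12\sqrt[3]{9}\,\kappa^2 - \bar\delta\kappa^{-2}\cdot\kappa^2$, and only then is $\ep_0$ chosen so that the expansion (\ref{eq:Gest}) yields the coefficient $\tfrac{1}{6\pi}(1-\delta)$ on every component of diameter $\leq C|\ln\ep|$ — here one uses that $\ep^{1/3}|\ln\ep|^{-1/3}\cdot C|\ln\ep| = C\ep^{1/3}|\ln\ep|^{2/3} \to 0$, so the arguments $\ep^{1/3}|\ln\ep|^{-1/3}(\bar x - \bar y)$ appearing in the rescaled kernel are indeed small and the logarithmic term dominates. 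No genuinely new estimate beyond those already used for Lemma \ref{l:EcC} is needed; the proposition is really the "endpoint" version of that lemma's argument with the isoperimetric bound made sharp, and I would remark that the threshold $\tfrac12\sqrt[3]{9}\,\kappa^2$ matches the formal prediction $\bar\delta = \tfrac12\sqrt[3]{9}\,\kappa^2$ coming from the reduced energy $E_N$ in (\ref{eq:EN}) with $r_i = \sqrt[3]{3}\,\ep^{1/3}|\ln\ep|^{-1/3}$ (equation (\ref{eq:lims})).
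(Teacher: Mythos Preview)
Your proposal is correct and follows essentially the same route as the paper's own proof: remove a single component, bound the energy drop from below via the isoperimetric inequality and the logarithmic self-interaction estimate coming from (\ref{eq:Gest}), and reduce to the one-variable positivity of $V_0(\bar r)$ when $\bar\delta < \tfrac12\sqrt[3]{9}\,\kappa^2$. One cosmetic slip: after dividing the bracket by $\pi\bar r$ and rearranging, the function whose minimum you should compute is $\tfrac{2}{\bar r} + \tfrac13\bar r^2$ (not $\tfrac13\bar r^3$); the minimizer $\bar r = \sqrt[3]{3}$ and the threshold $\tfrac12\sqrt[3]{9}$ are nonetheless correct, and for uniqueness you need only observe that a non-trivial minimizer can be strictly improved, not that $\bar E > 0$ on every non-empty set.
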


\begin{proof}
  Let us introduce the function $V_v: [0, \infty) \to \mathbb R$,
  defined as
  \begin{eqnarray}
    \label{eq:Vbri}
    V_{v}(\bar r) = 2 \pi \left( \bar r +  (2 v - \bar \delta
      \kappa^{-2}) \bar r^2 + \tfrac16 \bar r^4 
    \right), 
  \end{eqnarray}
  whose graph at $v = 0$ and several values of $\bar\delta$ is shown
  in Fig. \ref{fig:fa}. If $\bar\Omega^+_i$ is a connected component
  of $\bar\Omega^+$ and $\bar r_i = (\tfrac{1}{\pi}
  |\bar\Omega^+_i|)^{1/2}$, then by the same arguments as in Lemma
  \ref{l:EcC}, the energy gained by removing $\bar\Omega^+_i$ from
  $\bar\Omega^+$ is bounded below by $|\ln \ep|^{-1} (V_0(\bar r_i) +
  o(1))$, as long as $\ep \ll 1$. Then, by direct inspection $V_0(\bar
  r)$ is always positive under the assumptions of the proposition,
  making $\bar\Omega^+ = \varnothing$ energetically preferred.
\end{proof}

\noindent Note that the asymptotic value of the threshold of $\bar
\delta$ in Proposition \ref{p:db} below which no non-trivial
minimizers are present was computed in \cite{m:pre02}.  Another simple
corollary to Proposition \ref{l:EcC} is the following
\begin{lemma}
  \label{l:N}
  Let $\bar\Omega^+$ be a non-trivial minimizer of $\bar E$ and let
  $N$ be the number of disjoint connected components of
  $\bar\Omega^+$. Then there exists $C > 0$ such that 
  \begin{eqnarray}
    \label{eq:1}
    N \leq C |\ln \ep|,
  \end{eqnarray}
  for $\ep \ll 1$.
\end{lemma}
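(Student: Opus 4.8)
The plan is to combine the two-sided area bounds from Lemma \ref{l:EcC} with the global area bound from Lemma \ref{l:bbounds}. First I would recall that $\bar\Omega^+ = \cup_{i=1}^N \bar\Omega^+_i$ is the decomposition of a non-trivial minimizer into its $N$ disjoint connected components; by Corollary \ref{c:c3} the boundary $\partial\bar\Omega^+$ is of class $C^{3,\alpha}$, so $N$ is finite to begin with (this is also noted in the proof of Lemma \ref{l:EcC}). The key input is that each component has area bounded below: by Lemma \ref{l:EcC} there is a constant $c > 0$, independent of $\ep \ll 1$, such that $|\bar\Omega^+_i| \geq c$ for every $i \in \{1, \dots, N\}$.

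Next I would sum over the components. Since the sets $\bar\Omega^+_i$ are disjoint, additivity of Lebesgue measure gives
\begin{eqnarray}
  N c \leq \sum_{i=1}^N |\bar\Omega^+_i| = |\bar\Omega^+|.
\end{eqnarray}
On the other hand, Lemma \ref{l:bbounds}, equation (\ref{omp0}), provides a constant $C' > 0$ with $|\bar\Omega^+| \leq C' |\ln \ep|$ for $\ep \ll 1$. Combining the two inequalities yields $N \leq (C'/c) |\ln \ep|$, which is (\ref{eq:1}) with $C = C'/c$. One could equivalently run the same argument using the lower bound on $|\partial\bar\Omega^+_i|$ from Lemma \ref{l:EcC} together with the perimeter bound (\ref{domp0}); either route works identically.

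There is essentially no obstacle here: the lemma is a direct counting corollary, and all the genuine work — the uniform lower bound on the area (respectively perimeter) of each connected component, which rests on the isoperimetric-inequality estimate for the energy gained by deleting a component, and the $O(|\ln\ep|)$ global area bound, which rests on the uniform positivity of $G$ on the torus — has already been carried out in Lemmas \ref{l:bbounds} and \ref{l:EcC}. The only point requiring a word of care is that the constants $c$ and $C'$ are uniform in $\ep$ for $\ep$ small, so that the final constant $C$ is genuinely $\ep$-independent; this is immediate from the statements of the two preceding lemmas. I would therefore expect the proof to be two or three lines long.
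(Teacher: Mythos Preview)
Your proposal is correct and matches the paper's approach exactly: the paper presents Lemma \ref{l:N} as an immediate corollary of Lemma \ref{l:EcC} (combined with Lemma \ref{l:bbounds}) without giving any further details. Your write-up is actually more thorough than what the paper provides.
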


Let us now establish a uniform bound on the potential $v$. Note that a
version of this result is also an important component in the proofs of
\cite{alberti09}.
\begin{lemma}
  \label{l:v}
  Let $\bar\Omega^+$ be a non-trivial minimizer of $\bar E$. Then for
  any $\alpha \in (0, 1)$ we have
  \begin{eqnarray}
    \label{eq:0vC}
    0 < v \leq C,    \qquad ||v||_{C^{1,\alpha}(\bar \Omega)} \leq
    C, 
  \end{eqnarray}
  where $v$ is given by (\ref{eq:v}), for some $C > 0$ independent of
  $\ep \ll 1$.
\end{lemma}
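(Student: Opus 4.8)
The plan is to bound $v$ in $C^{1,\alpha}(\bar\Omega)$ by exploiting the elliptic equation (\ref{eq:vbeq}) satisfied by $v$ together with the a priori volume bound $|\bar\Omega^+| \leq C|\ln\ep|$ from Lemma~\ref{l:bbounds}. First I would note that $v > 0$ is immediate since $G > 0$ by (\ref{eq:G}) and $\bar\Omega^+$ has positive measure (being non-trivial). For the upper bound, rather than working through the elliptic equation globally, the most direct route uses the explicit structure of $G$. Split the integral defining $v(\bar x)$ in (\ref{eq:v}) over $\bar\Omega^+$ into the contribution of the region near $\bar x$ and the region far from $\bar x$, where ``near'' means within distance $\ep^{-1/3}|\ln\ep|^{1/3}\delta$ in the rescaled variable for a small fixed $\delta$ (so that $\ep^{1/3}|\ln\ep|^{-1/3}|\bar x - \bar y| < \delta$, the regime where (\ref{eq:Gest}) applies). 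On the far region $G$ is bounded by a constant, so that piece contributes at most $C|\ln\ep|^{-1}|\bar\Omega^+| \leq C$ by Lemma~\ref{l:bbounds}. On the near region, using $G \leq -\tfrac{1}{2\pi}\ln(\bar\kappa\ep^{1/3}|\ln\ep|^{-1/3}|\bar x-\bar y|) + C$, the worst case is when $\bar\Omega^+$ is concentrated in a disk around $\bar x$; since $|\bar\Omega^+| \leq C|\ln\ep|$, this disk has radius $O(|\ln\ep|^{1/2})$ in rescaled coordinates, and a direct computation of $\int_{|\bar z| < R} |\ln(\ep^{1/3}|\ln\ep|^{-1/3}|\bar z|)|\,d\bar z$ with $R = O(|\ln\ep|^{1/2})$ yields a bound of order $|\ln\ep| \cdot |\ln\ep| \cdot \ldots$ — here one must be careful: the logarithm $\ln(\ep^{1/3}\cdot) \sim \tfrac13\ln\ep$ contributes a factor $|\ln\ep|$, and the area factor contributes $|\ln\ep|$, so together with the prefactor $|\ln\ep|^{-1}$ from (\ref{eq:v}) one is left with $O(|\ln\ep|)$, which is \emph{not} bounded. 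This shows the naive estimate is too crude and the area must in fact be controlled componentwise.

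The fix is to use Lemma~\ref{l:EcC}: the total area $|\bar\Omega^+| \le C|\ln\ep|$ is distributed among $N \le C|\ln\ep|$ components, each of area and diameter $O(1)$. So I would instead estimate $v(\bar x) = |\ln\ep|^{-1}\sum_{i=1}^N \int_{\bar\Omega^+_i} G(\ep^{1/3}|\ln\ep|^{-1/3}(\bar x - \bar y))\,d\bar y$. For the one component (if any) containing $\bar x$ or nearest to it, the integral is $O(|\ln\ep|)$ because of the logarithmic singularity integrated over an $O(1)$ area at rescaled distance $O(\ep^{1/3}|\ln\ep|^{-1/3})$: indeed $\int_{\bar\Omega^+_i}|\ln(\ep^{1/3}|\ln\ep|^{-1/3}|\bar x-\bar y|)|\,d\bar y \le C|\bar\Omega^+_i|\,|\ln\ep| = O(|\ln\ep|)$, contributing $O(1)$ after the prefactor. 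For all other components, $G$ is bounded (the rescaled distances $|\bar x - \bar y|$ from $\bar x$ to a different component can still be small, but the key point is that for each component the singular contribution is $O(|\ln\ep|)\cdot|\bar\Omega^+_i|$ summed over $i$, giving $O(|\ln\ep|)\cdot|\bar\Omega^+| = O(|\ln\ep|^2)$, still too much). The genuinely correct argument: only the component containing $\bar x$ sees the singularity, every other component $\bar\Omega^+_j$ contributes $\int_{\bar\Omega^+_j} G \le |\bar\Omega^+_j|\cdot(\tfrac{1}{2\pi}|\ln(\ep^{1/3}|\ln\ep|^{-1/3}\,d_j)| + C)$ where $d_j = \mathrm{dist}(\bar x,\bar\Omega^+_j)$, so $\sum_{j} \int_{\bar\Omega^+_j} G \le \tfrac{1}{2\pi}\sum_j |\bar\Omega^+_j|\,|\ln(\ep^{1/3}d_j)| + C|\bar\Omega^+|$; bounding $|\ln(\ep^{1/3}d_j)| \le C|\ln\ep| + |\ln d_j|$ and using that the $d_j$ are spread out (at most $O(k^2)$ components within rescaled distance $k$, since each has $O(1)$ area — but this packing argument needs the components to be disjoint, which they are), one gets $\sum_j |\bar\Omega^+_j||\ln d_j| = O(|\ln\ep|)$ by comparison with $\sum_{k\ge1} k^2 \ln k / (\text{volume growth})$... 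Honestly, the cleanest path is the elliptic one: $v$ solves $-\Delta v + \ep^{2/3}|\ln\ep|^{-2/3}\kappa^2 v = |\ln\ep|^{-1}\chi_{\bar\Omega^+}$ on the torus $\bar\Omega$ of period $\ep^{1/3}|\ln\ep|^{-1/3}$, and integrating gives $\ep^{2/3}|\ln\ep|^{-2/3}\kappa^2 \int_{\bar\Omega} v = |\ln\ep|^{-1}|\bar\Omega^+| \le C$, so $\int_{\bar\Omega} v \le C\ep^{-2/3}|\ln\ep|^{-1/3}$; combined with an $L^\infty$–$L^1$ elliptic estimate on this small torus (Green's function of $-\Delta + \text{tiny mass}$ scales correctly), one extracts $\|v\|_\infty \le C$.

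I would therefore carry out the proof as follows. \emph{Step 1:} $v>0$ from $G>0$ and $|\bar\Omega^+|>0$. \emph{Step 2:} rescale once more to a unit torus, or equivalently work directly with $G$'s periodic representation (\ref{eq:G}): write $v(\bar x)$ as a sum over components as above, isolate the (at most one) component within rescaled distance $O(1)$ whose singular self-contribution is $O(|\ln\ep|)$ hence $O(1)$ after the $|\ln\ep|^{-1}$ factor, and bound the remaining components' contributions using the diameter and number bounds from Lemmas~\ref{l:EcC} and~\ref{l:N} together with a packing estimate on disjoint $O(1)$-area sets, which caps $\sum_{j\,\mathrm{far}}\int_{\bar\Omega^+_j}G$ by $C|\ln\ep|$; this gives $0 < v \le C$. \emph{Step 3:} feed the $L^\infty$ bound on $v$, i.e.\ on the right side $|\ln\ep|^{-1}\chi_{\bar\Omega^+}$ (which is trivially $L^\infty$-bounded) into interior $W^{2,p}$ estimates for (\ref{eq:vbeq}), valid uniformly because the zeroth-order coefficient $\ep^{2/3}|\ln\ep|^{-2/3}\kappa^2 \ge 0$ and the equation is on a torus; then Sobolev embedding $W^{2,p} \hookrightarrow C^{1,\alpha}$ for $p>2$ gives $\|v\|_{C^{1,\alpha}(\bar\Omega)} \le C$. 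The main obstacle is Step~2: getting the far-field sum over the $O(|\ln\ep|)$ components to be only $O(|\ln\ep|)$ rather than $O(|\ln\ep|^2)$ requires genuinely using that the components are disjoint with $O(1)$ area (hence at most $Ck^2$ of them lie within rescaled distance $k$), so that $\sum_j |\bar\Omega^+_j|\,|\ln\mathrm{dist}(\bar x,\bar\Omega^+_j)|$ converges after multiplication by the logarithm; the bound $\|v\|_{C^{1,\alpha}}$ in Step~3 is then standard elliptic regularity (cf.\ \cite{gilbarg}).
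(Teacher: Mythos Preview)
Your Step~2 contains a genuine gap: the direct integral / packing approach cannot yield a uniform $L^\infty$ bound on $v$ using only the a priori information from Lemmas~\ref{l:bbounds}, \ref{l:EcC}, \ref{l:N}. The worst case allowed by those lemmas is that all $N \sim |\ln\ep|$ components (each of $O(1)$ area) are packed as tightly as disjointness permits, i.e.\ into a ball of radius $\sim|\ln\ep|^{1/2}$ around $\bar x$. Then about $k$ components sit in the annulus at rescaled distance $k$, each contributing $\sim \tfrac{1}{6\pi}|\ln\ep|$ to $\int_{\bar\Omega^+_j} G$, so the sum over $k\le |\ln\ep|^{1/2}$ is $\sim |\ln\ep|^2$ and after the prefactor $|\ln\ep|^{-1}$ you get $v(\bar x)\sim|\ln\ep|$, not $O(1)$. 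Your own calculation noticed this twice (``$O(|\ln\ep|^2)$, still too much'') but the packing refinement does not fix it, because the dominant term $\tfrac{1}{6\pi}|\ln\ep|$ in $G$ is distance-independent at the relevant scales. The elliptic alternative you sketch (integrate the equation, then $L^\infty$--$L^1$) also fails: the Green's function of $-\Delta + \ep^{2/3}|\ln\ep|^{-2/3}\kappa^2$ on the large torus has $L^1$ norm $\sim \ep^{-2/3}|\ln\ep|^{2/3}$, so an $L^1$ bound on $v$ does not upgrade to $L^\infty$.

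The missing idea is that the $L^\infty$ bound on $v$ is not a pure estimate --- it uses \emph{minimality} of $\bar\Omega^+$. The paper first proves a gradient bound $|\nabla v| \le C|\ln\ep|^{-1/2}$ (by the near/far splitting of $\int_{\bar\Omega^+}|\nabla G|$, which \emph{does} work since $|\nabla G|\sim|\bar x-\bar y|^{-1}$ has no $|\ln\ep|$ prefactor), hence $\mathrm{osc}_{\bar\Omega^+_i} v = o(1)$ on each component. Since $v$ is subharmonic outside $\bar\Omega^+$, its maximum is attained on some $\bar\Omega^+_i$, and by the oscillation bound $v\ge \tfrac12 M$ throughout that component. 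If $M$ were large, the energy-comparison argument of Lemma~\ref{l:EcC} (removing $\bar\Omega^+_i$ lowers $\bar E$, now with the extra favorable term $+4\int_{\bar\Omega^+_i} v_i \gtrsim M$) produces a contradiction. Your Step~3 (local $W^{2,p}$ estimates $\Rightarrow C^{1,\alpha}$) is correct and matches the paper once Step~2 is repaired this way.
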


\begin{proof}
  We start by noting that $v > 0$ in view of positivity of $G$.
  Let us now estimate the gradient of $v$. Using (\ref{eq:Gest}) and
  Lemmas \ref{l:EcC} and \ref{l:N}, we get
  \begin{eqnarray}
    \label{eq:50}
    |\nabla v(\bar x)| \leq |\ln \ep|^{-1} \int_{\bar\Omega^+} |\nabla
    G(\ep^{1/3} |\ln \ep|^{-1/3} |\bar x - \bar y|)| \, d \bar y
    \nonumber \\ 
    \leq  |\ln \ep|^{-1} \int_{B_{\bar r}(\bar x)} |\nabla 
    G(\ep^{1/3} |\ln \ep|^{-1/3} |\bar x - \bar y|)| \, d \bar y
    \nonumber \\ + |\ln \ep|^{-1} \int_{\bar\Omega^+ \backslash B_{\bar
        r}(\bar x)}  |\nabla G(\ep^{1/3} |\ln \ep|^{-1/3} |\bar x -
    \bar y|)| \,  d \bar y \nonumber \\
    \leq C( |\ln \ep|^{-1} \bar r + \bar r^{-1}) \leq 2 C |\ln
    \ep|^{-1/2},
  \end{eqnarray}
  for some $C > 0$, where $B_{\bar r}(\bar x)$ is a disk of radius
  $\bar r$ centered at $\bar x$, and the last inequality is obtained
  by choosing $\bar r = |\ln \ep|^{1/2}$. Therefore, by the results of
  Lemma \ref{l:EcC}, we see that
  \begin{eqnarray}
    \label{eq:osc}
    \mathop{\mathrm{osc}}_{\bar x \in \bar\Omega^+_i} v(\bar x) = o(1),
  \end{eqnarray}
  for each connected component $\bar\Omega^+_i$ of $\bar\Omega^+$. To
  see that this implies the conclusion of the lemma, suppose that, to
  the contrary, we have $\max v = M \gg 1$. Since by (\ref{eq:vbeq})
  the function $v$ is subharmonic in $\bar\Omega \backslash
  \bar\Omega^+$, it achieves its maximum in the closure of some
  $\bar\Omega^+_i$. Therefore, in view of (\ref{eq:osc}) we have $v
  \geq \tfrac12 M$ in $\bar\Omega^+_i$. Then, following the same
  arguments as in the proof of Lemma \ref{l:EcC}, for large enough $M$
  we can lower the energy by removing $\bar\Omega^+_i$ from
  $\bar\Omega^+$.  

  Finally, by \cite[Theorem 9.11]{gilbarg} we have
  $||v||_{W^{2,p}(B_{1}(\bar x))} \leq C$, where $B_1(\bar x)$ is a
  disk of radius 1 centered at $\bar x \in \bar\Omega$, for some $C >
  0$ and any $p > 2$, independently of $\bar x$ and $\ep \ll
  1$. Hence, the uniform H\"older estimate on the gradient follows by
  Sobolev imbedding.
\end{proof}

We can also immediately conclude from (\ref{eq:ELsh}) and
(\ref{eq:osc}) that the curvature of $\partial\bar\Omega^+$ is
uniformly bounded both from above and below by positive constants,
implying that each $\bar\Omega^+_i$ is convex.  Note that this result
justifies the terminology ``droplet'' for each $\bar\Omega^+_i$ which
we will be using from now on.

\begin{lemma}
  \label{l:K}
  Let $\partial \bar\Omega^+$ be the boundary of a minimizer
  $\bar\Omega^+$ of $\bar E$. Then we have
  \begin{eqnarray}
    \label{eq:51}
    c \leq K(\bar x) \leq C,
  \end{eqnarray}
  for all $\bar x \in \partial \bar\Omega^+$, with some $C > c > 0$
  independent of $\ep \ll 1$. In particular, when $\ep \ll 1$, each
  connected component $\bar\Omega^+_i$ of $\bar\Omega^+$ is convex and
  simply connected.
\end{lemma}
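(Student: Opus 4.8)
The plan is to combine the Euler--Lagrange equation \eqref{eq:ELsh}, the uniform bound on the potential from Lemma \ref{l:v}, and the oscillation estimate \eqref{eq:osc} to pin down the curvature $K$ on each droplet. First I would recall that on any minimizer the boundary $\partial\bar\Omega^+$ satisfies $K(\bar x) = 2\bar\delta\kappa^{-2} - 4 v(\bar x)$, and that by Lemma \ref{l:v} we have $0 < v \leq C$ uniformly, so immediately $K(\bar x) \leq 2\bar\delta\kappa^{-2}$, giving the upper bound with $C = 2\bar\delta\kappa^{-2}$. The lower bound is the point that needs work: a priori $4v$ could be as large as $2\bar\delta\kappa^{-2}$ on some droplet, which would make $K$ small or even non-positive there.

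To get the lower bound I would argue by contradiction in the spirit of Lemma \ref{l:EcC} and the last paragraph of the proof of Lemma \ref{l:v}. Suppose on some connected component $\bar\Omega^+_i$ there is a point $\bar x_0 \in \partial\bar\Omega^+_i$ with $K(\bar x_0) \leq \eta$ for $\eta$ small. By \eqref{eq:ELsh} this means $v(\bar x_0) \geq \tfrac14(2\bar\delta\kappa^{-2} - \eta)$, and then by the oscillation estimate \eqref{eq:osc} we have $v \geq \tfrac14(2\bar\delta\kappa^{-2} - \eta) - o(1)$ on all of $\bar\Omega^+_i$. Plugging this lower bound for $v$ into the energy cost of removing $\bar\Omega^+_i$ — i.e. into the function $V_v(\bar r_i)$ of \eqref{eq:Vbri}, now evaluated with this elevated value of $v$ rather than $v = 0$ — I would check by the same direct inspection as in Proposition \ref{p:db}/Lemma \ref{l:EcC} that for $\eta$ sufficiently small (so that $2v - \bar\delta\kappa^{-2}$ is sufficiently positive) the quantity $V_v(\bar r_i)$ becomes strictly positive for every $\bar r_i \in [c^{1/2}\pi^{-1/2}, C^{1/2}\pi^{-1/2}]$ in the range allowed by Lemma \ref{l:EcC}; hence removing $\bar\Omega^+_i$ strictly lowers $\bar E$, contradicting minimality. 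This forces $K \geq c > 0$ everywhere, with $c$ depending only on $\bar\delta$ and $\kappa$.

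Once $c \leq K \leq C$ is established, convexity of each $\bar\Omega^+_i$ is immediate since a simple closed $C^3$ curve with everywhere positive curvature bounds a convex region, and simple connectedness follows because such a region cannot have holes (or one can note directly that a component with a hole would have a boundary arc of negative curvature). The main obstacle is the lower bound on $K$: the delicate point is that the elevated lower bound on $v$ coming from a near-zero curvature point must, via \eqref{eq:Vbri}, be large enough to make $V_v$ positive on the whole admissible radius interval $[c,C]$ from Lemma \ref{l:EcC} — so one has to verify that the threshold $\bar\delta > \tfrac12\sqrt[3]{9}\,\kappa^2$ (implicitly in force once a non-trivial minimizer exists) is consistent with this, i.e. that the bad scenario really is energetically impossible and not merely borderline. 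This is where keeping track of the $o(1)$ error in \eqref{eq:osc} against a fixed positive gap matters, and it is the one computation I would do carefully rather than wave through.
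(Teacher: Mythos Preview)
Your upper bound matches the paper exactly. For the lower bound, however, the paper takes a shorter geometric route that sidesteps the removal argument entirely: since $\mathrm{diam}(\bar\Omega^+_i) \leq C$ by Lemma~\ref{l:EcC}, the droplet sits inside a disk of bounded radius, and translating that disk until its boundary touches $\partial\bar\Omega^+_i$ produces a point $\bar x_i'$ where $K(\bar x_i') \geq 2c$ for some fixed $c > 0$. The Euler--Lagrange equation then gives $v(\bar x_i') \leq \tfrac12(\bar\delta\kappa^{-2} - c)$, and the oscillation estimate \eqref{eq:osc} propagates this \emph{upper} bound on $v$ (not a lower bound) to all of $\partial\bar\Omega^+_i$, whence $K \geq c$ everywhere by \eqref{eq:ELsh} again.

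Your removal argument has a genuine gap. The quantity $V_v(\bar r_i)$ of \eqref{eq:Vbri} bounds the removal gain only when the $v$ appearing there is the \emph{cross}-potential $v_i$ generated by the other droplets (compare its use in Proposition~\ref{p:amin} and in \eqref{eq:ins}). Your lower bound from \eqref{eq:ELsh} is on the \emph{full} potential $v = v^{\mathrm{self}} + v_i$, and the self part $v^{\mathrm{self}} \approx \tfrac16\bar r_i^2$ on $\bar\Omega^+_i$ is not negligible. Carrying the bookkeeping through correctly (the self-interaction enters the removal gain once, but the full $v$ counts it twice), one finds the removal gain is bounded below only by $2\pi\bigl[\bar r_i - \tfrac{\eta}{2}\bar r_i^2 - \tfrac16\bar r_i^4\bigr] - o(1)$, with the quartic term carrying the \emph{wrong sign}. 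This expression becomes negative once $\bar r_i > 6^{1/3}$, and nothing available at this stage (only Lemma~\ref{l:EcC}, whose upper bound on $\bar r_i$ grows with $\bar\delta$) rules that out; the sharper bounds of Propositions~\ref{p:areaupper} and~\ref{p:ri} come later and rely on Lemma~\ref{l:K}. So the contradiction does not close, and the argument as written does not establish the lower bound on $K$.
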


\begin{proof}
  The upper bound is an immediate consequence of (\ref{eq:ELsh}) and
  positivity of $v$. To obtain the lower bound, let us note that by
  the results of Lemma \ref{l:EcC}, for every connected component
  $\bar\Omega^+_i$ there exists a disk $B_{\bar r_i}(\bar x_i)$, with
  $\bar r_i = O(1)$, such that $\bar\Omega^+_i \subset B_{\bar
    r_i}(\bar x_i)$. Therefore, translating $B_{\bar r_i}(\bar x_i)$
  until its boundary touches $ \partial \bar\Omega^+_i$, we obtain a
  point $\bar x_i' \in \partial \bar\Omega^+_i$, such that $K(\bar
  x_i') \geq \bar r_i^{-1} \geq 2 c$, for some $c > 0$ independent of
  $\ep \ll 1$. Now, by (\ref{eq:ELsh}) we have $v(\bar x_i') \leq
  \tfrac12 (\bar \delta \kappa^{-2} - c)$. At the same time, by
  (\ref{eq:osc}) this implies that $v(\bar x) \leq \tfrac14 (2 \bar
  \delta \kappa^{-2} - c)$ for all $\bar x \in \partial
  \bar\Omega^+_i$, which, again, by (\ref{eq:ELsh}) gives the
  statement.
\end{proof}

We now show that different connected components of $\bar\Omega^+$
cannot come too close to each other when $\ep \ll 1$.

\begin{lemma}
  \label{l:dist}
  Let $\bar\Omega^+ = \cup_{i=1}^N \bar\Omega^+_i$ be a non-trivial
  minimizer of $\bar E$, where $\bar\Omega^+_i$ are the disjoint
  connected components of $\bar\Omega^+$, and let $N \geq 2$. Then,
  there exists $C > 0$ such that
  \begin{eqnarray}
    \label{eq:52}
    \mathrm{dist} (\bar\Omega^+_i, \bar\Omega^+_j) \geq C \qquad
    \forall i \not = j,
  \end{eqnarray}
  for $\ep \ll 1$.
\end{lemma}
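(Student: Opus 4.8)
The plan is to argue by contradiction: suppose that for some sequence $\ep \to 0$ there are two distinct droplets $\bar\Omega^+_i$ and $\bar\Omega^+_j$ with $\mathrm{dist}(\bar\Omega^+_i,\bar\Omega^+_j) \to 0$, and show that merging them, or otherwise modifying the configuration, strictly lowers $\bar E$, contradicting minimality. The key point is that the dominant terms in $|\ln\ep|\,\bar E$ are, to leading order in $|\ln\ep|^{-1}$, the interfacial energy $|\partial\bar\Omega^+|$ and the screened-Coulomb self-energy, which by \eqref{eq:Gest} behaves like $\tfrac{1}{6\pi}|\ln\ep|\,|\bar\Omega^+_i|^2$ per droplet (as already exploited in Lemmas \ref{l:EcC} and \ref{p:db}); the cross-interaction between two droplets only depends on $G$ evaluated at $\ep^{1/3}|\ln\ep|^{-1/3}(\bar x_i - \bar x_j)$, whose argument is tiny, so that term is of order $|\ln\ep|^{-1}\cdot|\ln\ep|\cdot|\bar\Omega^+_i||\bar\Omega^+_j| = O(|\bar\Omega^+_i||\bar\Omega^+_j|)$ and can be controlled.

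The main steps I would carry out: First, use Lemma \ref{l:K} (convexity of each component) together with Lemma \ref{l:EcC} ($c \le |\bar\Omega^+_i|, |\partial\bar\Omega^+_i|, \mathrm{diam}(\bar\Omega^+_i) \le C$) to set up the local geometry near the near-contact point. Second, consider the competitor obtained by replacing $\bar\Omega^+_i \cup \bar\Omega^+_j$ by a single disk $B$ of the same total area $|\bar\Omega^+_i| + |\bar\Omega^+_j|$, placed so as not to collide with the other droplets (possible since $\mathrm{dist}$ between the $i,j$ pair is small and there are only $N = O(|\ln\ep|)$ droplets of bounded size — one can always fit $B$ near the location of the pair; alternatively replace with two tangent disks, which is cleaner for the perimeter bookkeeping). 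Third, compute the energy change. The interfacial energy changes by a negative amount bounded away from zero: by the isoperimetric inequality $|\partial\bar\Omega^+_i| + |\partial\bar\Omega^+_j| \ge 2\sqrt{\pi}\big(|\bar\Omega^+_i|^{1/2} + |\bar\Omega^+_j|^{1/2}\big)$, which strictly exceeds $2\sqrt{\pi}(|\bar\Omega^+_i| + |\bar\Omega^+_j|)^{1/2}$ (the perimeter of $B$) by a quantity of order $1$ thanks to the lower bound $|\bar\Omega^+_i| \ge c$. Fourth, show all other contributions to $|\ln\ep|\,\Delta\bar E$ are $o(1)$: the background term $-2\bar\delta\kappa^{-2}|\bar\Omega^+|$ is unchanged since area is conserved; the self-interaction terms change by $O(|\ln\ep|^{-1}\cdot|\ln\ep|) = O(1)$ times a bounded factor, but more carefully, using \eqref{eq:Gest} and $\mathrm{diam} = O(1)$, the difference of the two self-energies (old pair vs. new disk) is $\tfrac{1}{|\ln\ep|}\big[\text{bounded}\big] + \text{(logarithmic part)}$; the logarithmic parts in fact combine to a controllable quantity because $-\tfrac{1}{2\pi}\ln(\ep^{1/3}|\ln\ep|^{-1/3}) = \tfrac13\ln(1/\ep) + O(\ln|\ln\ep|)$ is the same large factor multiplying all area-squared terms, and $|\bar\Omega^+_i|^2 + |\bar\Omega^+_j|^2$ versus $(|\bar\Omega^+_i| + |\bar\Omega^+_j|)^2$ differs by $+2|\bar\Omega^+_i||\bar\Omega^+_j| > 0$, i.e. the self-energy also \emph{decreases} upon merging — so it helps rather than hurts. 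The cross-interaction and all couplings to the remaining $N-2$ droplets change by $O(1)$ uniformly bounded amounts that, divided by $|\ln\ep|$, give $o(1)$ in $\bar E$; but since we work with $|\ln\ep|\,\bar E$ and need the change to be negative and bounded away from zero, I would instead keep the competitor disk near the pair's location so that the couplings to far droplets are essentially unchanged (differences $O(|\ln\ep|^{-1})$ in $|\ln\ep|\,\bar E$).

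The main obstacle is the bookkeeping of the $O(1)$ terms in $|\ln\ep|\,\bar E$: the interfacial gain is $O(1)$, so I must ensure that the net change of all the screened-Coulomb contributions (self-energies, mutual $i$–$j$ energy, and couplings to the other droplets) does not cancel this gain. The resolution is that merging \emph{also decreases} the self-interaction energy (by the convexity of $t \mapsto t^2$), so the two dominant effects push in the same favorable direction; the only terms that could work against us are the couplings to other droplets and the constant (non-logarithmic) parts, and these are either sign-definite-favorable (positive $G$ means moving mass slightly does not create large new positive cross-terms if the disk stays put near the pair) or genuinely $O(|\ln\ep|^{-1})$ when the competitor is placed close to the original pair. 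Making the placement of $B$ precise — it must remain disjoint from the other $\bar\Omega^+_k$ — uses that the pair occupies a region of diameter $o(1) + O(1) = O(1)$ while there is room of size $O(1)$ available (e.g. just relocate $B$ to a point within the pair's former footprint and use that the other droplets, being finitely many of bounded size in a domain of size $O(\ep^{1/3}|\ln\ep|^{-1/3})$, wait — the rescaled domain $\bar\Omega$ has side $\ep^{1/3}|\ln\ep|^{-1/3}$, which shrinks). Here I must be careful: in $\bar\Omega$ the side length is $\ep^{1/3}|\ln\ep|^{-1/3} \ll 1$, so droplets of area $\ge c$ cannot fit — this means the relevant rescaled domain must actually be the one of side $\ep^{-1/3}|\ln\ep|^{1/3}$ or the problem is posed so that $\bar\Omega$ is large; rereading \eqref{eq:10}, $\bar\Omega = [0, \ep^{1/3}|\ln\ep|^{-1/3})^2$ but this contradicts $|\bar\Omega^+| \le C|\ln\ep|$ unless the side is large — so I would take $\bar x = \ep^{-1/3}|\ln\ep|^{1/3} x$ giving a large torus, in which there is ample room of size comparable to $\mathrm{diam}(\bar\Omega)/N = O(\ep^{-1/3}|\ln\ep|^{1/3}/|\ln\ep|) \to \infty$ around a typical droplet to relocate $B$ while keeping it disjoint from all others. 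Granting the correct (large) rescaled domain, the placement is immediate and the contradiction goes through: $|\ln\ep|\,\Delta\bar E \le -c_0 + o(1) < 0$ for some $c_0 > 0$, contradicting minimality, so $\mathrm{dist}(\bar\Omega^+_i,\bar\Omega^+_j) \ge C$.
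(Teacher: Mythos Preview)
Your approach---replace the close pair $\bar\Omega^+_i\cup\bar\Omega^+_j$ by a single disk of the same area---is genuinely different from the paper's local surgery, and in principle it can be made to work. However, as written it contains a real error that breaks the argument.

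\medskip
\textbf{The sign error.} You claim that ``the self-energy also decreases upon merging'' because $(A_i+A_j)^2$ differs from $A_i^2+A_j^2$ by $+2A_iA_j$. This is backwards. The Coulomb term in $|\ln\ep|\,\bar E$ is $+\,2|\ln\ep|^{-1}\iint G$, and the leading part of $G$ contributes $+\tfrac{1}{3\pi}(\text{area})^2$. Since $(A_i+A_j)^2>A_i^2+A_j^2$, the self-energy of the merged disk is \emph{larger} than the sum of the two separate self-energies, by exactly $\tfrac{2}{3\pi}A_iA_j$, which is an $O(1)$ quantity and could well exceed your perimeter gain. What actually saves the day is the term you dismissed: the cross-interaction between $\bar\Omega^+_i$ and $\bar\Omega^+_j$ \emph{before} merging. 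When the droplets are at distance $r=O(1)$, that cross-term is also $\tfrac{2}{3\pi}A_iA_j+O(|\ln\ep|^{-1}\ln(1+r))$, and it disappears upon merging---so the net leading-order Coulomb change is zero, with an $o(1)$ remainder. Without this cancellation your inequality $|\ln\ep|\,\Delta\bar E\le -c_0+o(1)$ is simply false. Note too that your argument as stated never uses the smallness of $r$; taken literally it would say merging \emph{any} two droplets lowers the energy, contradicting the existence of multi-droplet minimizers.

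\medskip
\textbf{How the paper proceeds instead.} The paper does a purely local construction: it joins the two nearby boundaries by straight segments inside a rectangle $Q$ of side $O(r)$ near the closest-point pair. Using the curvature bound of Lemma~\ref{l:K}, the perimeter drops by at least $4(\sqrt3-1)r$, while the added area is $O(r^2)$; Lemma~\ref{l:v} then bounds the Coulomb cost by $4\int_Q v=O(r^2)$ plus an $O(r^4)$ self-term. Thus $|\ln\ep|\,\Delta\bar E\le -C_1r+C_2r^2<0$ for $r$ small. This argument is linear in $r$, uses only the uniform bound on $v$ (not its oscillation), and requires no bookkeeping of interactions with the remaining $N-2$ droplets. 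Your global replacement, once corrected, would require the oscillation estimate $\mathrm{osc}_{\,\text{diam}=O(1)}v=o(1)$ (which is available from \eqref{eq:50}) to control the coupling to all other droplets via $4\int f\,v$ with $\int f=0$.
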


\begin{proof}
  Let $\bar x_i \in \bar\Omega^+_i$ and $\bar x_j \in \bar\Omega^+_j$
  be such that $r = |\bar x_i - \bar x_j| = \mathrm{dist}
  (\bar\Omega^+_i, \bar\Omega^+_j) > 0$. Consider a disk $B$ centered
  at $\tfrac12(\bar x_1 + \bar x_2)$ with radius $R = 2 r$ and a
  rectangle $Q$ inscribed into $B$ which is shown by the thick solid
  lines in Fig. \ref{fig:join}. In view of the uniform bound on the
  curvature of $\partial \bar\Omega^+$ obtained in Lemma \ref{l:K},
  the curve segments $\partial \bar\Omega^+_i \cap Q$ and $\partial
  \bar\Omega^+_j \cap Q$ passing through $\bar x_i$ and $\bar x_j$,
  respectively, intersect $\partial Q$ transversally as in
  Fig. \ref{fig:join} when $r \ll 1$. Furthermore, we have
  $\mathrm{dist}(\partial \bar\Omega^+_i \cap \partial Q^+, \partial
  \bar\Omega^+_j \cap \partial Q^+) \leq 2 r$ and
  $\mathrm{dist}(\partial \bar\Omega^+_i \cap \partial Q^-, \partial
  \bar\Omega^+_j \cap \partial Q^-) \leq 2 r$, where $\partial Q^+$
  and $\partial Q^-$ are the right and the left side of the boundary
  of the rectangle relative to the line through $\bar x_1$ and $\bar
  x_2$, respectively, for sufficiently small $r$ independent of $\ep
  \ll 1$ (see Fig. \ref{fig:join}). At the same time, we have
  $|\partial \bar\Omega^+_i \cap Q | + |\partial \bar\Omega^+_j \cap
  Q| \geq 4 r \sqrt{3}$. Therefore, reconnecting the points $\partial
  \bar\Omega^+_i \cap \partial Q^+$ with $\partial \bar\Omega^+_j
  \cap \partial Q^+$, and $\partial \bar\Omega^+_i \cap \partial Q^-$
  with $\partial \bar\Omega^+_j \cap \partial Q^-$ by straight lines
  and including the region between them into $\bar\Omega$, we will
  decrease $|\partial \bar \Omega^+|$ by at least $4 (\sqrt{3} - 1)
  r$. Thus, the change $\Delta \bar E$ in the total energy is
  estimated to be
  \begin{eqnarray}
    \label{eq:53}
    |\ln \ep| \Delta \bar E & \leq & -4  (\sqrt{3} - 1) r + 4 
    \int_Q v(\bar x) \, d\bar x + \nonumber \\  
    & + & 2 |\ln \ep|^{-1} \int_Q \int_Q G(\ep^{1/3} |\ln \ep|^{-1/3}
    (\bar x -  \bar y) ) \, d\bar x  d \bar y.
  \end{eqnarray}
  Finally, in view of Lemma \ref{l:v} and (\ref{eq:Gest}), the the
  right-hand side of (\ref{eq:53}) is bounded above by $-C_1 r + C_2
  r^2$, with $C_{1,2} > 0$ independent of $\ep \ll 1$. Hence, the
  energy of such a rearrangement will be lower if $r$ is sufficiently
  small, for all $\ep \ll 1$.
\end{proof}

\begin{figure}
  \centering
  \includegraphics[width=4cm,angle=-90]{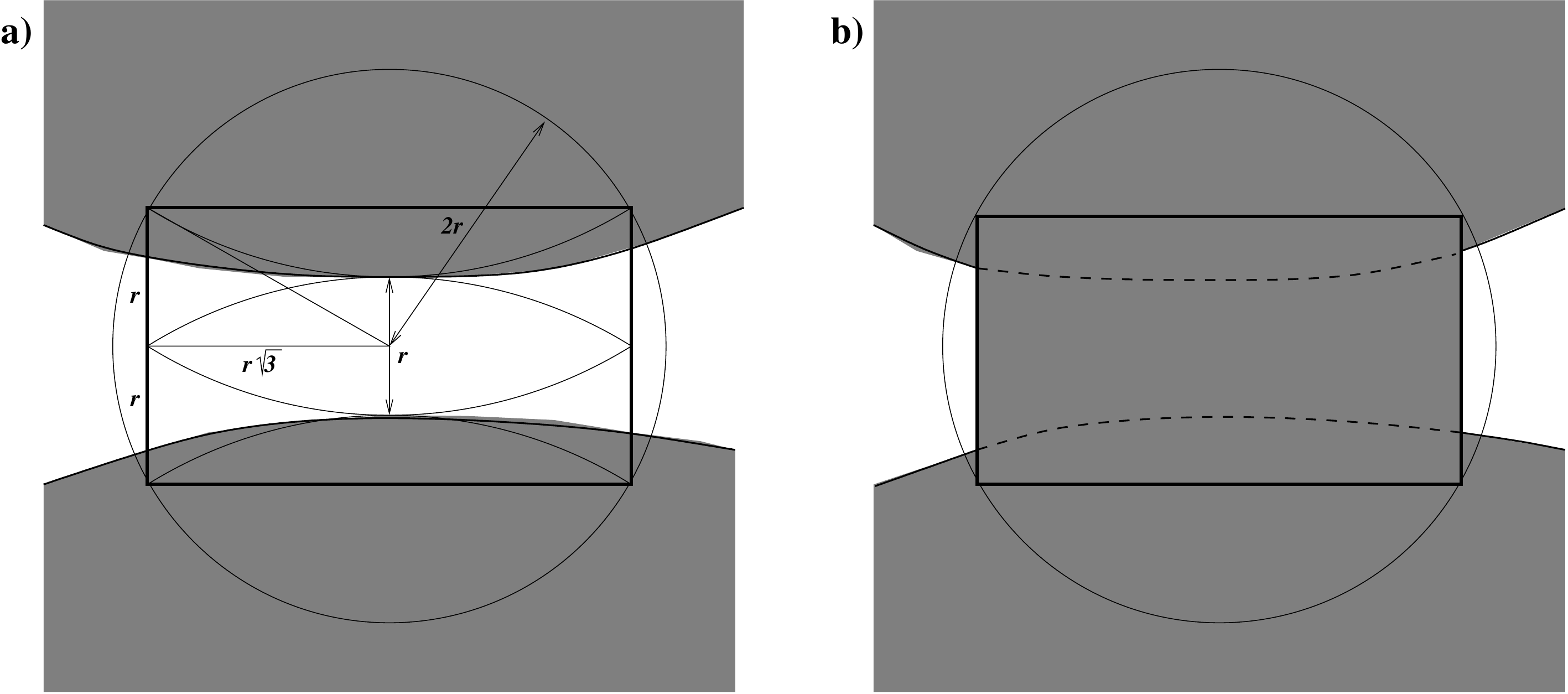}
  \caption{Schematics of the rearrangement argument of Lemma
    \ref{l:dist}. In (a), the set $\bar\Omega^+$ is shown in gray,
    solid arcs show the bounds on the location of $\partial
    \bar\Omega^+$, the thick solid lines show the rectangle $Q$. In
    (b), the gray region shows the rearranged $\bar\Omega^+$.}
  \label{fig:join}
\end{figure}

As our next step, we establish that different droplets must, in fact,
be sufficiently far from each other. We note that this result is a
manifestation of the ``repumping'' instability, which does not allow
two droplets to approach each other sufficiently closely. Dynamically,
this instability results in the growth of one droplet at the expense
of its neighbor shrinking. This instability mechanism for
reaction-diffusion systems was first pointed out in \cite{ko:mk85}
(see also \cite{ko:book}) and further studied in the context of
two-dimensional periodic structures in \cite{m1:prl97,m:phd,m:pre02}.

\begin{lemma}
  \label{l:al}
  Let $\bar\Omega^+ = \cup_{i=1}^N \bar\Omega^+_i$ be a non-trivial
  minimizer of $\bar E$, where $\bar\Omega^+_i$ are the disjoint
  connected components of $\bar\Omega^+$, and let $N \geq 2$. Then
  there exists $\alpha > 0$ such that 
  \begin{eqnarray}
    \label{eq:al}
    \mathrm{dist} (\bar\Omega^+_i, \bar \Omega^+_j) >
    \ep^{-\alpha} \qquad \forall i \not = j,
  \end{eqnarray}
  for $\ep \ll 1$.
\end{lemma}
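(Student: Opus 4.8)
The plan is to argue by contradiction via the second-variation formula \eqref{eq:Ebvar3}, realizing the ``repumping'' instability as an area-preserving transfer of area between two nearby droplets. Suppose $\bar\Omega^+=\bigcup_{i=1}^N\bar\Omega^+_i$ is a non-trivial minimizer of $\bar E$ for which $r:=\mathrm{dist}(\bar\Omega^+_k,\bar\Omega^+_l)\le\ep^{-\alpha}$ for some $k\ne l$, where $\alpha>0$ is a small constant, depending only on the uniform bounds of Lemmas \ref{l:EcC} and \ref{l:K}, to be fixed at the end; by Lemma \ref{l:dist} we also have $r\ge c_0>0$ with $c_0$ independent of $\ep$. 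Writing $\ell_k=|\partial\bar\Omega^+_k|$ and $\ell_l=|\partial\bar\Omega^+_l|$, let $\rho\in C^\infty(\partial\bar\Omega^+)$ be equal to $\ell_k^{-1}$ on $\partial\bar\Omega^+_k$, to $-\ell_l^{-1}$ on $\partial\bar\Omega^+_l$, and to $0$ on the remaining (finitely many) components, so that $\int_{\partial\bar\Omega^+}\rho\,d\mathcal H^1=0$. By Corollary \ref{c:c3} and Lemmas \ref{l:dist} and \ref{l:K}, the sets $\bar\Omega^+_a$ obtained by displacing $\partial\bar\Omega^+$ by $a\rho$ in the outward normal direction remain admissible competitors for all sufficiently small $|a|$ (the components move by $O(|a|)$ while staying at distance $\ge c_0$ from each other, and $\bar\Omega^+_l$ stays non-empty since $|\bar\Omega^+_l|\ge c>0$). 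Hence $a\mapsto\bar E(\bar\Omega^+_a)$ attains a local minimum at $a=0$, so $\left.\frac{d^2\bar E(\bar\Omega^+_a)}{da^2}\right|_{a=0}\ge0$.

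Next I would estimate the right-hand side of \eqref{eq:Ebvar3} for this $\rho$ and show it is strictly negative. Since $\rho$ is locally constant, $\int_{\partial\bar\Omega^+}|\nabla\rho|^2\,d\mathcal H^1=0$. By the divergence theorem and \eqref{eq:vbeq}, $\int_{\partial\bar\Omega^+_i}\nu\cdot\nabla v\,d\mathcal H^1=\int_{\bar\Omega^+_i}\Delta v\,d\bar x=\ep^{2/3}|\ln\ep|^{-2/3}\kappa^2\int_{\bar\Omega^+_i}v\,d\bar x-|\ln\ep|^{-1}|\bar\Omega^+_i|=O(|\ln\ep|^{-1})$ by Lemmas \ref{l:EcC} and \ref{l:v}, so $\int_{\partial\bar\Omega^+}4\,\nu\cdot\nabla v\,\rho^2\,d\mathcal H^1=O(|\ln\ep|^{-1})$. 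By Lemmas \ref{l:K} and \ref{l:EcC},
\[
-\int_{\partial\bar\Omega^+}K^2\rho^2\,d\mathcal H^1=-\ell_k^{-2}\!\int_{\partial\bar\Omega^+_k}\!K^2\,d\mathcal H^1-\ell_l^{-2}\!\int_{\partial\bar\Omega^+_l}\!K^2\,d\mathcal H^1\le-c^2\bigl(\ell_k^{-1}+\ell_l^{-1}\bigr)\le-\frac{2c^2}{C}<0 .
\]
For the non-local term I insert the expansion \eqref{eq:Gest} — legitimate because for $\bar x,\bar y\in\bar\Omega^+_k\cup\bar\Omega^+_l$ one has $\ep^{1/3}|\ln\ep|^{-1/3}|\bar x-\bar y|=O(\ep^{1/3-\alpha}|\ln\ep|^{-1/3})=o(1)$ once $\alpha<\tfrac13$, by Lemma \ref{l:EcC} and $r\le\ep^{-\alpha}$. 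The $\bar x,\bar y$-independent parts of the kernel (those of order $|\ln\ep|$ and $\ln|\ln\ep|$, and the constant $-\tfrac1{2\pi}\ln\bar\kappa$) contribute $\mathrm{const}\cdot\bigl(\int_{\partial\bar\Omega^+}\rho\,d\mathcal H^1\bigr)^2=0$, the error term contributes $o(1)$, and what survives is
\[
-\frac{1}{2\pi}\int_{\partial\bar\Omega^+}\!\int_{\partial\bar\Omega^+}\ln|\bar x-\bar y|\,\rho(\bar x)\rho(\bar y)\,d\mathcal H^1(\bar x)\,d\mathcal H^1(\bar y)=\frac{1}{\pi}\ln r+O(1),
\]
where the $\tfrac1\pi\ln r$ comes from the cross terms $\partial\bar\Omega^+_k\times\partial\bar\Omega^+_l$, since $r\le|\bar x-\bar y|\le r+O(1)$ there and the measures $\ell_k\ell_l$ cancel the factors $\ell_k^{-1}\ell_l^{-1}$ from $\rho$, while the diagonal blocks are $O(1)$. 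Hence the non-local term equals $\tfrac{4}{|\ln\ep|}\bigl(\tfrac1\pi\ln r+O(1)\bigr)\le\tfrac{4\alpha}{\pi}+O(|\ln\ep|^{-1})$, using $\ln r\le\alpha|\ln\ep|$. Collecting these estimates,
\[
|\ln\ep|\left.\frac{d^2\bar E(\bar\Omega^+_a)}{da^2}\right|_{a=0}\le-\frac{2c^2}{C}+\frac{4\alpha}{\pi}+o(1) .
\]

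Finally I would fix $\alpha:=\min\{c^2\pi/(4C),\,1/4\}$ with $c,C$ the constants of Lemmas \ref{l:K} and \ref{l:EcC}: then the right-hand side above is bounded by a strictly negative constant for $\ep\ll1$, contradicting $\left.\frac{d^2\bar E(\bar\Omega^+_a)}{da^2}\right|_{a=0}\ge0$. Thus no two connected components of a non-trivial minimizer of $\bar E$ can lie within distance $\ep^{-\alpha}$ of each other, which is \eqref{eq:al}. The step requiring the most care — and the only one where the two-dimensional logarithmic behavior of $G$ near the singularity is essential — is the non-local term: the interaction of two droplets separated by at most a fixed negative power of $\ep$ coincides, up to the $O(|\ln\ep|^{-1})$ correction proportional to $\ln r$, with the interaction they would have if concentric, so it cancels against the self-interaction under an area-preserving transfer and cannot stabilize the pair; the destabilizing curvature term $-\int K^2\rho^2$ then prevails precisely because $\ln r$ stays below $\alpha|\ln\ep|$.
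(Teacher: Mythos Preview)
Your proof is correct and follows essentially the same approach as the paper's: a second-variation argument with a piecewise-constant perturbation $\rho$ supported on two nearby components, chosen so that $\int_{\partial\bar\Omega^+}\rho\,d\mathcal H^1=0$, which kills the $O(|\ln\ep|)$ and $O(\ln|\ln\ep|)$ parts of the kernel expansion and leaves only the $\tfrac1\pi\ln r$ cross term competing against the negative curvature contribution $-\int K^2\rho^2$. The paper's choice $c_i=|\partial\bar\Omega^+_j|$, $c_j=-|\partial\bar\Omega^+_i|$ is the same as yours up to an overall scale, and the paper bounds $\int_{\partial\bar\Omega^+_i}K^2\,d\mathcal H^1$ from below via Cauchy--Schwarz and $\int K=2\pi$ rather than via the pointwise bound $K\ge c$ from Lemma~\ref{l:K}; this yields the sharper constant $4\pi^2/|\partial\bar\Omega^+_i|$ (which the paper then reuses to obtain Proposition~\ref{p:perlower}), but for the purposes of the present lemma your estimate is perfectly adequate.
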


\begin{proof}
  Consider the second variation of $\bar E$ with respect to the
  perturbation, in which the boundary of each connected component
  $\bar \Omega^+_i$ is expanded uniformly by a distance $a c_i$ in the
  normal direction, i.e., we have $\rho(\bar x) = c_i$ for all $\bar x
  \in \partial \bar\Omega^+_i$. By (\ref{eq:Ebvar3}), we have
  \begin{eqnarray}
    \label{eq:57}
    \left. {d^2 \bar E  (\bar \Omega^+_a) \over da^2}
  \right|_{a = 0}  = |\ln \ep|^{-1} \sum_{i,j} Q_{ij} c_i c_j,
  \end{eqnarray}
  where the coefficients $Q_{ij}$ of the quadratic form $Q$ can be
  estimated as
  \begin{eqnarray}
    \label{eq:Qii}
    Q_{ii} & = & - \int_{\partial \bar\Omega^+_i} K^2(\bar x) \, d
    \mathcal H^1(\bar x) + {2 \over 3 \pi} |\partial
    \bar\Omega^+_i|^2 + o(1),  \\ 
  \end{eqnarray}
  where we took into account that by (\ref{eq:vbeq}) and Gauss's
  theorem $\int_{\partial \bar\Omega^+_i} \nu(\bar x) \cdot \nabla
  v(\bar x) \, d \mathcal H^1(\bar x) = - |\ln \ep|^{-1} |\bar
  \Omega^+_i| + O(\ep^{2/3} |\ln \ep|^{-2/3})$ and used the expansion
  in (\ref{eq:Gest}) together with Lemmas \ref{l:K}, \ref{l:v} and
  \ref{l:EcC}, for $\ep \ll 1$. Furthermore, since by convexity of
  $\bar\Omega^+_i$ (see Lemma \ref{l:K}) the boundary of each
  $\bar\Omega^+_i$ is a closed curve, by Cauchy-Schwarz inequality we
  have
  \begin{eqnarray}
    \label{eq:59}
    4 \pi^2 = \left( \int_{\partial\bar\Omega^+_i} K(\bar x) \, d
      \mathcal  H^1(\bar x) 
    \right)^2 \leq |\partial \bar\Omega^+_i| \int_{\partial
      \bar\Omega^+_i} K^2(\bar x) \, d  \mathcal H^1(\bar x).
  \end{eqnarray}
  Therefore, the diagonal elements of $Q$ can be further estimated as
  \begin{eqnarray}
    \label{eq:58}
    Q_{ii} \leq {2 \over 3 \pi}  |\partial \bar\Omega^+_i|^2 - {4
      \pi^2 \over  |\partial \bar\Omega^+_i|} + o(1).
  \end{eqnarray}
  On the other hand, define $\alpha = |\ln \ep|^{-1} \ln
  (\mathrm{dist} (\bar \Omega^+_i, \bar \Omega^+_j))$, and suppose, to
  the contrary of the statement of the proposition, that $\alpha$ is
  sufficiently small for some pair of indices for a sequence of $\ep
  \to 0$. Then, with the help of Lemma \ref{l:dist} and
  (\ref{eq:Gest}) we can estimate
  \begin{eqnarray}
    \label{eq:60}
    Q_{ij} = {2 \over 3 \pi} (1 - 3 \alpha) \, |\partial
    \bar\Omega^+_i| \, |\partial \bar\Omega^+_j| + o(1). 
  \end{eqnarray}

  Now, for the index pair $(i,j)$ above let us choose $c_i = |\partial
  \bar\Omega^+_j|$, $c_j = -|\partial \bar\Omega^+_i|$, and let us set
  $c_k = 0$ for all other indices $k$. A simple calculation of the sum
  in (\ref{eq:57}) then shows that for this choice of $c$'s we have
  \begin{eqnarray}
    \label{eq:111}
    |\ln \ep| \left. {d^2 \bar E (\bar \Omega^+_a) \over da^2}
    \right|_{a = 0} 
    \leq {4  |\partial
      \bar\Omega^+_i|^2 |\partial
      \bar\Omega^+_j|^2 \over \pi} \left( \alpha - {\pi^3 \over
        |\partial \bar\Omega^+_i|^3} - {\pi^3 \over |\partial
      \bar\Omega^+_j|^3} \right) + o(1),
  \end{eqnarray}
  where we took into account Lemma \ref{l:EcC}. This expression is
  negative for $\ep \ll 1$, if
  \begin{eqnarray}
    \label{eq:62}
    \alpha < 2 \pi^3 \min \{ |\partial \bar\Omega^+_i|^{-3}, |\partial
    \bar\Omega^+_j|^{-3}  \},
  \end{eqnarray}
  which, in view of Lemma \ref{l:EcC}, contradicts minimality of $\bar
  E$ for small enough $\ep$.
\end{proof}

Let us also point out that the proof of Proposition \ref{l:al} gives a
universal lower bound for the perimeter of the connected portions of
the minimizers. Indeed, the quadratic form $Q$ has a negative
eigenvalue, if ${2 \over 3 \pi} |\partial \bar \Omega^+_i|^2 - 4 \pi^2
|\partial \bar \Omega^+_i|^{-1} < 0$ and $\ep \ll 1$, which implies
the following result:
\begin{proposition}
  \label{p:perlower}
  Let $\bar\Omega^+ = \cup_{i=1}^N \bar\Omega^+_i$ be a non-trivial
  minimizer of $\bar E$, where $\bar\Omega^+_i$ are the disjoint
  connected components of $\bar\Omega^+$. Then, for every $\delta > 0$ 
  \begin{eqnarray}
    \label{eq:permin}
    |\partial \bar \Omega^+_i| \geq \pi \sqrt[3]{6} - \delta,
  \end{eqnarray}
  for $\ep \ll 1$.
\end{proposition}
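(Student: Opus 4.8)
The plan is to extract the lower bound on $|\partial \bar\Omega^+_i|$ directly from the second variation machinery already assembled in the proof of Proposition~\ref{l:al}. The key observation is that, in the course of that proof, we produced a quadratic form $Q$ whose diagonal entries satisfy the estimate \eqref{eq:58}, namely $Q_{ii} \leq \tfrac{2}{3\pi} |\partial \bar\Omega^+_i|^2 - \tfrac{4\pi^2}{|\partial\bar\Omega^+_i|} + o(1)$ as $\ep \to 0$. If for some connected component $\bar\Omega^+_i$ we had $\tfrac{2}{3\pi} |\partial\bar\Omega^+_i|^2 - \tfrac{4\pi^2}{|\partial\bar\Omega^+_i|} < -\eta$ for some fixed $\eta > 0$ along a sequence $\ep \to 0$, then $Q_{ii} < 0$ for $\ep \ll 1$, and choosing the test perturbation $\rho(\bar x) = c_i$ on $\partial\bar\Omega^+_i$ with $c_k = 0$ for all $k \neq i$ in \eqref{eq:57} yields $\left. \tfrac{d^2\bar E(\bar\Omega^+_a)}{da^2}\right|_{a=0} = |\ln\ep|^{-1} Q_{ii} c_i^2 < 0$, contradicting the minimality of $\bar E$ (recall that by Corollary~\ref{c:c3} and \eqref{eq:Ebvar3} the map $a \mapsto \bar E(\bar\Omega^+_a)$ is $C^2$, so a minimizer must have non-negative second variation in every direction).

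First I would fix $\delta > 0$ and suppose, for contradiction, that $|\partial\bar\Omega^+_i| < \pi\sqrt[3]{6} - \delta$ for some $i$ along a sequence $\ep \to 0$. By Lemma~\ref{l:EcC} we also have $|\partial\bar\Omega^+_i| \geq c > 0$ uniformly, so $|\partial\bar\Omega^+_i|$ stays in a compact subinterval of $(0, \pi\sqrt[3]{6} - \delta]$. Next I would analyze the scalar function $g(s) = \tfrac{2}{3\pi} s^2 - \tfrac{4\pi^2}{s}$ on $(0,\infty)$: it is strictly increasing, and its unique zero is at $s = \pi\sqrt[3]{6}$ (since $\tfrac{2}{3\pi} s^3 = 4\pi^2 \iff s^3 = 6\pi^3 \iff s = \pi\sqrt[3]{6}$). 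Hence $g$ is bounded above by $g(\pi\sqrt[3]{6} - \delta) =: -2\eta < 0$ on the relevant interval, with $\eta > 0$ depending only on $\delta$. Plugging into \eqref{eq:58}, $Q_{ii} \leq -2\eta + o(1) \leq -\eta$ for $\ep \ll 1$, which gives the desired contradiction by the argument of the previous paragraph. This proves \eqref{eq:permin}.

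I should be careful about one technical point: the estimate \eqref{eq:58} was derived in the proof of Lemma~\ref{l:al} using \eqref{eq:Qii}, the Cauchy--Schwarz inequality \eqref{eq:59}, and Lemmas~\ref{l:K}, \ref{l:v}, \ref{l:EcC} together with \eqref{eq:Gest}; all of these inputs hold for \emph{every} non-trivial minimizer and every connected component, with constants uniform in $\ep \ll 1$, so the $o(1)$ term there is genuinely uniform over $i$ and the argument applies to whichever component violates the bound. The only real subtlety — and the step I expect to require the most care — is making sure the single-component test perturbation $\rho = c_i \chi_{\partial\bar\Omega^+_i}$ is admissible in the second-variation formula \eqref{eq:Ebvar3}; but this is exactly the class of normal perturbations $\rho \in C^1(\partial\bar\Omega^+)$ for which \eqref{eq:Ebvar1}--\eqref{eq:Ebvar3} were established (a constant on each boundary curve is certainly $C^1$), so no new regularity input is needed. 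The cross terms $Q_{ij}$ play no role here since we set all other $c_k = 0$, which is what makes this a much cleaner argument than the one in Lemma~\ref{l:al}. Therefore the proof is essentially a corollary-level extraction, and the statement follows.
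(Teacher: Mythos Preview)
Your proof is correct and follows essentially the same approach as the paper: the paper observes, immediately before the proposition, that the quadratic form $Q$ from the proof of Lemma~\ref{l:al} has a negative diagonal entry (hence a negative eigenvalue) whenever $\tfrac{2}{3\pi}|\partial\bar\Omega^+_i|^2 - \tfrac{4\pi^2}{|\partial\bar\Omega^+_i|} < 0$, which is exactly your argument with the single-component perturbation $c_k = \delta_{ki}$. Your write-up is in fact more detailed than the paper's, which treats the result as a one-line corollary.
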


\noindent Note that this criterion in the radially-symmetric case was
obtained in \cite{m:pre02,m:phd,mo1:pre96} and is also applicable to
all local minimizers (for global minimizers, a better bound will be
obtained below).  We also derive another quantitative estimate on $v$
and the geometry of $\bar\Omega^+_i$ that remains valid for local
minimizers of low energy.
\begin{proposition}
  \label{p:areaupper}
  Let $\bar\Omega^+ = \cup_{i=1}^N \bar\Omega^+_i$ be a non-trivial
  minimizer of $\bar E$, where $\bar\Omega^+_i$ are the disjoint
  connected components of $\bar\Omega^+$. Then
  \begin{eqnarray}
    \label{eq:areamax}
    0 < v < {\bar \delta \over 2 \kappa^2}, \qquad
    |\bar \Omega^+_i| < {3 \pi \bar \delta \over \kappa^2},  
  \end{eqnarray}
  for $\ep \ll 1$.
\end{proposition}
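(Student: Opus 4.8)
The plan is to extract both bounds from the Euler--Lagrange equation (\ref{eq:ELsh}), $K(\bar x) = 2\bar\delta\kappa^{-2} - 4v(\bar x)$ on $\partial\bar\Omega^+$, by controlling $v$ on the boundary of a single component $\bar\Omega^+_i$ from above and from below; the inequality $v>0$ is already contained in Lemma \ref{l:v}.

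First, the bound $v < \bar\delta/(2\kappa^2)$. Since $G>0$, $v$ is subharmonic in $\bar\Omega\setminus\bar\Omega^+$ by (\ref{eq:vbeq}), so (exactly as in the proof of Lemma \ref{l:v}) $v$ attains its maximum over $\bar\Omega$ at a point in the closure of some component $\bar\Omega^+_{i_0}$. Using the oscillation estimate (\ref{eq:osc}) to transfer this value to the boundary $\partial\bar\Omega^+_{i_0}$ and then (\ref{eq:ELsh}) together with the lower curvature bound $K \ge c > 0$ of Lemma \ref{l:K}, one gets $\max_{\bar\Omega} v \le \tfrac14(2\bar\delta\kappa^{-2} - c) + o(1) = \tfrac{\bar\delta}{2\kappa^2} - \tfrac{c}{4} + o(1)$, which is strictly below $\bar\delta/(2\kappa^2)$ once $\ep \ll 1$.

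For the area bound, fix a component $\bar\Omega^+_i$. Discarding from (\ref{eq:v}) the (nonnegative, since $G>0$) contributions of all other components, $v(\bar x) \ge |\ln\ep|^{-1}\int_{\bar\Omega^+_i} G(\ep^{1/3}|\ln\ep|^{-1/3}(\bar x - \bar y))\,d\bar y$ for every $\bar x$. If $\bar x \in \overline{\bar\Omega^+_i}$ and $\bar y \in \bar\Omega^+_i$, the argument of $G$ has modulus at most $C\ep^{1/3}|\ln\ep|^{-1/3} \to 0$ by the uniform diameter bound of Lemma \ref{l:EcC}, and (\ref{eq:Gest}) expands $G$ of it as $\tfrac{1}{6\pi}|\ln\ep| + \tfrac{1}{6\pi}\ln|\ln\ep| - \tfrac{1}{2\pi}\ln(\bar\kappa|\bar x - \bar y|) + O(\ep^{1/3}|\ln\ep|^{-1/3})$; since $|\bar x - \bar y|$ is bounded above, the last two correction terms are bounded below by a constant while $\tfrac1{6\pi}\ln|\ln\ep| \to +\infty$, so this quantity is $\ge \tfrac{1}{6\pi}|\ln\ep|$ for $\ep \ll 1$. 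Hence $v(\bar x) \ge \tfrac{1}{6\pi}|\bar\Omega^+_i|$ on $\partial\bar\Omega^+_i$, and (\ref{eq:ELsh}) gives $K(\bar x) \le 2\bar\delta\kappa^{-2} - \tfrac{2}{3\pi}|\bar\Omega^+_i|$ there. Since $\bar\Omega^+_i$ is simply connected (Lemma \ref{l:K}), $\int_{\partial\bar\Omega^+_i} K\,d\mathcal H^1 = 2\pi$, so integrating the last inequality yields $2\pi \le (2\bar\delta\kappa^{-2} - \tfrac{2}{3\pi}|\bar\Omega^+_i|)\,|\partial\bar\Omega^+_i|$; as $0 < |\partial\bar\Omega^+_i| < \infty$ this forces $2\bar\delta\kappa^{-2} - \tfrac{2}{3\pi}|\bar\Omega^+_i| \ge 2\pi/|\partial\bar\Omega^+_i| > 0$, i.e.\ $|\bar\Omega^+_i| < 3\pi\bar\delta/\kappa^2$.

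The only slightly delicate point is the uniform lower bound $G(\ep^{1/3}|\ln\ep|^{-1/3}(\bar x - \bar y)) \ge |\ln\ep|/(6\pi)$, which relies on the two-dimensional logarithmic behavior of $G$ in (\ref{eq:Gest}) and crucially on the uniform diameter bound from Lemma \ref{l:EcC}, so that the $-\ln|\bar x - \bar y|$ contribution cannot spoil the estimate; the rest is bookkeeping with the Euler--Lagrange equation and the identity $\int_{\partial\bar\Omega^+_i} K\,d\mathcal H^1 = 2\pi$. Note also that this argument uses only (\ref{eq:ELsh}), the sign of $G$, and the geometric estimates of Lemmas \ref{l:EcC} and \ref{l:K}, which is why it survives for low-energy local minimizers as well.
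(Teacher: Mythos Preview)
Your proof is correct and follows essentially the same route as the paper: both arguments read off the bounds from the Euler--Lagrange equation (\ref{eq:ELsh}), using positivity of $G$ to isolate a single component, the logarithmic expansion (\ref{eq:Gest}) together with the diameter bound of Lemma \ref{l:EcC} to estimate the self-potential from below, and the oscillation estimate (\ref{eq:osc}) plus subharmonicity to propagate the boundary bound on $v$ to all of $\bar\Omega$. The only minor deviation is in the last step of the area bound: having shown $K(\bar x)\le 2\bar\delta\kappa^{-2}-\tfrac{2}{3\pi}|\bar\Omega^+_i|$ on $\partial\bar\Omega^+_i$, the paper simply combines this with the pointwise lower bound $K\ge c>0$ from Lemma \ref{l:K} to conclude directly, whereas you integrate over $\partial\bar\Omega^+_i$ and invoke $\int K\,d\mathcal H^1=2\pi$; both work, but the Gauss--Bonnet step is not needed since you already cite Lemma \ref{l:K}.
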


\begin{proof}
  Let $\bar x \in \partial \bar\Omega^+_i$. Then, using Lemma
  \ref{l:K}, (\ref{eq:ELsh}), (\ref{eq:Gest}), and positivity of $G$,
  for some $c > 0$ independent of $\ep \ll 1$ we obtain
  \begin{eqnarray}
    \label{eq:71}
    0 <  c \leq K(\bar x) = 2 \bar \delta \kappa^{-2} - 4 v(\bar x) 
    \nonumber \\
    \leq  \left( 2 \bar \delta \kappa^{-2} - 4 |\ln \ep|^{-1}
      \int_{\bar\Omega^+_i} G(\ep^{1/3} |\ln \ep|^{-1/3} (\bar x -
      \bar y)) d \bar y \right) \nonumber \\ 
    \leq 2 \bar \delta \kappa^{-2} - {2 \over 3 \pi} |\bar\Omega^+_i|
    + o(1), 
  \end{eqnarray}
  which, together with (\ref{eq:osc}) and the fact that $v$ reaches
  its maximum in the closure of $\bar\Omega^+$, yields the statement.
\end{proof}

We now prove that for $\ep \ll 1$ each droplet in a minimizer is, in
fact, close to a disk. The basic idea of the proof is that because of
the logarithmic behavior of $G$ at small distances the potential $v$
inside each droplet is approximately constant. Therefore, the shape of
the droplet approximately minimizes the usual isoperimetric problem,
and the size of the droplet is determined by the balance of surface
tension and the pressure due to non-local forces inside the droplet
\cite{m:phd,m:pre02}.

If the droplet $\bar \Omega^+_i$ were exactly a disk $B_{\bar
  r_i}(\bar x_i)$ of radius $\bar r_i$ centered at $\bar x_i$, then
the potential $v$ would be given by
\begin{eqnarray}
  \label{eq:vstar}
  v^*(\bar x) = v^*_i(\bar x) + v_i(\bar x), 
\end{eqnarray}
where
\begin{eqnarray}
  \label{eq:113}
  v^*_i(\bar x) = |\ln \ep|^{-1} \sum_{\mathbf n \in \mathbb Z^2}
  v^B(|\bar x - \bar x_i - \mathbf n|,  \bar r_i, \ep^{1/3} |\ln
  \ep|^{-1/3} \kappa), 
\end{eqnarray}
with the function $v^B(\rho, r, \kappa)$ being the solution of
$-\Delta v^B + \kappa^2 v^B = \chi_{B_r(0)}$ in $\mathbb R^2$, given
explicitly in terms of the modified Bessel functions:
\begin{eqnarray}
  \label{eq:vB}
  v^B(\rho, r, \kappa) = 
  \begin{cases}
    \kappa^{-2} - \kappa^{-1} r K_1(\kappa r) I_0 ( \kappa \rho), &
    \rho \leq r, \\
    \kappa^{-1} r I_1( \kappa r) K_0 ( \kappa \rho), & \rho \geq r,
  \end{cases}
\end{eqnarray}
and
\begin{eqnarray}
  \label{eq:vi}
  v_i(\bar x) = |\ln \ep|^{-1} \sum_{j \not = i} 
  \int_{\bar\Omega^+_j} G(\ep^{1/3} |\ln \ep|^{-1/3} (\bar x - \bar
  y)) \, d  \bar y.
\end{eqnarray}
Note that in view of Lemmas \ref{l:al} and \ref{l:EcC}, and
(\ref{eq:Gest}), we have
\begin{eqnarray}
  \label{eq:val}
  |\nabla v_i| \leq C \ep^\alpha, 
\end{eqnarray}
for some $C > 0$ and $\alpha > 0$, in any disk of $O(1)$ radius
containing $\bar\Omega^+_i$ for $\ep \ll 1$. Therefore, if $\bar x
\in \partial B_{\bar r_i}(\bar x_i)$, by Taylor-expanding the Bessel
functions \cite{abramowitz} we have for any $\alpha < \tfrac13$
\begin{eqnarray}
  \label{eq:114}
  v^*(\bar x) = \bar v_i + O(\ep^\alpha), \qquad |\nabla v^*(\bar x)|
  = O(|\ln \ep|^{-1}), 
\end{eqnarray}
where the constant $\bar v_i$ is given by
\begin{eqnarray}
  \label{eq:vib}
  \bar v_i = - \tfrac12 |\ln \ep|^{-1} \bar r_i^2 
  \ln (\ep^{1/3} |\ln \ep|^{-1/3} \bar \kappa \bar r_i) \hspace{5cm}
  \nonumber \\
  + \pi |\ln \ep|^{-1} \sum_{j \not=
    i} \bar r_j^2 G (\ep^{1/3} |\ln \ep|^{-1/3} \kappa (\bar
  x_i - \bar  x_j)), \qquad \bar r_j = (\tfrac{1}{\pi}
  |\bar\Omega^+_j|)^{1/2}.  
\end{eqnarray}
In the following, we will show that $v(\bar x)$ on $\partial
\bar\Omega^+_i$ also coincides with $\bar v_i$ to $O(\ep^\alpha)$,
giving the balance of forces at the interface. We are now ready to
state our result:

\begin{proposition}
  \label{p:disk}
  Let $\bar\Omega^+ = \cup_{i=1}^N \bar\Omega^+_i$ be a non-trivial
  minimizer of $\bar E$, where $\bar\Omega^+_i$ are the disjoint
  connected components of $\bar\Omega^+$. Then there exists a constant
  $\alpha > 0$ such that for all $\ep \ll 1$:

  \begin{itemize}

 \item[(i)] For each $\bar\Omega^+_i$ there exists a point $\bar x_i
    \in \bar \Omega^+_i$, such that
    \begin{eqnarray}
      \label{eq:74}
      B_{\bar r_i-\ep^\alpha}(\bar x_i) \subset
      \bar\Omega^+_i \subset B_{\bar r_i+\ep^\alpha}(\bar x_i),    
    \end{eqnarray}
    where $\bar r_i = (\tfrac{1}{\pi} |\bar\Omega^+_i|)^{1/2}$ and
    $B_{\bar r}(\bar x)$ is a disk of radius $\bar r$ centered at
    $\bar x$;


  \item[(ii)]  The values of $\bar r_i$ satisfy
    \begin{eqnarray}
      \label{eq:bri}
      \bar r_i^{-1} - 2 \bar \delta \kappa^{-2}+ 4 \bar v_i =
      O(\ep^\alpha), 
    \end{eqnarray}
    where $\bar v_i$ are given by (\ref{eq:vib}).
 \end{itemize}
\end{proposition}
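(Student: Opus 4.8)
The plan is to exploit the near-constancy of the potential $v$ on each droplet, which follows from the already-established gradient bound and $O(1)$ diameter, to show that each $\bar\Omega^+_i$ nearly solves a classical isoperimetric problem and hence is close to a disk. First I would work one connected component $\bar\Omega^+_i$ at a time. By Lemma~\ref{l:v} (the bound $\|v\|_{C^{1,\alpha}} \leq C$ together with the refined estimate $|\nabla v| \leq 2C|\ln\ep|^{-1/2}$ from the proof, sharpened using Lemma~\ref{l:al} to $|\nabla v_i| \leq C\ep^\alpha$ away from the self-interaction) and Lemma~\ref{l:EcC} ($\mathrm{diam}\,\bar\Omega^+_i \leq C$), the oscillation of $v$ over $\bar\Omega^+_i$ is small; more precisely, the part of $v$ coming from $\bar\Omega^+_i$ itself is the near-logarithmic self-potential, while the part coming from the other droplets has gradient $O(\ep^\alpha)$ by~\eqref{eq:val}. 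This is exactly the content of~\eqref{eq:114}: on $\partial B_{\bar r_i}(\bar x_i)$ the idealized potential $v^*$ equals the constant $\bar v_i$ up to $O(\ep^\alpha)$.

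Next I would compare the actual minimizer to a disk of the same area via the quantitative isoperimetric inequality (Fusco--Maggi--Pratelli). The Euler--Lagrange equation~\eqref{eq:ELsh} reads $K(\bar x) = 2\bar\delta\kappa^{-2} - 4v(\bar x)$; since $v$ varies by at most $o(1)$ on $\bar\Omega^+_i$, the curvature of $\partial\bar\Omega^+_i$ is within $o(1)$ of a constant. Integrating the curvature and using convexity (Lemma~\ref{l:K}), the perimeter is pinned: $|\partial\bar\Omega^+_i| = 2\pi\bar r_i(1 + o(1))$, so the isoperimetric deficit of $\bar\Omega^+_i$ is small. To upgrade this to the Hausdorff-closeness statement~\eqref{eq:74} with a polynomial rate $\ep^\alpha$, I would instead run a direct comparison argument: estimate how much $\bar E$ changes if $\bar\Omega^+_i$ is replaced by the disk $B_{\bar r_i}(\bar x_i)$ of equal area (so the bulk term $-2\bar\delta\kappa^{-2}|\bar\Omega^+_i|$ and, to leading order, the nonlocal self-term are unchanged). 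The perimeter can only decrease, and by minimality the decrease must be $o(1/|\ln\ep|)$ after accounting for the change in the nonlocal energy; the nonlocal self-energy difference is controlled by~\eqref{eq:Gest} and the diameter bound and is $O(\ep^{2/3-\sigma})$-type small. This forces $|\partial\bar\Omega^+_i| - 2\pi\bar r_i$ to be polynomially small, and then a standard argument (convex body with near-minimal perimeter for its area, plus uniform curvature bounds, implies Hausdorff-closeness to a disk at a comparable rate) gives~\eqref{eq:74}.

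For part~(ii), with~\eqref{eq:74} in hand I would show $v(\bar x) = \bar v_i + O(\ep^\alpha)$ on $\partial\bar\Omega^+_i$: the difference $v - v^*$ solves $-\Delta(v-v^*) + \ep^{2/3}|\ln\ep|^{-2/3}\kappa^2(v-v^*) = |\ln\ep|^{-1}(\chi_{\bar\Omega^+_i} - \chi_{B_{\bar r_i}(\bar x_i)})$ plus the contributions of the other (already nearly disk-shaped) droplets, and the right-hand side is supported on a set of measure $O(\ep^\alpha)$ with the logarithmic kernel, so elliptic estimates (as in the proof of Lemma~\ref{l:v}, via \cite[Theorem~9.11]{gilbarg} and Sobolev embedding) give $\|v - v^*\|_\infty = O(\ep^\alpha|\ln\ep|)$, which after relabeling $\alpha$ is $O(\ep^\alpha)$. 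Combining this with~\eqref{eq:ELsh} evaluated at a point of $\partial\bar\Omega^+_i$ where $K$ is near its average — by convexity and the near-constant curvature this average is $\bar r_i^{-1}(1+o(1))$, and in fact one gets $\bar r_i^{-1} + O(\ep^\alpha)$ from the closeness to a disk — yields $\bar r_i^{-1} - 2\bar\delta\kappa^{-2} + 4\bar v_i = O(\ep^\alpha)$, which is~\eqref{eq:bri}.

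The main obstacle is the quantitative step: converting a small isoperimetric deficit (which comes essentially for free from the curvature equation) into a polynomial-rate Hausdorff bound $\ep^\alpha$, rather than merely an $o(1)$ bound. This requires either a careful bootstrap using the smoothness of $\partial\bar\Omega^+_i$ (Corollary~\ref{c:c3}) and the $C^{1,\alpha}$ control on $v$ to linearize the Euler--Lagrange equation around a circle and show the deviation decays like a power of $\ep$, or a sufficiently sharp quantitative isoperimetric inequality combined with an a priori Lipschitz (graph) representation of $\partial\bar\Omega^+_i$ over the circle. Tracking the exponent $\alpha$ consistently through the self-energy error terms of order $\ep^{2/3}|\ln\ep|^{-2/3}$ and the nonlocal cross-terms is the delicate bookkeeping that makes this the crux of the proof.
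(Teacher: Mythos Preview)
Your overall strategy matches the paper's: replace $\bar\Omega^+_i$ by a disk of equal area, use minimality to bound the isoperimetric deficit, then upgrade to the sharp Hausdorff estimate. The gap is in the rate you extract from the comparison step. When $\bar\Omega^+_i$ is replaced by $B_{\bar r_i}(\bar x_i')$, the change in the nonlocal self-energy is \emph{not} polynomially small: the logarithmic kernel in \eqref{eq:Gest} contributes a leading term $\tfrac{1}{6\pi}|\bar\Omega^+_i|^2$ that is the same for both sets (equal area), but the remainder is only $O(|\ln\ep|^{-1})$ in the scale of $|\ln\ep|\bar E$, not $O(\ep^{2/3-\sigma})$. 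So the comparison argument yields only
\[
|\partial\bar\Omega^+_i| - 2\pi\bar r_i \leq C|\ln\ep|^{-1},
\]
and via Fusco--Maggi--Pratelli plus the curvature bound this gives \eqref{eq:74} only to $o(1)$, not $O(\ep^\alpha)$. You correctly flag this as the main obstacle, but the mechanism you sketch (``standard argument'' from near-minimal perimeter, or a bootstrap on the Euler--Lagrange equation) does not produce a power of $\ep$ from a logarithmic input.

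The paper's resolution is an energy expansion to second order. Once $o(1)$-closeness is known, write $\partial\bar\Omega^+_i$ as a normal graph $\rho$ over $\partial B_{\bar r_i}(\bar x_i)$ and expand $\bar E(\bar\Omega^+)-\bar E(\bar\Omega^*)$ as in \eqref{eq:Estar}. The first-order coefficient $\bar r_i^{-1}-2\bar\delta\kappa^{-2}+4v^*$ is constant up to $O(\ep^\alpha)$ by \eqref{eq:114}, and the constant part annihilates against $\int_{\partial B}\rho$ since $|\bar\Omega^+_i|=|B_{\bar r_i}|$ forces $\int\rho=O(\|\rho\|^2)$; thus the first-order contribution is $O(\ep^\alpha\|\rho\|_{L^2})$. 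The quadratic form is coercive (by Poincar\'e on the circle) on the orthogonal complement of the zero and first Fourier modes; the zero mode is handled by the area constraint, and the translation mode $\rho_1$ is controlled because $\bar x_i$ was chosen to minimize $|\bar\Omega^+_i\Delta B_{\bar r_i}(\bar x_i)|$, which forces $|b|\lesssim\|\rho_2\|_{H^1}$. Minimality then gives $\|\rho\|_{H^1}=O(\ep^\alpha)$. The $\ep^\alpha$ rate thus enters \emph{through Lemma~\ref{l:al}} in the first-order term, not through the comparison estimate itself.

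For part (ii) your elliptic-estimate approach would work, but the paper does something more direct: vary $\rho\to\rho+a$ (uniform radial expansion of $\bar\Omega^+_i$), compute the $O(a)$ term in \eqref{eq:Estar} using \eqref{eq:114}, and set it to zero by minimality. This avoids having to control $v-v^*$ in $L^\infty$.
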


\begin{proof}
  Let us pick a point $\bar x_i' \in \bar\Omega^+_i$, then
  $\bar\Omega^+_i \subset B_{|\partial \bar \Omega^+_i|}(\bar
  x_i')$. Let us then replace $\bar\Omega^+_i$ with a disk of the same
  area centered at $\bar x_i'$. By Lemmas \ref{l:al} and \ref{l:EcC},
  the resulting set $B_{\bar r_i}(\bar x')$ still satisfies the bound
  in (\ref{eq:al}), and the change of energy $\Delta \bar E$ under
  this rearrangement can be estimated as
  \begin{eqnarray}
    \label{eq:68}
    |\ln \ep| \Delta \bar E  = 2 \sqrt{\pi} \, |\bar
    \Omega^+_i|^{1/2} - |\partial \bar\Omega^+_i| + O\left(  |\ln
      \ep|^{-1} \right), 
  \end{eqnarray}
  where we used (\ref{eq:Gest}), (\ref{eq:val}) and Lemma \ref{l:EcC}.
  Thus, the energy will decrease under this rearrangement,
  contradicting minimality of $\bar E$, unless for some $C > 0$ the
  isoperimetric deficit of $\bar\Omega^+_i$
  \begin{eqnarray}
    \label{eq:72}
    D(\bar\Omega^+_i) = {|\partial \bar\Omega^+_i| \over 2 \sqrt{\pi}
      \, |\bar\Omega^+_i|^{1/2}} - 1 \leq C  |\ln
    \ep|^{-1}  ,
  \end{eqnarray}
  for $\ep \ll 1$. Choosing $\bar x_i \in B_{|\partial \bar
    \Omega^+_i|}(\bar x_i')$ to minimize $|\bar\Omega^+_i \Delta
  B_{\bar r_i}(\bar x_i)|$, where $\bar\Omega^+_i \Delta B_{\bar
    r_i}(\bar x_i)$ denotes the symmetric difference of sets
  $\bar\Omega^+_i$ and $B_{\bar r_i}(\bar x_i)$, by the results of
  \cite{fusco08} we have $|\bar\Omega^+_i \Delta B_{\bar r_i}(\bar
  x_i)| \leq C' |\ln \ep|^{-1/2}$, and $C' > 0$ is a constant
  independent of $\ep \ll 1$. Then, by Lemma \ref{l:K} the set
  $\partial \bar\Omega^+_i$ is uniformly close to $\partial B_{\bar
    r_i}(\bar x_i)$, and $\bar x_i \in \bar\Omega^+_i$ (if not, then
  by convexity of $\bar\Omega^+_i$ we would have $|\bar\Omega^+_i
  \Delta B_{\bar r_i}(\bar x_i)| \geq \tfrac12 |B_{\bar r_i}(\bar
  x_i)|$), giving (i) to $o(1)$.

  To obtain the $O(\ep^\alpha)$ bound in (i), let $\rho: \partial
  B_{\bar r_i}(\bar x_i) \to \mathbb R$ be the signed distance from a
  given point on $\partial B_{\bar r_i}(\bar x_i)$ to $\partial
  \bar\Omega^+_i$ along the outward normal to $\partial B_{\bar
    r_i}(\bar x_i)$. Note that by convexity of $\bar\Omega^+_i$ the
  function $\rho$ defines a one-to-one map between $\partial
  \bar\Omega^+_i$ and $\partial B_{\bar r_i}(\bar x_i)$. Furthermore,
  if $||\rho||_{L^\infty(\partial B_{\bar r_i}(\bar x_i))} = \delta$,
  we have $||\nabla \rho||_{L^\infty(\partial B_{\bar r_i}(\bar x_i))}
  \leq C \delta^{1/2}$ for some $C > 0$ and $\ep \ll 1$ in view of
  Lemma \ref{l:K}, and $\delta \to 0$, as $\ep \to 0$. Also, by
  Corollary \ref{c:c3} we have $\rho \in C^3(\partial B_{\bar
    r_i}(\bar x_i))$.

  Treating $\bar \Omega^+$ as a perturbation of the set $\bar\Omega^*
  = B_{\bar r_i}(\bar x_i) \cup (\bar \Omega^+ \backslash \bar
  \Omega^+_i)$ and expanding as in Lemma \ref{l:vars}, we can write
  \begin{eqnarray}
    \label{eq:Estar}
    |\ln \ep| (\bar E(\bar\Omega^+) - \bar E(\bar \Omega^*)) =
    \int_{\partial B_{\bar r_i}(\bar x_i)} \left( \bar r_i^{-1} - 2
      \bar \delta \kappa^{-2} + 4 v^*(\bar x) \right) \rho(\bar x) \,
    d \mathcal H^{1}(\bar x) \nonumber \\ 
    + \frac12  \int_{\partial B_{\bar r_i}(\bar x_i)} \left( |\nabla
      \rho(\bar x)|^2 + 4 \nu(\bar x) \cdot \nabla v^*(\bar x) \,
      \rho^2(\bar x) \right) \, d \mathcal H^1(\bar x)  \nonumber \\ 
    + {1 \over 2 \bar r_i}  \int_{\partial B_{\bar r_i}(\bar
      x_i)}  (4 v^*(\bar x) - 2 \bar \delta \kappa^{-2}) 
    \rho^2(\bar x) \, d \mathcal H^1(\bar x) \nonumber \\ 
    + {2 \over |\ln \ep|} \int_{\partial B_{\bar r_i}(\bar x_i)}
    \int_{\partial B_{\bar r_i}(\bar x_i)} G(\ep^{1/3} |\ln
    \ep|^{-1/3} (\bar x - \bar y) \, \rho(\bar x) \rho(\bar y) \, d
    \mathcal H^1(\bar x) d \mathcal H^1(\bar y) \nonumber \\ 
    + O(\delta^{2+\alpha}), 
  \end{eqnarray}
  for any $\alpha \in (0, 1)$. Moreover, in view of Lemmas \ref{l:v}
  and \ref{l:K}, the error term in (\ref{eq:Estar}) is uniform in $\ep
  \ll 1$.

  On the other hand, since $|\bar\Omega^+_i| = |B_{\bar r_i}(\bar
  x_i)|$, we have
  \begin{eqnarray}
    \label{eq:78}
    0 = \int_{\partial B_{\bar r_i}(\bar x_i)} \int_0^{\rho(\bar x)} (1 +
    \bar r_i^{-1} r) \, dr \, d \mathcal H^1(\bar x) \hspace{3cm}
    \nonumber \\  
    = \int_{\partial B_{\bar r_i}(\bar x_i)} \rho(\bar x) \, d \mathcal
    H^1(\bar x) + {1 \over 2 \bar r_i} \int_{\partial B_{\bar
        r_i}(\bar x_i)} \rho^2 (\bar x) \, d \mathcal
    H^1(\bar x). 
  \end{eqnarray}
  Therefore, using the estimate in (\ref{eq:114}) we can rewrite
  (\ref{eq:Estar}) as
  \begin{eqnarray}
    \label{eq:Estar2}
    |\ln \ep| (\bar E(\bar\Omega^+)-  \bar E(\bar\Omega^*)) =
    \frac12 \int_{\partial B_{\bar r_i}(\bar x_i)} \left( |\nabla
      \rho|^2 -  \bar r_i^{-2}
      \rho^2(\bar x) \right) d \mathcal H^1(\bar x) 
    \nonumber \\ 
    + {2 \over |\ln \ep|} \int_{\partial B_{\bar r_i}(\bar x_i)}
    \int_{\partial B_{\bar r_i}(\bar x_i)} G(\ep^{1/3} |\ln
    \ep|^{-1/3} (\bar x - \bar y) \, \rho(\bar x) \rho(\bar y) \, d
    \mathcal H^1(\bar x) d \mathcal H^1(\bar y) \nonumber \\ 
    + O(|\ln \ep|^{-1} ||\rho||^2_{L^2_{B_{\bar r_i}(\bar x_i)}})
    + O(\ep^\alpha ||\rho||_{L^2_{B_{\bar r_i}(\bar x_i)}})
    + O(\delta^\alpha ||\rho||_{H^1_{B_{\bar r_i}(\bar x_i)}}^2),
 \end{eqnarray}
 where we took into account that $\delta \leq C ||\rho||_{H^1_{B_{\bar
       r_i}(\bar x_i)}}$ for some $C > 0$.  Further estimating the
 double integral in (\ref{eq:Estar2}), using
   \begin{eqnarray}
    \label{eq:76}
    \left| \int_{\partial B_{\bar r_i}(\bar x_i)}
      \left(  {1 \over |\ln \ep|}  G(\ep^{1/3} |\ln \ep|^{-1/3} (\bar
        x - \bar y)) - {1 \over 6 \pi} \right) \, \rho(\bar y) 
      \, d \mathcal H^1(\bar y) \right|
    \nonumber \\
    \leq {C \delta \ln |\ln \ep| \over |\ln \ep|},
  \end{eqnarray}
  we have
  \begin{eqnarray}
    \label{eq:2}
    |\ln \ep| (\bar E(\bar\Omega^+)-  \bar E(\bar\Omega^*)) =
    \nonumber \\ 
    \frac12 \int_{\partial B_{\bar r_i}(\bar x_i)} \left( |\nabla
      \rho|^2 -  \bar r_i^{-2}
      \rho^2(\bar x) \right) d \mathcal H^1(\bar x) 
    + {1 \over 3 \pi}
    \left(  \int_{\partial B_{\bar r_i}(\bar x_i)} \rho(\bar x) \, d
      \mathcal H^1(\bar x) \right)^2  \nonumber \\ 
    + O(\ep^\alpha ||\rho||_{L^2_{B_{\bar r_i}(\bar x_i)}})
    + o( ||\rho||_{L^2_{B_{\bar r_i}(\bar x_i)}}^2) 
    + o( ||\rho||_{H^1_{B_{\bar r_i}(\bar x_i)}}^2).
 \end{eqnarray}


 Now, write $\rho$ as $\rho = \rho_0 + \rho_1 + \rho_2$, where $\rho_0
 = \tfrac{1}{2 \pi \bar r_i} \int_{\partial B_{\bar r_i}(\bar x_i)}
 \rho(\bar x) \, d \mathcal H^1(\bar x)$, $\rho_1(\bar x) = {\bar x -
   \bar x_i \over |\bar x - \bar x_i|} \cdot b$, for some vector $b
 \in \mathbb R^2$, and $\rho_2$ orthogonal to $\rho_0$ and $\rho_1$ in
 $L^2(\partial B_{\bar r_i}(\bar x_i))$. By (\ref{eq:78}) we have
 $|\rho_0| = O(||\rho||_{L^2(\partial B_{\bar r_i}(\bar x_i))}^2)$,
 which is, therefore, negligibly small compared to $|b|$ and
 $||\rho_2||_{L^2(\partial B_{\bar r_i}(\bar x_i))}$ in all the
 arguments below. Then, using Poincar\'e's inequality, we find that
  \begin{eqnarray}
    \label{eq:79}
    \bar E(\bar\Omega^+) \geq \bar E(\bar\Omega^*) \hspace{7cm} 
    \nonumber \\ 
    + \frac14
    \int_{\partial B_{\bar  r_i}(\bar x_i)} |\nabla 
    \rho_2|^2(\bar x) \, d \mathcal H^1(\bar x) 
    - C \ep^{\alpha} ||\rho||_{L^2_{B_{\bar r_i}(\bar
        x_i)}} - c |b|^2,  
  \end{eqnarray}
  for some $C > 0$ and $0 < c \ll 1$, whenever $\ep \ll 1$. This
  implies that
  \begin{eqnarray}
    \label{eq:81}
    ||\rho_2||_{H^1_{B_{\bar r_i}(\bar x_i)}}^2 \leq 4 C
    \ep^\alpha ||\rho||_{L^2_{B_{\bar 
          r_i}(\bar x_i)}} + 4 c |b|^2, 
  \end{eqnarray}
  for $\ep \ll 1$, otherwise replacing $\bar\Omega^+_i$ with $B_{\bar
    r_i}(\bar x_i)$ lowers the energy. On the other hand, we also have
  $|b| = O(||\rho_2||_{H^1_{B_{\bar r_i}(\bar x_i)}})$. If not, then
  $\partial \bar\Omega^+_i$ will be $o(|b|)$ close to $\partial
  B_{\bar r_i}(\bar x_i + b)$ for $\ep \ll 1$. This, however,
  contradicts the choice of $\bar x_i$ to minimize $|\bar\Omega^+_i
  \Delta B_{\bar r_i}(\bar x_i)|$. Therefore, we have
  \begin{eqnarray}
    \label{eq:11}
    ||\rho||_{H^1_{B_{\bar r_i}(\bar x_i)}}^2 \leq C \ep^{\alpha}
    ||\rho||_{H^1_{B_{\bar r_i}(\bar x_i)}} + c
    ||\rho||_{H^1_{B_{\bar r_i}(\bar x_i)}}^2,
  \end{eqnarray}
  for some $C > 0$ and $0 < c \ll 1$, implying $||\rho||_{H^1_{B_{\bar
        r_i}(\bar x_i)}} = O(\ep^\alpha)$ and, hence, $\delta =
  O(\ep^\alpha)$. This gives part (i) of the statement of the
  proposition.

  Finally, to prove part (ii) of the statement, let $\bar\Omega^+_a$
  be obtained from $\bar \Omega^+$ by expanding $\bar \Omega^+_i$ by
  an amount $a > 0$, i.e., let us change $\rho(\bar x) \to \rho(\bar
  x) + a$ for every $\bar x \in \partial B_{\bar r_i}(\bar x_i)$. By
  (\ref{eq:Estar}), the change of energy can be estimated as
  \begin{eqnarray}
    \label{eq:77}
    |\ln \ep| (\bar E(\bar\Omega^+_a) - \bar E(\bar\Omega^+)) =  2
    \pi a \bar r_i (\bar r_i^{-1}  - 2 \bar\delta \kappa^{-2} + 4 \bar
    v_i) + O(a \delta)  + O(a^2),
  \end{eqnarray}
  where we took into account (\ref{eq:114}).  Then, since $\bar
  \Omega^+$ is a minimizer, the right-hand side of (\ref{eq:77})
  should vanish to $O(a)$. Therefore, by previous result we obtain the
  statement.
\end{proof}

Also, from the proof of Proposition \ref{p:disk} we obtain the
following universal lower bound on $|\bar \Omega^+_i|$:
\begin{proposition}
  \label{p:amin}
  Let $\bar\Omega^+ = \cup_{i=1}^N \bar\Omega^+_i$ be a non-trivial
  minimizer of $\bar E$, where $\bar\Omega^+_i$ are the disjoint
  connected components of $\bar\Omega^+$. Then, for every $\delta > 0$
  \begin{eqnarray}
    \label{eq:areamin}
    |\bar \Omega^+_i| \geq \pi \sqrt[3]{9} - \delta,
  \end{eqnarray}
  for $\ep \ll 1$.
\end{proposition}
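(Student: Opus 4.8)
The plan is to push the removal argument used in Lemma~\ref{l:EcC} and Proposition~\ref{p:db} one step further, by combining global minimality with the sharp geometric information of Proposition~\ref{p:disk}. Fix a non-trivial minimizer $\bar\Omega^+=\cup_{i=1}^N\bar\Omega^+_i$ and one component $\bar\Omega^+_i$, and set $\bar r_i=(\pi^{-1}|\bar\Omega^+_i|)^{1/2}$. Since the components are separated (Lemmas~\ref{l:dist} and~\ref{l:al}), deleting $\bar\Omega^+_i$ changes $|\ln\ep|\,\bar E$ by exactly $-\mathcal G_i$, where
\[
\mathcal G_i=|\partial\bar\Omega^+_i|-2\bar\delta\kappa^{-2}|\bar\Omega^+_i|+2|\ln\ep|^{-1}\int_{\bar\Omega^+_i}\int_{\bar\Omega^+_i}G\bigl(\ep^{1/3}|\ln\ep|^{-1/3}(\bar x-\bar y)\bigr)\,d\bar x\,d\bar y+4\int_{\bar\Omega^+_i}v_i,
\]
with $v_i$ the potential generated by the remaining droplets as in~(\ref{eq:vi}). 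Since $\bar\Omega^+\setminus\bar\Omega^+_i$ is an admissible competitor, minimality forces $\mathcal G_i\le 0$, and the whole proposition reduces to showing that $\mathcal G_i\le 0$, together with Proposition~\ref{p:disk}, implies $\bar r_i^3\ge 3-o(1)$ uniformly in $i$.

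The next step is to estimate the three nontrivial terms of $\mathcal G_i$. The isoperimetric inequality gives $|\partial\bar\Omega^+_i|\ge 2\pi\bar r_i$. For the self-interaction I would use~(\ref{eq:Gest}): the constant part $-\tfrac1{2\pi}\ln(\ep^{1/3}|\ln\ep|^{-1/3}\bar\kappa)=\tfrac1{6\pi}|\ln\ep|+O(\ln|\ln\ep|)$ produces the leading term, while the residual $\iint\ln|\bar x-\bar y|$ over the bounded set $\bar\Omega^+_i$ is $O(1)$, so $2|\ln\ep|^{-1}\iint_{\bar\Omega^+_i}G=\tfrac{\pi}{3}\bar r_i^4+o(1)$. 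For the cross term I would use Lemma~\ref{l:al} and~(\ref{eq:val}) so that $v_i$ is constant up to $O(\ep^\alpha)$ on $\bar\Omega^+_i$ and each far component is seen as a point mass at the rescaled screening length; together with the near-disk bound of Proposition~\ref{p:disk}(i) this gives $4\int_{\bar\Omega^+_i}v_i=4\pi\bar r_i^2\bigl(\bar v_i-\tfrac16\bar r_i^2\bigr)+o(1)$, where $\bar v_i$ is the quantity in~(\ref{eq:vib}) and the subtracted $\tfrac16\bar r_i^2$ is exactly the self-part $-\tfrac12|\ln\ep|^{-1}\bar r_i^2\ln(\ep^{1/3}|\ln\ep|^{-1/3}\bar\kappa\bar r_i)$ of $\bar v_i$. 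Collecting terms,
\[
\mathcal G_i=2\pi\bar r_i-2\pi\bar\delta\kappa^{-2}\bar r_i^2-\tfrac{\pi}{3}\bar r_i^4+4\pi\bar r_i^2\bar v_i+o(1).
\]
Finally I would feed in the force-balance identity (\ref{eq:bri}) in the form $4\bar v_i=2\bar\delta\kappa^{-2}-\bar r_i^{-1}+O(\ep^\alpha)$: the $\bar\delta$-dependent terms cancel and $\mathcal G_i=\pi\bar r_i-\tfrac{\pi}{3}\bar r_i^4+o(1)$. Hence $\mathcal G_i\le 0$ yields $\tfrac13\bar r_i^3\ge 1-o(1)$ (dividing by $\bar r_i$, which is bounded away from $0$ and $\infty$ by Lemma~\ref{l:EcC}), i.e.\ $|\bar\Omega^+_i|=\pi\bar r_i^2\ge\pi\sqrt[3]{9}-o(1)$; since $N=O(|\ln\ep|)$ and all estimates are uniform in $i$, the claimed bound $|\bar\Omega^+_i|\ge\pi\sqrt[3]{9}-\delta$ follows for $\ep$ small depending on $\delta$.

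Conceptually, $\mathcal G_i$ equals to leading order the value $V_v(\bar r_i)$ of the per-droplet potential~(\ref{eq:Vbri}) with $v$ the exterior part of $\bar v_i$, and~(\ref{eq:bri}) says that $\bar r_i$ is a critical point of $V_v$; a short computation shows $V_v(\bar r)=\pi\bar r-\tfrac\pi3\bar r^4$ whenever $V_v'(\bar r)=0$, so global minimality forces this critical value to be $\le o(1)$. I expect the only genuinely delicate point to be the uniform error control in the second step: one must verify that replacing $\bar\Omega^+_i$ by $B_{\bar r_i}(\bar x_i)$ in the self-interaction, and the far components by point masses in $v_i$, costs only $o(1)$ even after the $|\ln\ep|^{-1}$ prefactor has been absorbed against the logarithm in $G$ and after summing the $O(|\ln\ep|)$ contributions to $v_i$. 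This is precisely where the stronger separation estimate of Lemma~\ref{l:al} (not merely Lemma~\ref{l:dist}) and the gradient bound~(\ref{eq:val}) are indispensable; the rest is the elementary algebra of the cubic $\tfrac13\bar r^3-1$.
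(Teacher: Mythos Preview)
Your proof is correct and follows essentially the same strategy as the paper's: both combine the force-balance identity (\ref{eq:bri}) from Proposition~\ref{p:disk}(ii) with the global removal argument (if deleting $\bar\Omega^+_i$ lowered the energy, $\bar\Omega^+$ would not be a minimizer) to pin down $\bar r_i$. The paper phrases this as ``$\bar r_i$ is close to a critical point of $V_{v_i(\bar x_i)}$ with $V_{v_i(\bar x_i)}(\bar r_i)\le o(1)$, hence $\bar r_i$ sits near the positive minimum of $V_v$, which by inspection is $\ge\sqrt[3]{3}$''; you instead compute the critical value directly as $\pi\bar r_i-\tfrac{\pi}{3}\bar r_i^4$ and read off the bound. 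Your explicit computation is arguably cleaner than the paper's appeal to ``inspection'' of Fig.~\ref{fig:fa}, but the content is the same.
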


\begin{proof}
  By Proposition \ref{p:disk} each value of $\bar r_i =
  (\tfrac{1}{\pi} |\bar\Omega^+_i|)^{1/2}$ satisfies (\ref{eq:bri})
  and, hence, is close to a critical point of $V_{v_i(\bar x_i)}$
  defined in (\ref{eq:Vbri}), which can be seen from (\ref{eq:vib}) by
  Taylor expansion. Furthermore, we must have $2 \bar v_i(\bar x_i) -
  \bar \delta \kappa^{-2} \leq -\tfrac12 \sqrt[3]{9} + o(1)$, so that
  $V_{v_i(\bar x_i)}(\bar r_i) \leq o(1)$ for $\ep \ll 1$, otherwise,
  arguing as in Lemma \ref{l:EcC}, we can reduce the energy by
  removing $\bar\Omega^+_i$ from $\bar\Omega^+$. Therefore, $\bar r_i$
  should be close to the positive minimum of $V_{v_i(\bar x_i)}$. By
  inspection, in this situation $\bar r_i \geq \sqrt[3]{3} - \delta$,
  for any $\delta > 0$ (see Fig. \ref{fig:fa}), provided $\ep$ is
  sufficiently small, hence, the claim.
\end{proof}

The results of Proposition \ref{p:disk} just obtained immediately
allow to establish an asymptotic equivalence of the energy $\bar E$
and the reduced energy $\bar E_N$ on the minimizers for $\ep \ll 1$.

\begin{proposition}
  \label{p:EEE}
  Let $\bar\Omega^+ = \cup_{i=1}^N \bar\Omega^+_i$ be a non-trivial
  minimizer of $\bar E$, where $\bar\Omega^+_i$ are the disjoint
  connected components of $\bar\Omega^+$, and let $\bar r_i$ and $\bar
  x_i$ be as in Proposition \ref{p:disk}. Then
  \begin{eqnarray}
    \label{eq:EEN}
    \min \bar E = O(1), \qquad \min \bar E = \min
    \bar E_N + O(\ep^\alpha),    
  \end{eqnarray}
  for some $\alpha > 0$ independent of $\ep \ll 1$. 
\end{proposition}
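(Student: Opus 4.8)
The plan is to show that, on a minimizer $\bar\Omega^+=\cup_i\bar\Omega^+_i$, the full energy $\bar E(\bar\Omega^+)$ is captured, up to $O(\ep^\alpha)$, by the reduced droplet energy $\bar E_N(\{\bar r_i\},\{\bar x_i\})$, and then to compare the two minimization problems. The first and main step is to expand each of the three contributions in the splitting \eqref{eq:Eb} of $|\ln\ep|\,\bar E$ using the fact, established in Proposition \ref{p:disk}, that each $\bar\Omega^+_i$ is $O(\ep^\alpha)$-close in Hausdorff distance to the disk $B_{\bar r_i}(\bar x_i)$, with $\bar r_i=(\tfrac1\pi|\bar\Omega^+_i|)^{1/2}$ exactly matching the area. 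Concretely: (a) the perimeter term $|\partial\bar\Omega^+_i|$ differs from $2\pi\bar r_i$ by $O(\ep^\alpha\cdot\ep^\alpha)=O(\ep^{2\alpha})$, since by convexity and Lemma \ref{l:K} the isoperimetric deficit controls the perimeter error quadratically in the Hausdorff distance (this is exactly the $||\rho||_{H^1}=O(\ep^\alpha)$ estimate from the proof of Proposition \ref{p:disk}); (b) the area term $-2\bar\delta\kappa^{-2}|\bar\Omega^+_i|$ is already exactly $-2\bar\delta\kappa^{-2}\pi\bar r_i^2$; (c) the self-interaction double integral over $\bar\Omega^+_i\times\bar\Omega^+_i$ is compared to the same integral over $B_{\bar r_i}(\bar x_i)^2$, the error being bounded using $|\bar\Omega^+_i\,\Delta\,B_{\bar r_i}(\bar x_i)|\le C\ep^\alpha$ together with the local expansion \eqref{eq:Gest} of $G$ (the logarithmic singularity is integrable, so the symmetric-difference estimate times $\ln|\ln\ep|$ gives $O(\ep^\alpha\ln|\ln\ep|)$, absorbable into $O(\ep^{\alpha'})$ for slightly smaller $\alpha'$); and the disk integral itself equals $|\ln\ep|^{-1}\cdot\tfrac12\bar r_i^4(-\ln(\ep^{1/3}|\ln\ep|^{-1/3}\bar\kappa\bar r_i)+\tfrac14)\cdot$const by the explicit computation behind \eqref{eq:vB}--\eqref{eq:vib}; (d) the cross terms $\int_{\bar\Omega^+_i}\int_{\bar\Omega^+_j}G(\cdots)$ for $i\ne j$: here $G$ and its derivatives are bounded on the relevant range (distances $\ge\ep^{-\alpha}$ by Lemma \ref{l:al} — note $G$ is evaluated at $\ep^{1/3}|\ln\ep|^{-1/3}(\bar x_i-\bar x_j)$, which is $O(1)$ or larger), so replacing $\bar\Omega^+_i,\bar\Omega^+_j$ by the disks and $\bar x,\bar y$ by the centers costs only $O(\ep^\alpha)$ per pair, and there are $O(|\ln\ep|^2)$ pairs, giving $O(\ep^\alpha|\ln\ep|^2)$ overall, again absorbable. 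Collecting (a)--(d) and matching with \eqref{eq:Enb} yields $\bar E(\bar\Omega^+)=\bar E_N(\{\bar r_i\},\{\bar x_i\})+O(\ep^\alpha)$ for the minimizer.

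Next, I would use Lemma \ref{l:EcC} (giving $\min\bar E\le 0$, hence $\bar E_N(\{\bar r_i\},\{\bar x_i\})=\bar E(\bar\Omega^+)+O(\ep^\alpha)\le O(\ep^\alpha)$) together with Lemmas \ref{l:EcC}, \ref{l:N}, \ref{l:al} (bounding $\bar r_i=O(1)$, $N=O(|\ln\ep|)$, pairwise distances $\ge\ep^{-\alpha}$) to observe that the configuration $(\{\bar r_i\},\{\bar x_i\})$ is admissible for the reduced problem and has bounded energy; in particular $\min\bar E=\bar E(\bar\Omega^+)=O(1)$. This gives the first assertion. For the second, I need both inequalities. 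The inequality $\min\bar E\ge\min\bar E_N-O(\ep^\alpha)$ is immediate from the expansion just proved: $\min\bar E=\bar E(\bar\Omega^+)=\bar E_N(\{\bar r_i\},\{\bar x_i\})+O(\ep^\alpha)\ge\min\bar E_N+O(\ep^\alpha)$. For the reverse, $\min\bar E\le\min\bar E_N+O(\ep^\alpha)$: I would take a near-optimal configuration for $\bar E_N$ — say $(\{\bar r_i^*\},\{\bar x_i^*\})$ with $\bar E_N$ within $\ep^\alpha$ of $\min\bar E_N$, and with the $\bar r_i^*$ near the positive minimum of $V_0$ (so $\bar r_i^*=O(1)$ bounded away from $0$, by the heuristics and Proposition \ref{p:amin}) and the centers at mutual distance $\ge\ep^\sigma$ — build the test set $\tilde\Omega^+=\cup_i B_{\bar r_i^*}(\bar x_i^*)$, and run the same four-part expansion in reverse to get $\bar E(\tilde\Omega^+)=\bar E_N(\{\bar r_i^*\},\{\bar x_i^*\})+O(\ep^\alpha)\le\min\bar E_N+O(\ep^\alpha)$, hence $\min\bar E\le\min\bar E_N+O(\ep^\alpha)$. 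Combining the two inequalities gives $\min\bar E=\min\bar E_N+O(\ep^\alpha)$.

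I expect the main obstacle to be the bookkeeping of the exponents: each of the replacements in (a)--(d) introduces an error of the form $\ep^{\alpha}$ times a polynomial in $|\ln\ep|$ (from the number of droplets, from the $\ln|\ln\ep|$ in the Green's function expansion \eqref{eq:76}, from the $|\ln\ep|^{-1}$ prefactors in \eqref{eq:Ebar}), so I must repeatedly shrink $\alpha$ to a slightly smaller positive constant to absorb these logarithmic factors, and I must check that the value of $\alpha$ from Proposition \ref{p:disk} (Hausdorff closeness) and from Lemma \ref{l:al} (separation) can be chosen compatibly. A secondary subtlety is the lower-bound direction: strictly, the minimizer of $\bar E_N$ over \emph{all} admissible $(\{\bar r_i\},\{\bar x_i\})$ (arbitrary $N$, radii, positions subject only to separation) must be shown to be attained or nearly attained by a configuration with the same structural bounds ($N=O(|\ln\ep|)$, $\bar r_i=O(1)$), which follows from the same coercivity estimates — the interfacial-plus-self-energy term $V_0(\bar r_i)$ forces $\bar r_i$ into a compact interval, and then the number of droplets is capped because each contributes a definite negative amount only if it sits near the minimum of $V_0$, while the pairwise repulsion is nonnegative. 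Once these exponent and admissibility matters are handled, the proposition follows.
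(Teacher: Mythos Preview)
Your proposal is correct and follows essentially the same route as the paper: both argue the upper bound $\min\bar E\le\min\bar E_N+O(\ep^\alpha)$ by using a union of disks minimizing $\bar E_N$ as a test set for $\bar E$, and the lower bound by comparing the minimizer $\bar\Omega^+$ term-by-term with $\bar E_N(\{\bar r_i\},\{\bar x_i\})$ via Proposition~\ref{p:disk}. The paper's write-up is terser---it dispatches the lower bound in one line by invoking the inclusions $B_{\bar r_i-\ep^\alpha}(\bar x_i)\subset\bar\Omega^+_i\subset B_{\bar r_i+\ep^\alpha}(\bar x_i)$ to monotonically control each term of $\bar E$ by the corresponding term of $\bar E_N$, rather than going through the symmetric-difference estimates you outline in (a)--(d); and it obtains $\min\bar E=O(1)$ directly from the rescaling identity \eqref{eq:10} (which gives $-\tfrac12\kappa^{-2}\bar\delta^2\le\min\bar E\le 0$) rather than via the expansion---but these are presentational differences, not substantive ones. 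Your more explicit bookkeeping and your flagging of the ``secondary subtlety'' (that near-minimizers of $\bar E_N$ must themselves satisfy the structural bounds $\bar r_i=O(1)$, $N=O(|\ln\ep|)$, separation $\ge\ep^{-\alpha}$) are both welcome; the paper glosses over the latter point.
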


\begin{proof}
  The first equation in (\ref{eq:EEN}) is a direct consequence of the
  definition of $\bar E$ in (\ref{eq:Ebar}), according to which $0
  \leq \tfrac12 \bar \delta^2 \kappa^{-2} + \min \bar E \leq
  \ep^{-4/3} |\ln \ep|^{2/3} E[-1] = \tfrac12 \bar \delta^2
  \kappa^{-2}$.  The upper bound in the second equation follows by
  choosing a trial function for $\bar E$ in the form of disks of
  radius $\bar r_i$ centered at $\bar x_i$ which minimize $\bar E_N$
  and taking into consideration Lemmas \ref{l:EcC} and \ref{l:al} and
  (\ref{eq:Gest}). On the other hand, by Proposition \ref{p:disk}(i),
  we have $\bar \Omega^+_i \supset B_{\bar r_i - \ep^\alpha}(\bar
  x_i)$ for $\ep \ll 1$, hence, $|\partial \bar\Omega^+_i| > 2 \pi
  (\bar r_i - \ep^\alpha)$ and $|\bar\Omega^+_i| > \pi (\bar r_i -
  \ep^\alpha)^2$. This controls from below all the terms of $\bar E$,
  except the one involving $\bar \delta$, by the corresponding terms
  of $\bar E_N$. The latter, however, is controlled by the second
  inclusion in Proposition \ref{p:disk}(i).
\end{proof}

To summarize, for $0 < \ep \ll 1$ the non-trivial minimizers of $\bar
E$ have the form of well-separated nearly circular droplets. In fact,
from Proposition \ref{p:EEE} one should expect that the
droplet-droplet interaction part of the energy, which is given by the
last term in the expression (\ref{eq:Enb}) for $\bar E_N$, should be
close to the minimum for fixed droplet sizes. Proving this, however,
generally requires information about coercivity of the interaction
energy, which becomes difficult to establish when $N \gg 1$, the
asymptotic case of interest. Nevertheless, with the help of Lemma
\ref{l:v} we can prove that in the original scaling the droplets stay
away from each other a distance $O(\ep^\beta)$ in $\Omega$, with an
arbitrary $\beta > 0$ for $\ep \ll 1$, i.e., that the statement of
Lemma \ref{l:al} actually holds for any $\alpha \in (0, \tfrac13)$,
provided that $\ep$ is small enough.

\begin{proposition}
  \label{p:beta}
  Let $\bar\Omega^+ = \cup_{i=1}^N \bar\Omega^+_i$ be a non-trivial
  minimizer of $\bar E$, where $\bar\Omega^+_i$ are the disjoint
  connected components of $\bar\Omega^+$, and let $\bar x_i$ be as in
  Proposition \ref{p:disk}. Then, for any $\alpha \in (0, \tfrac13)$
  we have $|\bar x_i - \bar x_j| > \ep^{-\alpha}$, for all $i \not=
  j$, as long as $\ep \ll 1$.
\end{proposition}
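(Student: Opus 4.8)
The plan is to argue by contradiction, pushing the rearrangement idea behind Lemma~\ref{l:al} beyond the reach of the second--variation test used there. Fix $\alpha\in(0,\tfrac13)$ and suppose that, along a sequence $\ep\to0$, there are indices $i\neq j$ with $|\bar x_i-\bar x_j|\le\ep^{-\alpha}$. Set $\alpha_{ij}:=|\ln\ep|^{-1}\ln|\bar x_i-\bar x_j|$, so that by Lemma~\ref{l:al} we have $\alpha_0\le\alpha_{ij}\le\alpha<\tfrac13$ and hence $1-3\alpha_{ij}$ is bounded below by a fixed positive constant. Using \eqref{eq:Gest} together with Propositions~\ref{p:disk} and \ref{p:amin} (each component is $O(\ep^{\alpha})$--close to a disk of radius $\bar r_k\ge\sqrt[3]3-o(1)$), the term $\pi|\ln\ep|^{-1}\bar r_j^{\,2}\,G(\ep^{1/3}|\ln\ep|^{-1/3}(\bar x_i-\bar x_j))$ in the expression \eqref{eq:vib} for $\bar v_i$ is bounded below by a fixed $c(\alpha)>0$; thus the long-range potential felt on $\partial\bar\Omega^+_i$ exceeds the self-part $\tfrac16\bar r_i^{\,2}$ by at least $c(\alpha)$, while by Lemma~\ref{l:v} and Proposition~\ref{p:areaupper} it is also uniformly bounded above, so by \eqref{eq:bri} droplet $i$ sits at the ``stable'' root of the potential \eqref{eq:Vbri} with a pressure bounded away from the degenerate value $\tfrac12\sqrt[3]9\,\kappa^2$; the same holds for droplet $j$.

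I would then perform the following surgery on the minimizer and compare energies via the reduced expansion (Propositions~\ref{p:disk} and \ref{p:EEE}, equivalently \eqref{eq:Enb}). First remove $\bar\Omega^+_i$: a direct computation (as in the proof of Proposition~\ref{p:amin}) shows this raises $|\ln\ep|\,\bar E$ by $-\min V_i+o(1)$, where $V_i$ is the potential felt by droplet $i$. Since droplet $j$ thereby loses at least the amount $c(\alpha)$ of long-range pressure, re-optimizing its radius lowers $|\ln\ep|\,\bar E$ by a definite positive amount, estimated from the monotonicity of $\min V$ in the pressure and $\bar r_j\ge\sqrt[3]3-o(1)$; and removing $\bar\Omega^+_i$ only decreases the potential felt by every other droplet, so keeping their radii fixed costs nothing. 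Finally re-insert a nearly circular droplet of the optimal radius at a new center $\bar x_*$ chosen at distance $\ge\ep^{-\alpha^*}$ (with $\alpha<\alpha^*<\tfrac13$) from all the remaining centers; such a site exists because by Lemmas~\ref{l:N} and \ref{l:EcC} there are $O(|\ln\ep|)$ droplets of $O(1)$ size, so the disks of radius $\ep^{-\alpha^*}$ about the centers cover area $O(|\ln\ep|\,\ep^{-2\alpha^*})=o(|\bar\Omega|)$.

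The crucial point is that the total new long-range coupling of the re-inserted droplet with all the others equals $4\pi\bar r_i^{\,2}\,v^{(\ne i)}(\bar x_*)$, where $v^{(\ne i)}$ is the potential generated by the other droplets, and this is $O(1)$ uniformly in $\ep$ — which is exactly where the uniform bound $0<v\le C$ of Lemma~\ref{l:v} enters, since the crude estimate via \eqref{eq:Gest} and $\sum_l\bar r_l^{\,2}=\pi^{-1}|\bar\Omega^+|=O(|\ln\ep|)$ would only give $O(\ln|\ln\ep|)$. For the same reason the potential felt by the re-inserted droplet is $O(1)$, so the radius produced by \eqref{eq:bri} at $\bar x_*$ is $O(1)$ and bounded below, and the rearranged set is genuinely admissible and of the expected structure. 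Adding up, the strict gain from re-optimizing droplet $j$ after it has been relieved of the fixed pressure $c(\alpha)$ beats the cost of relocating droplet $i$, so $\bar E$ strictly decreases, contradicting minimality; as $\alpha\in(0,\tfrac13)$ was arbitrary, the proposition follows.

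The hard part is precisely this last step — making ``gain $>$ cost'' hold uniformly in $\ep$. It requires (i) quantifying, from the strict concavity of $\bar r\mapsto\bar r^{-1}$ and the monotonicity of $\min V$, the definite decrease of the freed droplet's energy once $c(\alpha)$ of long-range pressure is removed, and (ii) controlling the far-field effect of moving a single droplet throughout $\bar\Omega$ at once, for which the $L^\infty$ and $C^{1,\alpha}$ bounds of Lemma~\ref{l:v} on $v$ in the rescaled domain are indispensable. (A purely local alternative — applying the second variation \eqref{eq:Ebvar3} to simultaneous dilations of two close droplets, as in Lemma~\ref{l:al}, now using the sharper size bounds of Propositions~\ref{p:amin} and \ref{p:areaupper} — already improves the exponent, but it is the non-local rearrangement above that is needed to reach $\tfrac13$.)
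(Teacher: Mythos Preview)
Your relocation scheme does not close, because the energy accounting double--counts the pair interaction and leaves the relocation cost uncontrolled. Write $|\ln\ep|\,\bar E_N=\sum_k A(\bar r_k)+\sum_{k<l}B_{kl}$ with $A(\bar r)=2\pi(\bar r-\bar\delta\kappa^{-2}\bar r^2+\tfrac16\bar r^4)+o(1)$ and $B_{kl}=4\pi^2|\ln\ep|^{-1}G_{kl}\bar r_k^2\bar r_l^2$. Removing $\bar\Omega^+_i$ costs exactly $V_{v_i(\bar x_i)}(\bar r_i)=A(\bar r_i)+4\pi\bar r_i^2 v_i(\bar x_i)$, and this already contains the pair term $B_{ij}$. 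Hence the quantity you call ``the gain from re-optimising $j$ after losing pressure $c(\alpha)$'' --- namely $\min V_{v_j-c_{ij}}-\min V_{v_j}\approx-4\pi c_{ij}\bar r_j^{\,2}$ --- is precisely $-B_{ij}$ once more; the \emph{genuine} additional gain from re-optimising $\bar r_j$ is only $\min V_{v_j-c_{ij}}-V_{v_j-c_{ij}}(\bar r_j)=O(c_{ij}^{\,2})$, since the minimiser shifts by $O(c_{ij})$ and $V''=O(1)$ there. After re-inserting at $\bar x_*$ the net balance is therefore $\Delta=\min V_{v_*}-\min V_{v_i(\bar x_i)}-O(c_{ij}^{\,2})$, and since $v\mapsto\min V_v$ is increasing with $O(1)$ slope you would need $v_*<v_i(\bar x_i)+O(c_{ij}^{\,2})$. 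Lemma~\ref{l:v} gives only $v_*,\,v_i(\bar x_i)\in[0,C]$; nothing available \emph{before} Proposition~\ref{p:beta} forces $v^{(\neq i)}(\bar x_*)$ below $v_i(\bar x_i)$, and indeed once Propositions~\ref{p:ri}--\ref{p:dens} are in hand (they use \ref{p:beta}) the potential is essentially constant, so a far empty site is not generically better than the original one.

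The paper sidesteps this by never comparing two $O(1)$ potentials. It first uses your observation --- the uniform bound of Lemma~\ref{l:v} forces any $\ep^{-\alpha}$--cluster to contain at most a fixed number $M$ of droplets --- and then runs a pigeonhole over the nested disks of radii $\ep^{-\alpha(1+k\sigma)}$, $k=1,\dots,M$, to find an annulus free of droplets, so the cluster is separated from the rest by an extra factor $\ep^{-\alpha\sigma}$ in distance. One then dilates the cluster by $1+\lambda$ about $\bar x_i$: every intra-cluster distance grows, so each intra-cluster Green term drops by a fixed $\tfrac{\lambda}{2\pi}+o(1)$, giving $|\ln\ep|^2\Delta\bar E\le-c\lambda$; the cross terms with far droplets change only through $\nabla G$ over displacements $O(\lambda\ep^{-\alpha(1+k\sigma)})$ at distances $\ge\ep^{-\alpha(1+(k+1)\sigma)}$, hence by $O(\lambda\ep^{\alpha\sigma}|\ln\ep|)=o(\lambda)$. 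It is this \emph{infinitesimal} move, together with the isolating annulus, that makes the background comparison tractable --- a finite relocation cannot be controlled in the same way.
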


\begin{proof}
  First of all, note that by Lemma \ref{l:al} the statement of the
  Proposition holds for some $\alpha > 0$. To prove that $\alpha$
  could be chosen arbitrarily close to $\tfrac13$, suppose that, to
  the contrary, there exists a sequence of $\ep \to 0$ and a pair of
  indices $(i, j)$, depending on $\ep$, such that $|\bar x_i - \bar
  x_j| \leq \ep^{-\alpha}$ with some $0 < \alpha < \tfrac13$. Let us
  denote by $I_1$ the set of indices of those droplets whose centers
  are contained in a disk $B_1$ centered at $\bar x_0 = \tfrac12 (\bar
  x_i + \bar x_j)$ with radius $\ep^{-\alpha}$. By assumption we have
  $|I_1| \geq 2$, where $|\cdot|$ denotes the counting measure. Also,
  we have $|I_1| < M$ for some $M \in \mathbb N$ independent of $\ep
  \ll 1$. Indeed, by Lemmas \ref{l:EcC} and \ref{l:v}, and by
  (\ref{eq:Gest}) we have for some $c > 0$
  \begin{eqnarray}
    \label{eq:93}
    C \geq v(\bar x_0) \geq |\ln \ep|^{-1} \sum_{k \in I_1}
    \int_{\bar\Omega^+_k} G(\ep^{1/3} |\ln \ep|^{-1/3} (\bar x_i -
    \bar y)) \, d\bar y \nonumber \\ 
    \geq \left(\tfrac13 -\alpha  + o(1) \right) c |I_1|,
  \end{eqnarray}
  for $\ep \ll 1$.

  Now, fix $\sigma > 0$ sufficiently small independently of $\ep$, and
  consider a sequence of nested disks $B_k$ of radii $\ep^{-\alpha (1
    + k \sigma)}$ centered at $\bar x_0$. By repeating the argument
  above, we also have $|I_M| \leq M$, as long as $\ep \ll 1$, where
  $|I_k|$ is the counting measure of the set $I_k$ of indices such
  that $\bar x_l \in B_k$ for all $l \in I_k$. Therefore, in view of
  the fact that $|I_1| > 1$, we must have $|I_{k+1}| - |I_k| = 0$ for
  some $1 \leq k \leq M-1$, implying that $B_{k+1} \backslash B_k \cap
  \bar \Omega^+ = \varnothing$. Thus, there exists a cluster of
  droplets, whose indices are denoted by $I_k$, which are within
  $O(\ep^{-\alpha (1 + k \sigma)})$ distance of $\bar x_0$ and are
  separated from all other droplets by $O(\ep^{-\alpha (1 + \sigma + k
    \sigma)})$ distance.

  Let us show that this contradicts the minimality of $\bar E$ for
  small enough $\ep$. Indeed, let us displace the droplets in $B_k$ to
  the new locations $\bar x'_l = \bar x_l + \lambda (\bar x_l - \bar
  x_i)$, with $l \in I_k$, which represents a dilation of $B_k$ by a
  factor of $1 + \lambda$ relative to $\bar x_i$, keeping all $\bar
  r_i$ fixed. For $0 < \lambda \ll 1$ the resulting change $\Delta
  \bar E$ of energy satisfies
  \begin{eqnarray}
    \label{eq:6}
    |\ln \ep|^2 \Delta \bar E \leq - c \lambda  + C \lambda \ep^{\sigma
      \alpha} |\ln  \ep|,  
  \end{eqnarray}
  for some $C, c > 0$ independent of $\ep \ll 1$, where we used Lemmas
  \ref{l:bbounds} and \ref{l:EcC}, and the estimate (\ref{eq:Gest}),
  arguing as in the derivation of (\ref{eq:val}). Thus, the considered
  rearrangement lowers the energy.
\end{proof}

As a simple corollary to this result, we actually have the following
universal ($\bar\delta$-independent) upper bound on $|\bar
\Omega^+_i|$ and, hence, on $\bar r_i$:
\begin{corollary}
  Let $\bar\Omega^+ = \cup_{i=1}^N \bar\Omega^+_i$ be a non-trivial
  minimizer of $\bar E$, where $\bar\Omega^+_i$ are the disjoint
  connected components of $\bar\Omega^+$. Then, for any $\delta > 0$
 \begin{eqnarray}
    \label{eq:25}
    |\bar \Omega^+_i| \leq \pi \left( 12 (\sqrt{2} - 1)
    \right)^{2/3} + \delta,
  \end{eqnarray}
  when $\ep$ is sufficiently small.
\end{corollary}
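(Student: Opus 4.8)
The plan is a splitting argument resting on Proposition~\ref{p:beta}: if a connected component is too large, cutting it into two halves of equal area lowers $\bar E$, and the $\bar\delta$-dependent part of the energy drops out. Suppose, for contradiction, that for some $\delta>0$ and a sequence $\ep\to0$ there is a component $\bar\Omega^+_i$ with $|\bar\Omega^+_i|=\pi\bar r_i^2>\pi(12(\sqrt2-1))^{2/3}+\delta$, where $\bar r_i=(\pi^{-1}|\bar\Omega^+_i|)^{1/2}$; equivalently $\bar r_i^3>12(\sqrt2-1)+\delta''$ for some $\delta''>0$ depending only on $\delta$. Fix $\beta\in(0,\tfrac13)$ with $4(\sqrt2-1)/\beta<12(\sqrt2-1)+\delta''$ — possible since $4(\sqrt2-1)/\beta\to12(\sqrt2-1)$ as $\beta\uparrow\tfrac13$ — and then fix $\alpha\in(\beta,\tfrac13)$ so that Proposition~\ref{p:beta} gives $|\bar x_k-\bar x_l|>\ep^{-\alpha}$ for all $k\ne l$ and $\ep\ll1$. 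Form the competitor $\bar\Omega^{**}$ obtained from $\bar\Omega^+$ by deleting $\bar\Omega^+_i$ and inserting the two disjoint disks $B_1=B_{\bar r_i/\sqrt2}(\bar x_i)$ and $B_2=B_{\bar r_i/\sqrt2}(\bar x_i+v)$, with $|v|=\ep^{-\beta}$ and the direction of $v$ arbitrary. Since $\ep^{-\beta}\ll\ep^{-\alpha}$, both $\bar x_i$ and $\bar x_i+v$ lie at distance $\gtrsim\ep^{-\alpha}$ from every other component (and from each other), so $\bar\Omega^{**}$ is admissible and $B_1,B_2$ stay far from the rest of $\bar\Omega^+$.

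Next I would expand $\bar E(\bar\Omega^{**})-\bar E(\bar\Omega^+)$ term by term using \eqref{eq:Gest}, Proposition~\ref{p:disk}(i), Lemma~\ref{l:K}, \eqref{eq:val}, and Lemmas~\ref{l:bbounds}, \ref{l:EcC}, \ref{l:N}. The key point is that $|B_1|+|B_2|=\pi\bar r_i^2=|\bar\Omega^+_i|$, so the $\bar\delta$-dependent background term in \eqref{eq:Ebar} cancels exactly, which is what makes the bound independent of $\bar\delta$. Because $\partial\bar\Omega^+_i$ is within $\ep^\alpha$ (Hausdorff) of the convex circle $\partial B_{\bar r_i}(\bar x_i)$, the perimeter term changes by $|\ln\ep|^{-1}\bigl(2\pi(\sqrt2-1)\bar r_i+O(\ep^\alpha)\bigr)$; the self-interaction term changes by $-\tfrac{\pi}{6}|\ln\ep|^{-1}\bar r_i^4(1+o(1))$ after using \eqref{eq:Gest} together with $\ln(\ep^{1/3}|\ln\ep|^{-1/3})=-\tfrac13|\ln\ep|(1+o(1))$; the new pair interaction of $B_1$ and $B_2$ equals $\tfrac{\pi^2\bar r_i^4}{|\ln\ep|^2}G(\ep^{1/3}|\ln\ep|^{-1/3}\ep^{-\beta})=\tfrac{\pi(\frac13-\beta)}{2|\ln\ep|}\bar r_i^4(1+o(1))$ by \eqref{eq:Gest}, the kernel argument $\ep^{\frac13-\beta}|\ln\ep|^{-1/3}$ tending to $0$; and, since $\bar x_i$ and $\bar x_i+v$ are both $\gtrsim\ep^{-\alpha}$ away from the other components, \eqref{eq:val} gives $|\nabla v_i|\le C\ep^\alpha$ along the segment joining them, hence $|v_i(\bar x_i+v)-v_i(\bar x_i)|\le C\ep^{\alpha-\beta}$, so the change in the interaction of the cut component with all the others is $2\pi\bar r_i^2|\ln\ep|^{-1}(v_i(\bar x_i+v)-v_i(\bar x_i))+o(|\ln\ep|^{-1})=o(|\ln\ep|^{-1})$ (and is simply absent when $N=1$). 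Collecting these contributions and using the identity $\tfrac16-\tfrac12(\tfrac13-\beta)=\tfrac\beta2$,
\[
|\ln\ep|\bigl(\bar E(\bar\Omega^{**})-\bar E(\bar\Omega^+)\bigr)=\pi\bar r_i\Bigl(2(\sqrt2-1)-\tfrac{\beta}{2}\,\bar r_i^3\Bigr)+o(1).
\]

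Since $\bar r_i^3>12(\sqrt2-1)+\delta''>4(\sqrt2-1)/\beta$ by the choice of $\beta$, the bracket is bounded above by a negative constant (recall $\bar r_i$ is bounded below by Proposition~\ref{p:amin}), so $\bar E(\bar\Omega^{**})<\bar E(\bar\Omega^+)$ for $\ep\ll1$, contradicting minimality of $\bar\Omega^+$. Hence $|\bar\Omega^+_i|\le\pi(12(\sqrt2-1))^{2/3}+\delta$ for $\ep$ small, which is \eqref{eq:25}. The main obstacle — and the reason one needs the sharpened separation of Proposition~\ref{p:beta} rather than merely Lemma~\ref{l:al} — is the joint control of the two interaction corrections against the $O(|\ln\ep|^{-1})$ gain from splitting: forcing the pair interaction of the two halves to be $o(|\ln\ep|^{-1})$ requires separating their centers by $\ep^{-\beta}$ with $\beta$ close to $\tfrac13$, and one must then verify that displacing one half by that distance alters its interaction with the remaining droplets by only $o(|\ln\ep|^{-1})$, which again relies on the $\ep^{-\alpha}$-separation together with the gradient estimate \eqref{eq:val}. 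Everything else reduces to routine expansions of the logarithmic kernel \eqref{eq:Gest}.
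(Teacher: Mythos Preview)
Your proof is correct and follows essentially the same splitting argument as the paper: replace the oversized component by two disks of equal total area separated by $\ep^{-\beta}$ with $\beta$ close to $\tfrac13$, use Proposition~\ref{p:beta} to control the change in interaction with the remaining droplets, and arrive at the same inequality $|\ln\ep|\,\Delta\bar E \le 2(\sqrt2-1)\sqrt{\pi}\,|\bar\Omega^+_i|^{1/2}-\tfrac{\beta}{2\pi}|\bar\Omega^+_i|^2+o(1)$ (your formula in terms of $\bar r_i$ is equivalent). Your treatment is in fact more explicit than the paper's, which compresses the computation to ``arguing as before''; in particular you spell out why the cross-interaction with the other droplets shifts only by $O(\ep^{\alpha-\beta}|\ln\ep|^{-1})=o(|\ln\ep|^{-1})$, the point that makes the choice $\beta<\alpha$ essential.
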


\begin{proof}
  If $|\bar \Omega^+_i|$ is bigger, split $\bar\Omega^+_i$ into two
  disks of equal area and move them apart a distance $d =
  \ep^{-\beta}$, with $0 < \beta < \alpha < \tfrac13$. Arguing as
  before, the energy change $\Delta \bar E$ upon this manipulation is
  given by
  \begin{eqnarray}
    \label{eq:26}
    |\ln \ep| \, \Delta \bar E \leq 2 (\sqrt{2} - 1) \sqrt{\pi} \,
    |\bar \Omega^+_i|^{1/2} - \frac{\beta}{2 \pi} |\bar
    \Omega^+_i|^2 + o(1).
  \end{eqnarray}
  In view of the arbitrary closeness of $\beta$ to $\tfrac13$, the
  energy change is, therefore, negative for $\ep \ll 1$.
\end{proof}

Finally, we note that the argument of Proposition \ref{p:beta} still
holds for local minimizers of low energy, the result can is obtained
by sending $\lambda \to 0$ in the proof.

\subsection{Limiting behavior}
\label{sec:lim}

We now investigate the limiting behavior of the minimizers of $\bar E$
as $\ep \to 0$, with $\bar \delta > \tfrac12 \sqrt[3]{9} \, \kappa^2$
fixed, i.e., the situation in which minimizers are non-trivial. As the
value of $\ep$ is decreased, the number of droplets in a minimizer are
expected to grow. What we will show below is that in the limit $\ep
\to 0$ the droplet sizes become asymptotically the same, and that the
droplets become uniformly distributed throughout $\Omega$.

Let us first study the behavior of each droplet as $\ep \to 0$. We
have the following result. 

\begin{proposition}
  \label{p:ri}
  Let $\bar\Omega^+ = \cup_{i=1}^N \bar\Omega^+_i$ be a non-trivial
  minimizer of $\bar E$, where $\bar\Omega^+_i$ are the disjoint
  connected components of $\bar\Omega^+$, and let $\bar r_i$ be as in
  Proposition \ref{p:disk}. Then $\bar r_i \to \sqrt[3]{3}$ uniformly
  as $\ep \to 0$. 
\end{proposition}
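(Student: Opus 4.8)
The plan is to show that, for a non-trivial minimizer, every $\bar r_i$ is pinned close to the nontrivial positive minimizer of the ``effective potential'' $V_{v_i(\bar x_i)}$ from \eqref{eq:Vbri}, and then to prove that this effective potential has essentially lost its dependence on $i$ as $\ep \to 0$, so that all the $\bar r_i$ converge to the same value $\sqrt[3]{3}$. The key observation is that by Proposition \ref{p:disk}(ii), $\bar r_i^{-1} - 2\bar\delta\kappa^{-2} + 4\bar v_i = O(\ep^\alpha)$, which says exactly that $\bar r_i$ is (to $O(\ep^\alpha)$) a critical point of $\bar r \mapsto V_{v_i(\bar x_i)}(\bar r)$ once we read off from \eqref{eq:vib} that $\bar v_i = v_i(\bar x_i) - \tfrac12 |\ln\ep|^{-1}\bar r_i^2 \ln(\ep^{1/3}|\ln\ep|^{-1/3}\bar\kappa\bar r_i) + O(\ep^\alpha)$ and that $-\tfrac12|\ln\ep|^{-1}\ln(\ep^{1/3}|\ln\ep|^{-1/3}\bar\kappa\bar r_i) = \tfrac16 + o(1)$ uniformly for $\bar r_i$ in the compact range guaranteed by Lemma \ref{l:EcC} (here one uses $|\ln\ep|^{-1}\ln(\ep^{1/3}|\ln\ep|^{-1/3}) \to \tfrac13 \cdot (-1) \cdot(-1)$... more precisely $|\ln(\ep^{1/3}|\ln\ep|^{-1/3})| = \tfrac13|\ln\ep| + o(|\ln\ep|)$, so the coefficient of $\bar r_i^4$ tends to $\tfrac16$). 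Thus $\bar r_i$ is an approximate critical point of $V_{\mu_i}$ where $\mu_i := v_i(\bar x_i)$ is the ``external'' potential felt by droplet $i$ from the others.

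The next step is to control $\mu_i = v_i(\bar x_i)$. We already know from Lemma \ref{l:v} that $0 < v \le C$ uniformly, and hence $0 \le \mu_i \le C$. From Proposition \ref{p:amin} (together with the argument in its proof) we moreover know $2\mu_i - \bar\delta\kappa^{-2} \le -\tfrac12\sqrt[3]{9} + o(1)$, and from Proposition \ref{p:areaupper} we have the corresponding upper restriction; combining these places $\mu_i$ in a range in which $V_{\mu_i}$ has a unique positive local minimum $\bar r_*(\mu_i)$, depending continuously on $\mu_i$, and $\bar r_i = \bar r_*(\mu_i) + o(1)$ (we must rule out $\bar r_i$ sitting near the \emph{local max} of $V_{\mu_i}$: this is exactly the content of the $V_{\mu_i}(\bar r_i) \le o(1)$ argument in the proof of Proposition \ref{p:amin}, i.e. if $\bar r_i$ were near the local maximum the droplet would have positive removal energy and could be deleted, contradicting minimality). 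So the whole problem reduces to showing $\mu_i \to \tfrac12(\bar\delta\kappa^{-2} - \tfrac12\sqrt[3]{9})$, equivalently $2\bar v_i - \bar\delta\kappa^{-2} \to -\tfrac12\sqrt[3]{9}$, for every $i$; plugging this into $\bar r_i^{-1} = 2\bar\delta\kappa^{-2} - 4\bar v_i + o(1)$ then gives $\bar r_i^{-1} \to \sqrt[3]{9}$... wait, that gives $\bar r_i \to 9^{-1/3} = 3^{-2/3}$, not $3^{1/3}$ — the resolution is that the correct identity from $V_{\bar r}$-minimization is the \emph{critical-point} equation $1 + 2(2\mu_i - \bar\delta\kappa^{-2})\bar r_i + \tfrac23\bar r_i^3 = 0$ evaluated at the positive minimum together with $V_{\mu_i}(\bar r_i) = o(1)$, and solving this pair yields $\bar r_i \to \sqrt[3]{3}$ and $2\mu_i - \bar\delta\kappa^{-2} \to -\tfrac12 \sqrt[3]{9}$ simultaneously; I would carry out this two-equation algebra explicitly (it is the ``double root'' / tangency condition and is routine once set up).

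The remaining — and genuinely substantive — step is therefore to prove that $V_{\mu_i}(\bar r_i) \to 0$, i.e. that the removal energy of each droplet tends to zero, uniformly in $i$. The main obstacle is that $N \to \infty$, so one cannot simply sum; instead I would argue by contradiction using a global energy-balance argument: since $\min\bar E = \sum_i (\text{contribution of }\bar\Omega^+_i)$ and, by Proposition \ref{p:EEE}, $\min\bar E = \min\bar E_N + O(\ep^\alpha)$, each droplet's contribution to $\bar E_N$ is $|\ln\ep|^{-1}(V_{\mu_i}(\bar r_i) + o(1))$ with $V_{\mu_i}(\bar r_i) \ge 0$ for every $i$ (otherwise remove that droplet, which by the argument of Lemma \ref{l:EcC} / Proposition \ref{p:beta} for low-energy configurations strictly lowers $\bar E$, contradicting minimality — here I use that, after removal, the re-balanced configuration is still admissible and still a near-minimizer). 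Hence $\sum_i V_{\mu_i}(\bar r_i) = |\ln\ep|(\min\bar E) + o(1) = O(|\ln\ep|)$ — which by itself only gives an \emph{average} bound $O(1)$ per droplet, not a uniform $o(1)$ bound. To upgrade to uniformity I would instead exploit the rigidity of the critical-point equation: since each $\bar r_i$ solves $1 + 2(2\mu_i - \bar\delta\kappa^{-2})\bar r_i + \tfrac23\bar r_i^3 = O(\ep^\alpha)$ and lies near the \emph{positive minimum} of $V_{\mu_i}$, we automatically get $V_{\mu_i}(\bar r_i) \le V_{\mu_i}(\bar r)$ for all $\bar r$, in particular $V_{\mu_i}(\bar r_i) \le V_{\mu_i}(\sqrt[3]{3}) = 2\pi\sqrt[3]{3}(1 + (2\mu_i - \bar\delta\kappa^{-2})\sqrt[3]{3} + \tfrac{1}{6}\sqrt[3]{9})$, so it suffices to show $2\mu_i - \bar\delta\kappa^{-2} \le -\tfrac12\sqrt[3]{9} + o(1)$ (already known, from Proposition \ref{p:amin}'s proof) \emph{and} $\ge -\tfrac12\sqrt[3]{9} - o(1)$ — and this last lower bound on $\mu_i$ is what I would obtain from the energy-balance $\sum_i V_{\mu_i}(\bar r_i) = O(|\ln\ep|)$ combined with $N = N(\ep)$: in fact, as $\ep\to 0$ the \emph{minimal} energy equals (by the matching upper bound constructed from equal droplets, cf. the proof of Theorem \ref{t:main}) precisely $N$ times $|\ln\ep|^{-1}(V_{\bar\delta}(\sqrt[3]{3}) + o(1))/(\text{normalization})$, forcing $V_{\mu_i}(\bar r_i) = V_{\bar\delta}(\sqrt[3]{3}) + o(1)$ for all but $o(N)$ indices, and then the critical-point rigidity propagates this to every index since a droplet with $\mu_i$ bounded away from the limit would have $V_{\mu_i}(\bar r_i)$ bounded away from $V_{\bar\delta}(\sqrt[3]{3})$ by a fixed amount, and deleting such an outlier and re-inserting an equal droplet elsewhere strictly lowers $\bar E_N$, contradicting $\min\bar E = \min\bar E_N + O(\ep^\alpha)$. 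Having pinned $\mu_i \to \tfrac12(\bar\delta\kappa^{-2} - \tfrac12\sqrt[3]{9})$ uniformly, the two-equation algebra above delivers $\bar r_i \to \sqrt[3]{3}$ uniformly, completing the proof.
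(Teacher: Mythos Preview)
Your reduction is correct through the point where everything hinges on the two-sided control of $\mu_i := v_i(\bar x_i)$: you correctly extract the upper bound $2\mu_i - \bar\delta\kappa^{-2} \le -\tfrac12\sqrt[3]{9} + o(1)$ from the removal argument (Proposition \ref{p:amin}), and you correctly observe that the matching lower bound on $\mu_i$ is what forces $\bar r_i \to \sqrt[3]{3}$. The gap is in how you obtain that lower bound.

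Your ``energy-balance plus deletion/reinsertion'' argument does not close. First, the sign in your removal step is inverted: removing droplet $i$ changes $|\ln\ep|\,\bar E$ by $-V_{\mu_i}(\bar r_i) + o(1)$, so minimality gives $V_{\mu_i}(\bar r_i) \le o(1)$, not $\ge 0$; this is exactly the upper bound on $\mu_i$ you already have, and yields nothing new. Second, the identity $\sum_i V_{\mu_i}(\bar r_i) = |\ln\ep|\min\bar E + o(1)$ is not correct: writing $\bar E_N$ via \eqref{eq:Eblim} shows that the interaction term enters once in each $V_{\mu_i}$ (through $\mu_i$) but only once in $\bar E_N$ altogether, so the sum $\sum_i V_{\mu_i}(\bar r_i)$ double-counts the pair interactions and differs from $|\ln\ep|\bar E_N$ by a term of the same order. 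Third, and most importantly, your ``delete an outlier and re-insert an equal droplet elsewhere'' step begs the question: you must know that there exists a location where the potential felt by the new droplet is low enough to make insertion energetically favorable, and your argument supplies no such location.

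The paper's proof supplies exactly this missing ingredient by a direct \emph{insertion} argument, which is logically independent of any global energy balance. One first shows that $\min v$ is attained in the set $\bar\Omega_\beta = \bar\Omega \setminus \bigcup_i B_{\ep^{-\beta}}(\bar x_i)$ for any $\beta < \tfrac13$ (using that $v$ is subharmonic off $\bar\Omega^+$ and comparing $v$ inside and outside each $B_{\ep^{-\beta}}(\bar x_i)$ via \eqref{eq:30}). Then, if at the minimizing point $\bar x$ one had $v(\bar x) < \tfrac12\bar\delta\kappa^{-2} - \tfrac14\sqrt[3]{9} - \delta$ for some $\delta>0$, inserting a disk of suitable radius $\bar r_0$ at $\bar x$ would change the energy by $|\ln\ep|^{-1}(V_{v(\bar x)}(\bar r_0) + o(1)) < 0$, contradicting minimality. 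This gives $\min_{\bar\Omega_\beta} v \ge \tfrac12\bar\delta\kappa^{-2} - \tfrac14\sqrt[3]{9} - o(1)$, which via \eqref{eq:30} and \eqref{eq:val} propagates to $v_i(\bar x_i) \ge \tfrac12\bar\delta\kappa^{-2} - \tfrac14\sqrt[3]{9} - o(1)$ for every $i$ (using arbitrariness of $\beta < \tfrac13$). Combined with the critical-point equation from Proposition~\ref{p:disk}(ii) and the lower bound $\bar r_i \ge \sqrt[3]{3} - o(1)$ from Proposition~\ref{p:amin}, one reads off $\bar r_i \le \sqrt[3]{3} + o(1)$ directly from the shape of $V_{\mu_i}$. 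The insertion step is the idea your argument is missing; once you have it, no energy-summation or outlier-deletion is needed.
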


\begin{proof}
  First of all, by Proposition \ref{p:amin} we already know that $\bar
  r_i \geq \sqrt[3]{3} - \delta$ for any $\delta > 0$, provided that
  $\ep \ll 1$. Let us prove that the matching upper bound also holds
  for $\ep \ll 1$. Indeed, for any $\beta \in (0, \tfrac13)$ let
  $B_{\ep^{-\beta}}(\bar x_i) \in \bar\Omega$ be a disk of radius
  $\ep^{-\beta}$ centered at $\bar x_i$ defined in Proposition
  \ref{p:disk}, and consider $\bar\Omega_\beta = \bar \Omega
  \backslash \cup_{i = 1}^N \overline B_{\ep^{-\beta}} (\bar
  x_i)$. Note that by Proposition \ref{p:beta} the disks
  $B_{\ep^{-\beta}} (\bar x_i)$ do not intersect for $\ep \ll 1$. In
  fact, by Proposition \ref{p:beta}, for any $\alpha \in (\beta,
  \tfrac13)$ we have $\mathrm{dist} \, (B_{\ep^{-\beta}} (\bar x_i),
  B_{\ep^{-\beta}} (\bar x_j)) > \ep^{-\alpha}$ for $\ep \ll 1$.

  Let us show that the minimum of $v$ defined in (\ref{eq:v}) is
  attained in $\bar\Omega_\beta$ for $\ep \ll 1$. Let $\bar x$ be such
  that $v(\bar x) = \min$ and let $\bar x_i$ be the center of a
  droplet which is closest to $\bar x$. Recalling the definition in
  (\ref{eq:vi}) and Proposition \ref{p:disk}, we can write
  \begin{eqnarray}
    \label{eq:30}
    v(\bar x) = v_i(\bar x) - \frac{\bar r_i^2}{2 |\ln \ep|} \ln
    (\ep^{1/3} (1 + |\bar x - \bar x_i|)) + o(1),
  \end{eqnarray}
  where we used (\ref{eq:Gest}). In particular, for any $\delta > 0$
  we have $v(\bar x) > v_i(\bar x_i) + \tfrac16 \bar r_i^2 (1 - 3
  \beta) - \delta$, if $|\bar x - \bar x_i| \leq \ep^{-\beta}$ and
  $\ep \ll 1$, in view of (\ref{eq:val}), where according to
  Proposition \ref{p:beta}, we can use $\alpha$ defined above,
  whenever $\ep \ll 1$. On the other hand, choosing $\gamma \in
  (\beta, \alpha)$ and picking any $\bar x'$ such that $|\bar x' -
  \bar x_i| = \ep^{-\gamma}$, we see that for any $\delta > 0$ we have
  $v(\bar x') < v_i(\bar x_i) + \tfrac16 \bar r_i^2 ( 1 - 3 \gamma) +
  \delta$ for $\ep \ll 1$. However, with $\delta$ sufficiently small
  this implies that $v(\bar x') < v(\bar x)$ for small enough $\ep$,
  contradicting minimality of $v$ at $\bar x$.

  Now, we demonstrate that $v(\bar x) > \tfrac12 \bar \delta
  \kappa^{-2} - \tfrac14 \sqrt[3]{9} - \delta$, for any $\delta > 0$,
  provided that $\ep \ll 1$. Indeed, suppose the opposite inequality
  holds for some $\delta > 0$ and a sequence of $\ep \to 0$. Then,
  inserting a new droplet in the form of a disk of radius $\bar r =
  O(1)$ centered at $\bar x$ results in the change $\Delta \bar E$ of
  energy
  \begin{eqnarray}
    \label{eq:ins}
    |\ln \ep| \Delta \bar E = V_{v(\bar x)}(\bar r) + o(1),
  \end{eqnarray}
  where $V$ is given by (\ref{eq:Vbri}), and we used (\ref{eq:Gest})
  and (\ref{eq:val}). Since by assumption $2 v(\bar x) - \bar \delta
  \kappa^{-2} < \tfrac12 \sqrt[3]{9}$, it is easy to verify that
  $V_{v(\bar x)}$ attains a minimum at some $\bar r = \bar r_0 >
  \sqrt[3]{3}$, with $V_{v(\bar x)}(\bar r_0) < 0$. Therefore,
  inserting a droplet with radius $\bar r_0$ and center at $\bar x$
  would reduce energy for some $\ep \ll 1$, contradicting minimality of
  $\bar E$.

  This, in turn, implies that $v_i(\bar x_i) > \tfrac12 \bar \delta
  \kappa^{-2} - \tfrac14 \sqrt[3]{9} - \delta$ for all $i$. Indeed,
  since $\bar x \in \bar\Omega_\beta$, from (\ref{eq:30}) we have
  $v_i(\bar x') > \tfrac12 \bar \delta \kappa^{-2} - \tfrac14
  \sqrt[3]{9} - \tfrac{1}{6 \pi} (1 - 3 \beta) + o(1)$, for any $\bar
  x' \in \partial B_{\ep^{-\beta}}(\bar x_i)$, for $\ep \ll 1$. On the
  other hand, by (\ref{eq:val}) the same inequality holds for $\bar x'
  = \bar x_i$. The estimate then follows in view of arbitrariness of
  $\beta < \tfrac13$.

  Finally, since $\bar v_i(\bar x_i) > \tfrac12 \bar \delta
  \kappa^{-2} - \tfrac14 \sqrt[3]{9} - \delta$ when $\ep \ll 1$, and
  by Proposition \ref{p:disk}(ii) the values of $\bar r_i$ are close
  to the minimizers of $V_{v_i(\bar x_i)}(\bar r)$ for $\bar r >
  \sqrt[3]{3} - \delta$, by direct inspection we have $\bar r_i <
  \sqrt[3]{3} + \delta$ as well, for any $\delta > 0$ and $\ep \ll 1$.
\end{proof}

Let us point out that by (\ref{eq:bri}) the uniform convergence of the
droplet radii in Proposition \ref{p:ri} also implies uniform
convergence of $v_i(\bar x_i)$ to a space-independent constant as $\ep
\to 0$:
\begin{eqnarray}
  \label{eq:vlim}
  v_i(\bar x_i) \to \frac{1}{2 \kappa^2} \left( \bar \delta -
    \frac{\sqrt[3]{9}}{2} \,  \kappa^2  \right). 
\end{eqnarray}
In fact, from the proof of Proposition \ref{p:ri} we can conclude that
$v$ stays close to the constant in (\ref{eq:vlim}) in $\bar
\Omega_\beta$ for $\beta$ arbitrarily close to $\tfrac13$, provided
$\ep \ll 1$. This, in turn, implies that the droplets become uniformly
distributed in $\bar \Omega$, or, equivalently, in $\Omega$ as $\ep
\to 0$. Below we prove this fact, which also gives the leading order
behavior of energy in the limit.

Let us rewrite the energy $\bar E_N$ for the system of interacting
droplets, using (\ref{eq:Vbri}):
\begin{eqnarray}
  \label{eq:Eblim}
  \bar E_N = {1 \over |\ln \ep|} \sum_{i = 1}^N \left( V_{v_i(\bar
      x_i)}(\bar r_i) - 4 \pi v_i(\bar x_i) \bar r_i^2 \right)
  \nonumber \\ 
  + {4 \pi^2 \over |\ln \ep|^2} \sum_{i = 1}^{N-1} \sum_{j = i+1}^N 
  G(\ep^{1/3} |\ln \ep|^{-1/3} (\bar x_i - \bar x_j)) \bar r_i^2 \bar 
  r_j^2 + o(1).
\end{eqnarray}
To proceed, let us go back to the original scaling in $x$ and
introduce $x_i = \ep^{1/3} |\ln \ep|^{-1/3} \bar x_i$. Also, for any
$0 < \sigma \ll 1$ define (our method is reminiscent of the Ewald
summation technique)
\begin{eqnarray}
  \label{eq:75}
  G_\sigma (x) = {1 \over 4 \pi^2  \ep^{2 \sigma}} \sum_{\mathbf n \in 
    \mathbb Z^2} \int_{\mathbb R^2} e^{-{|x - y|^2 \over 2 \ep^{2 \sigma}}}
  K_0(\kappa |y - \mathbf n|) \, dy.
\end{eqnarray}
Here $G_\sigma$ is a mollified version of $G$, with Fourier transform
\begin{eqnarray}
  \label{eq:7}
  \hat G_\sigma(q) = \int_\Omega e^{i q \cdot x} G_\sigma(x) \, dx =
  {e^{-\frac12 \ep^{2 \sigma} |q|^2} \over \kappa^2 + |q|^2},
\end{eqnarray}
and which can, e.g., be estimated as
\begin{eqnarray}
  \label{eq:70}
  G_\sigma(x) = G(x) + o(\ep^{\sigma/4}), \qquad |x| > \ep^{\sigma/2},
\end{eqnarray}
and
\begin{eqnarray}
  \label{eq:92}
  G_\sigma(x) = O(\sigma |\ln \ep|), \qquad |x| <
  \ep^{\sigma}.
\end{eqnarray}
Therefore, in view of Lemmas \ref{l:EcC} and \ref{l:N} we can write
\begin{eqnarray}
  \label{eq:89}
  \bar E_N = {1 \over |\ln \ep|} \sum_{i = 1}^N \left( V_{v_i(\bar
      x_i)}(\bar r_i) - 4 \pi v_i(\bar x_i) \bar r_i^2 \right)
  \nonumber \\ 
  + {2 \pi^2 \over |\ln \ep|^2} \sum_{i = 1}^N \sum_{j =1}^N
  G_\sigma(x_i - x_j) \bar r_i^2 \bar
  r_j^2 + O(\sigma).
\end{eqnarray}

Now, let us introduce the quantity
\begin{eqnarray}
  \label{eq:rhoe}
  \rho(x) = {1 \over |\ln \ep|} \sum_{i = 1}^N \delta(x - x_i).
\end{eqnarray}
Note that by Lemma \ref{l:N} we have $\int_\Omega \rho(x) \, dx =
O(1)$. In view of Proposition \ref{p:ri} and (\ref{eq:vlim}), we can
further rewrite $\bar E_N$ as
\begin{eqnarray}
  \label{eq:94}
  \bar E_N = {1 \over |\ln \ep|} \sum_{i = 1}^N  V_{v_i(\bar
    x_i)}(\bar r_i) - {2 \pi \sqrt[3]{9} \over \kappa^2} \left( \bar
    \delta - \frac{\sqrt[3]{9}}{2} \, \kappa^2  \right) \int_\Omega
  \rho(x)  dx \nonumber \\  
  + 6 \pi^2 \sqrt[3]{3} \int_\Omega \int_\Omega \rho(x) G_\sigma (x - y)
  \rho(y) \, dx \, dy + O(\sigma).
\end{eqnarray}
In fact, by Proposition \ref{p:ri} and (\ref{eq:vlim}), the first term
in (\ref{eq:94}) goes to zero. Therefore, in terms of the Fourier
coefficients
\begin{eqnarray}
  \label{eq:95}
  \hat\rho_q = \int_\Omega e^{i q \cdot x} \rho(x) \, dx,
\end{eqnarray}
we can write
\begin{eqnarray}
  \label{eq:96}
  \bar E_N = - {2 \pi \sqrt[3]{9} \over \kappa^2} \left( \bar \delta - 
    \frac{\sqrt[3]{9}}{2} \, \kappa^2\right) \hat\rho_0 + {6 \pi^2
    \sqrt[3]{3} \over \kappa^2} \hat\rho_0^2  \nonumber \\ 
  + 6 \pi^2 \sqrt[3]{3} \sum_{q \in 2 \pi
    \mathbb Z^2 \backslash \{0\} } { e^{-\frac12 \ep^{2 \sigma} |q|^2}
    |\hat\rho_q|^2 \over \kappa^2 + |q|^2} + O(\sigma). 
\end{eqnarray}
Minimizing this with respect to $\hat\rho_0$, we obtain
\begin{eqnarray}
  \label{eq:97}
  \bar E_N \geq -{1 \over 2 \kappa^2} \left( \bar \delta -
    \frac{\sqrt[3]{9}}{2} \, \kappa^2 \right)^2 + 6 \pi^2 \sqrt[3]{3}
  \sum_{q \in 2 \pi \mathbb Z^2 \backslash \{0\} } {e^{-\frac12 \ep^{2
        \sigma} |q|^2} |\hat\rho_q|^2 \over \kappa^2 + |q|^2} +
  O(\sigma). 
\end{eqnarray}

Finally, from Lemma \ref{l:ubN} one can see that $\min \bar E_N \leq
-{1 \over 2 \kappa^2} \left( \bar \delta - \frac12 \sqrt[3]{9} \,
  \kappa^2 \right)^2 + o(1)$ for $\ep \ll 1$. Hence, in view of
arbitrariness of $\sigma$ the constant in (\ref{eq:97}) is the limit
of $\bar E_N$ and, by Proposition \ref{p:EEE}, also of $\bar E$, as
$\ep \to 0$. In addition, this implies that $\hat\rho_q \to 0$ as $\ep
\to 0$ for every $q \not = 0$. Thus, we just proved

\begin{proposition}
  \label{p:dens}
  Let $\bar \delta > \tfrac12 \sqrt[3]{9} \, \kappa^2$, and let $\rho$
  be defined in (\ref{eq:rhoe}), with $x_i = \ep^{1/3} |\ln
  \ep|^{-1/3} \bar x_i$, where $\bar x_i$ are as in Proposition
  \ref{p:disk}. Then
  \begin{eqnarray}
    \label{eq:98}
    \min \bar E \to -{1 \over 2 \kappa^2} \left( \bar \delta - 
      \frac{\sqrt[3]{9}}{2} \,  \kappa^2 \right)^2,  
  \end{eqnarray}
  and 
  \begin{eqnarray}
    \label{eq:99}
    \rho \to {1 \over 2 \pi \sqrt[3]{3}} \left( \bar \delta -
      \frac{\sqrt[3]{9}}{2} \, \kappa^2 \right)  
  \end{eqnarray}
  weakly in the sense of measures, as $\ep \to 0$.
\end{proposition}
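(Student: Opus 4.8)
The plan is to reduce both assertions to a single variational estimate for the reduced energy $\bar E_N$, evaluated on the droplet data $\{\bar r_i\},\{\bar x_i\}$ of a minimizer of $\bar E$ (as in Proposition~\ref{p:disk}), carried out in Fourier space. By Proposition~\ref{p:EEE} we already know $\min\bar E = \min\bar E_N + O(\ep^\alpha)$, and that $\bar E_N$ on the minimizing configuration equals $\min\bar E + O(\ep^\alpha)$, so it suffices to identify the limit of $\bar E_N$ on that configuration and, simultaneously, to show that the empirical density $\rho$ in \eqref{eq:rhoe} becomes spatially constant.

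The first step is to feed the uniform single-droplet asymptotics into $\bar E_N$. By Proposition~\ref{p:ri}, $\bar r_i \to \sqrt[3]{3}$ uniformly in $i$, whence, by \eqref{eq:vlim}, $v_i(\bar x_i)\to \frac{1}{2\kappa^2}(\bar\delta-\frac{\sqrt[3]{9}}{2}\kappa^2)$ uniformly; combined with Lemma~\ref{l:N} (so that $N/|\ln\ep| = O(1)$) the ``self-energy'' part $\frac{1}{|\ln\ep|}\sum_i V_{v_i(\bar x_i)}(\bar r_i)$ is $o(1)$, and Lemma~\ref{l:EcC} controls the remaining factors. This puts $\bar E_N$ into the form \eqref{eq:94}, a linear-plus-quadratic functional of the measure $\rho$, whose total mass is $O(1)$ by Lemma~\ref{l:N}.

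The crucial step is to make the interaction quadratic form manifestly positive definite, which is what substitutes for the coercivity of the droplet interaction that is not available when $N\gg1$. I would replace the kernel $G$ by the mollified kernel $G_\sigma$ of \eqref{eq:75}, whose Fourier transform \eqref{eq:7} is the strictly positive function $\hat G_\sigma(q)=e^{-\frac12\ep^{2\sigma}|q|^2}/(\kappa^2+|q|^2)$. Using \eqref{eq:70} off the diagonal, \eqref{eq:92} near the diagonal, and Lemmas~\ref{l:EcC} and~\ref{l:N} to bound the number of pairs, this costs only $O(\sigma)$ and yields \eqref{eq:89}. Passing to Fourier coefficients $\hat\rho_q$, $q\in 2\pi\mathbb Z^2$, and separating the mode $q=0$ as in \eqref{eq:96}, one discards the non-negative contributions of all $q\neq0$ and minimizes the remaining scalar quadratic $-A\hat\rho_0+B\hat\rho_0^2$, with $A,B>0$ read off from \eqref{eq:96}; this produces the lower bound \eqref{eq:97}, whose leading constant is exactly $-\frac{1}{2\kappa^2}(\bar\delta-\frac{\sqrt[3]{9}}{2}\kappa^2)^2$, attained at $\hat\rho_0 = A/(2B) = \frac{1}{2\pi\sqrt[3]{9}}(\bar\delta-\frac{\sqrt[3]{9}}{2}\kappa^2)$ (the constant appearing in Theorem~\ref{t:main}(iii) and on the right of \eqref{eq:99}).

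It then remains to match this with an upper bound, which is precisely what Lemma~\ref{l:ubN} supplies via an explicit, nearly uniform array of $O(|\ln\ep|)$ droplets of radius close to $\sqrt[3]{3}$: $\min\bar E_N \le -\frac{1}{2\kappa^2}(\bar\delta-\tfrac12\sqrt[3]{9}\,\kappa^2)^2+o(1)$. Since $\sigma>0$ is arbitrary, sending $\ep\to0$ and then $\sigma\to0$ squeezes $\bar E_N$ (hence, by Proposition~\ref{p:EEE}, $\min\bar E$) to the constant in \eqref{eq:98}; it also forces the non-negative Fourier sum in \eqref{eq:97} to be $o(1)+O(\sigma)$, so that for each fixed $q\neq0$, since $e^{-\frac12\ep^{2\sigma}|q|^2}\to1$ as $\ep\to0$ with $\sigma$ fixed, $\hat\rho_q\to0$; and it forces $-A\hat\rho_0+B\hat\rho_0^2\to-A^2/(4B)$, hence $\hat\rho_0\to A/(2B)$. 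The uniform mass bound from Lemma~\ref{l:N} makes the family $\{\rho\}$ tight on the torus $\Omega$, so convergence of every Fourier coefficient is equivalent to weak-$*$ convergence of $\rho$, in the sense of measures, to that constant, which is \eqref{eq:99}. The main obstacle is exactly this double limit together with the positivity trick: the mollification introduces the auxiliary parameter $\sigma$, and one must check that all error terms (the kernel regularization, the single-droplet errors, the gap to $\min\bar E_N$) are $o(1)$ uniformly in $\ep$ before $\sigma$ is sent to zero.
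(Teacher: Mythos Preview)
Your proposal is correct and follows essentially the same route as the paper: mollify $G$ to $G_\sigma$, use Proposition~\ref{p:ri} and \eqref{eq:vlim} to reduce $\bar E_N$ to a linear-plus-quadratic functional of $\rho$, pass to Fourier space and minimize over $\hat\rho_0$ to obtain the lower bound \eqref{eq:97}, then match against the upper bound from Lemma~\ref{l:ubN} and let $\sigma\to0$. Your added remark on tightness to pass from convergence of Fourier coefficients to weak-$*$ convergence of measures makes explicit a step the paper leaves implicit.
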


We end by noting that the homogenization approach to multi-droplet
patterns in a related context was first discussed in
\cite{m1:pre97}. Also, let us mention that in a related class of
problems existence of limiting density for the ground states of
particle systems interacting via potentials like our $G$ as the number
of particles goes to infinity was proved in \cite{kiessling99}. The
difference with our result, however, is that in \cite{kiessling99} the
limit is taken at fixed positive temperature, while in our case the
system's temperature (in the usual thermodynamic sense) is strictly
zero. Yet, as was pointed out in \cite{kiessling99}, the ``effective''
temperature of the system considered actually goes to zero as the
number of particles goes to infinity, making these results closely
related to ours.

\subsection{Fine structure of the transition point}
\label{sec:fine}

Finally, we briefly discuss the appearance of non-trivial minimizers
in the vicinity of the point $\bar \delta_m = \tfrac12 \sqrt[3]{9} \,
\kappa^2$ (this transition point was identified in
\cite{m:pre02,m:phd}). First, we note that by the results just
obtained the transition from trivial to non-trivial minimizers appears
to be quite abrupt in the limit $\ep \to 0$. In fact, in this limit
one goes immediately from no droplets to infinitely many droplets upon
crossing the point $\bar \delta = \bar \delta_m$ from below. 

Observe that the energy $\bar E[u]$ (and, equivalently, $E[u] -
E[-1]$) is a monotonically decreasing function of
$\bar\delta$. Therefore, a passage through the neighborhood of
$\bar\delta = \bar \delta_m$ at small but finite $\ep$ will result in a
monotonic increase of the number of droplets in a minimizer. This
number will quickly get large as one moves away from the transition
point. Therefore, in order to analyze droplet creation at the
transition, we need to further zoom in on the parameter region around
$\bar \delta = \bar \delta_m$. Let us introduce the renormalized
distance to the transition (with the transition point shifted
appropriately):
\begin{eqnarray}
  \label{eq:tau}
  \tau = {|\ln \ep| \over \kappa^2} \left( \bar \delta - {\sqrt[3]{9}
      \, \over 2} \, \kappa^2 - {\ln |\ln \ep| \over 2 \sqrt[3]{3} \,
      |\ln  \ep|} \, \kappa^2 \right),  
\end{eqnarray}
and consider the behavior of energy $\bar E$ in the limit $\ep \to 0$
for $\tau = O(1)$. As can be easily seen, all the estimates obtained
previously remain valid in this case, and minimizers are close to a
collection of $N$ disks separated by large distances, whose energy is
given by $\bar E_N$ to $O(\ep^\alpha)$. We can also write the energy
$\bar E_N$ in the form
\begin{eqnarray}
  \label{eq:Etau}
  \bar E_N = \sum_{i=1}^N \bar E_1(\bar r_i) + 4 \pi^2 |\ln
  \ep|^{-2} \sum_{i=1}^{N-1} \sum_{j=i+1}^N G(x_i - x_j) \bar r_i^2
  \bar r_j^2,
\end{eqnarray}
where 
\begin{eqnarray}
  \label{eq:108}
   |\ln \ep| \bar E_1(\bar r) = 2 \pi \left( \bar r - \tfrac12
     \sqrt[3]{9} \, \bar r^2 + \tfrac16 \bar r^4 \right) \nonumber \\ 
  + \tfrac13 \pi |\ln \ep|^{-1} \ln |\ln \ep| \, (\bar r^2 -
  \sqrt[3]{9}) \bar r^2 \nonumber \\ 
  - \pi |\ln \ep|^{-1} (\bar r^2 (\ln \bar \kappa
  \bar r - \tfrac14) + 2 \tau ) \bar r^2
\end{eqnarray}
is the energy of one disk-shaped droplet of radius $\bar r$.

It is easy to see that in the limit $\ep \to 0$ we have $\bar E_1(\bar
r) \geq 0$ for all $\bar r> 0$, and $\bar E_1(\bar r) = 0$ if and only
of $\bar r = \sqrt[3]{3}$. Therefore, $\bar r_i \to \sqrt[3]{3}$
uniformly in a minimizer as $\ep \to 0$ with $\tau$ fixed. In fact, by
convexity of $\bar E_1$ near $\bar r = \sqrt[3]{3}$ we have $\bar r_i
- \sqrt[3]{3} = O(|\ln \ep|^{-1} \ln |\ln \ep|)$ in the limit $\ep \to
0$. Therefore, we obtain (the summation is absent in the formula, if
$N = 1$)
\begin{eqnarray}
  \label{eq:EN33}
  \bar E_N = 12 \sqrt[3]{3} \, \pi^2 |\ln \ep|^{-2} \Biggl\{
  \sum_{i=1}^{N-1}  \sum_{j=i+1}^N G(x_i - x_j) \hspace{2.5cm} \nonumber
  \\  
  - {N \over 4 \pi} \Biggl( \ln \bar \kappa + \frac13 \ln 3 - \frac14 
  + {2 \tau \over \sqrt[3]{9} } \Biggr) \Biggr\} + o(|\ln \ep|^{-2}).   
\end{eqnarray}
From this expression it is easy to see that $N = O(1)$ quantity. Thus,
in this case the problem reduces to minimizing the pair interaction
potential given by the sum in (\ref{eq:EN33}). We summarize the above
discussion by stating the following result.

\begin{proposition}
  \label{p:E33}
  Let $\bar \delta$ be given by (\ref{eq:tau}) with $\tau$
  fixed. Then, there exists a strictly monotonically increasing
  sequence of numbers $(\tau_n)$, with $\tau_n \to \infty$ as $n \to
  \infty$, such that, provided that $\ep \ll 1$:
  \begin{itemize}
  \item[(i)] If $\tau < \tau_1 = \frac{1}{24} \sqrt[3]{9} \, (3 - 4
    \ln 3 - 12 \ln \bar \kappa)$, then there are no non-trivial
    minimizers of $E$.
  \item[(ii)] If $\tau_1 < \tau < \tau_2$, with $\tau_2 = \tau_1 + 2
    \pi \sqrt[3]{9} \, \min G$, the minimizer of $E$ is a single
    droplet.
  \item[(iii)] If $\tau_n < \tau < \tau_{n+1}$, all minimizers of $E$
    consist of precisely $n$ droplets. The droplet centers $\{x_i\}$
    nearly minimize $V = \displaystyle \sum_{i=1}^{n-1} \sum_{j=i+1}^n
    G(x_i - x_j)$.
  \end{itemize}
\end{proposition}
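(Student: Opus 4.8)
\emph{Proof strategy.} The plan is to reduce the proposition to an elementary one‑parameter discrete optimization sitting on top of the estimates already established. As remarked just above the statement, in the regime (\ref{eq:tau}) with $\tau = O(1)$ every estimate of Sections~\ref{sec:lb}--\ref{sec:lim} remains valid: a minimizer of $\bar E$ is a collection of $N = O(1)$ well‑separated, nearly circular droplets whose energy equals $\bar E_N$ up to $O(\ep^\alpha)$ (Proposition~\ref{p:EEE}), and whose radii converge uniformly to $\sqrt[3]{3}$ with $\bar r_i - \sqrt[3]{3} = O(|\ln\ep|^{-1}\ln|\ln\ep|)$ by convexity of $\bar E_1$ near $\bar r = \sqrt[3]{3}$. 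Substituting this into (\ref{eq:Etau})--(\ref{eq:EN33}) collapses the problem, to leading order, to minimizing over $N\ge 0$ the function $g_B(N) := m_N - NB$, where
\[
  m_N := \min_{x_1,\dots,x_N\in\Omega}\ \sum_{1\le i<j\le N} G(x_i-x_j)
\]
and $B = B(\tau)$ is the affine function read off from (\ref{eq:EN33}), which is strictly increasing with slope $\tfrac{1}{2\pi\sqrt[3]{9}}$ and is normalized so that $B(\tau_1) = 0$, $\tau_1$ being as in the statement. Concretely, I would first record that $\min\bar E = 12\sqrt[3]{3}\,\pi^2\,|\ln\ep|^{-2}\,( \min_N g_B(N) ) + o(|\ln\ep|^{-2})$ — the correction terms (the $O(\ep^\alpha)$ from Proposition~\ref{p:EEE} and the $o(|\ln\ep|^{-2})$ from (\ref{eq:EN33})) being $\ll|\ln\ep|^{-2}$, so that for $\ep\ll 1$ the discrete structure is resolved.

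Next I would collect the elementary properties of $(m_N)$: $m_0 = m_1 = 0$, $m_2 = \min G > 0$, $m_N$ nondecreasing, and $m_N \ge \binom{N}{2}\min G$ since each pair contributes at least $\min G>0$. The one substantive input is that $(m_N)$ is \emph{strictly convex}, i.e.\ $d_N := m_N - m_{N-1}$ is strictly increasing; combined with $m_N\ge\binom{N}{2}\min G$ this gives $(N-1)d_N\ge m_N\ge\tfrac12 N(N-1)\min G$, hence $d_N\ge\tfrac12 N\min G\to\infty$, while $d_1 = 0$ and $d_2 = \min G$ follow from the definitions. Granting strict convexity, the structure of $\arg\min_N g_B(N)$ is immediate, since $g_B(N) - g_B(N-1) = d_N - B$ is increasing in $N$ and hence changes sign at most once, from $-$ to $+$: the minimizer is the unique $n(B) = \#\{N\ge 1 : d_N < B\}$, which equals $0$ for $B < 0$ and is a nondecreasing integer step function jumping by exactly one at each $B = d_n$, $n\ge 1$. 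Defining $\tau_n$ by $B(\tau_n) = d_n = m_n - m_{n-1}$ and using that $B$ is affine increasing, one obtains $\tau_n = \tau_1 + 2\pi\sqrt[3]{9}\,(m_n - m_{n-1})$, a strictly increasing sequence with $\tau_n\to\infty$; the cases $n = 1, 2$ recover the displayed $\tau_1$ and $\tau_2 = \tau_1 + 2\pi\sqrt[3]{9}\min G$.

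It then remains to transfer this back to $\bar E$, hence $E$. For $\tau\in(\tau_n,\tau_{n+1})$, equivalently $B\in(d_n,d_{n+1})$, the function $g_B$ has a strict minimum at $N = n$ with a gap bounded below by a positive constant independent of $\ep$; since $\min\bar E$ agrees with $12\sqrt[3]{3}\pi^2|\ln\ep|^{-2}g_B(n)$ up to $o(|\ln\ep|^{-2})$, any minimizer of $\bar E$ must have precisely $n$ components for $\ep\ll 1$. The same accounting, read on such a minimizer, yields $\sum_{i<j}G(x_i-x_j)\le m_n + o(1)$ for its droplet centers $x_i = \ep^{1/3}|\ln\ep|^{-1/3}\bar x_i$, i.e.\ they nearly minimize $V$ — this is (iii); the cases $n = 0$ ($\tau<\tau_1$, $B<0$, trivial minimizer) and $n = 1$ ($\tau_1<\tau<\tau_2$, $0<B<\min G$, single droplet) are (i) and (ii).

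I expect the only genuine obstacle to be \textbf{strict convexity of $m_N$}. The cheap averaging arguments — removing a point from, or adding one to, an optimal configuration — only produce the multiplicative bound $m_{N+1}\ge\tfrac{N+1}{N-1}m_N$ and the estimate $m_{N+1}-m_N\le N\kappa^{-2}$, which together merely reproduce $m_N\ge\binom{N}{2}\min G$ and are too weak — already in the strong‑screening regime $\min G\ll\kappa^{-2}$ — to give $m_{N+1}+m_{N-1}>2m_N$. Establishing the latter will require a finer comparison of the optimal $N$‑ and $(N\pm1)$‑point configurations exploiting the positive‑definiteness of $G = (\kappa^2-\Delta)^{-1}$ on the torus (or a known convexity property of ground‑state energies of repulsive pair potentials). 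Everything else above is bookkeeping layered on the already‑proved Propositions~\ref{p:EEE}, \ref{p:ri} and \ref{p:beta}.
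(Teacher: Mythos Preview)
Your reduction is exactly the one the paper uses. In fact the paper does not give a separate proof of the proposition at all: the text preceding it derives (\ref{eq:EN33}), observes that $N = O(1)$, remarks that ``the problem reduces to minimizing the pair interaction potential given by the sum in (\ref{eq:EN33}),'' and then simply \emph{states} Proposition~\ref{p:E33} as a summary. Your framework --- minimizing $g_B(N) = m_N - NB$ with $B$ affine in $\tau$, reading off $\tau_1$ from $B = 0$ and $\tau_2$ from $B = \min G$ --- is precisely what (\ref{eq:EN33}) encodes, and your verification of parts~(i) and~(ii) via $m_1 = 0$, $m_2 = \min G$, $m_N \ge \binom{N}{2}\min G$ is correct and more explicit than anything the paper writes down.

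You have also correctly isolated the one genuine issue: the \emph{strict} monotonicity of $(\tau_n)$ claimed in the proposition is equivalent to strict convexity of $N \mapsto m_N$, and the paper does not prove this --- it is simply asserted as part of the summarizing statement. Your observation that the obvious removal/insertion arguments give only $m_{N+1} \ge \tfrac{N+1}{N-1}\,m_N$ and not $m_{N+1} + m_{N-1} > 2 m_N$ is accurate, and the paper offers no additional input here. So the gap you flag is present in the paper's own treatment, not a deficiency of your approach relative to it; everything else in your proposal (the $o(|\ln\ep|^{-2})$ bookkeeping, the transfer back to $\bar E$ via the strict gap in $g_B$) is sound and matches the paper's implicit reasoning.
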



Let us mention that local minimizers of $E$ without screening
(i.e. with $\kappa \to 0$) which are close to disks of the same radius
centered at the minimizers of $V$ were constructed perturbatively in a
recent work of Ren and Wei \cite{ren07jns,ren07rmp}. We note that when
$\tau = O(1)$, existence of these solutions easily follows from our
analysis, if one notices that in the considered regime the excess
energy of a minimizing sequence controls the isoperimetric deficit of
each droplet and enforces $O(1)$ distance between them. Therefore,
solutions with a prescribed number of droplets may be obtained by
minimizing over all $u \in BV(\Omega; \{-1,1 \})$, such that the
support of $1 + u$ has a fixed number of disjoint components. In turn,
by Proposition \ref{p:E33} the global minimizers of $E$ belong to this
class.

\section{Connection to the diffuse interface energy}
\label{sec:conn}

We now turn to the study of the relationship between the sharp
interface energy $E$ and the diffuse interface energy $\mathcal
E$. Since most of our analysis here does not rely on any particular
assumptions on the dimensionality of space, we will treat the general
case of $\Omega$ being a $d$-dimensional torus: $\Omega = [0, 1)^d$.
We assume that $W$ is a symmetric double-well potential with
non-degenerate minima at $u = \pm 1$, together with some natural
technical assumptions:

\begin{enumerate}
\item[(i)] $W \in C^3(\mathbb R)$, $W(u) = W(-u)$, and $W \geq 0$, 

\item [(ii)] $W(+1) = W(-1) = 0$ and $W''(+1) = W''(-1) > 0$.

\item [(iii)] $W''(|u|)$ is monotonically increasing for $|u| \geq 1$,
  $\lim_{|u| \to \infty} W''(u) = +\infty$, and $|W'(u)| \leq C (1 +
  |u|^q)$, for some $C > 0$ and $q > 1$, with $q < {d + 2 \over d -
    2}$ if $d > 2$.
\end{enumerate}
Since we are setting the surface tension to $\ep$, we need to
additionally normalize $W$ as follows:
\begin{enumerate}
\item[(iv)] We have
\begin{eqnarray}
  \label{eq:W0}
  \int_{-1}^1 \sqrt{2 W(u)} \, du = 1.
\end{eqnarray}

\end{enumerate}
Note that these assumptions are satisfied for, e.g., the rescaled
version of the classical Ginzburg-Landau energy: $W(u) = \tfrac{9}{32}
(1 - u^2)^2$ for $d \leq 3$. Also note that this assumption is not
restrictive, since it is always possible to make (\ref{eq:W0}) hold by
an appropriate rescaling.

Let us begin our analysis with a few general observations. First of
all, it is clear from standard arguments (see e.g. \cite{struwe}) that
for any $\ep > 0$ there exists a minimizer $u \in H^1(\Omega)$ of
$\mathcal E$ satisfying $\int_\Omega u \, dx = \bar u$. Note that any
critical point $u$ of $\mathcal E$, including minimizers, is a weak
solution of the Euler-Lagrange equation (here and below $G_0$ solves
(\ref{eq:G0}) with periodic boundary conditions and has zero mean)
\begin{eqnarray}
  \label{eq:ELE}
  \ep^2 \Delta u - W'(u) - v + \mu = 0, \qquad v = \int_\Omega G_0(x - y)
  (u(y) - \bar u) dy,  
\end{eqnarray}
where
\begin{eqnarray}
  \label{eq:mu}
  \mu = \int_\Omega W'(u) \, dx
\end{eqnarray}
is the Lagrange multiplier. Furthermore, from the Sobolev imbedding
theorem we have $u \in L^p(\Omega)$ for $p = {2 d \over d - 2}$, and
hence $v \in W^{2,p}(\Omega) \subset C^{0,\alpha}(\Omega)$, for some
$\alpha \in (0, 1)$, if $d < 6$. Applying Moser iteration
\cite{struwe}, we then find that $u \in L^p(\Omega)$, for any $p <
\infty$. Therefore, by standard elliptic regularity theory
\cite{gilbarg}, we also have $u \in W^{2,p}(\Omega)$, so $u \in
L^\infty(\Omega)$ and is, in fact, a classical solution of
(\ref{eq:ELE}).

We now show that $u$ is uniformly bounded and that $|u|$ cannot much
exceed 1 whenever $\mathcal E [u]$ is sufficiently small, at least for
$d$ not too high.
\begin{proposition}
  \label{p:1del}
  Let $d < 6$ and let $u$ be a critical point of $\mathcal E$. Then,
  for every $\delta > 0$ we have $|u| < 1 + \delta$ and $|v| < \delta$
  in $\Omega$, whenever $\mathcal E[u]$ is sufficiently small.
\end{proposition}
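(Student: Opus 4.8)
The plan is to extract a few consequences from the smallness of $\mathcal E[u]$ and then run a bootstrap that terminates with the maximum principle applied to the Euler--Lagrange equation \eqref{eq:ELE}.

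\emph{Preliminary bounds.} First I would observe that the non-local term of $\mathcal E$ equals $\tfrac12\|\nabla v\|_{L^2(\Omega)}^2\ge 0$ and that $\tfrac{\ep^2}{2}\|\nabla u\|_{L^2(\Omega)}^2\ge 0$, so that $\mathcal E[u]$ small gives both $\int_\Omega W(u)\,dx\le \mathcal E[u]$ and $\|\nabla v\|_{L^2(\Omega)}^2\le 2\mathcal E[u]$; since $v$ has zero mean, Poincar\'e's inequality then gives $\|v\|_{L^2(\Omega)}\le C\,\mathcal E[u]^{1/2}$. Moreover, assumption (iii) forces $W$ to grow at least quadratically at infinity, i.e. $W(t)\ge ct^2-C$ for all $t$, so $\int_\Omega W(u)\,dx\le 1$ also yields $\|u\|_{L^2(\Omega)}\le C$.

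\emph{A crude uniform sup-bound on $u$.} Next I would control $v$ in $L^\infty$. From $-\Delta v=u-\bar u$ and $\|u\|_{L^2}\le C$ we get $\|v\|_{W^{2,2}(\Omega)}\le C$, hence $\|v\|_{L^\infty(\Omega)}\le C$ with $C$ independent of $\ep$ when $d\le 3$ (in particular in the case $d=2$ of interest); for $3<d<6$ one uses instead $u\in L^{2^*}(\Omega)$ (via $H^1\hookrightarrow L^{2^*}$ together with $\|\nabla u\|_{L^2}\le C\ep^{-1}\mathcal E[u]^{1/2}$) and the embedding $W^{2,2^*}\hookrightarrow L^\infty$, which holds precisely for $d<6$. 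Now evaluate \eqref{eq:ELE} at an interior maximum point and at an interior minimum point of the $C^2$ function $u$ (these exist since $\Omega$ is a torus): using $\ep^2\Delta u\le 0$, resp.\ $\ge 0$, there, $\|v\|_{L^\infty}\le C$, and the fact that $W'(s)\to\pm\infty$ as $s\to\pm\infty$ while $W'$ is bounded on bounded sets, one sees that if $u$ were large somewhere then it would be large everywhere, whence $\int_\Omega W(u)\,dx$ would be large --- contradicting $\mathcal E[u]\le 1$. Hence $\|u\|_{L^\infty(\Omega)}\le C_0$ and, consequently, $|\mu|=\bigl|\int_\Omega W'(u)\,dx\bigr|\le C$, both uniform over small-energy critical points.

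\emph{Bootstrap and conclusion.} With $\|u\|_{L^\infty}\le C_0$ now fixed, elliptic regularity for $-\Delta v=u-\bar u$ improves the bound on $v$ to $\|v\|_{C^{1,\alpha}(\Omega)}\le C$, and interpolating this with $\|v\|_{L^2}\le C\mathcal E[u]^{1/2}$ (comparing $\|v\|_{L^2}^2$ with $(\tfrac12\|v\|_{L^\infty})^2$ times the volume of a ball of radius $\tfrac12\|v\|_{L^\infty}\|\nabla v\|_{L^\infty}^{-1}$) gives $\|v\|_{L^\infty(\Omega)}\le C\,\mathcal E[u]^{1/(d+2)}$. For the multiplier I would split $\Omega$ into the set where $\mathrm{dist}(u,\{-1,1\})\le\delta_0$, on which $|W'(u)|\le C\sqrt{W(u)}$ so that its integral is $\le C\mathcal E[u]^{1/2}$ by Cauchy--Schwarz, and its complement, which has measure $O(\mathcal E[u])$ (there $W$ is bounded below by a positive constant) and on which $|W'(u)|\le C$ since $\|u\|_{L^\infty}\le C_0$; this gives $|\mu|\le C\,\mathcal E[u]^{1/2}$. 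Thus $\|v\|_{L^\infty}+|\mu|\to 0$ as $\mathcal E[u]\to 0$. Finally, re-running the maximum principle, $W'(\max u)\le \mu-\min v\le \|v\|_{L^\infty}+|\mu|$; since $W'(s)\ge W''(1)(s-1)$ for $s\ge 1$, this forces $\max u\le 1+(\|v\|_{L^\infty}+|\mu|)/W''(1)$, and symmetrically $\min u\ge -1-(\|v\|_{L^\infty}+|\mu|)/W''(1)$. Taking $\mathcal E[u]$ small enough that $\|v\|_{L^\infty}+|\mu|<W''(1)\delta$ yields $|u|<1+\delta$ and $|v|<\delta$ in $\Omega$.

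The main obstacle I anticipate is the circular coupling between $u$ and $(v,\mu)$: a sup-bound on $u$ needs control of $v,\mu$, while smallness of $v,\mu$ needs a sup-bound on $u$. This is what forces the two-stage structure above --- first a crude but uniform $L^\infty$ bound on $u$, then the bootstrap that drives $v$ and $\mu$ to zero. A secondary nuisance is $d\in\{4,5\}$, where quadratic growth of $W$ does not control $u$ in $L^p$ for $p>d/2$, so one also uses the $\ep$-weighted gradient term and accepts an $\ep$-dependent threshold on $\mathcal E[u]$ (harmless for the application; for $d\le 3$ everything is $\ep$-uniform). The elementary facts about $W$ invoked --- at-least-quadratic growth, $W'(s)\ge W''(1)(s-1)$ for $s\ge 1$, and $|W'(t)|\le C\sqrt{W(t)}$ near $t=\pm 1$ --- follow directly from assumptions (i)--(iii).
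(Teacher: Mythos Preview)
Your proof is correct and follows essentially the same two-stage strategy as the paper: first obtain a crude uniform $L^\infty$ bound on $u$ (and hence on $v$ and $\mu$), then use smallness of $\mathcal E[u]$ to drive $\|v\|_{L^\infty}$ and $|\mu|$ to zero, and finally re-apply the maximum principle to \eqref{eq:ELE} to pin $|u|$ below $1+\delta$. The one tactical difference is in the crude bound: the paper estimates $\mu$ directly as $\mu\le C+\tfrac12 W'(u_m)$ by splitting the integral defining $\mu$ over $\{|u|\le 1+\delta\}$ and its small-measure complement, and then reaches a contradiction at the maximum of $u$; you instead combine the maximum and minimum points to bound the oscillation of $W'(u)$ by $2\|v\|_{L^\infty}$ and conclude that $u$ large somewhere forces $u$ large everywhere, contradicting $\int_\Omega W(u)\,dx$ small. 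Your version is a bit more quantitative (explicit rates for $\|v\|_{L^\infty}$ and $|\mu|$ in terms of $\mathcal E[u]$) and treats the $d\in\{4,5\}$ case more carefully; the paper's is slightly more direct.
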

\begin{proof}
  Observe first that for every $\delta > 0$ and $\mathcal E[u]$ small
  enough we have $|\{|u| > 1 + \delta \}| < \tfrac12$. Now, suppose
  that the maximum value $u_m$ of $|u|$ is greater than $1 +
  \delta$. Without a loss of generality, we may assume that $u_m =
  \max u$. By the preceding observation, we have $\mu \leq C +
  \tfrac12 W'(u_m)$, for some $C > 0$ independent of $u_m$. Therefore,
  in view of the monotonic increase to infinity of $W'(u)$ due to
  hypothesis (iii) on $W$, we have $\mu \leq \tfrac34 W'(u_m)$ for
  $u_m$ sufficiently large. Now, taking into account that $v \geq - C'
  u_m$ for some $C' > 0$ and large enough $u_m$, from (\ref{eq:ELE})
  we find that $\ep^2 \Delta u \geq \tfrac14 W'(u_m) - C' u_m> 0$ at
  the point where $u = u_m$, in view of assumption (iii) on $W$,
  contradicting the maximality of $u$. Finally, to see that $|u| < 1 +
  \delta$ with any $\delta > 0$, when $\mathcal E[u] \ll 1$, note that
  $u \to \pm 1$ a.e. when $\mathcal E[u] \to 0$. Hence, in view of the
  uniform bound on $u$ obtained earlier, we have $\mu \to
  0$. Furthermore, since the non-local term in the energy can be
  written as $\tfrac12 \int_\Omega |\nabla v|^2 dx$, and $v$ is
  uniformly bounded in $W^{2,p}(\Omega)$ for any $p < \infty$, we also
  have $v \to 0$ uniformly in $\Omega$ in this limit. Therefore, in
  view of positivity of $W'(1 + \delta)$, we can apply the same
  argument as above to complete the proof of the statement. 
\end{proof}

We note that while the arguments above hold for every critical point
$\mathcal E$ with small energy, it is generally possible for a local
minimizer of $\mathcal E$ to strongly deviate from $\pm 1$ in most of
$\Omega$: take, for instance, one-dimensional periodic solutions of
(\ref{eq:ELE}) with period $O(1)$
\cite{ko:jetp80,ko:book,mimura80}. Of course, these critical points
will have $O(1)$ energy when $\ep \to 0$, as opposed to minimizers of
$\mathcal E$ whose energy vanishes in this limit. Let us also mention
that numerical evidence shows that generally $\max |u| > 1$, even for
minimizers and $\ep \ll 1$. 



We now turn to estimating the minimal energy of $\mathcal E$ from
below by the minimal energy of $E$. For $u \in H^1(\Omega)$ with $\int
u \, dx = \bar u$, let us separate the domain $\Omega$ into three
pairwise-disjoint subdomains:
\begin{eqnarray}
  \label{eq:Omdel}
  \Omega = \Omega^\delta_+ \cup \Omega^\delta_- \cup \Omega^\delta_0.   
\end{eqnarray}
where
\begin{eqnarray}
  \label{eq:Ommdel}
  && \Omega^\delta_+ =  \{ x \in \Omega : u(x) \geq 1 - \delta \}, \\
  && \Omega^\delta_- = \{ x \in \Omega : u(x) \leq -1 + \delta \}, \\
  && \Omega^\delta_0 = \{ x \in \Omega : -1 + \delta < u(x) < 1 -
  \delta \}, 
\end{eqnarray}
Next, let us introduce the following three auxiliary functionals (for
simplicity of notation, we will suppress the index $\delta$ in the
definition of each functional):
\begin{eqnarray}
  \label{eq:EE1}
  \mathcal E_1[u] & = & \int_{\Omega^\delta_0} \left( \frac{\ep^2}{2}
    |\nabla u|^2 + W(u) \right) dx, \\
  \label{eq:EE2}
  \mathcal E_2[u] & = & {1 \over 2 \kappa^2} \int_{\Omega^\delta_+ \cup
    \Omega^\delta_-} (u - u_0)^2 dx \nonumber \\ 
  & + & \frac12 \int_\Omega  \int_\Omega
  (u(x) - \bar u) G_0(x - y) (u(y) - \bar u) dx dy,
\end{eqnarray}
where $u_0(x) = \pm 1$ whenever $x \in \Omega^\delta_\pm$,
respectively, with
\begin{eqnarray}
  \label{eq:kappa}
  \kappa = {1 \over \sqrt{W''(1)}},
\end{eqnarray}
and
\begin{eqnarray}
  \label{eq:EE3}
  \mathcal E_3[u] = \int_{\Omega^\delta_+ \cup
    \Omega^\delta_-} \left( W(u) - {1 \over 2 \kappa^2} (u - u_0)^2
  \right) dx. 
\end{eqnarray}
It is clear that the energy $\mathcal E$ can be estimated from below
as
\begin{eqnarray}
  \label{eq:EE}
  \mathcal E \geq \mathcal E_1 + \mathcal E_2 + \mathcal E_3.
\end{eqnarray}
Hence, we are going to establish a lower bound for $\mathcal E$ by
considering the lower bounds for each term in the sum above.

We start with the part of energy that is associated with the
interfaces:
\begin{lemma}
  \label{l:E1}
  Let $\delta > 0$ be sufficiently small, let $u \in H^1(\Omega)$ and
  suppose that $|\Omega^\delta_- \cup \Omega^\delta_+| > 0$.  Then
  there exists $u_0 \in BV(\Omega; \{-1, 1\})$ such that $u_0(x) = \pm
  1$ whenever $x \in \Omega^\delta_\pm$, and
  \begin{eqnarray}
    \label{eq:EE10}
    \mathcal E_1[u] \geq  {\ep \over 2} (1 - a_1 \delta^2) \int_\Omega
    |\nabla u_0| dx
  \end{eqnarray}
  for some $a_1 > 0$ independent of $\delta$ and $\ep$.
\end{lemma}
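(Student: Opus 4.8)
The plan is to run the classical De Giorgi/Modica--Mortola slicing argument on the transition set $\Omega^\delta_0$, keeping constants sharp so that truncating the optimal profile at distance $\delta$ from $\pm1$ costs only $O(\delta^2)$, and then to extract a single $BV$ competitor $u_0$ by a mean-value argument over the level sets of $u$.

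First I would apply Young's inequality pointwise on $\Omega^\delta_0$, $\tfrac{\ep^2}{2}|\nabla u|^2 + W(u) \ge \ep\sqrt{2W(u)}\,|\nabla u|$, so that $\mathcal E_1[u] \ge \ep\int_{\Omega^\delta_0}\sqrt{2W(u)}\,|\nabla u|\,dx$. Since $u\in H^1(\Omega)\subset W^{1,1}(\Omega)$ on the flat torus $\Omega$, the coarea formula applied with the nonnegative Borel weight $g(s)=\sqrt{2W(s)}\,\chi_{(-1+\delta,\,1-\delta)}(s)$ gives
\[
  \int_{\Omega^\delta_0}\sqrt{2W(u)}\,|\nabla u|\,dx \;=\; \int_\Omega g(u)\,|\nabla u|\,dx \;=\; \int_{-1+\delta}^{\,1-\delta}\sqrt{2W(t)}\;P(\{u>t\};\Omega)\,dt,
\]
where $P(\,\cdot\,;\Omega)$ denotes perimeter. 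For $t$ in this range put $A_t=\{u>t\}$ and $u_0^{\,t}=2\chi_{A_t}-1\in BV(\Omega;\{-1,1\})$, so $\int_\Omega|\nabla u_0^{\,t}|\,dx=2P(A_t;\Omega)$; since $u\ge 1-\delta>t$ on $\Omega^\delta_+$ and $u\le -1+\delta<t$ on $\Omega^\delta_-$, each $u_0^{\,t}$ equals $+1$ on $\Omega^\delta_+$ and $-1$ on $\Omega^\delta_-$, as required. Hence
\[
  \mathcal E_1[u] \;\ge\; \frac{\ep}{2}\int_{-1+\delta}^{\,1-\delta}\sqrt{2W(t)}\Bigl(\int_\Omega|\nabla u_0^{\,t}|\,dx\Bigr)\,dt .
\]

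Next I would select a good slice. With $M_\delta=\int_{-1+\delta}^{1-\delta}\sqrt{2W(t)}\,dt$ and $f(t)=\int_\Omega|\nabla u_0^{\,t}|\,dx$, the coarea formula also gives $\int_{-1+\delta}^{1-\delta}f(t)\,dt\le 2\int_\Omega|\nabla u|\,dx<\infty$, so $f$ is finite a.e.; and (using that $W>0$ on $(-1,1)$) the set of $t$ at which $f(t)$ does not exceed its $\sqrt{2W}$-weighted average $M_\delta^{-1}\int_{-1+\delta}^{1-\delta}\sqrt{2W(t)}f(t)\,dt$ has positive measure. Picking $t^*$ in the intersection of these two sets and setting $u_0=u_0^{\,t^*}$ yields $\mathcal E_1[u]\ge\frac{\ep M_\delta}{2}\int_\Omega|\nabla u_0|\,dx$, with $u_0\in BV(\Omega;\{-1,1\})$ since $f(t^*)<\infty$. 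It remains to bound $M_\delta$: by the normalization (iv), $\int_{-1}^1\sqrt{2W}=1$, so by the symmetry of $W$, $M_\delta=1-2\int_{1-\delta}^1\sqrt{2W(t)}\,dt$; and since $W\in C^3$ with $W(1)=W'(1)=0$, $W''(1)=\kappa^{-2}>0$ (assumptions (i)--(ii)), a Taylor expansion gives $\sqrt{2W(t)}=\kappa^{-1}(1-t)\bigl(1+O(1-t)\bigr)$ near $t=1$, whence $\int_{1-\delta}^1\sqrt{2W}=\tfrac12\kappa^{-1}\delta^2(1+O(\delta))$ and $M_\delta\ge 1-a_1\delta^2$ for $\delta$ small, with any fixed $a_1>\kappa^{-1}$. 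Combining the last three displays gives (\ref{eq:EE10}).

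The argument is entirely routine; the only points needing a little care are the slice selection — choosing $t^*$ so that it simultaneously realizes a finite perimeter and the below-average bound, which is why I intersect two positive-measure sets — and the elementary quadratic estimate for $M_\delta$, which is exactly where the nondegeneracy of the wells and the normalization of $W$ enter. The hypothesis $|\Omega^\delta_-\cup\Omega^\delta_+|>0$ plays no role in the inequality itself (which is vacuous otherwise); it merely records when the statement is non-trivial.
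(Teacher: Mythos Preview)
Your argument is correct and follows essentially the same Modica--Mortola route as the paper: Young's inequality plus the coarea formula, followed by selection of a single level $t^*$ and the quadratic estimate on $\int_{-1+\delta}^{1-\delta}\sqrt{2W}$. The only cosmetic difference is that the paper first approximates $u$ by piecewise-linear functions so that $t\mapsto|\{\tilde u=t\}|$ is continuous and the mean-value theorem for integrals applies literally, whereas you work directly with $u\in H^1$ and use a below-the-weighted-average selection; your version is in fact slightly cleaner since it avoids the approximation and the limit passage.
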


\begin{proof}
  First of all, if either $|\Omega^\delta_+|$ or $|\Omega^\delta_-|$
  is zero, we can simply choose $u_0$ to be constant (e.g. $u_0 = -1$
  when $|\Omega^\delta_+| = 0$). So, let us assume that both
  $|\Omega^\delta_\pm| > 0$, and approximate $u$ in $H^1(\Omega)$ by a
  piecewise linear function $\tilde u$ with $\nabla \tilde u \not = 0$
  almost everywhere in $\Omega$. Then, using the Modica-Mortola trick
  \cite{modica77,modica87} and the co-area formula \cite{attouch}, we
  find
  \begin{eqnarray}
    \label{eq:coarea}
    \mathcal E_1[\tilde u] \geq \ep \int_{|\tilde u| < 1 - \delta} \sqrt{2 
      W(\tilde u)} \, |\nabla \tilde u| \, dx = \ep \int_{-1+
      \delta}^{1-\delta} \sqrt{2 W(t)} \, |\{ \tilde u = t\}| \, dt.
  \end{eqnarray}
  Since the function $|\{ \tilde u = t \}|$ is continuous for all $t
  \in [-1 + \delta, 1 - \delta]$, there exists a constant $c \in
  [-1 + \delta, 1 - \delta]$ such that the right-hand side of
  (\ref{eq:coarea}) equals $|\{ \tilde u = c \}| \int_{-1 + \delta}^{1
    - \delta} \sqrt{2 W(t)} \, dt \geq (1 - a_1 \delta^2) |\{ \tilde u
  = c \}|$, for some $a_1 > 0$ and all $\delta$ small enough.

  Now, define $\tilde u_0 \in BV(\Omega; \{-1, 1\})$ as
  \begin{eqnarray} 
    \label{u0c}
    \tilde u_0(x) = 
    \begin{cases} 
      +1, & \tilde u(x) > c, \\
      -1, & \tilde u(x) \leq c.
    \end{cases}
  \end{eqnarray}
  The preceding arguments imply the desired inequality for $\tilde u$.
  Passing to the limit in the approximation, we obtain the result,
  with $u_0 = \lim \tilde u_0$ in $L^1(\Omega)$ upon extraction of a
  subsequence.  
\end{proof}

\begin{lemma}
  \label{l:E2}
  Let $u$ and $u_0$ be as in Lemma \ref{l:E1}, let $u$ satisfy
  $\int_\Omega u \, dx = \bar u$, and let $|u| \leq 1 + \delta^3$ and
  $\left| \int_\Omega G_0(x - y) (u(y) - \bar u) \, dy \right| \leq
  \delta^3$ in $\Omega$, for $\delta > 0$ sufficiently small.  Then
  \begin{eqnarray}
    \label{eq:EE0}
    \mathcal E_2[u] \geq \frac12 \int_\Omega
    \int_\Omega (u_0(x) - \bar u) G(x - y) (u_0(y) - \bar u) \, dx dy
    - a_2 \delta \,  \mathcal E[u], 
  \end{eqnarray}
  for some $a_2 \geq 0$ independent of $\delta$ and $\ep$, whenever
  $\mathcal E[u] \leq \delta^{\frac32 (d + 6)}$.
\end{lemma}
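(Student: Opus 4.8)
The plan is to dominate the screened self-energy of $u_0$ by $\mathcal E_2[u]$ up to errors controlled by the (tiny) quantity $\delta\,\mathcal E[u]$, exploiting that $\mathcal E_2$ already contains both the unscreened self-energy of $u$ and the quadratic penalty $\tfrac{1}{2\kappa^2}\int_{\Omega^\delta_+\cup\Omega^\delta_-}(u-u_0)^2$. First I would collect the elementary bounds used throughout. Write $v(x)=\int_\Omega G_0(x-y)(u(y)-\bar u)\,dy$, so $-\Delta v=u-\bar u$ on the torus; since the nonlocal part of $\mathcal E$ equals $\tfrac12\int_\Omega|\nabla v|^2$ and is nonnegative, $\tfrac12\int_\Omega|\nabla v|^2\le\mathcal E[u]$. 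The double-well structure of $W$ (hypotheses (i)--(iii)) gives $W(s)\ge c(|s|-1)^2$ for $|s|\le2$ with some $c>0$; hence $\int_\Omega(|u|-1)^2\,dx\le c^{-1}\mathcal E[u]$, and since $|u-u_0|=\bigl| |u|-1 \bigr|$ on $\Omega^\delta_\pm$, Cauchy--Schwarz yields $\int_{\Omega^\delta_+\cup\Omega^\delta_-}|u-u_0|\,dx\le C\sqrt{\mathcal E[u]}$. Moreover $W(u)\ge m(\delta):=\min_{[-1+\delta,\,1-\delta]}W\ge c\,\delta^2$ on $\Omega^\delta_0$, so $|\Omega^\delta_0|\le c^{-1}\delta^{-2}\mathcal E[u]$.

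The main tool is the dual (least-norm) characterisation of the screened kernel: if $u_0-\bar u=-\operatorname{div}F+\kappa h$ for some $F\in L^2(\Omega;\mathbb R^d)$ and $h\in L^2(\Omega)$, then
\[
\tfrac12\int_\Omega\int_\Omega(u_0(x)-\bar u)G(x-y)(u_0(y)-\bar u)\,dx\,dy\le\tfrac12\int_\Omega\bigl(|F|^2+h^2\bigr)\,dx ,
\]
as one sees by testing $-\Delta(G*(u_0-\bar u))+\kappa^2G*(u_0-\bar u)=u_0-\bar u$ against $G*(u_0-\bar u)$ and applying Cauchy--Schwarz. I would then build a decomposition adapted to $\mathcal E_2$. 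Put $\eta=u-u_0$, $\eta_1=\eta\,\chi_{\Omega^\delta_+\cup\Omega^\delta_-}$, $\eta_2=\eta\,\chi_{\Omega^\delta_0}$, let $\bar\eta_2$ be the mean of $\eta_2$, solve $\operatorname{div}g=\eta_2-\bar\eta_2$ on the torus with $\|g\|_{L^2}\le C\|\eta_2-\bar\eta_2\|_{H^{-1}}$, and set $F=\nabla v+g$, $h=-\kappa^{-1}(\eta_1+\bar\eta_2)$. Using $u-\bar u=-\Delta v$ one checks $-\operatorname{div}F+\kappa h=u_0-\bar u$. Expanding $\tfrac12\int(|F|^2+h^2)$, the terms $\tfrac12\int_\Omega|\nabla v|^2$ and $\tfrac1{2\kappa^2}\int_\Omega\eta_1^2$ are precisely the two terms of $\mathcal E_2[u]$, so the lemma reduces to showing that the remaining pieces
\[
\int_\Omega\nabla v\cdot g,\qquad \tfrac12\|g\|_{L^2}^2,\qquad \tfrac1{\kappa^2}\bar\eta_2\!\int_\Omega\eta_1,\qquad \tfrac1{2\kappa^2}\bar\eta_2^2|\Omega|
\]
are each $\le a_2\delta\,\mathcal E[u]$.

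These estimates are where the hypotheses are consumed. The crucial observation is that $\Omega^\delta_0$ enters only through a \emph{negative} Sobolev norm of $\eta_2$: since $d<6$, the embedding $L^{2d/(d+2)}\hookrightarrow H^{-1}$ (for $d=2$, any $L^q\hookrightarrow H^{-1}$ with $q>1$) together with $\|\eta_2\|_{L^\infty}\le2$ gives $\|\eta_2\|_{H^{-1}}\le C|\Omega^\delta_0|^{\gamma}$ for an exponent $\gamma>\tfrac12$, hence $\|g\|_{L^2}\le C\bigl(\delta^{-2}\mathcal E[u]\bigr)^{\gamma}$. The cross term is handled by $\int_\Omega\nabla v\cdot g=-\int_\Omega v\,(\eta_2-\bar\eta_2)$ with $\|v\|_{L^\infty}\le\delta^3$ and $\|\eta_2\|_{L^1}\le C|\Omega^\delta_0|$ (alternatively by Cauchy--Schwarz, using $\|\nabla v\|_{L^2}\le(2\mathcal E[u])^{1/2}$ and $\gamma>\tfrac12$); $\bar\eta_2\int_\Omega\eta_1$ by $|\bar\eta_2|\le C|\Omega^\delta_0|$ and $|\int_\Omega\eta_1|\le C\sqrt{\mathcal E[u]}$; and the last two terms by $|\bar\eta_2|\le C|\Omega^\delta_0|$. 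Each error then becomes a monomial $C\,\mathcal E[u]^{p}\delta^{-q}$ with $p>1$, and a routine check of exponents shows that $\mathcal E[u]\le\delta^{\frac32(d+6)}$ is (more than) enough to bring every one of them under $a_2\delta\,\mathcal E[u]$ for a constant $a_2$ independent of $\delta$ and $\ep$. I expect this last bookkeeping to be the only real obstacle: the naive bound $\int_{\Omega^\delta_0}(u-u_0)^2\le C|\Omega^\delta_0|\le C\delta^{-2}\mathcal E[u]$ is hopeless, and one genuinely has to gain the extra factor $|\Omega^\delta_0|^{2\gamma-1}$ by routing $\eta_2$ through the divergence part of the decomposition rather than through the $\kappa h$ part --- while keeping $\eta_1$ in the $\kappa h$ part, so that it is exactly absorbed by the $L^2$ penalty already present in $\mathcal E_2$.
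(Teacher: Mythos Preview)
Your argument is correct, but it follows a genuinely different route from the paper's own proof. The paper writes $u=u_0+u_1+u_2$ with $u_1=-\kappa^2\int_\Omega G_0(x-y)(u(y)-\bar u)\,dy$, substitutes this directly into $\mathcal E_2$, and after an exact algebraic rearrangement isolates the term
\[
\frac{1}{2\kappa^2}\int_{\Omega^\delta_+\cup\Omega^\delta_-}\Bigl(u_2^2-\kappa^2 u_2\,G*(u_2\chi_{\Omega^\delta_+\cup\Omega^\delta_-})\Bigr)\,dx
=\sum_{q}\frac{|q|^2|\hat a_q|^2}{\kappa^2+|q|^2}\ge 0,
\]
which is then simply dropped; the remaining error terms live on $\Omega^\delta_0$ and are handled by a H\"older bound on $G$ together with a further split into $\{|u|\ge 1-\delta^3\}$ (where $|u_2|\le C\delta^3$) and its small-measure complement. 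Your approach replaces this Fourier positivity step by the dual (least-norm) characterisation of the screened energy and a tailored decomposition $u_0-\bar u=-\operatorname{div}F+\kappa h$: putting $\eta_1$ into $h$ makes it exactly match the penalty term in $\mathcal E_2$, while routing $\eta_2$ through $\operatorname{div}g$ gains the $H^{-1}$ norm and hence the extra factor $|\Omega^\delta_0|^{2\gamma-1}$ you correctly identify as essential. Your method is arguably more conceptual and avoids both the Fourier computation and the auxiliary $\delta^3$-level set; the paper's method is more hands-on but yields the same error structure (both reduce to $C\delta^3|\Omega^\delta_0|$-type terms and use $|\Omega^\delta_0|\le C\delta^{-2}\mathcal E[u]$). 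Either way the exponent bookkeeping closes comfortably under $\mathcal E[u]\le\delta^{\frac32(d+6)}$.
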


\begin{proof}
  Let us write $u$ as follows
  \begin{eqnarray}
    \label{eq:u012}
    u = u_0 + u_1 + u_2, \qquad u_1(x) = -\kappa^2 \int_\Omega G_0(x -
    y) (u(y) - \bar u) dy.
  \end{eqnarray}
  Note that by assumption, $||u_1||_{L^\infty(\Omega)} \leq C
  \delta^3$ and $||u_2||_{L^\infty(\Omega)} \leq C$, for some $C > 0$.
  Now, observe that $u_1$ solves
  \begin{eqnarray}
    \label{eq:36}
    -\Delta u_1 + \kappa^2 u_1 = -\kappa^2 (u_0 + u_2 - \bar u), 
  \end{eqnarray}
  and, therefore, we also have
  \begin{eqnarray}
    \label{eq:37}
    u_1(x) = -\kappa^2 \int_\Omega G(x - y) (u_0(y) + u_2(y) - \bar u)
    dy. 
  \end{eqnarray}
  Substituting $u$ in the form (\ref{eq:u012}) into (\ref{eq:EE2}), we
  obtain
  \begin{eqnarray}
    \label{eq:EEE0}
    \mathcal E_2(u_0 + u_1 + u_2) & = &   {1 \over 2 \kappa^2}
    \int_{\Omega^\delta_+ \cup 
      \Omega^\delta_-} (u_1 + u_2)^2 \, dx \nonumber \\ 
    && \hspace{-3cm}
    + \frac12 \int_\Omega \int_\Omega (u_0(x) + u_1(x) + u_2(x) - \bar 
    u) G(x - y) (u_0(y) + u_2(y) - \bar u) dy dx \nonumber \\ 
    && \hspace{-3cm} = -{1 \over 2 \kappa^2} \int_{\Omega^\delta_0}
    u_1^2 \, dx - {1 \over \kappa^2} \int_{\Omega^\delta_0} u_1 u_2 \,
    dx - \frac12 \int_\Omega \int_\Omega u_2(x) G(x - y) u_2(y) dy dx
    \nonumber \\ 
    && \hspace{-3cm} + \frac12 \int_\Omega \int_\Omega (u_0(x) 
    - \bar u) G(x - y)  (u_0(y) - \bar u) dy dx  + {1 \over 2
      \kappa^2} \int_{\Omega^\delta_+ \cup \Omega^\delta_-} 
    u_2^2 \, dx \nonumber \\ 
    && \hspace{-3cm} \geq \frac12 \int_\Omega \int_\Omega (u_0(x) 
    - \bar u) G(x - y)  (u_0(y) - \bar u) dy dx -{1 \over 2 \kappa^2}
    \int_{\Omega^\delta_0} u_1^2 \, dx  \nonumber \\
    && \hspace{-3cm} + \int_{\Omega^\delta_0} \int_\Omega u_2(x) G(x -
    y) (u_0(y) - \bar u) \, dy dx \nonumber \\
    && \hspace{-3cm} + \frac{1}{2 \kappa^2} \int_{\Omega^\delta_+ \cup
      \Omega^\delta_-}\left( u_2^2(x) - \kappa^2  u_2(x)
      \int_{\Omega^\delta_+ \cup \Omega^\delta_-} 
      G(x - y) u_2(y) dy 
    \right) dx.
  \end{eqnarray}
  In fact, the last line in (\ref{eq:EEE0}) is non-negative. Indeed,
  writing the integral in the last line of (\ref{eq:EEE0}) with the
  help of the Fourier Transform $\hat a_q$ of $\tilde u = u_2
  \chi_{\Omega^\delta_+ \cup \Omega^\delta_-}$, where
  $\chi_{\Omega^\delta_+ \cup \Omega^\delta_-}$ is the characteristic
  function of $\Omega^\delta_+ \cup \Omega^\delta_-$:
  \begin{eqnarray}
    \label{eq:u2q}
    \hat a_q = \int_{\Omega^\delta_+ \cup \Omega^\delta_-} e^{i q
      \cdot x} u_2(x) \, dx, 
  \end{eqnarray}
  we obtain
  \begin{eqnarray}
    \label{eq:EE0l}
    \int_{\Omega^\delta_+ \cup \Omega^\delta_-} 
    \left( u_2^2(x) - \kappa^2  u_2(x) \int_{\Omega^\delta_+ \cup
        \Omega^\delta_-} G(x - y) u_2(y) dy 
    \right) dx \nonumber \\ 
    = \sum_{q \in 2 \pi \mathbb Z^d} {|q|^2 |\hat a_q|^2 \over \kappa^2
      + |q|^2}  \geq 0.  
  \end{eqnarray}

  To estimate the remaining terms in (\ref{eq:EEE0}), we note that 
  \begin{eqnarray}
    \label{eq:42}
    \left| {1 \over \kappa^2} \int_{\Omega^\delta_0} u_1 u_2 \, dx +
    \int_{\Omega^\delta_0} \int_\Omega u_2(x) G(x - y) (u_0(y) - \bar
    u) \, dy dx \right|  \nonumber \\ 
    =   \left| \int_{\Omega^\delta_0} \int_\Omega u_2(x) G(x -
    y) u_2(y) \, dy dx \right| \nonumber \\ 
    \leq \int_{\Omega^\delta_0}  \int_{\{|u| \geq 1 - \delta^3 \}}
    |u_2(x)| G(x - y) |u_2(y)| \, dy dx \nonumber \\   
    + \int_{\Omega^\delta_0}
    \int_{\{ |u| < 1 - \delta^3 \} }  |u_2(x)| G(x - y) |u_2(y)| \,
    dy dx.  
  \end{eqnarray}
  Since $G \in L^p(\Omega)$ for all $p < \frac{d}{d-2}$ (any $p <
  \infty$ in $d = 2$), by H\"older inequality we can see that for
  any $\tilde \Omega \subseteq \Omega$
  \begin{eqnarray}
    \label{eq:50}
    \int_{\tilde \Omega} G(x - y) |u_2(y)| \, dy \leq C
    \left( \int_{\tilde \Omega} |u_2|^q \, dx \right)^{1/q} \leq C
    ||u_2||_{L^\infty({\tilde \Omega})} |\tilde \Omega|^{1/q}, 
  \end{eqnarray}
  for any $q > \frac{d}{2}$. Therefore, continuing the estimates in
  (\ref{eq:42}), we obtain
  \begin{eqnarray}
    \label{eq:102}
    \left| \int_{\Omega^\delta_0} \int_\Omega u_2(x) G(x -
      y) u_2(y) \, dy dx \right| \nonumber \\
    \leq C \left( \delta^3 + \delta^{-6/q} \mathcal E^{1/q}[u]
    \right) |\Omega_0^\delta| \leq 2 C \delta^3 |\Omega_0^\delta|,
  \end{eqnarray}
  whenever $\mathcal E[u] \leq \delta^{3 (q + 2)}$, where we took into
  account that by the assumptions of the lemma $|u_2| \leq |u - u_0| +
  |u_1| \leq C \delta^3$ in $\{ |u| > 1 - \delta^3 \}$, and that
  $\mathcal E[u] \geq \int_{\{ |u| < 1 - \delta^3 \} } W(u) \, dx \geq
  c \delta^6 |\{ |u| < 1 - \delta^3 \} |$ for some $c > 0$.

  Similarly, we have
  \begin{eqnarray}
    \label{eq:9}
    \left| \int_{\Omega^\delta_0} u_1^2 \, dx \right| + \left|
      \int_{\Omega^\delta_0} u_1 u_2 \, dx \right| \leq 
    C \delta^3 |\Omega^\delta_0|.
  \end{eqnarray}
  The statement of the lemma then follows from the fact that $\mathcal
  E[u] \geq \int_{\Omega^\delta_0} W(u) dx \geq c \delta^2
  |\Omega^\delta_0|$, for some $c > 0$, by choosing $q = \tfrac{1}{2}
  (d + 2)$. 
\end{proof}

\begin{lemma}
  \label{l:E3}
  Let $u$ and $u_0$ be as in Lemma \ref{l:E1}. Then
  \begin{eqnarray}
    \label{eq:E3}
    \mathcal E_3[u] \geq - a_3 \delta \, \mathcal E[u].
  \end{eqnarray}
  for some $a_3 \geq 0$ independent of $\delta$ and $\ep$, for
  sufficiently small $\delta > 0$.
\end{lemma}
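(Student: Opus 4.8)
The plan is to derive the pointwise lower bound
$W(u)-\tfrac1{2\kappa^2}(u-u_0)^2\ge-a_3\delta\,W(u)$, valid a.e.\ on $\Omega^\delta_+\cup\Omega^\delta_-$, and then simply integrate. The essential observation is that by the normalization (\ref{eq:kappa}) one has $\tfrac1{2\kappa^2}=\tfrac12 W''(1)$, which is exactly the quadratic coefficient of $W$ at $u=\pm1$, since $W(\pm1)=W'(\pm1)=0$ by assumptions (i)--(ii); hence the integrand of $\mathcal E_3$ records only the departure of $W$ from its quadratic approximation at the wells. Because $W$ is even and $u_0=\mathrm{sgn}\,u$ on $\Omega^\delta_+\cup\Omega^\delta_-$, it suffices to analyze $g(t):=W(t)-\tfrac12 W''(1)(t-1)^2$ on $[1-\delta,\infty)$, the contribution of $\Omega^\delta_-$ reducing to this by the substitution $t\mapsto -t$.

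First I would dispose of the ``far'' region $\{t\ge1\}$: there $g(1)=0$, $g'(1)=W'(1)=0$, and $g''(t)=W''(t)-W''(1)\ge0$ by assumption (iii) (monotonicity of $W''(|u|)$ for $|u|\ge1$), so $g\ge0$ on $[1,\infty)$; thus on $\{u\ge1\}\cup\{u\le-1\}$ the integrand of $\mathcal E_3$ is already nonnegative and nothing is needed. On the ``near'' region $\{1-\delta\le t<1\}$ I would use Taylor's theorem with Lagrange remainder (legitimate since $W\in C^3$ and $g(1)=g'(1)=g''(1)=0$): $g(t)=\tfrac16 W'''(\zeta)(t-1)^3$ for some $\zeta\in(t,1)$, so $|g(t)|\le\tfrac16 M_3|t-1|^3$ with $M_3:=\max_{[1/2,1]}|W'''|$; the same expansion gives $W(t)\ge\tfrac14 W''(1)(t-1)^2$ as soon as $\tfrac16 M_3\delta\le\tfrac14 W''(1)$. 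Combining these with $|t-1|\le\delta$ yields $g(t)\ge-\tfrac16 M_3\delta\,(t-1)^2\ge-\tfrac{2M_3}{3W''(1)}\,\delta\,W(t)$ on this set (and if $M_3=0$ then $g\equiv0$ and the lemma is trivial).

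Taking $a_3:=\tfrac{2M_3}{3W''(1)}$, which depends only on $W$, we obtain $W(u)-\tfrac1{2\kappa^2}(u-u_0)^2\ge-a_3\delta\,W(u)$ a.e.\ on $\Omega^\delta_+\cup\Omega^\delta_-$, and integrating gives $\mathcal E_3[u]\ge-a_3\delta\int_{\Omega^\delta_+\cup\Omega^\delta_-}W(u)\,dx\ge-a_3\delta\int_\Omega W(u)\,dx\ge-a_3\delta\,\mathcal E[u]$, where in the last step I use $\mathcal E[u]\ge\int_\Omega W(u)\,dx$ (the gradient term being nonnegative and the non-local term equalling $\tfrac12\int_\Omega|\nabla v|^2\,dx\ge0$ for $v=G_0*(u-\bar u)$, cf.\ the proof of Proposition \ref{p:1del}). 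There is no genuine obstacle here — this is the easiest of the three lemmas — the only minor points requiring attention are the symmetry reduction to $[1-\delta,\infty)$ and verifying that the threshold on $\delta$ is uniform in $\ep$, which it manifestly is, depending only on $W''(1)$ and $M_3$.
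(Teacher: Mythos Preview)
Your proof is correct and follows essentially the same route as the paper's: both split the integration domain into $\{|u|\ge1\}$, where the integrand is nonnegative by the monotonicity of $W''$ in assumption (iii), and $\{1-\delta\le|u|<1\}$, where a Taylor expansion gives the $O(\delta)(u-u_0)^2$ bound that is then absorbed into $\int W(u)\,dx\le\mathcal E[u]$. Your version is simply more explicit about the constants and the third-order remainder, whereas the paper compresses the near-well estimate into a single line.
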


\begin{proof}
  By assumption (iii) on $W$, we have $W(u) \geq \frac{1}{2 \kappa^2}
  (u - u_0)^2$ whenever $|u| > 1$. Hence
  \begin{eqnarray}
    \label{eq:18}
    \mathcal E_3[u] \geq \int_{\{1 - \delta \leq |u| \leq 1 \}} \left(
      W(u) - \frac{1}{2 \kappa^2} (u - u_0)^2 \right) dx \nonumber \\ 
    \geq - C \delta 
    \int_{\{1 - \delta \leq |u| \leq 1 \}} (u - u_0)^2 dx \geq  - a_3
    \delta \, \mathcal E[u],     
  \end{eqnarray}
  for some $a_3 \geq 0$.
\end{proof}

Combining all the results above with an observation that
$|\Omega^\delta_+ \cup \Omega^\delta_-| > 0$ for $\mathcal E[u]$ small
enough, we obtain
\begin{proposition}
  \label{p:lower}
  Let $\delta > 0$ be sufficiently small, let $u \in H^1(\Omega)$
  satisfy $\int_\Omega u \, dx = \bar u$, let $|u| \leq 1 + \delta^3$
  and $\left| \int_\Omega G_0(x - y) (u(y) - \bar u) \, dy \right|
  \leq \delta^3$ in $\Omega$, and let $\mathcal E \leq \delta^{\frac32
    (d + 6)}$. Then there exists a function $u_0 \in BV(\Omega; \{-1,
  1\})$ such that $\mathcal E[u] \geq (1 - \delta^{1/2}) E[u_0]$, with
  $\kappa$ given by (\ref{eq:kappa}).
\end{proposition}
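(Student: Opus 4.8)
The plan is to assemble the three preceding lemmas through the lower bound \eqref{eq:EE}, after first disposing of a degenerate case. First I would observe that $|\Omega^\delta_+\cup\Omega^\delta_-|>0$ as soon as $\mathcal E[u]$ is small enough: were this set Lebesgue-null, we would have $|u|<1-\delta$ a.e.\ in $\Omega$, hence $W(u)\ge c_\delta:=\min_{|t|\le 1-\delta}W(t)>0$ throughout $\Omega$ (recall $W>0$ on $(-1,1)$ and $|\Omega|=1$), and therefore $\mathcal E[u]\ge c_\delta$; but $\mathcal E[u]\le\delta^{\frac32(d+6)}\ll c_\delta$ for $\delta\ll1$, a contradiction. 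With this in hand, Lemmas \ref{l:E1}--\ref{l:E3} all apply to $u$.

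Next I would combine them. Let $u_0\in BV(\Omega;\{-1,1\})$ be the function produced by Lemma \ref{l:E1}; since it agrees in sign with $u$ on $\Omega^\delta_\pm$, it serves as the admissible $u_0$ in Lemmas \ref{l:E2} and \ref{l:E3} as well, and the hypotheses on $|u|$, on $\int_\Omega G_0(x-y)(u(y)-\bar u)\,dy$, and on $\mathcal E[u]$ needed for Lemma \ref{l:E2} are exactly those assumed in the statement. Then \eqref{eq:EE} together with the three lemmas yields
\begin{eqnarray*}
\mathcal E[u] &\ge& \tfrac{\ep}{2}(1-a_1\delta^2)\int_\Omega|\nabla u_0|\,dx + \tfrac12\int_\Omega\int_\Omega (u_0(x)-\bar u)\,G(x-y)\,(u_0(y)-\bar u)\,dx\,dy \\
&& -\,(a_2+a_3)\,\delta\,\mathcal E[u].
\end{eqnarray*}
Because $\hat{G}(q)=(\kappa^2+|q|^2)^{-1}>0$, the screened-Coulomb bilinear form is positive semidefinite, so its value on $u_0-\bar u$ is nonnegative; hence the first two terms on the right are bounded below by $(1-a_1\delta^2)E[u_0]$, with $\kappa$ as in \eqref{eq:kappa}. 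Consequently $(1+(a_2+a_3)\delta)\,\mathcal E[u]\ge(1-a_1\delta^2)\,E[u_0]$.

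Finally, since $E[u_0]\ge0$ (both of its terms are nonnegative) and $\tfrac{1-a_1\delta^2}{1+(a_2+a_3)\delta}\ge1-\delta^{1/2}$ for all sufficiently small $\delta>0$, I would conclude $\mathcal E[u]\ge(1-\delta^{1/2})\,E[u_0]$, which is the assertion. There is no genuine obstacle here beyond bookkeeping: all the analytic work sits in Lemmas \ref{l:E1}--\ref{l:E3}, and the only points requiring a moment's care are the elementary reduction to $|\Omega^\delta_+\cup\Omega^\delta_-|>0$, the positive semidefiniteness of $G$ (which guarantees that shrinking the interfacial coefficient from $1$ to $1-a_1\delta^2$ cannot increase $E[u_0]$, so the lower bound can be recast in terms of $E[u_0]$ itself), and checking that the various smallness conditions on $\delta$ and on $\mathcal E[u]$ inherited from the three lemmas are mutually consistent --- which they are, since each is of the form ``$\delta$ small'' or ``$\mathcal E[u]\le\delta^{\frac32(d+6)}$''.
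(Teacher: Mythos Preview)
Your proposal is correct and follows exactly the route the paper takes: the paper's own proof is the single sentence ``combining all the results above with an observation that $|\Omega^\delta_+ \cup \Omega^\delta_-| > 0$ for $\mathcal E[u]$ small enough, we obtain,'' and you have supplied precisely those details --- the contradiction argument for $|\Omega^\delta_+\cup\Omega^\delta_-|>0$, the assembly of Lemmas \ref{l:E1}--\ref{l:E3} through \eqref{eq:EE}, the use of positive semidefiniteness of $G$ to absorb the factor $(1-a_1\delta^2)$, and the elementary inequality $\tfrac{1-a_1\delta^2}{1+(a_2+a_3)\delta}\ge 1-\delta^{1/2}$ for small $\delta$.
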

Importantly, the lower bound in Proposition \ref{p:lower} is sharp in
the limit $\ep \to 0$ for all functions $u_0 \in BV(\Omega; \{-1,1\})$
obeying suitable bounds (satisfied by minimizers of $E$ in $d = 2$):

\begin{proposition}
  \label{p:upper}
  Let $u_0 \in BV(\Omega; \{-1, 1\})$, with the jump set of class
  $C^2$, let the principal curvatures of the jump set of $u_0$ be
  bounded by $\ep^{-\alpha}$ for some $\alpha \in [0, 1)$, let the
  distance between different connected portions of the jump set be
  bounded by $\ep^\alpha$, and let $\left| \int_\Omega G(x - y)
    (u_0(y) - \bar u) \, dy \right| \leq \delta$ for some $\delta > 0$
  small enough. Then there exists a function $u \in H^1(\Omega)$ with
  $\int_\Omega u \, dx = \bar u$, such that $\mathcal E[u] \leq (1 +
  \delta^{1/2}) E[u_0]$, with $\kappa$ given by (\ref{eq:kappa}),
  whenever $E[u_0] \leq \delta^{\frac12 (d + 3)}$ and $\ep \ll 1$.
\end{proposition}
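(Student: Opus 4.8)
The plan is to exhibit a recovery sequence: starting from $u_0$, glue the optimal one–dimensional transition profile across the jump set $\Sigma$ of $u_0$, add a constant so that $\int_\Omega u\,dx=\bar u$, and add a linear–response correction that turns the unscreened kernel $G_0$ appearing in $\mathcal E$ into the screened kernel $G$ appearing in $E$ with exactly the right constant. In effect this reverses the chain of estimates in Lemmas \ref{l:E1}--\ref{l:E3}, turning each inequality into an equality up to errors. The crucial point is that the target $(1+\delta^{1/2})E[u_0]$ leaves only $o(E[u_0])$ of room, so nothing may merely be bounded: in particular the ``bulk'' term $\tfrac1{2\kappa^2}(\overline{u_0}-\bar u)^2$ (where $\overline{u_0}=\tfrac1{|\Omega|}\int_\Omega u_0$) sitting inside $E[u_0]$ — it is the $q=0$ part of the non-local term, since $\int_\Omega\!\int_\Omega G=|\Omega|/\kappa^2$ — must be reproduced \emph{exactly}, and each residual must be converted into a genuine power of $E[u_0]$ using that $E[u_0]$ simultaneously dominates $\tfrac\ep2\int_\Omega|\nabla u_0|$, $\tfrac1{2\kappa^2}(\overline{u_0}-\bar u)^2$, the measure $|\{u_0=+1\}|$, and the non-local energy of $u_0-\overline{u_0}$.

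In detail: let $q$ be the optimal profile, $\tfrac12(q')^2=W(q)$, $q(\pm\infty)=\pm1$, decaying exponentially; pick $\ell=\ep\Lambda$ with $\Lambda=\Lambda(\delta)$ a large constant, and let $\mathcal N$ be the $\ell$–tube about $\Sigma$. The curvature bound $|K|\le\ep^{-\alpha}$ and the separation bound $\ge\ep^\alpha$ (with $\alpha<1$) ensure, for $\ep\ll1$, that the tubes of distinct components of $\Sigma$ are disjoint and that, in signed–distance coordinates, the Jacobian is $1+O(\ep^{1-\alpha}\Lambda)$. Set $w=q(\mathrm{dist}_\Sigma/\ep)$ in $\mathcal N$, truncated to $u_0=\pm1$ outside; by the Modica--Mortola/co–area argument and assumption (iv), $\int_{\mathcal N}(\tfrac{\ep^2}2|\nabla w|^2+W(w))\le(1+o(1))\tfrac\ep2\int_\Omega|\nabla u_0|$, the $o(1)$ absorbing the Jacobian error and the exponentially small truncation error. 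Put $c_0=\bar u-\tfrac1{|\Omega|}\int_\Omega w$, $u_2=(w-u_0)+c_0$, and let $u_1$ solve $u_1=-\kappa^2 G_0*(u_0+u_1+u_2-\bar u)$; since $\int_\Omega(u_0+u_2)=\bar u|\Omega|$ this gives $u_1=-\kappa^2 G*(w+c_0-\bar u)$, which has zero mean, so $u:=u_0+u_1+u_2=w+c_0+u_1\in H^1(\Omega)$ satisfies $\int_\Omega u=\bar u$ and $u_1=-\kappa^2 v$ with $v=G_0*(u-\bar u)$. From $E[u_0]\ge\tfrac\ep2\int|\nabla u_0|$ and $E[u_0]\ge\tfrac1{2\kappa^2}(\overline{u_0}-\bar u)^2$ one gets $\|w-u_0\|_{L^1}\lesssim E[u_0]$ and $|c_0|\lesssim\sqrt{E[u_0]}$; and, exploiting that $u_0$ is $\pm1$–valued together with $\|G*(u_0-\bar u)\|_{L^\infty}\le\delta$, one gets $\|u_1\|_{L^\infty}\lesssim\min(\delta,E[u_0]^{c})$ and $\|u-u_0\|_{L^1}\lesssim\sqrt{E[u_0]}$ for some $c=c(d)>0$.

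Now expand $\mathcal E[u]$. On $\Omega\setminus\mathcal N$ one has $u=\pm1+(c_0+u_1)$, so $W(u)=\tfrac1{2\kappa^2}(c_0+u_1)^2+O(|c_0+u_1|^3)$ and $\nabla u=\nabla u_1$; using $\int_\Omega u_1=0$ the $L^2$–part is $\tfrac1{2\kappa^2}c_0^2|\Omega|+\tfrac1{2\kappa^2}\int_\Omega u_1^2$ up to a negligible remainder, and the cubic term is $O(\|c_0+u_1\|_{L^\infty}^2\|c_0+u_1\|_{L^1})$. On $\mathcal N$ the interface/potential energy is $\le(1+o(1))\tfrac\ep2\int_\Omega|\nabla u_0|$ once the $c_0+u_1$ perturbation is absorbed. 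For the non-local term one uses $u_1=-\kappa^2 v$ exactly as in Lemma \ref{l:E2}: $\tfrac12\int\!\int(u-\bar u)G_0(u-\bar u)=\tfrac12\int\!\int(w+c_0-\bar u)G(w+c_0-\bar u)-\tfrac1{2\kappa^2}\int_\Omega u_1^2$, and the two $\int u_1^2$ contributions cancel. Since $w+c_0-\bar u$ has zero mean it equals $(u_0-\overline{u_0})+(w-u_0-\overline{w-u_0})$, so $\tfrac12\int\!\int(w+c_0-\bar u)G(w+c_0-\bar u)\le\tfrac12\int\!\int(u_0-\overline{u_0})G(u_0-\overline{u_0})+O\big(\|w-u_0\|_{L^1}\|G*(u_0-\overline{u_0})\|_{L^\infty}\big)+\tfrac12\int\!\int(w-u_0-\overline{w-u_0})G(\cdots)$, and all three extra pieces are small powers of $E[u_0]$. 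Finally $c_0^2=(\overline{u_0}-\bar u)^2+O(\sqrt{E[u_0]}\,\|w-u_0\|_{L^1})$, so, collecting terms and using the identity $\tfrac12\int\!\int(u_0-\overline{u_0})G(u_0-\overline{u_0})+\tfrac1{2\kappa^2}(\overline{u_0}-\bar u)^2=\tfrac12\int\!\int(u_0-\bar u)G(u_0-\bar u)$, one arrives at $\mathcal E[u]\le E[u_0]+C\delta E[u_0]+C E[u_0]^{1+c}+o(E[u_0])$. Every error exponent exceeds $\tfrac12$ once $E[u_0]\le\delta^{\frac12(d+3)}$ — this threshold is precisely what is needed, together with the subcritical growth assumption (iii) limiting the Sobolev exponent $d/(d-2)$, to push every cubic and cross term below $\delta^{1/2}E[u_0]$ — and $\ep\ll1$ (to beat $\ep^{1-\alpha}\Lambda$); hence $\mathcal E[u]\le(1+\delta^{1/2})E[u_0]$.

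The main obstacle is making the three ingredients cooperate under this tight budget. The constant shift $c_0$ restores the mass constraint at \emph{zero net cost} only because $W''(\pm1)=\kappa^{-2}$, so its penalty $\tfrac1{2\kappa^2}c_0^2$ coincides with the bulk term of $E[u_0]$ (this is exactly why $\kappa$ must be chosen as in (\ref{eq:kappa})). The correction $u_1=-\kappa^2 v$ must simultaneously (i) convert $G_0$ into $G$ with the precise constant, (ii) have zero mean so as not to spoil the constraint, and (iii) stay $L^\infty$–small, which forces use of the sharp bound on $\|u_1\|_{L^\infty}$ in terms of $\|G*(u_0-\bar u)\|_{L^\infty}$ rather than a naive one. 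And every leftover must be turned into a genuine power of $E[u_0]$ through the coercivity of $E$; verifying that the resulting exponents all clear $\tfrac12(d+3)+\tfrac12$ is the delicate bookkeeping, and is where the precise hypotheses enter.
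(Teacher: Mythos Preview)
Your approach is correct and essentially identical to the paper's: the paper sets $u = u_0^\ep - \kappa^2\, G*(u_0^\ep - \bar u)$, which is precisely your $w + c_0 + u_1$ written in one stroke (the constant part of $-\kappa^2 G*(u_0^\ep-\bar u)$ is your $c_0$ and the mean-zero part is your $u_1$), and then uses the same algebraic identity (the cancellation of the two $\int u_1^2$ terms is the paper's identity \eqref{eq:104}) together with tubular coordinates near $\Sigma$ and H\"older estimates on $G$ to control the residuals. Your write-up makes the roles of $c_0$ (reproducing the bulk $q=0$ term via $W''(\pm1)=\kappa^{-2}$) and $u_1$ (converting $G_0$ to $G$) more explicit, but the construction and the chain of estimates coincide.
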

\begin{proof} 
  For simplicity of presentation, we only give the proof in the case
  $d = 2$. With minor modifications, the proof remains valid for all
  $d$.

  Here we adapt the standard construction of a trial function for the
  local part of the Ginzburg-Landau energy.  Let $U(\rho)$ be the
  solution of the ordinary differential equation
  \begin{eqnarray}
    \label{eq:front}
    {d^2 U \over d \rho^2} - W'(U) = 0, \qquad U(-\infty) = 1, \quad
    U(+\infty) = -1, \quad U(0) = 0,
  \end{eqnarray}
  where the last condition fixes translations. As is well-known (see
  e.g. \cite{fife77}), this solution exists, is unique and is a
  strictly monotonically decreasing odd function, approaching the
  equilibria at $\rho = \pm \infty$ exponentially fast. Therefore, for
  any $\delta > 0$ we have $|U(\rho)| \leq 1 - \delta$, if and only if
  $|\rho| \leq l$, with some positive $l = O(|\ln \delta|)$. Also note
  that by hypothesis (iv) on $W$
  \begin{eqnarray}
    \label{eq:100}
    \int_{-l}^l \left\{ \frac12 \left| {d U \over d \rho} \right|^2 +
      W(U) \right\} d\rho = \int_{-1+\delta}^{1-\delta} \sqrt{2 W(s)}
    \, ds = 1 + O(\delta^2). 
  \end{eqnarray}

  Now, introduce the signed distance function $r(x) = \pm
  \mathrm{dist} (x, \Omega^\pm)$, where $\Omega^\pm = \{ u_0 = \pm 1
  \}$, whenever $x \in \Omega^\mp$, and define a regularized version
  $u_0^\ep$ of $u_0$:
  \begin{eqnarray}
    \label{eq:103}
    u_0^\ep(x) = 
    \begin{cases}
      U(\ep^{-1} r(x)), & |r(x)| \leq \ep l, \\
      (1 - \delta + \ep^{-1} \delta (|r(x)| - \ep l) ) u_0(x), & \ep l
      \leq |r(x)|  \leq \ep (l + 1), \\
      \, u_0(x), & |r(x)| \geq \ep (l + 1).
    \end{cases}
  \end{eqnarray}
  Then, it is easy to see that the function
  \begin{eqnarray}
    \label{eq:uuhat}
    u(x) = u_0^\ep(x) - \kappa^2 \int_\Omega G(x - y) (u_0^\ep(y) -
    \bar u) dy.  
  \end{eqnarray}
  is in $H^1(\Omega)$, with $\int_\Omega u \, dx = \bar u$. Moreover,
  we have for any $q > 1$
  \begin{eqnarray}
    \label{eq:39}
    |u(x) - u_0^\ep(x)| \leq \kappa^2 \left| \int_\Omega G(x - y)
      (u_0(y) - \bar u) \, dy \right| \nonumber \\
    + \kappa^2 \int_{\Omega_l} G(x - y) |u_0^\ep(y) - u_0(y)| \, dy 
    \nonumber \\ 
    \leq C (\delta + |\Omega_l|^{1/q}) \leq C' (\delta +
    |\ln \delta|^{1/q} E^{1/q}[u_0]) \leq C'' \delta,
  \end{eqnarray}
  where we defined $\Omega_l = \{ |r| \leq \ep (l + 1) \}$, estimated
  $\int_{\Omega_l} G(x - y) (u_0^\ep(y) - u_0(y)) \, dy$ as in Lemma
  \ref{l:E2}, and used the curvature bound and the assumption on $E$
  with $d = 2$ and $q = 2$.
  On the other hand, after a few integrations by parts, from
  (\ref{eq:uuhat}) we also obtain
  \begin{eqnarray}
    \label{eq:49}
    \int_\Omega |\nabla (u - u_0^\ep)|^2 \, dx = - \kappa^2
    \int_\Omega (u - u_0^\ep) ( u - \bar u) \, dx \hspace{2cm}
    \nonumber \\ 
    = \kappa^4 \int_\Omega (u(x) - \bar u) G_0(x - y) (u(y) - \bar u) 
    \, dy dx \leq 2 \kappa^4 \mathcal E[u].
  \end{eqnarray}

  To estimate $\mathcal E[u]$, let us introduce a system of curvilinear
  coordinates $(\rho, \xi)$ consisting of the signed distance $\rho$
  to the jump set of $u_0$ and the projection $\xi$ onto the jump
  set. By our assumptions this is possible whenever $|r(x)| <
  \ep^\alpha$. Therefore, for $\ep \ll 1$ we can write
  \begin{eqnarray}
    \label{eq:EEup}
    \mathcal E[u] = \int_{-\ep (l+1)}^{\ep(l+1)} \int_{\partial
      \Omega^+} \Biggl( \frac{\ep^2}{2} \left| {\partial u
        \over \partial \rho} \right|^2 + \frac{\ep^2}{2} (1 + \rho
    K )^{-2} \left| {\partial u \over \partial \xi} \right|^2 
    + W(u) \Biggr)       \nonumber \\ 
    \times \left(1 + \rho K \right)  \, d \mathcal H^1(\xi) \,
    d \rho + \int_{\Omega \backslash \Omega_l } \Biggl( {\ep^2 \over
      2} 
    |\nabla u|^2 + W(u) \Biggr) \, dx \nonumber \\
    + \frac12 \int_\Omega \int_\Omega (u(x) - \bar u) G_0(x - y) (u(y) -
    \bar u) \, dy dx,
  \end{eqnarray}
  where $K = K(\xi)$ is the curvature at point $\xi$ on the jump set
  of $u_0$. Substituting the ansatz of (\ref{eq:uuhat}) into
  (\ref{eq:EEup}), taking into account that $|\nabla (u - u_0^\ep) |
  \leq C$ in $\Omega_l$ for some $C > 0$ independent of $\ep$ (we have
  $u - u_0^\ep$ uniformly bounded in $W^{2,p}(\Omega)$, for any $p <
  \infty$) and that $\nabla u = \nabla (u - u_0) = \nabla (u -
  u_0^\ep)$ in $\Omega \backslash \Omega_l$, and using (\ref{eq:39})
  and (\ref{eq:49}), we obtain (estimating each line in
  (\ref{eq:EEup}) separately)
  \begin{eqnarray}
    \label{eq:105}
    \mathcal E[u] = \frac{\ep}{2} \bigl(1 + O(\ep^{1-\alpha} |\ln
    \delta|) + O(\delta |\ln \delta|) 
    \bigr) \int_\Omega |\nabla u_0| \, dx \nonumber \\
    + \frac{1}{2 \kappa^2} \int_{ \Omega \backslash \Omega_l} (u -
    u_0)^2 \, dx + O(\ep^2 \mathcal E[u]) + O(\delta \, \mathcal E[u]) 
    \nonumber \\ 
    + \frac12 \int_\Omega \int_\Omega (u(x) - \bar u) G_0(x - y)
    (u(y) -  \bar u) \, dy dx.
  \end{eqnarray}
  Now, using the identity
  \begin{eqnarray}
    \label{eq:104}
    \int_\Omega \int_\Omega (u(x) - \bar u) G_0 (x - y) (u(y)
    - \bar u) \, dy dx + \kappa^{-2} \int_\Omega (u -
    u_0^\ep)^2 dx \nonumber \\ 
    = \int_\Omega \int_\Omega (u_0^\ep(x) - \bar u) G(x - y)
    (u_0^\ep(y) - \bar u) \, dy dx,
  \end{eqnarray}
  we can further write (\ref{eq:105}) as
  \begin{eqnarray}
    \label{eq:106}
    \mathcal E[u] = E[u_0] + O(\delta |\ln \delta| \, E[u_0]) +
    O(\delta \, \mathcal E[u]) 
    - \frac{1}{2 \kappa^2} \int_{\Omega_l} (u - u_0^\ep)^2 \, dx
    \nonumber \\  
    + \int_{\Omega_l} \int_\Omega (u_0^\ep(x) - u_0(x)) G(x - y)
    (u_0^\ep(y) - \bar u) \, dy dx \nonumber \\ 
    + \frac12 \int_{\Omega_l}
    \int_{\Omega_l} (u_0^\ep(x) - u_0(x)) G(x - y)  (u_0^\ep(y) -
    u_0(y)) \, dy dx,
 \end{eqnarray}
 for $\ep \ll 1$. Finally, using the same estimates as in
 (\ref{eq:39}), we obtain 
 \begin{eqnarray}
   \label{eq:107}
   (1 + O(\delta)) \mathcal E[u] = (1 + O(\delta |\ln \delta|)) E[u_0] 
   + O(\delta |\Omega_l|) + O(|\Omega_l|^{3/2}) \nonumber \\ 
    = (1 + O(\delta |\ln \delta|) ) E[u_0],
 \end{eqnarray}
 from which the result follows immediately.  
\end{proof}

The last two propositions show asymptotic equivalence of the diffuse
interface energy $\mathcal E$ with the sharp interface energy $E$ for
sufficiently well-behaved critical points and $\ep \ll 1$. In
particular, the energies of minimizers of both $E$ and $\mathcal E$
are asymptotically the same in the limit $\ep \to 0$. It would also be
natural to think that the minimizers (even local, with low energy) of
$\mathcal E$ are, in some sense, close to minimizers of $E$ when $\ep
\ll 1$ (this will be a subject of future study).

\section{Proof of the theorems}
\label{sec:proofs}

Here we complete the proofs of Theorems \ref{t:cC}--\ref{t:ok}.

\paragraph{Proof of Theorem \ref{t:cC}}
The main point of the proof is the lower bound in (\ref{eq:cC}), since
the upper bound is easily obtained by constructing a suitable trial
function (as in Lemma \ref{l:ub0}). The basic tool for the lower bound
is a kind of interpolation inequality obtained in Lemma
\ref{l:interp}. Note that the proof for $E$ works in any space
dimension.

To prove the lower bound, let us denote by $u$ a minimizer of
$E$. Introducing
\begin{eqnarray}
  \label{eq:uq}
  \hat a_q = \int_\Omega e^{i q \cdot x} (u(x) - \bar u)\, dx,
\end{eqnarray}
where $q \in 2 \pi \mathbb Z^d$, we can estimate the energy of the
minimizer as follows
\begin{eqnarray}
  \label{eq:44}
  \min E & \geq & \frac12 \int_\Omega \int_\Omega (u(x) - \bar u) 
  G(x - y) (u(y) - \bar u) \, dx dy \nonumber \\ 
  & = & \frac12 \sum_q {|\hat a_q|^2 \over \kappa^2 + |q|^2}
  \geq {|\hat a_0|^2 \over 2 \kappa^2} \nonumber \\
  & = & \frac{1}{2 \kappa^2} \left( \int_\Omega (u - \bar
    u) \, dx \right)^2 =  {2 \over \kappa^2} \left( |\Omega^+| - {1
      +  \bar u  \over 2} \right)^2, 
\end{eqnarray}
where we introduced the set $\Omega^+ = \{ u = +1 \}$.

In view of the upper bound in (\ref{eq:cC}), it follows from
(\ref{eq:44}) that $|\Omega^+| = \tfrac12 (1 + \bar u) +
O(\ep^{2/3})$, implying that $|\Omega^+|$ is bounded away from 0 or 1
for $\ep \ll 1$.  Hence, by isoperimetric inequality there exists $p >
0$ such that
\begin{eqnarray}
  \label{eq:40}
  P = \int_\Omega |\nabla u| \, dx \geq p,
\end{eqnarray}
whenever $\ep \ll 1$.  Applying Lemma \ref{l:interp} to $u - \bar u$,
we conclude that
\begin{eqnarray}
  \label{eq:43}
  \min E \geq \ep P + {C \over P^2}, 
\end{eqnarray}
for some $C > 0$ independent of $\ep$, for $\ep \ll 1$. The result
then follows from an application of Young inequality and Propositions
\ref{p:1del} and \ref{p:lower}.  \qed

\paragraph{Proof of Theorem \ref{t:main}}
This theorem combines a number of results proved in
Sec. \ref{sec:sharp} in the original, unscaled variables. Part (i) of
the theorem is the statement of Proposition \ref{p:db}. Part (ii) of
the theorem is the collection of results from Lemma \ref{l:ub0}
(taking into account that $E[u_k] < E[-1] = \tfrac12 \ep^{4/3} |\ln
\ep|^{2/3} \kappa^{-2} \bar\delta^2$ for $\bar \delta > \tfrac12
\sqrt[3]{9} \, \kappa^2$), Corollary \ref{c:c3}, Lemma \ref{l:N}, and
Propositions \ref{p:disk}, \ref{p:EEE}, and \ref{p:beta} with $\alpha
= \tfrac13 - \sigma$. Part (iii) of the theorem is contained in the
statements of Propositions \ref{p:ri} and \ref{p:dens}. \qed

\paragraph{Proof of Theorem \ref{t:ok}}

First of all, we have $\min \mathcal E \ll 1$ when $\ep \ll 1$ and
$\bar u = -1 + O (\ep^{2/3} |\ln \ep|^{1/3})$, since $\min \mathcal E
\leq \mathcal E(\bar u) = O(\ep^{4/3} |\ln \ep|^{2/3})$ in that
case. Then, from Proposition \ref{p:1del} and Lemma \ref{l:K} we
conclude that the assumptions of Propositions \ref{p:lower} and
\ref{p:upper} are satisfied for the minimizers of $\mathcal
E$. Therefore, the energies $E$ and $\mathcal E$ are asymptotically
the same in the considered limit, and the conclusion follows from
Theorem \ref{t:main} (the case $\bar\delta = \tfrac12 \sqrt[3]{9} \,
\kappa^2$ is included by monotone decrease of $\bar E$ with
$\bar\delta$). \qed

\section*{Acknowledgments}
\label{sec:acknowledgements}

The author would like to acknowledge valuable discussions with
M. Kiessling, H. Kn\"upfer, V. Moroz, M. Novaga and G. Orlandi. This
work was supported, in part, by NSF via grant DMS-0718027.

\appendix

\section{Upper bound}
\label{sec:ub0}

Here we construct a trial function that achieves the lower bound for
the energy of the non-trivial minimizers of $E$.

\begin{lemma}
  \label{l:ub0}
  Let $\bar u = -1 + \ep^{2/3} |\ln \ep|^{1/3} \bar \delta$, with
  $\bar\delta > \tfrac12 \sqrt[3]{9} \, \kappa^2$ fixed. Then there
  exists $u \in BV(\Omega; \{-1, 1\})$, such that
  \begin{eqnarray}
    \label{eq:En}
    E[u] =  \ep^{4/3} |\ln \ep|^{2/3} \left\{ {\sqrt[3]{9} \over 2}
      \left( \bar\delta - \frac14 \sqrt[3]{9} \, \kappa^2 \right) +
      O \left({\ln |\ln \ep| \over |\ln \ep|} \right) \right\}, 
  \end{eqnarray}
  for $\ep \ll 1$.
\end{lemma}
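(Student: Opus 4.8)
The plan is to take for $u$ a union of $N$ equal, pairwise disjoint disks whose centres form a regular square lattice in $\Omega=[0,1)^2$, with the parameters tuned to the values predicted in Section~\ref{sec:heur}. Set $r=\sqrt[3]{3}\,\ep^{1/3}|\ln\ep|^{-1/3}$, put $\rho_\ast=\frac{1}{2\pi\sqrt[3]{9}}\bigl(\bar\delta-\tfrac{\sqrt[3]{9}}{2}\kappa^2\bigr)>0$, let $M=\lceil(\rho_\ast|\ln\ep|)^{1/2}\rceil$, $N=M^2$, take the centres $x_1,\dots,x_N$ to be the points of $\tfrac1M\mathbb Z^2\cap[0,1)^2$, and define $u=-1+2\sum_{i=1}^N\chi_{B_r(x_i)}$. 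Then $u\in BV(\Omega;\{-1,1\})$; since $2r=o(1/M)$ the disks are disjoint and $|x_i-x_j|\ge 1/M$ for $i\ne j$ once $\ep\ll1$, and $N=\rho_\ast|\ln\ep|+O(|\ln\ep|^{1/2})$.

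I would then compute $E[u]$ directly. Writing $u-\bar u=-(1+\bar u)+2\sum_i\chi_{B_r(x_i)}$, expanding the non-local quadratic form, and using the torus identity $\int_\Omega G(x-y)\,dy=\kappa^{-2}$, one finds
\begin{eqnarray*}
  E[u] & = & E[-1]+2\pi N\ep r-2\pi(1+\bar u)\kappa^{-2}Nr^2 \\
  & & {}+2\sum_{i,j}\int_{B_r(x_i)}\int_{B_r(x_j)}G(x-y)\,dx\,dy,
\end{eqnarray*}
with $E[-1]=\tfrac12(1+\bar u)^2\kappa^{-2}=\tfrac12\ep^{4/3}|\ln\ep|^{2/3}\kappa^{-2}\bar\delta^2$. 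In the double sum the diagonal part is the single-disk self-energy, equal by (\ref{eq:Gest}) (as in \cite{m:pre02}) to $-N\pi r^4(\ln\bar\kappa r-\tfrac14)+O(Nr^5)$; the off-diagonal part, after Taylor-expanding $G$ about $x_i-x_j$ and using $r\ll|x_i-x_j|$ together with $|\nabla G(z)|=O(|z|^{-1})$ near $0$, equals $2\pi^2r^4\sum_{i\ne j}G(x_i-x_j)+O(r^5N^2)$, an error which is plainly $o(\ep^{4/3}|\ln\ep|^{2/3})$.

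The heart of the argument is the estimate of the lattice sum $S:=\sum_{i\ne j}G(x_i-x_j)$, which by translation invariance of the lattice on the torus equals $N\sum_{k}G(k)$ (sum over the nonzero vectors of $\tfrac1M\mathbb Z^2\cap[0,1)^2$). Here I would use the Ewald-type device of Section~\ref{sec:lim}: with $\sigma=\ln|\ln\ep|/|\ln\ep|$ — so that $\ep^\sigma=|\ln\ep|^{-1}\ll 1/M$ and $\sigma|\ln\ep|=\ln|\ln\ep|$ — replace $G$ by $G_\sigma$. Since $\sum_i e^{iq\cdot x_i}$ equals $N$ for $q\in2\pi M\mathbb Z^2$ and vanishes otherwise on $2\pi\mathbb Z^2$, (\ref{eq:7}) gives $\sum_{i,j}G_\sigma(x_i-x_j)=N^2\kappa^{-2}+N^2\sum_{m\ne0}\hat G_\sigma(2\pi Mm)=N^2\kappa^{-2}+O(N\ln|\ln\ep|)$; subtracting the self-terms $NG_\sigma(0)=O(N\ln|\ln\ep|)$ (by (\ref{eq:92})) and the lattice correction $\sum_{i\ne j}(G-G_\sigma)(x_i-x_j)=O(\ln|\ln\ep|)$ (by (\ref{eq:70}), using $|\Delta G|=O(\ln(1/|\cdot|))$ away from the origin) yields $S=N^2\kappa^{-2}+O(N\ln|\ln\ep|)$. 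I expect this step to be the main obstacle: one must control the logarithmic singularity of $G$ finely enough to keep the error at $O(N\ln|\ln\ep|)$, which forces the choice of $\sigma$ to balance the Fourier tail $O(N\sigma|\ln\ep|)$, the mollification error $O(N^2\ep^{2\sigma}\ln(1/|\cdot|))$, and the separation requirement $\ep^\sigma\ll1/M$.

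Finally I would substitute the chosen $r$ and $N$ (using $\ln\bar\kappa r=\ln(\bar\kappa\sqrt[3]{3})-\tfrac13|\ln\ep|-\tfrac13\ln|\ln\ep|$). Each of the four terms is then of order $\ep^{4/3}|\ln\ep|^{2/3}$ and, collecting everything together with the errors $O(Nr^5)$, $O(r^5N^2)$, $O(r^4N\ln|\ln\ep|)$ and the $O(|\ln\ep|^{-1})$ coming from $N\ne\rho_\ast|\ln\ep|$, one obtains
\begin{eqnarray*}
  E[u]-E[-1] & = & \ep^{4/3}|\ln\ep|^{2/3}\Bigl(\rho_\ast V_0(\sqrt[3]{3})+\tfrac{2\pi^2(\sqrt[3]{3})^4}{\kappa^2}\rho_\ast^2\Bigr) \\
  & & {}+O\Bigl(\ep^{4/3}|\ln\ep|^{2/3}\,\tfrac{\ln|\ln\ep|}{|\ln\ep|}\Bigr),
\end{eqnarray*}
with $V_0$ the function of (\ref{eq:Vbri}) at $v=0$. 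A direct check shows that $\bar r=\sqrt[3]{3}$ maximises $-V_0(\bar r)/\bar r^2$ and that $\rho_\ast$ minimises $\rho\mapsto\rho V_0(\sqrt[3]{3})+\tfrac{2\pi^2(\sqrt[3]{3})^4}{\kappa^2}\rho^2$, with minimal value $-\tfrac1{2\kappa^2}\bigl(\bar\delta-\tfrac{\sqrt[3]{9}}{2}\kappa^2\bigr)^2$; adding $E[-1]=\tfrac12\ep^{4/3}|\ln\ep|^{2/3}\kappa^{-2}\bar\delta^2$ and using $\tfrac{\bar\delta^2}{2\kappa^2}-\tfrac1{2\kappa^2}\bigl(\bar\delta-\tfrac{\sqrt[3]{9}}{2}\kappa^2\bigr)^2=\tfrac{\sqrt[3]{9}}{2}\bigl(\bar\delta-\tfrac14\sqrt[3]{9}\kappa^2\bigr)$ gives (\ref{eq:En}).
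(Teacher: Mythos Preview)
Your proposal is correct and uses the same trial function as the paper (a square lattice of $M^2$ identical disks of radius $r=\sqrt[3]{3}\,\ep^{1/3}|\ln\ep|^{-1/3}$), but your treatment of the lattice sum is genuinely different.

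The paper does \emph{not} use the Ewald/Fourier machinery of Section~\ref{sec:lim}. Instead, it exploits the explicit Bessel-function representation (\ref{eq:G}) together with the periodicity of the configuration: on a $k\times k$ lattice in $\Omega$, the energy reduces to $k^2$ copies of the single-disk energy in a torus of side $1/k$, which replaces $\kappa$ by $\kappa/k$ in the lattice sum. The interaction term then becomes $r^4\sum_{\mathbf n\in\mathbb Z^2\setminus\{0\}}K_0(\kappa k^{-1}|\mathbf n|)$, and the key step is simply the Riemann-sum approximation
\[
k^{-2}\sum_{\mathbf n\in\mathbb Z^2\setminus\{0\}}K_0(\kappa k^{-1}|\mathbf n|)=\int_{\mathbb R^2}K_0(\kappa|x|)\,dx+O(k^{-2}\ln k)=\frac{2\pi}{\kappa^2}+O(k^{-2}\ln k),
\]
which immediately yields the $O(\ln|\ln\ep|/|\ln\ep|)$ remainder without any mollification or choice of $\sigma$.

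Your Poisson-summation route works too (the three error pieces you identify---the Fourier tail $N^2\sum_{m\ne0}\hat G_\sigma(2\pi M m)$, the self-term $NG_\sigma(0)$, and the mollification error $\sum_{i\ne j}(G-G_\sigma)$---do balance at $O(N\ln|\ln\ep|)$ with your choice of $\sigma$), and it has the virtue of not needing the explicit special-function form of $G$, so it would transfer to other kernels. The paper's argument is shorter and avoids the delicate balancing act you rightly flag as the main obstacle; it gets the same error with essentially a one-line integral comparison.
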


\begin{proof}
  First, consider $u_1(x) = -1 + 2 \chi_{B_r(0)}(x)$, where
  $\chi_{B_r(0)}$ is the characteristic function of a disk of radius
  $r$ centered at the origin. If $v_1(x) = \int_\Omega G(x - y)
  (u_1(y) - \bar u) \, dy$, then by using \eqref{eq:G} we explicitly
  have (see (\ref{eq:vB}))
  \begin{eqnarray}
    \label{eq:82}
    v_1(x) = - {1 + \bar u \over \kappa^2} + {2 \over \kappa^2 } 
     (1 - \kappa r K_1(\kappa r)
    I_0(\kappa |x|)), \hspace{3cm} \\
    + {2 \over \kappa} 
    \sum_{\mathbf n \in \mathbb Z^2 \backslash \{0\}}  r I_1(\kappa r)
    K_0(\kappa |x + \mathbf n|)), \qquad |x| \leq r,
  \end{eqnarray}
  where $K_n$ and $I_n$ are the modified Bessel functions of the first
  and second kind. Therefore, expanding the Bessel functions for $r
  \ll 1$ \cite{abramowitz}, we can write for $|x| \leq r$
  \begin{eqnarray}
    \label{eq:83}
    v_1(x) = - {1 + \bar u \over \kappa^2} - {r^2 \over 2} \left(2 \ln 
      \kappa r  + 2 \gamma - \ln 4 - 1 \right) - {|x|^2 \over 2}
    \qquad \nonumber \\  
    + r^2 \sum_{\mathbf n \in 
      \mathbb Z^2 \backslash \{0\}} K_0(\kappa |x + \mathbf n|) +
    O(r^4 |\ln r|),
 \end{eqnarray}
 where $\gamma \approx 0.5772$ is the Euler's constant. Substituting
 this expression into the definition of $E$, after integration we get
 \begin{eqnarray}
   \label{eq:84}
   E[u_1] = 2 \pi \ep r + \tfrac12 (1 + \bar u)^2 \kappa^{-2} - 2 \pi
   (1 + \bar u) \kappa^{-2} r^2 \hspace{3cm} \nonumber \\ 
   - \pi r^4 ( \ln \kappa r + \gamma - \ln 2 - \tfrac14)  + \pi r^4
   \sum_{\mathbf n \in  \mathbb Z^2 \backslash \{0\}} K_0(\kappa
   |\mathbf n|) + O(r^6 |\ln r|).
 \end{eqnarray}

 Now, consider a new test function
 \begin{eqnarray}
   \label{eq:86}
   u_k(x) = -1 + 2 \sum_{k_1 = 1}^k \sum_{k_2 =
     1}^k \chi_{B_{r}(\mathbf e_1 (k_1 - \frac12) + \mathbf e_2 (k_2
     - \frac12) )}(x),
 \end{eqnarray}
 consisting of $k^2$ disks of radius $r$ arranged periodically in
 $\Omega$ (here $\mathbf e_1$ and $\mathbf e_2$ are the unit vectors
 along the coordinate axes). We have
 \begin{eqnarray}
   \label{eq:85}
   E[u_k] =  \tfrac12 (1 + \bar u)^2
   \kappa^{-2}  + \pi k^2 \Bigl( 2 \ep r - 
   2  (1 + \bar u) \kappa^{-2} r^2 \nonumber \\  
   - r^4 ( \ln \kappa r + \gamma - \ln 2 - \tfrac14)
  + r^4 \sum_{\mathbf n \in  \mathbb Z^2 \backslash \{0\}}
   K_0(\kappa k^{-1} |\mathbf n|) \Bigl)  + O(k^2 r^6 |\ln r|). 
 \end{eqnarray}
 Approximating the sum in \eqref{eq:85} by an integral:
 \begin{eqnarray}
   \label{eq:87}
   k^{-2} \sum_{\mathbf n \in  \mathbb Z^2 \backslash \{0\}}
   K_0(\kappa k^{-1} |\mathbf n|) = \int_{\mathbb R^2} K_0(\kappa |x|)
   dx + O(k^{-2} \ln k) \nonumber \\ 
   = 2 \pi \kappa^{-2} + O(k^{-2} \ln k),
 \end{eqnarray}
 and expanding for $r \ll 1$, we can further write
 \begin{eqnarray}
   \label{eq:88}
   E[u_k] = \tfrac12 (1 + \bar u)^2
   \kappa^{-2}  + \pi k^2 \Bigl( 2 \ep r - 2 
   (1 + \bar u) \kappa^{-2} r^2 \nonumber \\  
   - r^4 \ln r  + 2 \pi \kappa^{-2} r^4 k^2 \Bigr) + O(k^2 r^4 \ln k).   
 \end{eqnarray}
 We now substitute $r = \ep^{1/3} |\ln \ep|^{-1/3} \sqrt[3]{3}$ into
 the expression above. Using also the definition in (\ref{eq:dbar}),
 we can write
 \begin{eqnarray}
   \label{eq:90}
   E[u_k] = \ep^{4/3} |\ln \ep|^{2/3} \Bigl( \tfrac12 \kappa^{-2}
   \bar\delta^2 - 2 \pi \sqrt[3]{9} |\ln \ep|^{-1} \kappa^{-2}
   \Bigl(\bar\delta - \tfrac12 \sqrt[3]{9} \, \kappa^2 \Bigr)  k^2
   \nonumber \\ 
   + 6 \pi^2 \sqrt[3]{3} \kappa^{-2} |\ln \ep|^{-2} k^4 \Bigr) +
   O(\ep^{4/3} |\ln \ep|^{-4/3} k^2 \ln k). 
 \end{eqnarray}
 Finally, setting
 \begin{eqnarray}
   \label{eq:91}
   k^2 = {|\ln \ep| \over 2 \pi \sqrt[3]{9}} \Biggl (\bar\delta -
   {\sqrt[3]{9} \over 2} \, \kappa^2 \Biggr) + O(1),
 \end{eqnarray}
 we obtain (\ref{eq:En}) with $u = u_k$.  
\end{proof}

Let us also quote without proof a similar result concerning the upper
bound for the reduced energy $E_N$.

\begin{lemma}
  \label{l:ubN}
  Let $\bar u = -1 + \ep^{2/3} |\ln \ep|^{1/3} \bar \delta$, with
  $\bar \delta > \tfrac12 \sqrt[2]{9} \, \kappa^2$ fixed. Then
  \begin{eqnarray}
    \label{eq:101}
    \min E_N \leq -\frac{1}{2 \kappa^2} \, \ep^{4/3} |\ln
    \ep|^{2/3} \left(\bar 
      \delta - \frac{\sqrt[3]{9}}{2} \, \kappa^2 \right)^2 +
    O \left({\ep^{4/3} \ln |\ln \ep| \over  |\ln \ep|^{1/3}} \right).  
  \end{eqnarray}
\end{lemma}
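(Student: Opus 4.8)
The proof will closely parallel that of Lemma \ref{l:ub0}: the reduced energy $E_N$ in (\ref{eq:EN}) is assembled precisely from the dominant terms that appear there, so the bound is obtained by evaluating $E_N$ on the same periodic configuration and then optimizing over the number of droplets. Concretely, I would take $N = k^2$ droplets of equal radius $r$ centered at the $k^2$ points of the square lattice of spacing $k^{-1}$ in $\Omega$, exactly as in the definition of $u_k$ in the proof of Lemma \ref{l:ub0}, and eventually set $r = \sqrt[3]{3}\,\ep^{1/3}|\ln\ep|^{-1/3}$.

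On such a configuration the first three (single-droplet) terms of (\ref{eq:EN}) contribute simply $k^2$ copies of the one-droplet expression. For the pairwise term one uses translation invariance on the torus together with periodicity of $G$ to write $\sum_{i\ne j} G(x_i - x_j) = k^2 \sum_{\mathbf m \in \{0,\dots,k-1\}^2\setminus\{0\}} G(\mathbf m/k)$; inserting the series representation (\ref{eq:G}) and collapsing the sums over $\mathbf m$ and over the lattice of images then gives $\sum_{\mathbf m} G(\mathbf m/k) = \tfrac{1}{2\pi}\sum_{\mathbf p \in \mathbb Z^2 \setminus \{0\}} K_0(\kappa|\mathbf p|/k)$, so that the pairwise part of $E_N$ equals $\pi k^2 r^4 \sum_{\mathbf p \ne 0} K_0(\kappa|\mathbf p|/k)$. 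Next I would estimate this lattice sum by the corresponding integral exactly as in (\ref{eq:87}), namely $k^{-2}\sum_{\mathbf p\ne 0} K_0(\kappa|\mathbf p|/k) = \int_{\mathbb R^2} K_0(\kappa|x|)\,dx + O(k^{-2}\ln k) = 2\pi\kappa^{-2} + O(k^{-2}\ln k)$, the error being the standard one for a two-dimensional Riemann sum of a function with an integrable logarithmic singularity at the origin.

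Expanding the remaining Bessel and logarithmic contributions for $r \ll 1$ and substituting $r = \sqrt[3]{3}\,\ep^{1/3}|\ln\ep|^{-1/3}$ together with $1 + \bar u = \bar\delta\,\ep^{2/3}|\ln\ep|^{1/3}$ (cf.\ (\ref{eq:dbar})), this reduces $E_N$ to the same quadratic polynomial in $k^2$ that appears in (\ref{eq:90}), but with the constant $\tfrac12\kappa^{-2}\bar\delta^2$ omitted (that constant is $E[-1]$, which is already subtracted in $E_N$), up to an error $O(\ep^{4/3}|\ln\ep|^{-4/3} k^2 \ln k)$. Finally I would minimize this quadratic over $k \in \mathbb N$. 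Since $\bar\delta > \tfrac12\sqrt[3]{9}\,\kappa^2$, its continuous minimizer $k^2 = \frac{|\ln\ep|}{2\pi\sqrt[3]{9}}\bigl(\bar\delta - \tfrac{\sqrt[3]{9}}{2}\kappa^2\bigr)$ is positive and of order $|\ln\ep|$; choosing $k$ to be the nearest integer, as in (\ref{eq:91}), produces the value $-\frac{1}{2\kappa^2}\ep^{4/3}|\ln\ep|^{2/3}\bigl(\bar\delta - \tfrac{\sqrt[3]{9}}{2}\kappa^2\bigr)^2$, while the integer rounding costs $O(1)$ in $k^2$, hence $O(\ep^{4/3}|\ln\ep|^{-1/3})$ in the energy. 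Combining this with the lattice-sum error, which with $k^2 = O(|\ln\ep|)$ becomes $O(\ep^{4/3}|\ln\ep|^{-1/3}\ln|\ln\ep|)$, yields (\ref{eq:101}).

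There is no serious obstacle here: the computation is essentially a bookkeeping exercise mirroring Lemma \ref{l:ub0}. The only step requiring any care is the $O(k^{-2}\ln k)$ control of the lattice sum (which, however, is routine), and one must merely check that the hypothesis $\bar\delta > \tfrac12\sqrt[3]{9}\,\kappa^2$ is exactly what keeps the optimal number of droplets positive so that the minimization over $k$ is meaningful.
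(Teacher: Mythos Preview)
Your proposal is correct and follows exactly the route the paper intends: the paper states Lemma~\ref{l:ubN} without proof, remarking only that it is ``a similar result'' to Lemma~\ref{l:ub0}, and your argument is precisely the evaluation of $E_N$ on the same periodic $k\times k$ configuration followed by the same optimization in $k$. One small slip: rounding $k$ to an integer changes $k^2$ by $O(k)=O(|\ln\ep|^{1/2})$, not $O(1)$, but since the resulting energy deviation is second order near the minimum, namely $O(b\,k^2)=O(\ep^{4/3}|\ln\ep|^{-1/3})$, your stated error bound is unaffected.
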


\section{Interpolation inequality}
\label{sec:interp}

Here we present the lemma that connects the non-local part of the
energy with the interfacial energy via a kind of an interpolation
inequality between $BV(\Omega)$, $H^{-1}(\Omega)$ and
$L^\infty(\Omega)$, for functions bounded away from zero.

\begin{lemma}
  Let $u \in BV(\Omega)$, where $\Omega = [0, 1)^d$ is a torus, and
  assume that $m \leq |u| \leq M$ in $\Omega$ for some $M \geq m >
  0$. Let also $\int_\Omega |\nabla u| \, dx \geq p > 0$, and let $G$
  solve (\ref{eq:Gk}) in $\Omega$ with periodic boundary
  conditions. Then there exists a constant $C = C(d, \kappa / p,
  m^2/M) > 0$ such that
  \begin{eqnarray}
    \label{eq:17}
    \int_\Omega \int_\Omega u(x) G(x - y) u(y) \, dx \, dy \geq C
    \left( \int_\Omega |\nabla u| \, dx \right)^{-2}.
  \end{eqnarray}
  \label{l:interp}
\end{lemma}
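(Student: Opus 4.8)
The plan is to pass to the Fourier representation of the bilinear form on the torus and to use the two-sided bound $m\le|u|\le M$ to show that a definite fraction of the $L^2$-mass of $u$ is carried by frequencies $\lesssim P:=\int_\Omega|\nabla u|\,dx$; the screened kernel then turns this into the asserted lower bound $CP^{-2}$.

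Concretely, writing $\hat u_q=\int_\Omega e^{iq\cdot x}u(x)\,dx$ for $q\in2\pi\mathbb Z^d$, one has, since $G$ solves (\ref{eq:Gk}) with periodic boundary conditions,
\[
\int_\Omega\int_\Omega u(x)\,G(x-y)\,u(y)\,dx\,dy=\sum_{q\in2\pi\mathbb Z^d}\frac{|\hat u_q|^2}{\kappa^2+|q|^2},
\]
while $|u|\ge m$ on $\Omega=[0,1)^d$ forces $\sum_q|\hat u_q|^2=\|u\|_{L^2(\Omega)}^2\ge m^2$. The key ingredient is a high-frequency tail estimate for $BV$ functions. I would fix once and for all a Schwartz function $\phi$ with $\int_{\mathbb R^d}\phi=1$, $\hat\phi(0)=1$ and $\hat\phi$ supported in the unit ball, and put $\phi_\Lambda:=\Lambda^d\phi(\Lambda\,\cdot)$. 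The elementary translation bound $\|u-u(\cdot-h)\|_{L^1(\Omega)}\le|h|\,P$ (valid on the torus for $u\in BV$) gives $\|u-\phi_\Lambda*u\|_{L^1(\Omega)}\le C_\phi\Lambda^{-1}P$, while $\|u-\phi_\Lambda*u\|_{L^\infty(\Omega)}\le2M$; interpolating and using that $1-\hat\phi(q/\Lambda)=1$ for $|q|>\Lambda$, one obtains
\[
\sum_{|q|>\Lambda}|\hat u_q|^2\le\|u-\phi_\Lambda*u\|_{L^2(\Omega)}^2\le\|u-\phi_\Lambda*u\|_{L^1}\,\|u-\phi_\Lambda*u\|_{L^\infty}\le 2MC_\phi\Lambda^{-1}P,
\]
with $C_\phi=C_\phi(d)>0$.

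To finish, choose $\Lambda:=\max\{4MC_\phi P/m^2,\,1\}$, so that the tail above is $\le m^2/2$, hence $\sum_{|q|\le\Lambda}|\hat u_q|^2\ge\|u\|_{L^2}^2-m^2/2\ge m^2/2$. Feeding this into the Fourier identity,
\[
\int_\Omega\int_\Omega u(x)\,G(x-y)\,u(y)\,dx\,dy\ \ge\ \frac{1}{\kappa^2+\Lambda^2}\sum_{|q|\le\Lambda}|\hat u_q|^2\ \ge\ \frac{m^2}{2(\kappa^2+\Lambda^2)},
\]
and since $P\ge p$ one has $\Lambda\le(4MC_\phi m^{-2}+p^{-1})P$ and $\kappa^2\le\kappa^2p^{-2}P^2$, so $\kappa^2+\Lambda^2\le\big(\kappa^2p^{-2}+(4MC_\phi m^{-2}+p^{-1})^2\big)P^2$, which is (\ref{eq:17}) with a constant $C>0$ of the required type. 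The only genuine obstacle is the tail estimate: the crude bound $|q|\,|\hat u_q|\le\sqrt d\,P$ is useless here because $\sum_{|q|>\Lambda}|q|^{-2}$ diverges for $d\ge2$, and it is exactly the $L^\infty$ bound $M$ that allows one to upgrade the $L^1$ mollification estimate to the $L^2$ one needed above. (In the case $\Lambda=1$, i.e.\ when $P$ is comparable to or below $m^2/M$, the same computation applies and merely records that then $|\hat u_0|^2\ge m^2/2$, i.e.\ $u$ is nearly constant.)
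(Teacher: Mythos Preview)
Your argument is correct and rests on the same mechanism as the paper's: mollify $u$ at scale comparable to $m^2/(MP)$, use the BV translation estimate combined with the two-sided $L^\infty$ bound to show that a definite fraction of $\|u\|_{L^2}^2\ge m^2$ sits at frequencies $|q|\lesssim MP/m^2$, and read off the lower bound from the Fourier representation of the screened Coulomb form. The paper mollifies with the normalized ball indicator $\chi_\delta$ and then applies Cauchy--Schwarz in Fourier, which requires the explicit Bessel asymptotics of $\hat\chi_\delta$; your choice of a nonnegative Schwartz mollifier with $\hat\phi$ supported in the unit ball makes the low-frequency step cleaner and avoids that computation. (Such a $\phi$ does exist: take $\phi=c\,|\psi|^2$ with $\hat\psi\in C_c^\infty(B_{1/2})$.)

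One small point: as written, your final constant $\tfrac{m^2}{2(\kappa^2+\Lambda^2)}$ does not reduce to a function of $(d,\kappa/p,m^2/M)$ alone---the bare $m^2$ in the numerator survives. The paper gets the stated dependence precisely through the Cauchy--Schwarz step, which pairs the extra factor of $m^2$ with $\|u\|_{L^2}^2\le M^2$. You can recover this within your framework by replacing the direct estimate with
\[
\Bigl(\tfrac{m^2}{2}\Bigr)^{2}\le\Bigl(\sum_{|q|\le\Lambda}|\hat u_q|^2\Bigr)^{2}\le(\kappa^2+\Lambda^2)\Bigl(\sum_{|q|\le\Lambda}|\hat u_q|^2\Bigr)\sum_q\frac{|\hat u_q|^2}{\kappa^2+|q|^2}\le(\kappa^2+\Lambda^2)\,M^2\sum_q\frac{|\hat u_q|^2}{\kappa^2+|q|^2},
\]
which yields a constant of order $(m^2/M)^4$ depending only on $d$, $\kappa/p$ and $m^2/M$, and also renders the $\max(\cdot,1)$ in the definition of $\Lambda$ unnecessary. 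For the application in the proof of Theorem~\ref{t:cC} this refinement is immaterial, since there $m,M,p,\kappa$ are all fixed.
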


\begin{proof}
  First, extend $u$ periodically to the whole of $\mathbb R^d$. Then,
  introducing $\chi_\delta(x) = \delta^{-d} |B_1|^{-1}
  \chi(\delta^{-1} x)$, where $\chi$ is the characteristic function of
  the unit ball $B_1$ centered at the origin, we have
  \begin{eqnarray}
    \label{eq:38}
    \int_\Omega \int_{\mathbb{R}^n} u(x) \chi_\delta(x - y) u(y) \, dy
    \, dx = {1 \over |B_1|} \int_\Omega \int_{B_1} u(x) u(x + \delta
    y)  \, dy \, dx \nonumber \\ 
    \geq m^2 - {M \delta \over |B_1|} \int_\Omega \int_{B^1}
    \int_0^1 |\nabla u(x  +
    \delta t y)| \, dt \, dy \, dx \geq m^2 - M \delta \int_\Omega
    |\nabla u| \, dx, 
  \end{eqnarray}
  where the inequality is obtained by approximating $u$ by $C^1$
  functions and passing to the limit. Therefore, choosing
  \begin{eqnarray}
    \label{eq:del2}
    \delta = \left( {2 M \over m^2} \int_\Omega |\nabla u| \, dx
    \right)^{-1}, 
  \end{eqnarray}
  we obtain
  \begin{eqnarray}
    \label{eq:23}
    \frac{m^2}{2} \leq \int_\Omega \int_{\mathbb{R}^n} u(x) 
    \chi_\delta(x - y) u(y) \, dx \, dy  = \sum_q \hat \chi_\delta(q)
    |\hat u_q|^2,
  \end{eqnarray}
  where we introduced Fourier transform $\hat u_q$ of $u$:
  \begin{eqnarray}
    \label{eq:8}
    \hat u_q = \int_\Omega e^{i q \cdot x} u(x) \, dx,
  \end{eqnarray}
  with $q \in 2 \pi \mathbb Z^d$. The Fourier transform $\hat
  \chi_\delta$ of $\chi_\delta$ is, in turn, explicitly given by
  \begin{eqnarray}
    \label{eq:31}
    \hat \chi_\delta(q) = \left( {2 \over \delta |q|} \right)^{d/2}
    \Gamma \left( {d \over 2} + 1 \right) J_{d/2} (\delta |q|) ,
  \end{eqnarray}
  where $J_{d/2}(x)$ is the Bessel function of the first kind and
  $\Gamma(x)$ is the gamma-function. Now, applying Cauchy-Schwarz
  inequality, we obtain
  \begin{eqnarray}
    \label{eq:33}
    \frac{m^4}{4} \leq \left( \sum_q {|\hat u_q|^2 \over \kappa^2 +
        |q|^2} \right) \left( \sum_q \hat \chi_\delta^2(q) (\kappa^2 +
      |q|^2) |\hat u_q|^2 \right) \nonumber \\ 
      \leq \sup_q \left\{  \hat \chi_\delta^2(q) (\kappa^2 + |q|^2)
      \right\} \sum_q |\hat u_q|^2 \nonumber \\ 
      \times \int_\Omega \int_\Omega u(x) G(x - y) u(y)
      \, dx \, dy.  
  \end{eqnarray}
  Taking into account that $\sum_q |\hat u_q|^2 =
  ||u||_{L^2(\Omega)}^2 \leq M^2$ and that \cite{abramowitz}
  \begin{eqnarray}
    \label{eq:34}
    \delta^2 \hat \chi_\delta^2(q) (\kappa^2 + |q|^2) \leq 
    \begin{cases}
      C_1 (\kappa^2 m^4 M^{-2} p^{-2} + 1), & |q| \delta \leq 1, \\ 
      C_2 (\kappa^2  m^4 M^{-2} p^{-2} + |q|^2 \delta^2) (|q|
      \delta)^{-d-1}, &  |q| \delta > 1,
    \end{cases}
  \end{eqnarray}
  for some $C_{1,2} > 0$ depending only on $d$, we conclude that
  \begin{eqnarray}
    \label{eq:35}
    C \delta^2 \leq 
     \int_\Omega \int_\Omega u(x) G(x - y) u(y)
    \, dx \, dy.  
  \end{eqnarray}
  for some $C > 0$ depending only on $d$, $\kappa/p$, and $m^2/M$.
  The result then follows immediately from (\ref{eq:del2}).
\end{proof}

Let us also make some remarks regarding a few extensions of these
arguments. First, the same estimate holds true in the case where $G$
is the Green's function of the Laplacian in $\Omega$ and $u$ has zero
mean. Note that in this case the constant $C$ in (\ref{eq:17}) becomes
independent on $p$. The proof easily follows by passing to the limit
$\kappa \to 0$ in the lemma. Another observation is that, actually,
for the considered class of functions a stronger interpolation
inequality involving negative Sobolev norms holds. We give only the
statement of the result, the proof follows easily by modifying a few
steps in the arguments above

\begin{proposition}
  Let $u$ be as in Lemma \ref{l:interp}. Then
  \begin{eqnarray}
    \label{eq:41}
    \int_\Omega u \, (1 - \Delta)^{-{d + 1 \over 2}} \, u \, dx  \geq
    C  \left( \int_\Omega |\nabla u| \, dx \right)^{-d-1},
  \end{eqnarray}
  for some $C = C(d, p, m, M) > 0$.
\end{proposition}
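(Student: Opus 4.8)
The plan is to run the proof of Lemma~\ref{l:interp} essentially unchanged, modifying only the Cauchy--Schwarz step so as to produce the bilinear form associated with $(1-\Delta)^{-(d+1)/2}$ rather than that of $G$. As in that proof, extend $u$ periodically to $\mathbb R^d$, mollify against $\chi_\delta$ exactly as in (\ref{eq:38}), and choose $\delta = \bigl(\tfrac{2M}{m^2}\int_\Omega|\nabla u|\,dx\bigr)^{-1}$ as in (\ref{eq:del2}), so that (\ref{eq:23}) gives $\tfrac12 m^2 \le \sum_{q\in 2\pi\mathbb Z^d}\hat\chi_\delta(q)|\hat u_q|^2$, with $\hat u_q$, $\hat\chi_\delta(q)$ as in (\ref{eq:8}), (\ref{eq:31}). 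The hypothesis $\int_\Omega|\nabla u|\,dx\ge p$ guarantees $\delta\le\delta_0:=m^2/(2Mp)$, a bound used below.

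Now write $\hat\chi_\delta(q)|\hat u_q|^2$ as the product of $(1+|q|^2)^{-(d+1)/4}|\hat u_q|$ and $\hat\chi_\delta(q)(1+|q|^2)^{(d+1)/4}|\hat u_q|$ and apply Cauchy--Schwarz to the sum over $q$, obtaining
\begin{eqnarray}
  \tfrac14 m^4 \ \le\ \Bigl(\sum_q (1+|q|^2)^{-\frac{d+1}{2}}|\hat u_q|^2\Bigr)
  \Bigl(\sum_q \hat\chi_\delta^2(q)(1+|q|^2)^{\frac{d+1}{2}}|\hat u_q|^2\Bigr). \nonumber
\end{eqnarray}
The first factor on the right is exactly $\int_\Omega u\,(1-\Delta)^{-(d+1)/2}u\,dx$, and the second factor is at most $\bigl(\sup_q\hat\chi_\delta^2(q)(1+|q|^2)^{(d+1)/2}\bigr)\sum_q|\hat u_q|^2 \le M^2\sup_q\hat\chi_\delta^2(q)(1+|q|^2)^{(d+1)/2}$, using $\sum_q|\hat u_q|^2=\|u\|_{L^2(\Omega)}^2\le M^2$ as in Lemma~\ref{l:interp}.

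The only estimate requiring the explicit form of $\hat\chi_\delta$ is the bound on this supremum. From (\ref{eq:31}) and the uniform decay $|J_{d/2}(x)|\le C_d(1+x)^{-1/2}$ one gets $\hat\chi_\delta^2(q)\le C_d\min\{1,(\delta|q|)^{-(d+1)}\}$; since every nonzero $q\in 2\pi\mathbb Z^d$ satisfies $|q|\ge 2\pi$, hence $1+|q|^2\le 2|q|^2$, splitting into the regions $\delta|q|\le1$ and $\delta|q|>1$ as in (\ref{eq:34}) yields $\hat\chi_\delta^2(q)(1+|q|^2)^{(d+1)/2}\le C_d\max\{1,\delta^{-(d+1)}\}\le C_d(1+\delta_0^{d+1})\,\delta^{-(d+1)}$. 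This is precisely the borderline matching of exponents --- $J_{d/2}$ decaying like $x^{-1/2}$ makes $\hat\chi_\delta^2$ decay like $(\delta|q|)^{-(d+1)}$, which cancels the growth $(1+|q|^2)^{(d+1)/2}\sim|q|^{d+1}$ --- and it is the source of the exponent $d+1$ in (\ref{eq:41}).

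Combining these inequalities gives $\int_\Omega u\,(1-\Delta)^{-(d+1)/2}u\,dx\ge c\,\delta^{d+1}$ with $c=c(d,p,m,M)>0$, and substituting $\delta=(m^2/2M)\bigl(\int_\Omega|\nabla u|\,dx\bigr)^{-1}$ produces (\ref{eq:41}). Since all the analytic content is already contained in Lemma~\ref{l:interp}, I do not expect any genuine obstacle; the only mild bookkeeping point is that $\delta$ need not be $\le1$, which is absorbed by the factor $(1+\delta_0^{d+1})$ above.
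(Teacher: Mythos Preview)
Your proof is correct and follows precisely the approach the paper itself indicates: the paper omits the proof of this proposition, saying only that ``the proof follows easily by modifying a few steps in the arguments above,'' and your modification of the Cauchy--Schwarz step in Lemma~\ref{l:interp} (replacing the weight $(\kappa^2+|q|^2)^{-1}$ by $(1+|q|^2)^{-(d+1)/2}$ and exploiting the matching decay $\hat\chi_\delta^2(q)\lesssim(\delta|q|)^{-(d+1)}$) is exactly what is intended. The bookkeeping with $\delta_0$ to handle the $q=0$ mode and the case $\delta>1$ is a nice touch that the paper leaves implicit.
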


\section{First and second variation}
\label{a:vars}

Here we present the derivation of the first and second variation of
$\bar E$ in $d = 2$, adapted from \cite{m:pre02}.

\begin{lemma}
  \label{l:vars}
  Let $\bar \Omega^+\subset \bar\Omega$ be a set with boundary of
  class $C^2$ and $v$ be given by (\ref{eq:v}). Then, the functional
  $\bar E$ is twice continuously G\^ateaux-differentiable with respect
  to $C^1$-perturbations of $\partial \bar\Omega^+$. Furthermore, the
  first and second G\^ateaux derivatives of $\bar E$ are given by
  (\ref{eq:Ebvar1}) and (\ref{eq:Ebvar2}).
\end{lemma}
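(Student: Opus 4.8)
The plan is to write $|\ln\ep|\,\bar E(\bar\Omega^+_a)=P(a)-2\bar\delta\kappa^{-2}A(a)+2|\ln\ep|^{-1}\mathcal G(a)$, where $P(a)=|\partial\bar\Omega^+_a|$, $A(a)=|\bar\Omega^+_a|$, and $\mathcal G(a)=\int_{\bar\Omega^+_a}\int_{\bar\Omega^+_a}G_\ep(\bar x-\bar y)\,d\bar x\,d\bar y$ with $G_\ep(\cdot):=G(\ep^{1/3}|\ln\ep|^{-1/3}\cdot)$, and to prove that each of the three terms is $C^2$ in $a$ near $a=0$, reading off the first two derivatives. Throughout I work in the reference configuration, parametrizing $\partial\bar\Omega^+$ by arclength and displacing each point rigidly along the ray $\bar x\mapsto\bar x+a\rho(\bar x)\nu(\bar x)$; the hypotheses $\rho\in C^1(\partial\bar\Omega^+)$ and $\partial\bar\Omega^+\in C^2$ (so that the curvature $K$ is continuous) are exactly what is needed for the integrands below to make sense. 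Continuity of the resulting derivatives in $\rho\in C^1$ will be read off from the explicit formulas.

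For $P$ and $A$ the computation is elementary geometry. Differentiating the arclength element $|\gamma_a'|=\bigl((1+a\rho K)^2+a^2(\partial_s\rho)^2\bigr)^{1/2}$ twice in $a$ and integrating gives $P(a)=P(0)+a\int_{\partial\bar\Omega^+}K\rho\,d\mathcal H^1+\tfrac{a^2}{2}\int_{\partial\bar\Omega^+}|\nabla\rho|^2\,d\mathcal H^1+o(a^2)$ (here $|\nabla\rho|$ is the tangential derivative $|\partial_s\rho|$); note that the curvature-squared contributions cancel precisely because the points are moved rigidly along normal rays, which is why the term $-K^2\rho^2$ is absent from \eqref{eq:Ebvar2}. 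Likewise, the Jacobian $1+tK$ of the normal map gives $A(a)=A(0)+\int_{\partial\bar\Omega^+}\bigl(a\rho+\tfrac12a^2K\rho^2\bigr)\,d\mathcal H^1+o(a^2)$. The remainders are uniform for $|a|\le a_0$, and the integrands are algebraic in $\rho,\partial_s\rho$, so $C^2$-dependence and continuity in $\rho$ are immediate.

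The nonlocal term is the only place requiring care. Since $\mathcal G(a)=F(a,a)$, where $F(a,b):=\int_{\bar\Omega^+_a}\int_{\bar\Omega^+_b}G_\ep(\bar x-\bar y)\,d\bar x\,d\bar y$ is the polarized bilinear form (bounded because $G_\ep\in L^1_{\mathrm{loc}}$ with only a logarithmic singularity in $d=2$), and $F$ is symmetric, I would compute $\mathcal G''(0)=2F_{aa}(0,0)+2F_{ab}(0,0)$. Using the tube formula $\int_{\bar\Omega^+_a}f=\int_{\bar\Omega^+}f+\int_{\partial\bar\Omega^+}\int_0^{a\rho(\bar x)}(1+tK(\bar x))f(\bar x+t\nu(\bar x))\,dt\,d\mathcal H^1(\bar x)$ (valid for $f\in C^0$ near $\partial\bar\Omega^+$), with $f=h_0:=\int_{\bar\Omega^+}G_\ep(\cdot-\bar y)\,d\bar y=|\ln\ep|\,v$, one gets $\partial_aF(a,0)|_{a=0}=|\ln\ep|\int_{\partial\bar\Omega^+}v\rho\,d\mathcal H^1$, hence $\mathcal G'(0)=2|\ln\ep|\int_{\partial\bar\Omega^+}v\rho\,d\mathcal H^1$, and $F_{aa}(0,0)=|\ln\ep|\int_{\partial\bar\Omega^+}(Kv+\nu\cdot\nabla v)\rho^2\,d\mathcal H^1$; applying Hadamard's formula to the second slot first gives $F_{ab}(0,0)=\int_{\partial\bar\Omega^+}\int_{\partial\bar\Omega^+}G_\ep(\bar x-\bar y)\rho(\bar x)\rho(\bar y)\,d\mathcal H^1\,d\mathcal H^1$. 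The boundary double integral is finite because $G_\ep$ has an integrable (logarithmic) singularity along a curve, and $\nu\cdot\nabla v$ is continuous up to $\partial\bar\Omega^+$ because $v\in C^{1,\alpha}(\bar\Omega)$ by elliptic regularity for \eqref{eq:vbeq} (with $\chi_{\bar\Omega^+}\in L^\infty$, so $v\in W^{2,p}$ for all $p$). Assembling $|\ln\ep|\,\bar E=P-2\bar\delta\kappa^{-2}A+2|\ln\ep|^{-1}\mathcal G$ and combining the $K\rho^2$ contributions from $A$ and $\mathcal G$ then reproduces \eqref{eq:Ebvar1} and \eqref{eq:Ebvar2} exactly; \eqref{eq:Ebvar3} follows on a minimizer upon substituting the Euler–Lagrange relation $K=2\bar\delta\kappa^{-2}-4v$.

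The main obstacle is making the $\mathcal G$ step rigorous rather than formal: one must justify differentiation under the integral sign against the singular kernels $G_\ep$ and $\nabla G_\ep$ (both integrable in $d=2$, which is where the dimension is used) and control the tube-formula remainders uniformly, then verify that the formulas for the derivatives depend continuously on $\rho\in C^1$ — which reduces to continuity of $\rho\mapsto$ the geometric integrands (algebraic, hence clear) together with the continuity of $v\mapsto v|_{\partial\bar\Omega^+}$ and $v\mapsto\nu\cdot\nabla v|_{\partial\bar\Omega^+}$ already guaranteed by the $C^{1,\alpha}$ bound on $v$. Everything else is bookkeeping, and the computation runs parallel to the one in \cite{m:pre02}.
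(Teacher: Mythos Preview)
Your proposal is correct and follows essentially the same approach as the paper: both pass to tubular (normal) coordinates near $\partial\bar\Omega^+$, expand the perimeter, area, and nonlocal terms to second order in $a$, and identify the remainder control for the singular kernel $G_\ep$ as the only nontrivial step. Your polarization $\mathcal G(a)=F(a,a)$ is a convenient bookkeeping device that the paper replaces by writing the double integral directly as a four-fold integral over $\partial\bar\Omega^+\times\partial\bar\Omega^+\times[0,a\rho(\bar x)]\times[0,a\rho(\bar y)]$ and Taylor-expanding; the paper then carries out explicitly the $O(a^{2+\alpha})$ remainder bound (via a $C a^2|\ln a|$ estimate on the kernel difference) that you correctly flag as ``the main obstacle'' but leave at the level of a sketch.
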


\begin{proof}
  Let $a > 0$, let $\rho \in C^1(\partial \bar\Omega^+)$, and let
  $\bar\Omega^+_a$ be the set obtained from $\bar\Omega^+$ by
  transporting each point of $\partial \bar\Omega^+$ by $a \rho$ in
  the direction of the outward normal. Note that for sufficiently
  small $a$ the set $\partial\bar\Omega^+_a$ is of class $C^1$, in
  view of regularity of $\partial \bar\Omega^+$. Then, if $\bar E_a =
  \bar E(\bar\Omega^+_a)$ and $\bar E = \bar E(\bar\Omega^+)$, from
  (\ref{eq:Ebar}) we have explicitly
  \begin{eqnarray}
    \label{eq:55}
    |\ln \ep| (\bar E_a - \bar E) = \int_{\partial\bar\Omega^+}
    \left( \sqrt{(1 + a K(\bar  x) \rho(\bar x))^2 + a^2 |\nabla
        \rho(\bar x)|^2} - 1 \right) \,  d \mathcal 
    H^1(\bar x) \nonumber \\ 
    + \int_{\partial \bar\Omega^+} \int_0^{a \rho(\bar x)}
    ( 4 v(\bar x + r \nu(\bar x)) - 2 \bar\delta \kappa^{-2}) (1 +
    K(\bar x) r)  \, dr \, d  \mathcal H^1(\bar x) \nonumber \\ 
    + 2 |\ln \ep|^{-1} \int_{\partial \bar\Omega^+}
    \int_{\partial \bar\Omega^+} \int_0^{a
      \rho(\bar x)} \int_0^{a \rho(\bar
      y)}   (1 + K(\bar x) r)  (1 + K(\bar y) r')\nonumber \\ 
    \times G \bigr(\ep^{1/3} |\ln \ep|^{-1/3} (\bar x + r \nu(\bar x)
    - \bar y - r'  \nu(\bar y)) \bigl) \, dr' dr \, d  \mathcal
    H^1(\bar y) d  \mathcal H^1(\bar x),\nonumber \\
  \end{eqnarray}
  where $K(\bar x)$ is curvature, $\nu(\bar x)$ is the outward unit
  normal at $\bar x \in \partial \bar\Omega^+$, and we rewrote the
  integrals in terms of the curvilinear coordinates consisting of the
  projection $\bar x$ of a point $x \in \bar \Omega$ to
  $\partial\bar\Omega^+$ and signed distance $r = \nu(\bar x) \cdot (x
  - \bar x)$, which is possible for sufficiently small $a$. Now,
  Taylor-expanding the integrands in the powers of $r$ and integrating
  over $r$ and $r'$, after some tedious algebra we obtain that for any
  $\alpha \in (0, 1)$ it holds
  \begin{eqnarray}
    \label{eq:56}
    \bar E_a  = \bar E + a \left. {d \bar E_a \over d a}
    \right|_{a = 0} + {a^2 \over 2} \left. {d^2 \bar E_a \over d a^2}
    \right|_{a = 0} + O(a^{2+\alpha}),
  \end{eqnarray}
  where the derivatives are given by (\ref{eq:Ebvar1}) and
  (\ref{eq:Ebvar2}). In estimating the remainder term in (\ref{eq:56})
  we took into account that $v \in C^{1,\alpha}(\bar\Omega)$ and the
  following estimate of the terms involving the convolution integral: 
  \begin{eqnarray}
    \label{eq:2}
    \Biggl| \int_{\partial \bar\Omega^+} \int_0^{a \rho(\bar y)} \Bigl(
    G(\ep^{1/3} |\ln \ep|^{-1/3} (\bar x + \nu(\bar x) r - \bar y -
    \nu(\bar y) r')) \nonumber \\ 
    - G(\ep^{1/3} |\ln \ep|^{-1/3} (\bar x - \bar
    y)) \Bigr) \, dr' d \mathcal H^1(\bar y) \Biggr| \nonumber \\ 
    \leq C \int_{\partial \bar\Omega^+} \int_0^{a \rho(\bar y)}
    \Bigl(a + \left| \ln {|\bar x - \bar y + \nu(\bar x) r - \nu(\bar
        y) r'| \over |\bar x - \bar y|} \right| \Bigr) dr' d \mathcal
    H^1(\bar y)  \nonumber \\
    \leq C \Biggl( a^2 + \int_{\partial \bar\Omega^+ \cap |\bar x -
      \bar y| \geq M a} \int_0^{a \rho(\bar y)} {| \nu(\bar x) r -
      \nu(\bar y) r'| \over |\bar x - \bar y|} dr' d \mathcal H^1(\bar
    y) \Biggr) \nonumber \\
    \leq C a^2  \int_{\partial \bar\Omega^+ \cap |\bar x -
      \bar y| \geq M a} {d \mathcal H^1(\bar y) \over |\bar x - \bar
      y|} \leq C' a^2 |\ln a|, 
  \end{eqnarray}
  for $a \ll 1$, where $M > 0$ is sufficiently large, and we used the
  series expansion of $G$ \cite{abramowitz}.

  Finally, for every sufficiently small $C^1$-perturbation $\partial
  \bar\Omega^+_a$ of $\partial \bar\Omega^+$ the distance from a point
  $\bar x \in \partial \bar\Omega^+$ to $\partial \bar\Omega^+_a$ is a
  $C^1$-function, hence the formulas obtained above apply to all such
  perturbations.
\end{proof}

\bibliographystyle{springer}

\bibliography{../nonlin,../stat,../mura}

\end{document}